\documentclass{dcthesis}


%
%
%

\committee[]{}{}{}{}
\school{National University of Singapore}{Singapore}
\degree{Doctor of Philosophy}
\field{Financial Mathematics}
\usepackage{multirow}
\usepackage{microtype}
\usepackage{graphicx}
\usepackage{subfigure}
\usepackage{booktabs} 
\usepackage{hyperref}
\usepackage{amsmath,amssymb,amsthm}
\usepackage{junstyle}
\usepackage{natbib}
\usepackage{bibentry}
\usepackage{algorithm}
\usepackage{algorithmic}
\usepackage{enumerate}
\usepackage{dsfont}
\usepackage{xcolor}
\usepackage{mathrsfs}
\usepackage{amsmath}
\numberwithin{equation}{chapter}

\newtheorem{theorem}{Theorem}[section]
\newtheorem{lemma}[theorem]{Lemma}
\newtheorem{proposition}[theorem]{Proposition}

\def\df{\displaystyle\frac}
\newcommand{\be}{\begin{equation}}
\newcommand{\ee}{\end{equation}}
\newcommand{\ba}{\begin{array}}
\newcommand{\ea}{\end{array}}
\newcommand{\bea}{\begin{eqnarray}}
\newcommand{\eea}{\end{eqnarray}}
\newcommand{\beas}{\begin{eqnarray*}}
\newcommand{\eeas}{\end{eqnarray*}}



\title{TWO STOCHASTIC CONTROL PROBLEMS
IN CAPITAL STRUCTURE AND PORTFOLIO CHOICE}
\author{Shan Huang}
\date{}
\field{Financial Mathematics}
\degree{Doctor of Philosophy}
\committee{Min Dai, Ph.D.}{Chao Zhou, Ph.D.}{Weiqing Ren, Ph.D.}{Yue-Kuen Kwok, Ph.D.}

\begin{document}
\frontmatter
\newgeometry{left=1.5in,top=1in,bottom=1in,right=1in}
\maketitle

\chapter*{Declaration}
I, Shan Huang, hereby declare that the thesis is my original work and it has been written by me in its entirety. I have duly acknowledged all the sources of information which have been used in the thesis.

\chapter*{Acknowledgments}
Thanks for my supervisor Professor Min Dai and my co-supervisor Associate Professor Jussi Keppo from NUS Business School. They have shown great light on the ideas and schemes for my research work and given me a lot of suggestions, instructions, and guidance.  Their inspirations  will become the cornerstones on my further research and will benefit me throughout my whole life. I learned that the appearance of difficulty is a good thing, but at least it shows the significance and meaning behind the research work.
 
I also would like to thank my parents and my dear husband, Han Jun. In the past four years, my family members encouraged me a lot. Although they can not help me in the academic studies, they stand by my side and build a warm harbour. Especially, I really appreciate the support and soothe from my husband. It is him who changes my outlook on life and makes me become calm, patient, and brave in front of failure and pressure. 
 
 My sincere thanks also go to my colleagues for helpful discussion, including Mr. Jiang Wei,  Dr. Lei Yaoting, Dr. Yang Chen, Dr. Xu Jing, and Dr. Chen Yingshan. Especially, I would thank the Post-Doc Seyoung Park due to his suggestions and contributions on our joint projects.

\tableofcontents

\listoffigures

\listoftables

\backmatter


\chapter{Introduction}
\label{chp: Intro}
\numberwithin{equation}{chapter}
This thesis mainly focuses on two stochastic control problems in capital structure and individual's life-cycle portfolio choice. 
In Chapter \ref{part 1} we derive a stochastic control model to optimize banks' dividend and recapitalization policies and calibrate that to a sample of U.S. banks in the situation where we model banks' true accounting asset values as partially observed variables due to the opaqueness in banks' assets.  
In Chapter \ref{part 2}, we present an optimal portfolio selection model with voluntary retirement option in an economic situation, 
where an investor faces borrowing and short sale constrains, as well as the cointegration between the stock and labor markets.  The detailed introduction of each chapter is presented below.

\section{Stochastic Control Problem in Capital Structure}
\subsection{Literature Review on Relevant Works }

 In the real market, it is  difficult for outsiders to judge and observe the risks of banks. Such opaqueness intuitively affects bank capital and  policy decision-making.  In this subsection, we will review studies  on bank opaqueness, bank capital and regulatory decisions in the following.

The risks of banks are difficult to judge and observe for outsiders.
Therefore, banks' assets are not fully observed for the bank shareholders and regulators.
This is due to several reasons. First, according to \cite{FH04} and  \cite{Mg01},  the banking business is complex, and banks' nonmarketable loans are difficult to assess.
Second, banks have an incentive to minimize the fluctuation of the debt they have produced since the debt is a money-like security (see e.g. \cite{D13}).
To produce the money-like debt,  banks choose to minimize information leakage because the debt needs to be information-insensitive to serve as an efficient transaction
medium. This means that policies designed to enhance bank transparency reduce the ability of banks to produce debt.
Third,  \cite{LW03} find that insiders protect their private control benefits by using earnings management to conceal firm performance from outsiders.\footnote{ \cite{HC95} finds that firms that reported small profits greatly outnumber those that reported small losses. Consistent with that study, \cite{BD97},  \cite{BKP02}, and  \cite{SC05} find that firms avoid losses and missing last years' earnings by intentionally manipulating earnings.  \cite{E13} shows that banks engage in earnings management through loan syndications.}
By  \cite{Mg02}, banks' high leverage compounds the uncertainty over their earnings, and as a result, they are inherently more opaque than other types of firms. However, according to \cite{JRL14},  \cite{HH04}, and \cite{FG08}, strong institutions and competition between banks dampen the incentives to hide actual performance through earnings management practices.

\cite{Mg02} shows that banks are relatively more opaque than industrial firms. The uncertainty of  banks come from certain assets, loans and the trading assets,  of which the risks  are hard to observe or easy to change. 
Conventional wisdom says that bank loans (assets) are informationally opaque and this has been justified on a variety of theoretical works (see, e.g.  \cite{CK80},  \cite{BJ88},  \cite{D89}, \cite{D91} and \cite{KC98}). Bank loans (assets) quite often lack transparency and liquidity,  and this makes their risks  difficult to quantify and manage (\cite{G96}). Since the latest financial crisis, many observers have linked the financial panic that occurred during the crisis to bank opacity and there exists non-symmetric information between the outsiders and insiders of banks. How to model bank assets and equity to make optimal regulatory policy under non-symmetric information has become an important field in recent years, especially after the latest financial crisis.

As explained above, banks have the incentive to hide part of uncertainty in assets and earnings. This phenomena, i.e., opaqueness has also been studied in other empirical works.  \cite{A99} find that the loan loss provisions (LLP) has a negative correlation with the earnings process.  \cite{PMS10} investigate the ``stress test'', the extraordinary examination conducted by federal bank supervisors in 2009, and debate over bank opaqueness.  Moreover, opaqueness may foster price contagion that exacerbates the speculative cycles of bubbles and crashes that create financial instability (\cite{JLY12}).
 \cite{BL13} focus on research examining the relation between bank financial reports and risk assessments of equity and debt outside, the relation between bank financial reporting discretion, regulatory capitals and earnings management, and banks' economic decisions under different regimes by empirical analyses.
 \cite{Fl13} examine bank the trading characteristics of equity during ``normal'' periods and two ``crisis'' periods. They find only limited (mixed) evidence that banks are unusually opaque during normal periods. In addition, they point out that the balance sheet composition of a bank affects its equity opacity  significantly. 

However, most of the existing works focus mainly on empirical analysis and choose  bank assets composition as the primary measure of opaqueness. Few structure models have been set up to quantify the level of bank opaqueness through static noise level in accounting report.  This  static noise in accounting report will affect bank capital and regulatory decisions.

According to Basel Accord III,  Tier 1 Capital must be at least 6.0\% of risk-weighted assets at all times and the total Capital (Tier 1 Capital plus Tier 2 Capital) must be at least 8.0\% of risk-weighted assets at all times.  From this perspective, the violation of minimum capital buffer requirement is costly for banks, and  it incurs other costs with portfolio adjustment and recapitalization as well. Hedging capital buffer against minimum capital violation is well founded subject to these conditions.

An early continuous time model of a capital-constrained firm is presented in \cite{MR96}.  \cite{HT99} have extended the basic model into an insurance company setting by assuming that risk reduction at a proportional cost is available, which is interpreted as a cheap reinsurance.  \cite{MW01} have extended the model to allow for a recapitalization option.  \cite{KML10} use a similar modeling framework to study the unintended consequences of banking regulations in terms of rising default probability.
 \cite{CCT06} have analyzed the optimal timing and amount of dividends under fixed dividend cost.
 \cite{SC13} extend that model to consider a proportional dividend tax. \cite{LoZ08} study the dividend and equity issuance policy under an assumption that the company's reserves follow a diffusion process.  \cite{BM10} and \cite{SH18} consider dividend optimization under implementation delays. \cite{DS16} solve for the optimal dividend and default strategy under Chapter 11 of the U.S. bankruptcy code.
Consistent with our model,  \cite{MT11} find that bank value is positively cross-sectionally related to bank capital.
 \cite{E04} uses a variant of the classical inventory or cash management models to study the cyclicality of bank capital.
 \cite{BCW11},   \cite{DMV04}, \cite{C14}, \cite{IMR14} solve for a firm's optimal dynamic cash balance policy in terms of a trade-off between the gains from investment and the opportunity cost of spending cash.

Our model builds on the basic continuous time model in  \cite{PK06} and the partially observed models in Chapter 4 in  \cite{B04} and \cite{BKS09}.
Our modelling innovation is to allow true accounting asset values to be partially observed, and in this sense, banks' assets are opaque.
We show that the opaqueness substantially changes bank shareholders and regulators' decisions.

\subsection{Conributions of the Thesis}
Since the financial crisis of 2007--2008, many observers have linked the financial panic that occurred during the crisis to bank opacity (see e.g. \cite{G90}), \cite{L08}, \cite{D09},  \cite{A09}, \cite{Du09}, and \cite{A16}).\footnote{See also e.g. DealBook, New York Times, March 11, 2010, ``Court-Appointed Lehman Examiner Unveils Report'' that explains how Lehman Brothers used an accounting rule called Repo 105 to temporarily shuffle about 50 billion USD off the firm's balance sheet for the two fiscal quarters before it collapsed.} According to this narrative, bank assets are not fully observed for bank shareholders and regulators.\footnote{\cite{BW12} and \cite{Iy13} show that financial reporting opacity can negatively affect outsiders' ability to effectively monitor banks.}
Therefore, we model the true accounting asset values as partially observed variables for the shareholders and the regulators.
This means that for them quarterly accounting reports are noisy signals on the true accounting values.
As discussed above, there is a risk that these reported accounting values do not correspond to the current business situation, because some banks' assets are difficult to assess and bankers have an incentive to smooth earnings. Bankers  are insiders and, thus, they might know the true accounting asset values better than the shareholders and the regulators.
We abstract away the details of why (unintentional or intentional) there is noise in the accounting reports using the partially observed model, 
and in the model calibration, we estimate the parameters corresponding to the accounting noise.
This way we avoid complex signaling games between the bankers, shareholders, and the regulators.
However, by the model calibration, bankers clearly have an incentive to raise accounting noise that smooths  asset values over time if they have equity-based compensation such as stock options, because the smoothing raises the market value of equity substantially.

Given the noisy reported accounting values, the shareholders and regulators obtain the conditional probability distribution of the true accounting values. Then the bank shareholders solve for the optimal dividend and recapitalization policy of the bank, and the bank regulators decide to close the bank if the expected equity conditional on the accounting signals falls too low.
We focus on the bank's dividends and recapitalization option since they are used more than, for instance, asset sales (see e.g. \cite{BH14} and \cite{BLS16}).
According to our data and the model, the threshold when regulators close the bank is low, because the regulators weight more the risk of liquidating a solvent bank than the risk of not liquidating an insolvent bank; when they close a bank, they need to be certain that the bank is insolvent.\footnote{\label{footnote4}This is consistent with  \cite{BW10} and  \cite{HL12} who suggest that regulators may actually exploit financial reporting choices in order to not intervene in troubled banks, often referred to as forbearance. Forbearance could prevent panic runs on healthy banks (see  \cite{MW13}) and, thus, be optimal from a macroprudential perspective during a financial crisis. By \cite{BD11}, regulators are more likely to prefer to forbear on weak banks, risky banks, and when the banking sector is weaker.  \cite{BT93},  \cite{MF00}, and \cite{BD05} argue that forbearance can be motivated by self-interested reputational concerns or political pressures, ultimately leading to a weaker banking sector (\cite{R99}) and costlier failed bank resolutions (\cite{WC15}).}

The capital structure decision of banks is in its very essence a risk management decision. A bank practitioner views bank equity capital not primarily as a form of financing, but as a buffer against asset risks which needs to be managed so that the bank can satisfy its regulatory minimum capital requirement even under relatively adverse future scenarios. It is implicit in this view that the violation of the minimum capital requirement is costly for the bank, and that the bank faces costs or constraints associated with portfolio adjustment and recapitalization. Subject to these conditions the role of equity capital as a hedging mechanism against minimum capital violation is well founded.
Consistent with this, in our model the bank shareholders maximize the bank value by adjusting its equity capital level through dividends and equity issuances under an illiquid bank portfolio, imperfections in capital raising transactions, and loss of franchise value associated with the violation of the minimum capital requirement.

We also solve the corresponding model where the true accounting values are perfectly observed, which is called the fully observed model.  The difference between the partially observed and the fully observed models is that in the partially observed model, part of the uncertainties in the asset values is static because of the uncertainty in the reported accounting values, while in the fully observed model, all the uncertainties in the asset values are dynamic, driven by the future shocks in the asset values.
Since the uncertainties are different and since banks' capital structure decision is a risk management decision, the optimal dividend and recapitalization policies of the two models are different.
More specifically, under the noisy accounting values, bank investors hedge the opaqueness by paying less dividends and issuing more equity.\footnote{Consistent with this,  \cite{KS14},  \cite{FR06}, \cite{H06}, \cite{KT07}, \cite{LR05}, and \cite{L05} find empirically that firms gradually adjust their capital structure in response to various shocks. }


We calibrate the two structural models to a sample of U.S. banks during 1993--2015.
We find several results.
First, consistent with \cite{Fl13}, the banks are more opaque during the financial crisis of 2007--2009 than outside that.
This is also consistent with the value smoothing of debt in  \cite{D13}).
Further, the asset smoothing due to the accounting noise is substantial; on average, the banks' asset smoothing hides about one-third of the true asset volatility.
The asset smoothing raises the bank value; on average, the noise in the reported accounting asset values raises the banks' market equity value by $7.8\%$ because this way, the banks can hide their solvency risk from banking regulators.
Second, we find that banks with a high level of loan loss provisions, nonperforming assets, and real estate loans, and with a low volatility of total assets returns have a higher level of accounting noise. This is consistent with  \cite{A99},  \cite{LR95}, and \cite{Fl13}, who find that loan loss provisions have a negative correlation with the earnings process and that the balance sheet composition of a bank affects its opacity.
Third, our partially observed model explains substantially better the banks' actions in out-of-sample than the fully observed benchmark model.
More specifically, cross-sectionally, the partially observed model explains $52\%$, $14\%$, and $85\%$ of the variations in the banks' dividends, recapitalization decisions, and market equity values in out-of-sample, while the corresponding numbers for the fully observed model are $35\%$, $4\%$, and $79\%$, respectively.
Further, the partially observed model's average equity-to-debt ratio is closer to the sample ratio of $13.51\%$.
These results indicate that bank owners and regulators take into account the noisiness in the accounting values when deciding their actions.

Mathematically, we show that the value function is of at most linear growth. In the presence of issuance delay, as it is very difficult to prove the priori regularity of the value function such as measurability and continuity, we give a weak dynamic programming principle instead of classical dynamic programming. Based on the weak dynamic programming principle, we prove that the value function is a unique viscosity solution to the associated Hamilton-Jacobi-Bellman (HJB) equation. Moreover, in  the fully observed model, we derive a semi-explicit solution of the value function and shareholders' optimal policies under some conditions.

\section{Stochastic Control Problem in Portfolio Choice}
\subsection{Literature Review on Relevant Works}

The study of optimal consumption and investment over the life cycle has provided individuals with the economic justification for their own portfolio and saving decisions. The optimal framework for consumption and asset allocation based on realistic calibration ultimately improves social welfare by forming the basis of policy design for pension, insurance,
and retirement. Along with this line, a multitude of life-cycle models with flexible 
labor supply have been presented.\footnote{\cite{CHMM11} argue that having flexibility over working hours or retirement time is determined to be a leading 
factor when studying optimal portfolio choice over the life cycle.} There are at least two types of 
flexibility in labor supply when addressing the interactions among savings, portfolio choice, and 
retirement over the life cycle. On the one hand, individuals can adjust working hours on their 
jobs (\cite{BMS92}). On the other hand, individuals can freely choose their 
voluntary retirement time (\cite{FP07}, hereafter FP; \cite{DL10}, hereafter DL). 

Based on the pioneering work by Merton (\cite{M69}, \cite{M71}), an increasing emphasis on uninsurable risks associated with labor income have emerged. Once 
uninsurable income risks are considered to match stylized facts in relation to portfolio 
choice, new aspects of advancing the life-cycle model are revealed by taking into account infinite-horizon incomplete markets with stochastic labor income (see e.g.  \cite{HL97}, \cite{DFSZ97}, 
 \cite{K98}, and \cite{V01}), precautionary saving motive against background risk (\cite{CDK03}), a market 
incompleteness induced by uninsurable income risks and borrowing constraints in a finite-horizon setting
(\cite{CGM05}), additive and endogenous habit formation preferences (\cite{P07}),
cointegration between labor income and stock dividends (\cite{BDG07}), insufficient insurance
against a large and negative wealth shock (\cite{GLZ10}), nonhomothetic utility over
basic and luxury goods for households (\cite{WY10}), stochastic interest rates and labor income
streams in which the expected income growth is affine in short-term interest rates (\cite{MS10}),
labor income dynamics at business-cycle frequencies (Lynch and Tan \cite{LT11}), recursive utility and
illiquidity induced by transaction costs in a general equilibrium framework (\cite{BUV14}),
U.S. Social Security rules and family status (\cite{HMM16}), and the delegation of portfolio
management (\cite{KMM16}). However, none of the preexisting literature matches empirical stylized 
facts observed in retirement-induced optimal portfolio choice.

\subsection{Contributions of the Thesis}
As far as preexisting literature on optimal retirement  (see e.g. \cite{FP07}, \cite{DL10}, and \cite{BJP16}) is concerned, one knows that it is optimal to invest more in risky assets when considering increased flexibility in retirement. The 
retirement flexibility makes labor income's beta with the market negative when income has low 
market risk exposure. That is, working longer (shorter) becomes more attractive when the market is down (up). Subsequently, investment in the stock market can be effectively utilized
as a hedging instrument against labor income risks. However, this conclusion might not be the case under  realistic considerations in labor income. It is very important to keep in mind that income 
shocks and stock returns are not highly correlated, consistent with the data (\cite{CGM05} and \cite{DW13}).\footnote{The existing empirical evidence demonstrates that the correlation 
between the shocks to labor income and stock returns is low, for example, 0.15 (see e.g. \cite{CCGM01} and \cite{GM05}).} Instead, the  conintegration phenomenon has been observed between the stock and labor markets (see e.g. \cite{BJ97},  \cite{MSV04},  \cite{SV06}, and \cite{BDG07}). As \cite{DL10} claim in the 
conclusion of their paper,
\begin{quote}\textit{
It would be nice to add more state variables to the model. For example, it has long
been known that wages are sticky and it is reasonable that they respond to shocks
in the stock market, but with a delay. ... Unfortunately, models with additional state
variables seem almost impossible to be solved analytically given current tools and
numerical solution is also very difficult.}
\end{quote}

This is precisely the study we would like to undertake here. We investigate the effects of retirement 
flexibility on the optimal portfolio choice in a range of more elaborate settings that move the optimal portfolio 
selection problem closer to the one solved by real-world investors. More specifically, we allow for cointegration between 
the stock and labor markets, which is one of the most important characteristics of labor income. In addition to the concept of cointegration, there is at least one major departure from 
 \cite{FP07} and  \cite{DL10}. We proceed our analysis under the economically plausible constraints with which an investor is prevented from 
borrowing against the net present value of his labor income and shorting  securities at no cost.\footnote{Borrowing 
constraints are consistent with the realistic ramifications present in actual capital markets: many investors 
are constrained from borrowing against human capital, partly because of some market frictions, like informational 
asymmetry, agency conflicts, and limited enforcement. Short sale constraints have also been imposed because there 
are a wide range of legal and institutional restrictions on short selling in the U.S. equity markets. An extensive 
treatment of short sale constraints and their effects on asset prices is fully treated by \cite{BCW06}.} 

We consider a representative investor's utility maximizing framework. The investor exhibits 
standard constant relative risk aversion (CRRA) utility preferences (recursive utility is also provided for robustness) and encounters a constant investment opportunity 
set. The investor receives stochastic labor income while working. In particular, there are two kinds of uninsurable 
risk characteristics in labor income (\cite{WWY16}): (1) diffusive and continuous shock, and (2) discrete 
and jump shock.\footnote{The labor income risks are uninsurable or undiversifiable because of a lack of explicit 
insurance markets for the income risks (\cite{CGM05}). Social securities and private insurance markets 
are not perfectly sufficient to hedge against large and negative wealth shocks (\cite{GLZ10}).} In 
line with  \cite{FP07} and \cite{DL10}, 
we endogenize the labor supply along the extensive margin that the investor either works full-time or retires permanently. That is, the investor is allowed to endogenously (optimally) determine his irreversible retirement time.\footnote{The incentive of voluntary retirement results from more preferences 
for not working (or more leisure preferences) than staying in the workforce, in the sense that the marginal  utility is increased once retired. 
}

Four main distinct implications that we obtain  are as follows.

First, with reasonable parameter values, there exists a target wealth-to-income ratio under
which an investor does not participate in the stock market at all, whereas above which the 
investor increases the proportion of financial wealth invested in the stock market as he 
accumulates wealth. It has been quite economically plausible, and even numerically
possible that the share of wealth invested in equity (or the portfolio share) rises in 
wealth (see e.g. \cite{FP07}, \cite{DL10}, \cite{P07}, \cite{WY10}, and  \cite{CS14}).\footnote{This 
result is obtained in \cite{FP07} with a constant wage in some special cases, in \cite{DL10} with a stochastic
wage that varies significantly along with the stock market, and in  \cite{P07} with 
additive and endogenous habit formation preferences. \cite{WY10} calibrate the 
life-cycle model to the Survey of Consumer Finances and demonstrate that 
the portfolio share rises in wealth by considering nonhomothetic utility over basic and luxury 
goods. Empirically, \cite{CS14} show that the portfolio share is increasing and 
concave in financial wealth. Compared to the papers above, we provide
new economic insights into the optimal portfolio choice over the life cycle and view the increasing portfolio share in wealth from a significantly different angle.} Because of cointegration 
between the stock and labor markets, returns to human capital and stock market returns are highly 
positively correlated (\cite{BDG07}), as a result, an investor with little wealth who 
is away from retirement does not participate in the stock market at all, even when the market risk 
premium is positive.\footnote{This result can be a resolution to the non-participation puzzle and 
regarded as a complement to many studies to resolve the anomalies (see e.g. \cite{V02}, \cite{GM05}, and \cite{GLZ10}).} In this case, human capital acts as implicit
equity holdings. However, as the investor approaches the endogenously-chosen retirement date, the cointegration 
effect becomes weaker and human capital's role returns to implicit bond holdings.\footnote{Labor income 
is treated as a substitute for bond holdings when income shocks has low correlation with stock 
market returns (see e.g. \cite{HL97} and \cite{JK96}). This is witness to the 
fact that an investor makes more aggressive investment when young (see e.g. \cite{CGM05}, \cite{FP07}, and \cite{DL10}).} 
Consequently, the investor finds it optimal to increase the proportion of wealth invested in stocks as 
he accumulates wealth over a certain threshold.

Second, contrary to the existing retirement studies such as  \cite{FP07} and \cite{DL10}, we find that flexibility in determining 
the retirement time allows the investor to invest less in the stock market than without retirement flexibility. With cointegration, labor income can have a positive beta with the stock market,  causing returns to human capital to be changed significantly with the stock market.
Retirement flexibility strengthens such positive correlation between labor income and
 stock market because working shorter (longer) is optimal when stock market is down
(up). Thus, a more conservative investment strategy rather than an aggressive one is followed by the retirement flexibility.

Third, risk aversion speeds up retirement. The wealth threshold presented here for early retirement decreases
with risk aversion when labor income risks are uninsurable. \cite{FP07} and  \cite{DL10} predict that risk aversion tends to raise 
up the wealth threshold for voluntary retirement when labor income risks are fully diversified, consistent with the standard 
real option analysis (\cite{HM07}). Intuitively, when the stock market can perfectly hedge against  income risks, more risk averse investors are apparently trying 
to avoid the risk of losing the option value of working so that they would like to delay retirement. 
However, this result can be reversed when income risks are not spanned by the stock market. A natural intuition is that the risk averse investors require an additional premium for holding the unspanned income risk. Working shorter becomes 
more attractive, commanding a lower premium for the undiversifiable income risk.
Further, higher risk aversion decreases the amount of future human capital in the presence of uninsurable income risks. With a heavy impact of uninsurable 
income risks on life-cycle strategies and human capital amount, the option for retiring earlier becomes progressively more attractive for more risk averse investors.

Finally, the optimal portfolio strategy  presented here predicts that early retirement is economically plausible,
consistent with empirical observations.\footnote{Individuals quite opted for early retirement,
especially during the stock market booms like those observed in the late 1990's. At that time, the U.S. economy 
experienced a rapid increase in the stock market returns (\cite{GS02} and \cite{GST10}). However, there has been no consensus as to the economic rationality for doing so and even there are 
disagreements about such early retirement during up markets (\cite{BJP16}). Issues in Labor 
Statistics published by Bureau of Labor Statistics in 2000 entitled ``Unemployed Job Leavers: A Meaningful Gauge of 
Confidence in the Job Market?" raised a debate whether or not the increased number of workers who opted for 
voluntarily quitting their jobs is pro-cyclical or counter-cyclical.} There are two main effects of cointegration 
on early retirement. Firstly, our retirement model is able to generate the empirically plausible hump-shaped implicit value of human capital over the life cycle and more importantly, cointegration leads to an earlier peak 
point in the implicit value compared to \cite{FP07} and \cite{DL10} without cointegration.\footnote{The implicit value of human capital 
is captured by the notion of marginal rates of substitution between labor income and financial wealth (\cite{K98}), 
so it can be regarded as an investor's subjective marginal value of human capital. The hump shape of value of human 
capital is the long-standing life-cycle result (\cite{CGM05}; \cite{BDG07}).} 
Therefore, wealth at retirement is lower with cointegration than without cointegration. Secondly, in the presence 
of cointegration, the wealth threshold for voluntary retirement becomes smaller when labor income will be decreased 
than when it will be increased. One sure thing about early retirement is that earlier retirement is favored by 
individuals, especially when wages are expected to decline in the long term. Our main results still hold when considering the mandatory retirement age. 

Mathematically, we give a verification theorem to show that the value function is the solution to the associated HJB equations and the corresponding optimal strategies can be represented as feedback functions of the value function and its derivatives.  All the quantitative analysis are done based on Penalty method \cite{DZ08}.

\section{Organization of the Thesis}
This thesis is organized as follows.  The model of capital structure in banking system, is presented in Chapter \ref{part 1}. In Section \ref{bank_modelsetup}, 
Subsection \ref{fully observed model} presents the fully observed model and Subsection \ref{section with recap} presents the partially observed models.
Section \ref{partI_theoretical analysis} shows the theoretical analysis for fully and partially observed model. The fully observed model has a semi-explicit solution, while in general, the partially observed model does not. Section \ref{comparative statics} shows numerical comparative statics of the partially observed model. Models are calibrated and then tested in Section \ref{calibration}. 

The optimal portfolio selection for early retirement with cointegration between the stock and labor markets, is presented in Chapter \ref{part 2}. In Section \ref{coint_model}, we describe our main
model. In Section \ref{coint_appendix}, we give the theoretical analysis of the optimal strategy and value function.
In Section \ref{coint_numerical}, we carry out quantitative analysis with reasonable parameter as well as the analysis of robustness. 

Section \ref{section conclusion} summarizes this thesis and concludes. Proofs of some propositions and theorems in Chapter \ref{part 1} are given in Appendix. We put these proof in Appendix not because they are unimportant, but to make the main text concise and fluent.

\chapter{The Stochastic Control Problem in Capital Structure}\label{part 1}

\numberwithin{equation}{chapter}

\section{Model}\label{bank_modelsetup}
We consider a single commercial bank that is liquidated by regulators if its equity capital falls too low. First, in Subsection~\ref{fully observed model} we consider our benchmark model, where the shareholders and regulators can fully observe the accounting values of total assets and we call this case as the fully observed model.  After that in Subsection~\ref{section with recap} we discuss our main model, where the accounting asset values are noisy to shareholders and regulators. In this sense the bank is opaque and shareholders and regulators can only partially observe the accounting values of total assets. We call this case as the partially observed model.

\subsection{Fully Observed Model} \label{fully observed model}
In this subsection we introduce our fully observed model, where the shareholders and regulators can fully observe the accounting values of total assets. This situation, where the accounting reports are accurate (i.e. without noise), is a special case of the partially observed model in Subsection~\ref{section with recap}.
The bank's balance sheet is illustrated in Figure~\ref{figure bank structure}.

\begin{figure}[ht]
\centering
  \includegraphics[width=0.5\textwidth]{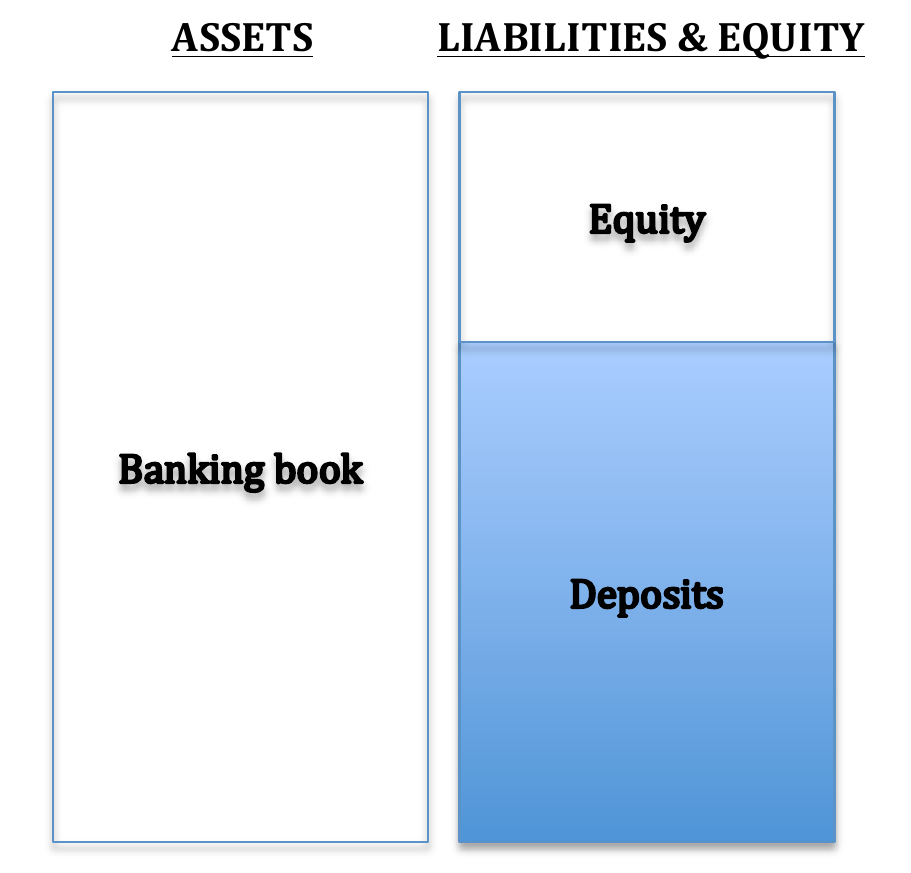}
 \caption[Bank balance sheet]{\textbf{Bank balance sheet.}
 The illiquid banking book equals total assets. Deposits is given by (\ref{debt}), and equity is the total assets minus the deposits (see (\ref{totalassets})).}
 \label{figure bank structure}
\end{figure}
Thus, in our stylized model the total assets equal book equity plus deposits,\footnote{Banks' deposits account for more than 60\% of their liabilities across major economies and regions. For more information on banks' funding structure, see Chapter 3 ``Changes in Banking Funding Patterns and Financial Stability Risks'' in ``Global Financial Stability Report: Transition Challenges to Stability,'' available at \url{https://www.imf.org/External/Pubs/FT/GFSR/2013/02/pdf/c3.pdf}.
The average deposits-to-liabilities ratio among our sample banks in Chapter~\ref{calibration} is 81.82\%.
For instance, \cite{SW16} consider a model with both deposits and subordinated debt.} that is,
\begin{eqnarray}\label{totalassets}
E_t+D_t,
\end{eqnarray}
where  $E_t$ is the book value of equity (equity capital) and $D_t$ is the bank debt which equals deposits.
We assume that the banks' customer base is in a steady state in a sense that its deposits grow steadily at a constant positive rate $\mu$ as follows\footnote{We ignore debt volatility and
debt financing, since they would dramatically increase the complexity of the financial modelling (but with
probably limited impact on the real-side of the model).}
\begin{eqnarray} \label{debt}
dD_t = \mu D_tdt. 
\end{eqnarray}
Total assets equal equity plus bank debt (Figure~\ref{figure bank structure}). 
The total assets are the bank's illiquid loan portfolio, and they are stochastic.
The shareholders control bank capital through dividend payments and equity issuances. Dividends can be paid in continuous time without any frictions, while with the capital issuances, there is a delay and a fixed cost. We focus on the bank's dividends and capital issuance option since they are used more than, for instance, asset sales (see e.g. \cite{BH14} and \cite{BLS16}).

More specifically, outside the dividend and recapitalizations, total assets  grow according to a geometric Brownian motion process.\footnote{For instance, \cite{M74}, \cite{BS08}, \cite{Fl13} model a firm's assets as a geometric Brownian motion process. We include dividends and recapitalization to this process because, by balance sheet, total assets equal equity plus debt (equation $(\ref{totalassets})$).
} Let us define $Y_t$ as the accounting total assets without dividends and recapitalizations.  So it follows
\begin{align}\label{totalassetdyn}
dY_t &= \alpha Y_{t}dt+\sigma Y_{t}dW_t 
\end{align}
where $W_t$ is a standard Wiener process,\footnote{\label{footnote_Ft}
This Wiener process is on a probability space $(\bf{\Omega}, \mathcal{F}, \mathbb{P})$ along
with the standard filtration $\left\{ {\mathcal{F}_t :t \ge 0}
\right\}$. Here $\bf{\Omega}$ is a set, $\mathcal{F}$ is a  $\sigma$-algebra, $\mathbb{P}$ is
a probability measure on $\mathcal{F}$, and $\left( \mathcal{F}_t
\right) _ {0\le t}$ is an increasing family of $\sigma$-algebras: $\mathcal{F}_t:=\sigma\{W_s: s\leq t\}$.} $\alpha$ is the expected return on the assets net of interest rate and other costs, and $\sigma$  is the asset volatility.

 The dynamic process of equity capital without dividends and recapitalzations can be derived from $(\ref{totalassets})$--$(\ref{totalassetdyn})$.  However, it is well known that dividends decrease and recapitalizations increase book equity (and by $(\ref{totalassets})$,  this way they affect the total assets as well). They by including the dividends and recapitalizations, we write the dynamic process for equity capital as\footnote{Equity capital is the accumulated earnings (net income or net loss) and could be negative in some extreme cases. Thus, we do not directly model bank's equity as a geometric Brownian Motion. By the balance sheet, the volatility of equity comes from the volatility of total assets. }
\begin{eqnarray}\label{equitydyn}
E_t =E_0+\int_0^t[\alpha E_{u}+(\alpha-\mu)D_u]dt + \int_0^t (E_{u}+D_u)\sigma dW_u - L_t+\sum_i s_i \mathbf{1}_{\{t_i+\Delta\leq t\}}, \nonumber \\
\end{eqnarray}
where the cumulative dividend process $L_t$ is a nondecreasing right-continuous process adapted to ${\cal{F}}_{t}$ (see footnote \ref{footnote_Ft}) with $L_{0-}^\pi=0$, the sum term $\sum_i s_i \mathbf{1}_{\{t_i+\Delta\leq t\}}$ is the cumulative recapitalization process, $\mathbf{1}_{\{\cdot\}}$ is an indicator function, each $t_i$ is a stopping time of equity issuances adapted to ${\cal{F}}_{t}$, each $s_i$ is the amount of equity issued at $t_i$ and is measurable with respect to ${\cal{F}}_{(t_i^{\pi}+\Delta)-}$, 
and $\Delta$ is the length of the recapitalization process.  The measurability of $s_i^{\pi}$ with respect to ${\cal{F}}_{(t_i^{\pi}+\Delta)-}$ means that owners may decide on the exact amount of capital issuance at time $t_i^{\pi}+\Delta$ based on  information ${\cal{F}}_{(t_i^{\pi}+\Delta)-}$. They do not need to precommit to any quantity of capital at time $t_i$ when they start the recapitalization process. Thus, the new issuance of equity feeds to the shareholders' equity, while dividend payment represents a leakage from it. Further, as in \cite{PK06}, when a new equity is ordered at time $t_i$, there is a delay of $\Delta$ that corresponds to the capital issuance process and uncertainty. We define shareholders' admissible capital control policy by $\pi :=\{L^{\pi}_t,(s_i^{\pi},t_i^{\pi})\}$, 
where $L^{\pi}_t$, $t_i^{\pi}$ and  $s_i^{\pi}$  are defined above.


Regulators decide to liquidate a bank if the equity ratio falls too low. 
The liquidation time under policy $\pi$ is defined as
\begin{eqnarray*}
\tau^{\pi}:=\inf\{t\geq 0: E^{\pi}_t/ D_t < \kappa\},
\end{eqnarray*}
where $\kappa$ is a positive constant given by the banking regulators. In our numerical analysis, the minimum equity-to-debt ratio $\kappa$ corresponds to the Basel minimum capital requirement (see footnote \ref{footnote_kappa} in Section \ref{calibration}). That is, when equity to debt ratio falls too low, the regulators liquidate the bank.

We denote the set of admissible control strategies by $\Pi$  that satisfies:
\begin{eqnarray*}
& &  E_t^{\pi}  \geq \kappa D_t, \  \ \forall t\geq 0,  \\
& &  t_{i+1}^{\pi}-t_i^{\pi}\geq \Delta \text{ and }  s_i^\pi \leq \bar{s} D_{t_i^\pi+\Delta}, \ \ \ i=1,2,\dots, \\
& & dL^{\pi}_t=0, \ \ \forall t\in [t_i^{\pi}, t_i^{\pi}+\Delta],\end{eqnarray*}
where $\bar{s}$ is a positive constant.
First, to avoid liquidation, the dividend payment $dL^\pi_t$ is bounded, which gives $E_t^{\pi}  \geq \kappa D_t$.
Second, a new issue of equity cannot be ordered while previously ordered issuance is still waiting to be completed. 
Third, during the delay, dividends cannot be paid.
This dividend condition has important technical merit but also an economic justification, in that ruling out simultaneous capital issues and dividend payments
is likely to reduce conflicts of incentives between existing and new equity holders. The potential incentive conflicts are not explicitly present in our model, and we do not analyze the division of bank value between existing and new shareholders. We simply think of the dividend constraint as a restriction set by the capital markets. 
Fourth, the bank cannot sell infinite amount of equity, and therefore, the equity issuance $s_i$ is bounded.

The objective function of shareholders under policy $\pi$, given a level of equity $E_0=E$ and a level of debt $D_0=D$, is the value of the bank, which equals the expected discounted present value of dividends less equity issued until liquidation (see e.g. \cite{DMRV011} and  \cite{BCW11}):
\begin{eqnarray} \label{define of value function with recap}
&&\varphi(E,D)\\
&=&\sup_{\pi\in  \Pi}  \mathbb{E}\left[\int_0^{\tau^{\pi}} e^{-\delta t}dL^{\pi}_t-\sum_i e^{-\delta(t^{\pi}_i+\Delta)}
\left(s^{\pi}_i+KD_{\{t^{\pi}_i+\Delta\}}\right)  \mathbf{1}_{\{t^{ \pi}_i+\Delta<\tau^{ \pi}\}} + e^{-\delta \tau^{\pi}} \omega E_{\tau^{\pi}}^+\right], \nonumber
\end{eqnarray}
where $E^+=\max[E,0]$, $\delta$ is the discount rate that is assumed to be higher than $\mu$ and $\alpha$,\footnote{
If our modeling framework is risk neutral then equity premium is zero and the expectations in (\ref{define of value function with recap}) and (\ref{define of value function with recap and partially observed}) are under a risk neutral probability measure and, thus, the drift term in (\ref{equitydyn}) and (\ref{hat E_t}) need not coincide with the observed value (see e.g. \cite{BT04}). However,  \cite{PK06} show that the drift term only has a secondary effect on model capital ratios. In the model calibration we do not use the risk neutral probability measure; Instead we assume that the expectations in (\ref{define of value function with recap}) and (\ref{define of value function with recap and partially observed}) are under the objective probability measure. For an example of this interpretation and model calibration, see e.g. \cite{BM09}. Moreover, higher expected return rate on total assets implies better banking business and in turn attracts more depositors and increases the growth rate of debt. According to our data set in Table \ref{basic information}, the growth rate of debt is smaller than the return rate of total assets. Thus, by (\ref{equitydyn}), Higher growth rate of debt does not mean higher level of recapitalizations.
} 
namely,
\begin{eqnarray}
\label{discounta}
\delta> \max(\mu,\alpha),
\end{eqnarray}
$K$ is the cost of equity issuance, $\omega$ is the proportional liquidation value in terms of book equity (see e.g. \cite{ST92}).
The capital control problem is then to identify the value of an optimally managed bank and an optimal strategy $ \pi^*$ that achieves this supremum.

By $(\ref{define of value function with recap})$, the bank shareholders take the minimum equity-to-debt ratio $\kappa$ as a constraint to the bank's capital structure decision. That is, they decide to hold excess book equity above the minimum level as a hedge against bank liquidation, which means that the bank's capital structure decision is a risk management decision. However, the optimal capital structure is not full equity (actually far from that as we will see in Sections~\ref{comparative statics} and~\ref{calibration}) because future dividends are discounted by $\delta$ that is, by  (\ref{discounta}), higher than the expected growth rate of book equity in $(\ref{equitydyn})$.

\subsection{Partially Observed Model}\label{section with recap}

In this subsection we extend the fully observed model in the previous subsection by modelling the accounting total assets as a partially observed variable due to the opaqueness.  We abstract away the details of why (unintentional or intentional) there is noise in the accounting reports using the partially observed model.\footnote{With this condition we avoid complex signaling games between shareholders and regulators. For instance, if the shareholders were bank insiders and able to use the true assets value as decision criterion then the fact that a bank stops paying dividends
or recapitalizes would convey information on the true assets value to the regulators.}
However, in Section~\ref{calibration} we show that under the estimated parameters bankers clearly have an incentive to raise accounting noise that smooths asset values over time if they have equity-based compensation such as stock options, because the smoothing raises the market value of equity substantially.\footnote{In this case, bankers maximize $F\left(V(m,\rho)\right) - c(m,\rho)$ by selecting optimal $m$ and $\rho$ for them, where $F$ is the compensation, $V$ is the market value of equity and $c$ is the cost of earnings smoothing and they depend on the correlation $\rho$ and signal noise $m$. The cost stems from several sources, such as compliance with regulation and accounting rules, market discipline, risk culture of the bank, and the cost of additional effort. However, we do not model bankers explicitly in our model.}
Further, both the shareholders and the regulators have the noisy information when they make their decisions.
This is consistent e.g. with \cite{G03} and \cite{G07}, where investors have incomplete information on the quality of banks' assets.


Because of the opaqueness, process $Y_t$ in (\ref{totalassetdyn}), the true total assets without dividends and recapitalizations, is observed with noise. That is, the investors and the regulators receive noisy accounting reports,  and then they learn the true asset values from those. We model this learning process  using a Bayesian model (for more about Bayesian learning models see e.g. \cite{MH02},  \cite{DS06},   \cite{S08},  \cite{KMS08},  \cite{MV10}, and  \cite{HL15}). More specifically, the noisy accounting reports correspond to observing the following signal process:
\begin{eqnarray}
dZ_t &=& M_tdt + md\mathcal{B}_t, \hspace{5mm} Z_0 = 0, \label{signalprocess}\\
dM_t & =&\left(\alpha-\frac{\sigma^2}{2}\right)dt+\sigma dW_t, \label{dMt}
\end{eqnarray}
where $M_t :=\log Y_t$ is the log true accounting value of total assets without dividend and recapitalization, $m$ is a positive constant representing the noise level in the accounting reports, $\mathcal{B}_t$ is a standard Wiener process satisfying $dW_t d \mathcal{B}_t=\rho dt$.  Thus, $Z_t$ represents the cumulative observed (noisy) log accounting assets value.

Since $Z_t$ and $\log Y_t$ are correlated with $\rho$, negative correlation $(\rho<0)$ implies asset value smoothing because then a negative shock in $W_t$ is most likely compensated by a positive shock in $\mathcal{B}_t$.
We use this in the identification of asset smoothing in Section~\ref{calibration}.
We define the observed noisy accounting information set at time $t$ as ${\cal{G}}_{t} = \sigma \{Z_s, s\leq t\}$.
Using a similar argument as Chapter 4 in  \cite{B04}, 
we define the conditional expectation and conditional variance of $M_t$ as follows
\begin{eqnarray*}\begin{aligned}
\hat{M}_t &:= \mathbb{E}[M_t|{\cal{G}}_{t}], \\
 S_t & := \mathbb{E}[(M_t-\hat{M}_t)^2 |{\cal{G}}_{t} ], \nonumber
\end{aligned}\end{eqnarray*}
where $\hat{M}_t$ is the expected log assets value out of dividends and recapitalizations, $S_t$ is the variance of $M_t$ at time $t$ and, thus, $1/S_t$ is the precision of $\hat{M}_t$ at time $t$. This means that given the noisy accounting information at time $t$, the correct accounting assets value is unknown, i.e., the assets value is random variables at time $t$.
Further, the uncertainties in the accounting assets value is driven by future shocks in the asset prices and also the accounting uncertainty.
The following proposition gives the expected assets value and its precision (proof is in the Appendix \ref{appendix_theorem of s(t)}).

\begin{proposition}\textbf{(Noisy log asset values)} \label{theorem of s(t)} Variance $S_t=E[(M_t-\hat M_t)^2| \mathcal{G}_t]$ is deterministic and follows a Riccati equation: \begin{eqnarray}  \label{dSt}
dS_t=\left[\sigma^2-\left( \frac{S_t}{m}+\sigma \rho\right) ^2\right] dt.
\end{eqnarray}
The solution is given by\footnote{By the formula of $S(t)$, $\lim_{t\uparrow\infty}S_t=m\sigma(1-\rho)$, which is independent of the initial variance $S_0$. $S_0$ is an exogenous  variable and in the model calibration we choose $S_0 > m\sigma(1-\rho)$.  If $\rho =0$, $S_t$ is the same as in Theorem 2 in \cite{BKS09}.
There is a degenerate case: when $m=0$,  $S_t=0$ which means a fully observed case.}
\begin{equation}\label{S(t)}
S_t=\left\{\begin{aligned}& m\sigma\frac{A\exp(2\sigma t/m)-1}{A\exp(2\sigma t/m)+1}-m\sigma\rho & \text{ if } S_0<m\sigma(1-\rho)\\
& m\sigma(1-\rho)  & \text{ if } S_0=m\sigma(1-\rho)\\
& m\sigma\frac{A\exp(2\sigma t/m)+1}{A\exp(2\sigma t/m)-1}-m\sigma\rho & \text{ if } S_0>m\sigma(1-\rho),
\end{aligned}\right.\end{equation}
where $A=\left|\frac{m\sigma(1+\rho)+S_0}{m\sigma(1-\rho)-S_0}\right|$. Furthermore, the belief $\hat M_t= \mathbb{E}[M_t|{\cal{G}}_{t}]$ is given by Kalman filter:
\begin{eqnarray*}
d\hat M_t=(\alpha-\frac{1}{2}\sigma^2)dt+\left(\frac{S_t}{m}+\sigma\rho \right) \left(\frac{d Z_t}{m}-\frac{\hat M_t}{m}dt\right), \quad \hat M_0=M_0.
\end{eqnarray*}
\end{proposition}

By Proposition~\ref{theorem of s(t)}, the conditional expectation $\hat{M}_t$ follows:
\begin{eqnarray}\label{expected total asset iteration}
d\hat{M}_t = \left(\alpha - \dfrac{1}{2} \sigma^2\right) dt  + \left(\dfrac{S_t}{m}+\sigma\rho\right)d\tilde{\mathcal{B}}_t, \quad \hat M_0=M_0,
\end{eqnarray}
where $\tilde{\mathcal{B}}_t$ is an innovation process defined as
\begin{equation} \label{dWz}
 d\tilde{\mathcal{B}}_t=\frac{d Z_t}{m}-\frac{\hat M_t}{m}dt, \quad \tilde{\mathcal{B}}_0=0.
\end{equation}
By Proposition \ref{theorem of s(t)}, $\tilde{\mathcal{B}}_t$ is a standard Wiener process adapted to the information filtration $\mathcal{G}_t$. 
Given the information $ {\cal{G}}_{t}$, the process $M_t$ defined in $(\ref{signalprocess})$ follows a normal distribution with mean $\hat M_t$ and variance $S_t$. Therefore, $Y_t = e^{M_t}$ is a random variable at time $t$ and follows a log normal distribution:
\begin{align}\label{distribution_Y}
Y_t \sim \operatorname{Log-\mathcal{N}} \left(\hat{M}_t , \ S_t \right).
\end{align}
Define the expected total assets without dividends and recapitalizations by $\hat{Y}_t:= \mathbb{E}[Y_t|{\cal{G}}_{t}]$.
Then we have
\begin{eqnarray}\label{proportion policy}
\hat{Y}_t =  \exp\left\{\hat{M}_t+\dfrac{1}{2}S_t \right\},\quad \text{SD}[ Y_t | \mathcal{G}_t] = \hat Y_t \sqrt{\exp{(S_t)}-1},
\end{eqnarray}
where $\text{SD}$ stands for the stand deviation. 
By  Ito's formula, $(\ref{expected total asset iteration})$, $(\ref{proportion policy})$, $(\ref{totalassets})$ under $\mathcal{G}_t$, and after including dividends and recapitalizations similarly as the fully observed model, we obtain the dynamic process of  $\hat Y_t$ and
the expected equity $\hat E_t$ 
under policy $\pi=\{L_t^\pi, (s_i^\pi,t_i^\pi)\}$:
\begin{eqnarray}
d\hat Y_t &=& \alpha \hat Y_{t}dt+\left(S_t/m+\sigma \rho\right)\hat Y_{t}d\tilde{\mathcal{B}}_t, \label{dY_t}\\
\hat E_t^\pi&=&\hat E_0+\int_0^t \left[\alpha\hat E_{u}^\pi+(\alpha-\mu) D_u\right]du+ \int_0^t(\hat E_{u}^\pi+D_u)\left(S_u/m+\sigma\rho\right) d\tilde{\mathcal{B}}_u \nonumber\\
&&-  L_t^\pi+\sum_i   s^\pi_i \mathbf{1}_{\{t_i^\pi+\Delta\leq t\}}.  \label{hat E_t}
\end{eqnarray}
Figure \ref{figure asset tracing} illustrates the expected equity and the true equity values over time. As can be seen, they stay close to each other all the time.
\begin{figure}[ht!]
\centering
  \includegraphics[height=6cm,width=6.9cm]{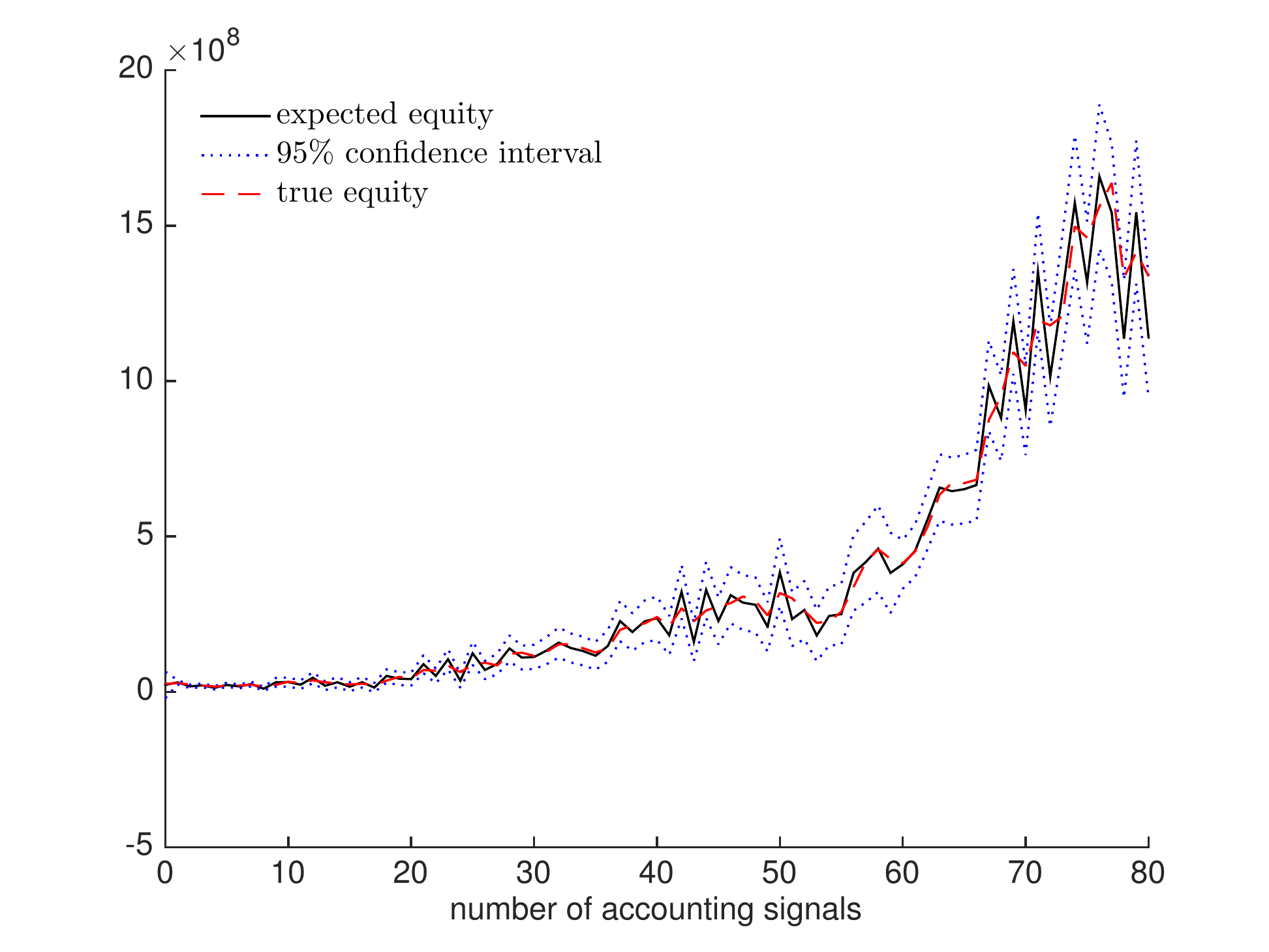}
	\caption[Book equity simulation]{\textbf{Book equity simulation.} This figure illustrates the expected book equity value and the corresponding true book value over the number of quarterly accounting reports (so 80 corresponds to 20 years). The parameter values: $\mu=0.035$, $\alpha=0.04$, $\sigma=0.05$, $m=0.03$, and $\rho=-0.3$. }
 \label{figure asset tracing}
\end{figure}

Regulators decide to liquidate a bank if the equity ratio falls too low. 
Under the filtration $\mathcal{G}_t$, the liquidation time under policy $\pi$ is defined as
\begin{eqnarray}\label{liqtime}\
\hat{\tau}^\pi := \inf\Big\{t\Big|\mathbb{P}\left(E_t^\pi/D_t<   \kappa \big| {\cal{G}}_{t}\right) \geq a\Big\}
\end{eqnarray}
with a positive constant $a$ representing the confidence level of the true unobserved equity-to-debt ratio $E_t^\pi/D_t$ being less than the threshold value $\kappa$. Therefore, if the probability of the equity-to-debt ratio less than $\kappa$  is higher than $a$, then the regulators view the bank insolvent and liquidate that. 

Regulators' liquidation decision can be understood as follows. Regulators face two risks: Not liquidating an insolvent bank and liquidating a solvent bank. Therefore, the regulators minimize the expected cost by selecting no liquidation or liquidation:
\begin{eqnarray*}
\text{expected cost if no liquidation }&=&c_1\mathbb{P}\left(E_t^\pi/D_t<   \kappa \big| {\cal{G}}_{t}\right)\\
\text{expected cost if liquidation}&=&c_2\mathbb{P}\left(E_t^\pi/D_t\geq  \kappa \big| {\cal{G}}_{t}\right),
\end{eqnarray*}
where $c_1$ and $c_2$ are the cost parameters for no liquidation and liquidation, and the probabilities are the probabilities of insolvency and solvency, respectively.
The interpretation of the cost parameters is the cost given the decision (no liquidation or liquidation) and the outcome (insolvency or solvency). 
If the expected cost under liquidation is less than the expected cost for no liquidation, i.e., if 
\[
c_2\mathbb{P}\left(E_t^\pi/D_t\geq  \kappa \big| {\cal{G}}_{t}\right) \leq c_1\mathbb{P}\left(E_t^\pi/D_t<   \kappa \big| {\cal{G}}_{t}\right)
\]
then the regulators decide to liquidate the bank. This condition can be written as $\mathbb{P}\left(E_t^\pi/D_t<   \kappa \big| {\cal{G}}_{t}\right) \geq \tfrac{c_2}{c_1+c_2}$, which means that $a=\tfrac{c_2}{c_1+c_2}$ in $(\ref{liqtime})$.
Thus, if the cost parameter of not liquidating an insolvent bank is less than the cost parameter of liquidating a solvent bank $(c_1<c_2)$ then $a>50\%$. As we will see in Proposition~\ref{special case}, in this case the bank benefits from the noise in the asset values.

In Section~\ref{calibration} we estimate $a$ and it is about 80\%, which means $c_1<c_2$ and that the bank benefits from the accounting noise.
This is consistent with \cite{BW12} and \cite{HL12} who find that regulators may actually exploit financial reporting choices in order to not intervene in troubled banks, often referred to as forbearance  (see footnote~\ref{footnote4} in Chapter \ref{chp: Intro}). 

By (\ref{distribution_Y}) and (\ref{proportion policy}),  $Y_t$ follows a log normal distribution with mean $(\log \hat Y_t-S_t/2)$ and variance $S_t$ under the observed information $\mathcal{G}_t$, similar for the true total assets value $E_t+D_t$.\footnote{During times when there are neither dividends nor recapitalizations, the true total assets without dividends and recapitalizations, $Y_t$, and the true total assets, $E_t+D_t$, both follow geometric Brownian motion processes with the same expected return, volatility, and Wiener process (or alternatively  $\hat Y_t$ and $\hat E_t+D_t$). Then under $\mathcal{G}_t$ $\log(E_t + D_t)$ follows a normal distribution with mean $(\log (\hat E_t+D_t) - S_t/2)$ and variance $S_t$. 
}
We get the following representation for the liquidation time under policy $\pi$: 
\begin{eqnarray}\label{hat tao}
& &\hat{\tau}^\pi = \inf\left\{t \Big | \hat  E_t^\pi/D_t \leq I(S_t)\right\},
\end{eqnarray}
where $I(S_t)=-1+(1+\kappa)e^{\frac{1}{2}S_t-\Phi^{-1}(a)\sqrt{S_t}}$ is the liquidation barrier for $\hat  E^{\pi}_t/D_t$ (proof is in Appendix \ref{Appendix A1}), and $\Phi$ is the cumulative standard normal distribution function. 

Similar with fully observed model, we denote the set of admissible control strategies by ${\Pi}$ that satisfy the following: $L_t^\pi$ is a nondecreasing right-continuous process adapted to ${\cal{G}}_{t}$ and $L_{0-}^\pi=0$; each $t_i^{\pi}$ is a stopping time of the filtration ${\cal{G}}_{t}$; each $s_i^{\pi}$ is measurable with respect to ${\cal{G}}_{(t_i^{\pi}+\Delta)-}$.
The measurability of $s_i^{\pi}$ with respect to ${\cal{G}}_{(t_i^{\pi}+\Delta)-}$ means that owners may decide on the exact amount of capital to be raised at time $t_i^{\pi}+\Delta$ based on all
the available information. Additionally, admissible
controls satisfy:
\begin{eqnarray*}
& & \hat X_t^\pi \geq I(S_t), \  \ \forall t\geq 0,  \quad  t_{i+1}^{\pi}-t_i^{\pi}\geq \Delta \text{ and }  s_i^\pi \leq \bar{s} D_{t_i^\pi+\Delta}, \ \ \  i=1,2,\dots,\\
&&  dL^{\pi}_t=0, \ \ \forall t\in [t_i^{\pi}, t_i^{\pi}+\Delta].
 \end{eqnarray*}

The objective function of shareholders under  information filtration $\mathcal{G}$, given the level of equity $\hat E_0=\hat E$,  the level of debt $D_0=D$, and the initial accounting asset uncertainty level $S_0=S$, is to maximize the expected discounted present value of dividends less equity issuance until liquidation over all admissible strategies:
\begin{equation}\label{define of value function with recap and partially observed}
\begin{aligned}
& \hat \varphi (\hat E,D,S ) \\
 =& \max_{\pi\in\Pi}\mathbb{E}^{\hat E,D,S}\Bigg[ \int_0^{\hat \tau^{\pi}} e^{-\delta t}dL^{\pi}_t 
 -\sum_i e^{-\delta(t^{\pi}_i+\Delta)}
\left(s^{\pi}_i+K D_{\{t^{\pi}_i+\Delta\}}\right)  \mathbf{1}_{\{t^{\pi}_i+\Delta< \hat \tau^{\pi}\}} +e^{-\delta \hat \tau^{\pi} }\omega E_{\hat \tau^\pi}^+ \Bigg].\\
\end{aligned}\end{equation}

Note that, by $(\ref{define of value function with recap and partially observed})$, when the bank is liquidated  the bank shareholders receive $\omega \mathbb{E}[ E_{\hat \tau^\pi}^+|\mathcal{G}_{\hat\tau^\pi}]$, i.e., a proportion of the book equity and, thus, $(1-\omega) \mathbb{E}[ E_{\hat \tau^\pi}^+|\mathcal{G}_{\hat\tau^\pi}]$ is the liquidation cost.  According to our model calibration, $\omega$ is about 32\%.
However, in practice, a violation of the minimum equity-to-debt requirement does not result in immediate liquidation but does generate additional costs and constraints to the bank, due to increased regulatory surveillance (\cite{P97} list the provisions for Prompt Corrective Action specified in the FDICIA). Further, the bank's competitive position is likely to be affected. Therefore, the bank's shareholders are likely to lose a substantial amount of the bank's economic rent. These effects raise our estimated $\omega$.

\section{Theoretical Analysis}\label{partI_theoretical analysis}

\subsection{Fully Observed Model}\label{fully_theorysection}
We show that under the assumption (\ref{discounta}), the value function $\varphi(E,D)$ in (\ref{define of value function with recap}) in fully observed model is finite. The proof is similar to that in Proposition \ref{partial_growth condition} for partially observed model.  Throughout the whole Chapter \ref{part 1}, we always assume that (\ref{discounta}) holds.

The value function (\ref{define of value function with recap}) is associated with the following Hamilton-Jacobi-Bellman equation (\text{HJB}) equation:\footnote{Because our problem is a mixed singular and impulse control with execution delay, we need reestablish an appropriate dynamic programming principle and viscosity property, as given in Proposition \ref{prop_wdpp}  and Theorem \ref{semi-explicit solution with recap}.}
 \begin{eqnarray}\label{fully_hjb2}
 \max\left\{\mathcal{A}^0 \varphi, \quad 1- \frac{\partial }{\partial E} \varphi,  \quad \mathcal{M}^0\varphi- \varphi\right\} =0 
 \end{eqnarray}
in $E/D>\kappa$ with boundary condition: $ \varphi\left(\kappa D, D\right)=\omega \kappa D, \forall D>0$. Here the operators $\mathcal{A}^0$ and $\mathcal{M}^0$ are defined as
{\fontsize{10.5pt}{10.5pt}
\begin{eqnarray*} \begin{aligned}
&\mathcal{A}^0=\frac{1}{2}(E+D)^2\sigma^2\frac{\partial^2}{\partial  E^2}+\left(\alpha E+(\alpha-\mu) D \right) \frac{\partial}{\partial E}+\mu D\frac{\partial}{\partial D}-\delta,\\
&\mathcal{M}^0 \varphi( E, D)=\sup_{s\in (0,\bar sD_{\Delta})} 
\mathbb{E}^{ E, D}\Bigg [ e^{-\delta\Delta} \left( \varphi( E_{\Delta}+s,D_\Delta)-s-KD_\Delta \right) \mathbf{1}_{\{ \tau>\Delta\}} +e^{-\delta \tau} \omega \kappa D_{\tau} \mathbf{1}_{\{ \tau\leq \Delta\}} \Bigg],
\end{aligned}\end{eqnarray*}}
and  $\mathbb{E}^{ E, D}$ is the expectation conditionally on that $E_0= E,\ D_0=D$ with new equity  to be issued at time $\Delta$, and $\tau=\inf\{t\in[0,\Delta]: E_t/D_t  \leq \kappa \}$ is the first stopping time for $E_t/D_t$ hitting $\kappa $ during $[0,\Delta]$.

The value function in $(\ref{define of value function with recap})$ is  two-dimensional and therefore, complicated to solve.
It is easy to verify that the value function $ \varphi(E, D)$ in $(\ref{define of value function with recap})$ is homogeneous in $E$ and $D$: \begin{eqnarray*}\varphi(\gamma E, \gamma D)=\gamma \varphi(E, D), \quad \forall \gamma>0,
 \end{eqnarray*}
which implies the following reduction:
\begin{eqnarray}\label{v(x)}
 V (X):= \varphi(E, D)/D, \  \   X=E/ D.
\end{eqnarray}
The equity-to-debt ratio $X_t:=E_t/D_t$  then satisfies \begin{align}
 \label{X_t}
X_t^\pi=&X+\int_0^t( X_u^\pi+1)(\alpha-\mu)du+\int_0^t \sigma (X_u^\pi+1) d W_u -L_t^\pi+\sum_i s^\pi_i \mathbf{1}_{\{t_i^\pi+\Delta\leq t\}}
\end{align}
with $X_0= X.$ Here for simplicity, we still use $\pi=\{L_t^\pi, (s_i^\pi, t_i^\pi)\}$ to represent an admissible strategy for $X_t$, but $L_t^\pi$ and $s_i^\pi$ above are the sizes of dividends and recapitalizations in terms of debt $D_t$ and $D_{t_i^\pi+\Delta}$. The admissible strategy satisfies $X_t^\pi \geq \kappa$ for all $t\geq 0$ and $s_i^\pi \leq \bar{s}$ for $i\geq 1$.

The corresponding \text{HJB} equation for value function $(\ref{v(x)})$ reduces to 
\begin{eqnarray}\label{fully_hjb3}\max\left\{\mathcal{L}^0 V, \quad 1-  V_X,  \quad \mathcal{P}^0 V- V\right\} =0, \end{eqnarray}
in $ X>\kappa$ with boundary condition $ V(\kappa)=\omega \kappa$, where
\begin{eqnarray*}
\mathcal{L}^0V&=&\frac{1}{2}(1+ X)^2\sigma^2V_{XX}+(\alpha-\mu)(1+ X)V_{X}-(\delta-\mu)V,\\
\mathcal{P}^0 V( X)&=& \sup_{s\in(0,\bar{s})}  \mathbb{E}\left[e^{-(\delta-\mu)\Delta}(V( X_{\Delta}+s)-s-K)\mathbf{1}_{\{{\tau}>\Delta\}}+e^{-(\delta-\mu)\tau}\omega \kappa \mathbf{1}_{\{\tau<\Delta\}} \right],
\end{eqnarray*}
and $ X_t$ follows $d X_t=(\alpha-\mu)( X_t+1)dt+\sigma( X_t+1) d W_t$ with $X_0= X$, and $\tau =\inf\{t\in[0,\Delta]: X_t \leq \kappa \} $ is the first stopping time for $X_t$ hitting $\kappa$ during $[0,\Delta]$.

The following theorem provides a theoretical justification of the linkage between the HJB equation (\ref{fully_hjb3}) and the value function in the fully observed case. Moreover, using the smooth fit principle for impulse control and singular control (see, e.g., \cite{OS02}, \cite{G09}, and \cite{GW09}), we have a semi-explicit solution under some necessary conditions.

\begin{theorem}\label{semi-explicit solution with recap} \textbf{(Fully observed model)}\\
(i) The HJB equation (\ref{fully_hjb3}) has a unique viscosity solution $V=V(X)$.
Define $\varphi(E,D)=DV(E/D)$. Then $\varphi(E,D)$ is the value function of the fully observed model. \\
(ii) Under the conditions (\ref{appendix_fully_assumption}),\footnote{Because of the delay with equity issuance, the conditions in (\ref{appendix_fully_assumption}) are very complex. Numerically we can verify that there is a large set of parameters satisfying (\ref{appendix_fully_assumption}). In particular, the parameters in Table~\ref{table allbankparameter} satisfy (\ref{appendix_fully_assumption}).} $V(X)$ has the following semi-explicit solution:
\begin{equation}\label{value of recap}
 V(X)=\left\{\begin{aligned}&H(X; u_2) & \kappa\leq X\leq u_1 \\
 & f_1(X; u_2)  & u_1 < X < u_2 \\
 & f_2(X; u_2) & u_2 \leq X,
\end{aligned}\right.
\end{equation}
where $f_1, f_2$, and $H$ are  given by (\ref{appendixf1}), (\ref{appendixf2}), and (\ref{fully M}), and the barriers $u_1$ and $u_2$ are determined by (\ref{fully_u1_u2_formula}). Moreover, the optimal strategy $\pi^*=\{L_t^{\pi^*}, (s_i^{\pi^*}, t_i^{\pi^*})\}$ is given by 
\begin{equation*}\begin{aligned}
 t_i^{\pi^*}: = & \inf \{t> t_{i-1}^{\pi^*}+\Delta:  X_t^{\pi^*}\leq u_1\},\quad \forall i=1,2,\dots, \\
 s_i^{\pi^*}: = & \max\left\{u_2-X_{t_i^{\pi^*}+\Delta}^{\pi^*},\   0\right\},  \quad \forall i=1,2,\dots,\\
L_t^{\pi^*}: = & \int_0^t \mathbf{1}_{\{X_t^{\pi^*} = u_2\}} dL_s^{\pi^*},\quad  \forall t\geq 0,\\
\end{aligned}
\end{equation*}
with $t_0^{\pi^*}:=-\Delta$ and $s_0^{\pi^*}: = 0.$\footnote{If liquidation time $\tau^{\pi^*}<\infty$ and $J$ is the last equity issuance time before $\tau^{\pi^*}$ then we define $t_n=\infty$ for all $n>J$.}
\end{theorem}

\begin{proof}
(i). The proof is similar to that in Theorem  \ref{partial_viscosity property} for the partially observed model. Thus we skip it.\\
(ii). We conjecture that  (\ref{fully_hjb3}) with boundary condition $V(\kappa)=\omega\kappa$ is equivalent to the following free boundary problem
\begin{equation}\label{equivalent free boundary p}\left\{\begin{aligned}
&\mathcal{L}^0 V=0  & \mbox{in } (u_1 ,u_2), \\
& V_X=1,\   V_{XX}=0 & \mbox{at } X=u_2, \\
& V= \mathcal{P}^0 V, \   V_X = \frac{d}{dX}\mathcal{P}^0 V  & \mbox{at } X=u_1,
\end{aligned}\right.\end{equation}
where  $u_1$ and $u_2$ are to be determined.  Now we solve problem (\ref{equivalent free boundary p}).

The general solution to $\mathcal{L}^0V=0$ is $V(X)=A_1 (X+1)^{\lambda_-}+A_2(X+1)^{\lambda_+}$, where $A_1$ and $A_2$ are arbitrary constants. By $V_X(u_2)=1$ and $V_{XX}(u_2)=0$, we can solve $A_1$ and $A_2$ as functions of $u_2$. This gives
$V(X)=f_1(X; u_2)$ for $X\in(u_1,u_2)$ and $V(X)=f_2(X;u_2)$ for $X\geq u_2$, where
\begin{eqnarray}
 f_1(X;u_2):&=&\frac{(\lambda_+-1)(1+X)^{\lambda_-}}{\lambda_-(\lambda_+-\lambda_-)(1+u_2)^{\lambda_--1}}-\frac{(\lambda_--1)(1+X)^{\lambda_+}}{\lambda_+(\lambda_+-\lambda_-)(1+u_2)^{\lambda_+-1}},  \label{appendixf1} \\
 f_2(X;u_2) :&=& f_1(u_2;u_2)+(X-u_2),  \label{appendixf2}
\end{eqnarray}\label{fully M}
and $\lambda_{\pm}:=\frac{-(\alpha-\mu-\frac{1}{2}\sigma^2)\pm \sqrt{(\alpha-\mu-\frac{1}{2}\sigma^2)^2+2\sigma^2(\delta-\mu)}}{\sigma^2}$.

Now we derive the formula of $\mathcal{P}^0V$.  By the first order condition, the optimal issuance $s$ equals $(u_2- X_{\Delta})^+$.  If $X_\Delta>u_2$, $V(X_\Delta)=V(u_2)+(u_2-X_\Delta)$ because $V(X)=f_2(X;u_2)$ for all $X>u_2$. Thus, we obtain
 \begin{eqnarray} \label{appendix M(x;u2)}
 \begin{aligned}
&\mathcal{P}^0V(X) \\
  =& \mathbb{E}\big[e^{-(\delta-\mu)\Delta}\left[(V(u_2)-u_2-K-1)+(X_{\Delta}+1)\right]\mathbf{1}_{\{\tau>\Delta\}} +e^{-(\delta-\mu)\tau}\omega \kappa \mathbf{1}_{\{\tau\leq\Delta\}}\big].
\end{aligned}
\end{eqnarray}
Noticing that $\ln(1+X_t)$ follows a Brownian motion during $[0,\Delta]$, we have
\begin{align}\label{p(x,t)}
p(X,t):=&\mathbb{P}\left[\tau\leq t| X_0=X\right] \\
=& 1- \Phi\left(\frac{\ln\left(\frac{X+1}{\kappa+1}\right)+\mu_{-} \Delta}{\sqrt{\Delta}\sigma}\right)+\left(\frac{\kappa+1}{X+1}\right)^{\frac{2\mu_{-}}{\sigma^2}}\Phi\left(\frac{\ln\left(\frac{\kappa+1}{X+1}\right)+\mu_{-} \Delta}{\sqrt{\Delta}\sigma}\right),\nonumber
\end{align}
where $\mu_{-}:= \alpha-\mu-\frac{1}{2}\sigma^2$.
By the reflection principle (see, e.g. Borodin and Salminen (2002)),
 \begin{eqnarray*}\begin{aligned}
&\mathbb{E}\left[(X_\Delta+1)\mathbf{1}_{\{\tau>\Delta\}}\right]\\
=& e^{\mu_{-} \Delta+\ln(x+1)+\frac{1}{2}\Delta \sigma^2}\left[1-\Phi\left( \frac{\ln\left(
\frac{\kappa+1}{X+1}\right)-\mu_{-} \Delta-\Delta \sigma^2}{\sqrt{\Delta}\sigma}\right)\right]\\
& + e^{(\alpha-\mu)\Delta}(1+\kappa)\left(\frac{1+X}{1+\kappa}\right)^{-\frac{2(\alpha-\mu)}{\sigma^2}}\left[1-\Phi\left(\frac{\ln\left(\frac{X+1}{\kappa+1}\right)-\mu_{-} \Delta-\Delta\sigma^2}{\sqrt{\Delta}\sigma}\right)\right].   \end{aligned}\end{eqnarray*}
In addition, by the formula of $p(X,t)$ in (\ref{p(x,t)}), we can calculate $\mathbb{E}[e^{-(\delta-\mu)\tau}\omega \kappa \mathbf{1}_{\{\tau\leq \Delta\}} ]$  in (\ref{appendix M(x;u2)}). Thus, we have $V(X)=H(X;u_2)$ for $X\in(\kappa,u_1]$, where
\begin{eqnarray}
H(X;u_2) : &=& \mathcal{P}^0 V(X) =  e^{-(\delta-\alpha)\Delta}(1+X)\Phi\left(\frac{\ln\left(\frac{1+X}{1+\kappa}\right)+\mu_{+}\Delta}{\sqrt{\Delta}\sigma}\right)  \label{fully M} \\
& & -(1+\kappa)e^{-(\delta-\alpha)\Delta}\left(\frac{1+X}{1+\kappa}\right)^{-\frac{2(\alpha-\mu)}{\sigma^2}}\Phi\left(\frac{\ln\left(\frac{1+\kappa}{1+X}\right)+\mu_{+}\Delta}{\sqrt{\Delta}\sigma}\right) \nonumber \\
& & +[f_1(u_2;u_2)-u_2-K-1]e^{-(\delta-\mu)\Delta}\Phi\left(\frac{\ln\left(\frac{1+X}{1+\kappa}\right)+\mu_{-}\Delta}{\sqrt{\Delta}\sigma}\right)\nonumber \\
& & -[f_1(u_2;u_2)-u_2-K-1]e^{-(\delta-\mu)\Delta} \left(\frac{1+\kappa}{1+X}\right)^{\frac{2\mu_{-}}{\sigma^2}}\Phi\left(\frac{\ln\left(\frac{1+\kappa}{1+X}\right)+\mu_{-}\Delta}{\sqrt{\Delta}\sigma}\right) \nonumber\\
& & + \omega \kappa \int_0^{\Delta} e^{-(\delta-\mu)t}\frac{\partial }{\partial t} p(X,t)dt, \nonumber
\end{eqnarray}
and $\mu_{+}:= \alpha-\mu+\frac{1}{2}\sigma^2$. This leads to the desired result in (\ref{value of recap}).

Now the equations $V(u_1)= \mathcal{P}^0 V(u_1)$ and $V_X(u_1) = \frac{d}{dX}\mathcal{P}^0 V(u_1)$ in (\ref{equivalent free boundary p}) can be rewritten as
\begin{equation}\label{fully_u1_u2_formula}
H(u_1;u_2)=f_1(u_1;u_2),\quad \frac{\partial H(X;u_2)}{\partial x}\Big|_{X=u_1}= \frac{\partial f_1(X;u_2)}{\partial X}\Big|_{X=u_1}.
\end{equation}
We expect that $u_1$ and $u_2$ are determined by (\ref{fully_u1_u2_formula}). We first give a lemma as follows.
\begin{lemma}\label{lemma_u0}
When $\omega\kappa <\frac{\alpha-\mu}{\delta-\mu}(1+\kappa)$,  the equation $f_1(\kappa; \theta)=\omega\kappa$ related to $\theta$, has a unique solution $\theta=u_0$ in domain $[\kappa,+\infty)$, where $f_1$ is given in (\ref{appendixf1}).
\end{lemma}
\noindent\textit{Proof of Lemma \ref{lemma_u0}}:
Because $\lambda_{-}<0<1<\lambda_{+}$, we obtain
\begin{eqnarray*}\begin{aligned}
\frac{\partial f_1(X;\theta)}{\partial \theta}=\frac{(\lambda_+-1)(1-\lambda_-)}{\lambda_-(\lambda_+-\lambda_-)}\left(\frac{1+X}{1+\theta}\right)^{\lambda_-}-\frac{(\lambda_--1)(1-\lambda_+)}{\lambda_+(\lambda_+-\lambda_-)}\left(\frac{1+X}{1+\theta}\right)^{\lambda_+}<0,
\end{aligned}\end{eqnarray*}
for all $X\geq \kappa$. Therefore, function $f_1(\kappa; \theta)$ given in (\ref{appendixf1}) is monotonically decreasing in $\theta$. Moreover, when $\omega\kappa <\frac{\alpha-\mu}{\delta-\mu}(1+\kappa)$, we have $f_1(\kappa;\kappa) = \frac{\alpha-\mu}{\delta-\mu}(1+\kappa) > \omega\kappa$ and $\underset{\theta\rightarrow +\infty}{\lim}f_1(\kappa;\theta) = -\infty<\omega\kappa$.
Hence, there exists a unique solution $\theta = u_0 \in (\kappa, \infty)$ to the equation $f_1(\kappa; \theta)=\omega\kappa$. This completes the proof of Lemma \ref{lemma_u0}. 

Now we give some sufficient conditions for the existence of $u_1$ and $u_2$. We always assume that the upper bound of equity issuance $\bar{s}$ satisfies $\bar s\geq u_0$.
We can show that under the following conditions \begin{eqnarray}\label{appendix_fully_assumption}
\begin{aligned}
&  \frac{\partial H(X;u_0)}{\partial X}\big|_{X=\kappa}> \frac{\partial f_1(X;u_0)}{\partial X}\big|_{X=\kappa},  \  \   \alpha-\mu-\frac{1}{2}\sigma^2\geq 0, \\
&
\omega\kappa <e^{-(\delta-\alpha)\Delta}\frac{\alpha-\mu}{\delta-\mu}(1+\kappa), 
\  \   u_0<\frac{\alpha-\mu}{\delta-\alpha}+\frac{\delta-\mu}{\delta-\alpha}(\kappa-\omega\kappa-K),  \\
& \text{ and } \frac{\delta-\alpha}{\delta-\mu} e^{(\alpha-\mu)\Delta} \geq \bigg[\frac{\delta-\alpha}{\delta-\mu} e^{(\alpha-\mu)\Delta}(1+\kappa) -\frac{\delta-\alpha}{\delta-\mu}(1+u_0)-K\bigg] \frac{\partial p(X,\Delta)}{\partial X}\big|_{X=\kappa},
\end{aligned}
\end{eqnarray}
there exists a pair solution $(u_1,u_2)$ to (\ref{fully_u1_u2_formula}) and  $V(X)$ in (\ref{value of recap}) satisfies HJB equation (\ref{fully_hjb3}). The proof is available in  Appendix \ref{fully_proof_uniqueness_u1_u2} and \ref{verification of conjectured v}.
 Thus $V(X)$ in (\ref{value of recap}) coincides with the value function of the fully observed model by the uniqueness of viscosity solution.

By the construction and proof above, we have $\{X>\kappa: V-\mathcal{P}^0V= 0\} = (\kappa, u_1]$, $\{X>\kappa: \mathcal{L}^0 V=0\}=(u_1,u_2)$, and  $\{X>\kappa: V_X-1=0\} = [u_2,+\infty)$. Therefore,
the strategy $\pi^*$ defined in Theorem \ref{semi-explicit solution with recap}
is the strategy associated with  $V(X)$ in (\ref{value of recap}).
We can verify the optimality of $\pi^*$ as follows. For any admissible strategy $\pi$, define $\tau_\varepsilon^{\pi}=(1/\varepsilon)\wedge\inf\{t\geq 0: X_t^{\pi}<\kappa+\varepsilon\}.$

Remember that $t_0=-\Delta$. Then we have
\begin{eqnarray} \label{discounted increase}
&&e^{-\delta_1 (\tau_\varepsilon^{\pi}\wedge t_n)} V( X^{\pi}_{\tau_\varepsilon^{\pi}\wedge t_n})-V(X) \\
&=&\sum_{i=1}^n \bigg\{ e^{-\delta_{1} (\tau_\varepsilon^{\pi}\wedge t_{i}^{\pi})} V(X^{\pi}_{\tau_\varepsilon^{\pi}\wedge t_{i}^{\pi}}) - e^{-\delta_{1} (\tau_\varepsilon^{\pi}\wedge (t_{i-1}^{\pi}+\Delta))} V(X^{\pi}_{\tau_\varepsilon^{\pi}\wedge (t_{i-1}^{\pi}+\Delta)})\bigg\}\nonumber\\
&&+ \sum_{i=1}^n \mathbf{1}_{\{t_i^{\pi}+\Delta\leq \tau_\varepsilon^{\pi}\}} e^{-\delta_{1}  t_i^{\pi}} \bigg\{ e^{-\delta_{1} \Delta}V( X^{\pi}_{t_i^{\pi}+\Delta})-V(X^{\pi}_{t_i^{\pi}})\bigg\} \nonumber
\end{eqnarray}
for each $n\geq 1$.

Since $X_t$ is a cadlag semimartingale on the stochastic interval $[t_{i-1}+\Delta, t_i)$,
we can apply Ito's formula (see e.g. IV.45 in Rogers and Williams (1987)) to get
\begin{eqnarray} \label{discounted increase1}
& &e^{-\delta_{1} (\tau_\varepsilon^{\pi}\wedge t_i^{\pi})} V( X^{\pi}_{\tau_\varepsilon^{\pi}\wedge t_i^{\pi}}) - e^{-\delta_{1} (\tau_\varepsilon^{\pi}\wedge  (t_{i-1}^{\pi}+\Delta)  )} V(X^{\pi}_{\tau_\varepsilon^{\pi}\wedge (t_{i-1}^{\pi}+\Delta)}) \nonumber \\
=&&\int_{\tau_\varepsilon^{\pi}\wedge (t_{i-1}^{\pi}+\Delta)}^{\tau_\varepsilon^{\pi}\wedge t_i^{\pi}} e^{-\delta_{1} u}\Big\{\underbrace{\mathcal{L}^0V}_{\leq 0}du +\underbrace{(1-V_{X})}_{\leq 0}dL_u^\pi +\sigma(1+X_u^{\pi})V_{X} dW_u-dL_u^{\pi}\Big\} \nonumber\\
\leq& &  \int_{\tau_\varepsilon^{\pi}\wedge (t_{i-1}^{\pi}+\Delta)}^{\tau_\varepsilon^{\pi}\wedge t_i^{\pi}} e^{-\delta_{1} u}\left\{ \sigma(1+ X_u^{\pi})V_{X} dW_u-dL_u^{\pi}\right\}.
\end{eqnarray}
Note that the above inequality holds with equality for the control $\pi^*$.
From $V\geq \mathcal{P}^0V$, we have
  \begin{align} \label{discounted increase2}
 & \sum_{i=1}^n e^{-\delta_{1}  t_i^{\pi}}\left[e^{-\delta_{1} \Delta}V( X^{\pi}_{t_i^{\pi}+\Delta})-V(X^{\pi}_{t_i^{\pi}}) \right] \mathbf{1}_{\{t_i^{\pi}+\Delta\leq \tau_\varepsilon^{\pi}\}}\\
  & \leq \sum_{i=1}^n e^{-\delta_{1} (t_i^{\pi}+\Delta)}(s_i^{\pi}+K) \mathbf{1}_{\{t_i^{\pi}+\Delta\leq \tau_\varepsilon^{\pi}\}} \nonumber
 \end{align}
  with the inequality being tight for the control $\pi^*$.
By the boundedness of $V_X$ (here $V_X(X_t^\pi)\leq V_X(\kappa+\varepsilon)$ for all $t\in [\tau_\varepsilon^{\pi}\wedge (t_{i-1}^{\pi}+\Delta), \tau_\varepsilon^{\pi}\wedge t_i^{\pi})$),
 \begin{eqnarray} \label{0mean}
\mathbb{E}\left [\int_{\tau_\varepsilon^{\pi}\wedge (t_{i-1}^{\pi}+\Delta)}^{\tau_\varepsilon^{\pi}\wedge t_i^{\pi}} e^{-\delta_{1} u}\sigma (1+X_u^{\pi})V_{X}(X_u^{\pi}) dW_u\right ]=0. \end{eqnarray}
Combining (\ref{discounted increase}), (\ref{discounted increase1}), (\ref{discounted increase2}), and (\ref{0mean}),  we have
 \begin{eqnarray} \label{before limit}
 & & V(X) -\mathbb{E}^{ X}\left[e^{-\delta_{1} (\tau_\varepsilon^{\pi}\wedge (t_n^\pi+\Delta))} V(X^{\pi}_{\tau_\varepsilon^{\pi}\wedge (t_n^{\pi}+\Delta)})\right]\\
 &\geq&  \mathbb{E}^{X}\left [ \int_0^{\tau_\varepsilon^{\pi}\wedge t_n^{\pi}}e^{-\delta_{1} u}dL_u^{\pi}- \sum_{i}^n e^{-\delta_{1} (t_i^{\pi}+\Delta)}(s_i^{\pi}+K) \mathbf{1}_{\{t_i^{\pi}+\Delta\leq \tau_\varepsilon^{\pi}\}} \right ] \nonumber
 \end{eqnarray}
 with the inequality being tight for the control $\pi^*$.
By the definition of $\pi$, we can always assume $t_n^{\pi} \rightarrow \infty$ when $n\rightarrow \infty$. That is, if $J$ is is the last equity issuance time before $\tau^{\pi}$, we set $t_{i}^\pi =+\infty$ for all $i>J$.
As such, we have
\begin{eqnarray*}
 \lim_{n\rightarrow \infty} \left\{V(X) -\mathbb{E}^{X}\left[e^{-\delta_{1} (\tau_\varepsilon^{\pi}\wedge (t_n+\Delta))} V(  X^{\pi}_{\tau_\varepsilon^{\pi}\wedge (t_n^{\pi}+\Delta)})\right] \right\}= V(X)-\mathbb{E}^{X}\left[e^{-\delta_{1} \tau_\varepsilon^{\pi}} V(X^{\pi}_{\tau_\varepsilon^{\pi} }) \right].
 \end{eqnarray*}
Sending $\varepsilon\rightarrow 0$ and $n\rightarrow \infty$ in (\ref{before limit}), according to the linear growth property of $V$ and dominant convergence theorem, we have
 \begin{eqnarray*}
 \begin{aligned} 
 V(X)-\mathbb{E}^{X}\left[e^{-\delta_{1} \tau^{\pi}} V(X^{\pi}_{ \tau^{\pi} }) \right] \geq \mathbb{E}^{X}\left [ \int_0^{ \tau^{\pi}}e^{-\delta_{1} u}dL_u^\pi- \sum_{i} e^{-\delta_{1} (t_i^{\pi}+\Delta)}(s_i^{\pi}+K) \mathbf{1}_{\{t_i^{\pi}+\Delta\leq \tau^{\pi}\}} \right ] 
 \end{aligned}\end{eqnarray*}
 with the inequality being tight for the control $\pi^*$. Notice $V(X_{\tau^{\pi}})=\omega \kappa = \omega \mathbb{E}^{X}[X_{\tau^{\pi}}^+]$.  Thus, $\pi^*$ is the optimal strategy. This completes the proof of Theorem \ref{semi-explicit solution with recap}.
\end{proof}

By Theorem \ref{semi-explicit solution with recap}, the optimal strategy of the fully observed model is characterized as follows. Parameter $u_2$ is the dividend barrier and parameter $u_1$ is the recapitalization barrier. The bank issues new equity when the equity-to-debt ratio is equal or below $u_1$. When the equity-to-debt ratio is between $u_1$ and $u_2$, the bank neither pays dividends nor issues equity.  When the equity-to-debt ratio is above $u_2$, the bank pays dividends to decrease the ratio level to $u_2$, as shown in Figure \ref{figure recap_fullymodelvalue}.

The interpretation of $(\ref{value of recap})$ is as follows. The bank issues new equity when equity to debt ratio is smaller than $u_1$. When the equity-to-debt ratio is between $u_1$ and $u_2$, the bank neither pays dividends nor issues equity. When the equity-to-debt ratio is above $u_2$, the bank pays dividends to decrease the ratio level to $u_2$. We also notice the recapitalization control is actually an impulse control with fixed cost so that value function is not twice continuously differentiable at $u_1$.
The numerical solution of $(u_1, u_2)$ and $V(X)$ in equation $(\ref{value of recap})$ is illustrated in Figure \ref{figure recap_fullymodelvalue}. Function $H(X,u_2)$ is more concave than $f_1(X;u_2)$ and they are tangent at the recapitalization barrier $u_1$.  

\begin{figure}[H]
\centering
  \includegraphics[height=7.2cm,width=8.4cm]{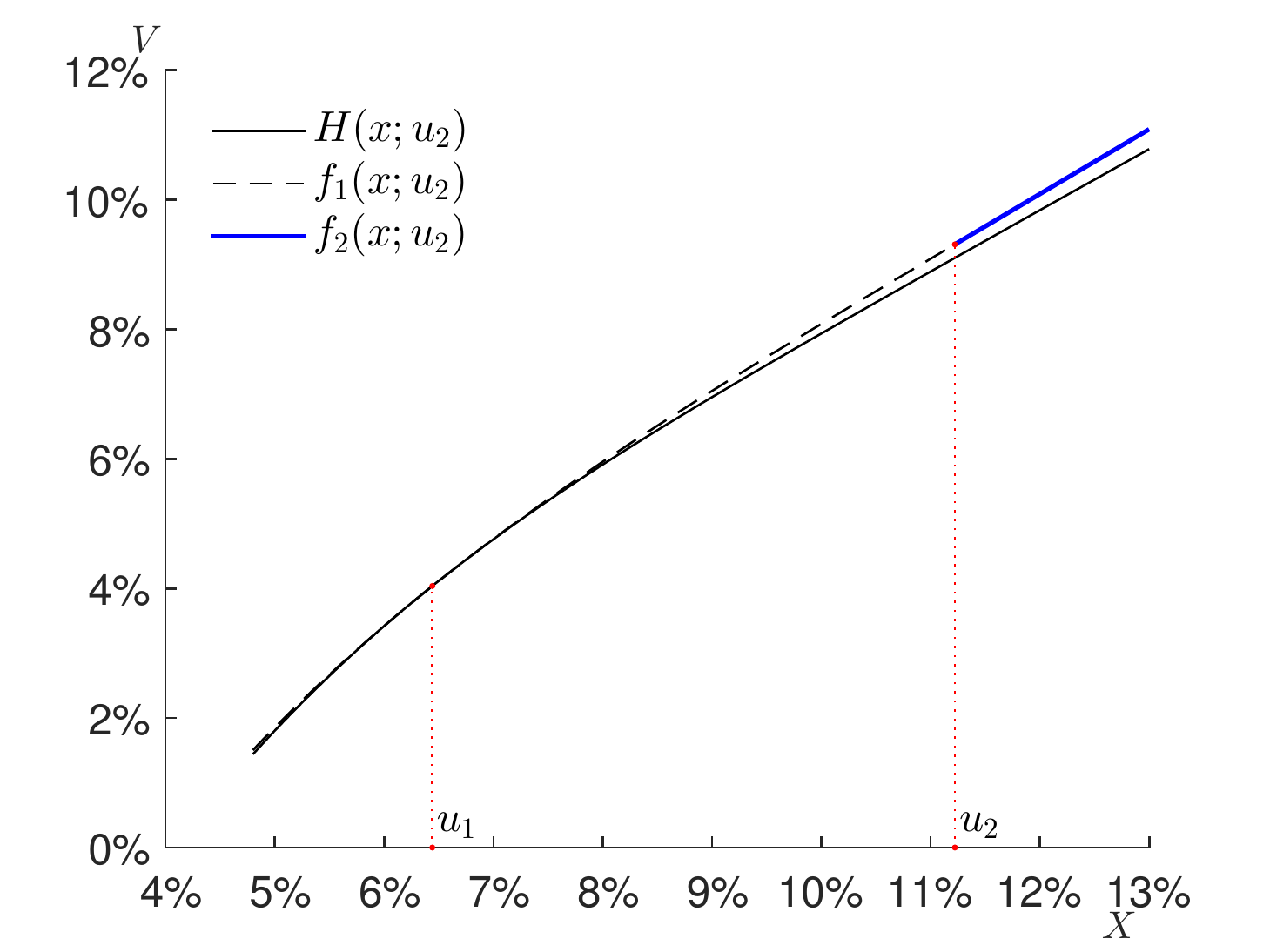}
 \caption[Three regions of the fully observed model]{\textbf{Three regions of the fully observed model.} This figure shows the components of the value function (\ref{value of recap}) under the parameter estimates in Table \ref{table  allbankparameter}. So the parameter values: $\alpha=11.59\%$, $\mu=10.52\%$, $\delta=23.30\%$, $\sigma=3.11\%$, $\kappa=4.80\%$, $\omega=31.50\%$, $\Delta=0.50$, and $K=0.20\%.$ The optimal equity issuance barrier $u_1$ is $6.44\%$, and the optimal dividend barrier $u_2$ is  $11.22\%$.  $V$ is the market value of equity and $X$ is the equity-to-debt ratio.}
 \label{figure recap_fullymodelvalue}
\end{figure}

\subsection{Partially Observed Model}\label{partially_theorysection}
First, we show that under the assumption (\ref{discounta}), $\hat \varphi(\hat E,D,S)$  in (\ref{define of value function with recap and partially observed}) is finite and satisfies the following growth condition (proof is in Appendix \ref{appendix_linear growth condition}).
\begin{proposition}\label{partial_growth condition}
Assume that (\ref{discounta}) holds. The value function $\hat \varphi(\hat E,D,S)$  in (\ref{define of value function with recap and partially observed})  satisfies the following growth condition
\begin{eqnarray*}\begin{aligned}
\hat E - C_0 D \leq \hat \varphi(\hat E,D, S) \leq  \hat E +D(C_1 +C_2S) \quad \text{in } \hat E/D>I(S),\ D>0,\ 0<S<\bar{S}
\end{aligned}\end{eqnarray*}
for some positive constants $C_0$, $C_1$, and $C_2$.
\end{proposition}

Problem $(\ref{define of value function with recap and partially observed})$ is a mixture of singular control and impulse control problems (see, e.g., 
\cite{OS02}, 
\cite{BP09},
and \cite{ARS17}). The corresponding value function is associated with the following HJB equation:
 \begin{eqnarray}\label{hjb2}\max\left\{\left[\mathcal{A}+\left(\sigma^2-\left(S/m+\sigma\rho\right)^2\right)\frac{\partial}{\partial S}\right] \hat \varphi, \quad 1- \frac{\partial }{\partial \hat E}\hat \varphi, \quad \mathcal{M}\hat \varphi-\hat \varphi\right\} =0 \end{eqnarray}
in $\hat E/D>I(S),\ D> 0,\ 0<S< \bar{S}$, with boundary condition: $\hat \varphi\left(I(S)D, D, S\right)=\omega D\psi(I(S), S),\ \forall D>0,0<S<\bar{S}$, where the operators $\mathcal{A}$ and $\mathcal{M}$ are defined as
\begin{eqnarray*} 
&&\mathcal{A}=\frac{1}{2}(\hat E+D)^2\left(S/m+\sigma\rho\right)^2\frac{\partial^2}{\partial \hat E^2}+\left(\alpha\hat E+(\alpha-\mu) D \right) \frac{\partial}{\partial \hat E}+\mu D\frac{\partial}{\partial D}-\delta,\\
&&\mathcal{M} \hat \varphi(\hat E, D, S)=\sup_{s\in(0,\bar{s}D_\Delta)}
\mathbb{E}^{ \hat E, D,S}\Big [ e^{-\delta\Delta}  \left( \hat \varphi(\hat E_{\Delta}+s,D_\Delta, S_\Delta)-s-KD_\Delta \right) \mathbf{1}_{\{\hat \tau>\Delta\}} \\
& &\quad \quad \quad \quad \quad \quad \quad  \quad \quad \quad \quad  \quad +e^{-\delta\hat \tau} \omega D_{\hat \tau} \psi(I(S_{\hat \tau}), S_{\hat \tau})\mathbf{1}_{\{\hat \tau\leq \Delta\}} \Big],
\end{eqnarray*}
where $\psi(x, y):=\frac{1}{\sqrt{2\pi}}\int_{\mathbb{R}} \left[(x+1)e^{-\frac{y}{2}+u \sqrt{y}}-1\right]^{+} e^{-\frac{u^2}{2}} du$ (see Appendix \ref{Appendix A1} for the derivation of $\psi(x,y)$), $\mathbb{E}^{ \hat E, D,S}$ is the expectation conditionally on  $\hat E_0=\hat E,\ D_0=D,\ S_0=S$ with new equity  to be issued at time $\Delta$, and $\hat \tau =\inf\{t\in [0,\Delta]: \hat E_t/D_t \leq I(S_t)\}$ is the first stopping time for $\hat E_t/D_t$ hitting $I(S_t)$ during $[0,\Delta]$.

It is easy to verify that the value function $\hat \varphi(\hat E, D, S)$ in $(\ref{define of value function with recap and partially observed})$ is homogeneous in $\hat{E}$ and $D$: \begin{eqnarray*}\hat \varphi(\gamma\hat E, \gamma D, S)=\gamma \hat \varphi(\hat E, D, S), \quad \forall \gamma>0,
 \end{eqnarray*}
which implies the following reduction:
\begin{eqnarray}\label{v(x,S)}
\hat V (\hat X, S):= \hat \varphi(\hat E, D, S)/D, \  \  \hat X=\hat E/ D.
\end{eqnarray}
By $(\ref{debt})$ and $(\ref{hat E_t})$, $\hat X_t=\mathbb{E}[X_t| \mathcal{G}_t] $ is the expected equity to debt ratio satisfying \begin{align}
 \label{hat X_t}
\hat X_t^\pi=&\hat X+\int_0^t(\hat X_u^\pi+1)(\alpha-\mu)du+\int_0^t (\hat X_u^\pi+1)\left(S_u/m+\sigma\rho\right) d\tilde{\mathcal{B}}_u \nonumber\\
&-L_t^\pi+\sum_i s^\pi_i \mathbf{1}_{\{t_i^\pi+\Delta\leq t\}}
\end{align}
with $\hat X_0=\hat X.$ Thus, the volatility of $\hat X_t$ is $S_t/m+\sigma\rho$.
For simplicity, we still use $\pi=\{L_t^\pi, (s_i^\pi, t_i^\pi)\}$ to represent an admissible strategy for $\hat X_t$, but here $L_t^\pi$ and $s_i^\pi$ are the sizes of cumulative dividends and recapitalizations in terms of debt $D$. The admissible strategy satisfies $\hat X_t^{\pi}\geq I(S_t)$ and $s_i^\pi \leq \bar{s}$ for  $i\geq 1$.

By $(\ref{v(x,S)})$, HJB equation (\ref{hjb2}) can be reduced to \begin{eqnarray}\label{hjb3}\max\left\{\left[\mathcal{L}+\left(\sigma^2-\left(S/m+\sigma\rho\right)^2\right)\frac{\partial}{\partial S}\right] \hat V, \quad 1- \frac{\partial }{\partial \hat X}\hat V, \quad  \mathcal{P}\hat V-\hat V\right\} =0 \end{eqnarray}
in $\Omega:=\{(\hat X, S): \hat X>I(S), S>0 \}$,
 with boundary condition: $\hat V\left(I(S), S\right)=\omega \psi(I(S),S),\ \forall S>0$, where
\begin{eqnarray*}
&&\mathcal{L}=\frac{1}{2}(1+\hat X)^2\left(S/m+\sigma\rho\right)^2\frac{\partial^2}{\partial \hat X^2}+(\alpha-\mu)(1+\hat X)\frac{\partial}{\partial \hat X}-(\delta-\mu),\\
&&\mathcal{P} \hat V(\hat X,S)=\sup_{s\in(0,\bar s)} \mathbb{E}\Big[e^{-(\delta-\mu)\Delta}  [\hat V(\hat X_{\Delta}+s,S_\Delta)-s-K]\mathbf{1}_{\{\hat{\tau}>\Delta\}} \\
&&\quad \quad\quad \quad \quad \quad  +e^{-(\delta-\mu) \hat \tau} \omega  \psi(I(S_{\hat \tau}), S_{\hat \tau})\mathbf{1}_{\{\hat \tau\leq \Delta\}} \Big],
\end{eqnarray*}
and $\hat X_t$ follows $d\hat X_t=(\alpha-\mu)(\hat X_t+1)dt+(S_t/m+\sigma\rho)(\hat X_t+1) d\tilde{\mathcal{B}}_t$ with $X_0=\hat X$, and $\hat \tau=\inf\{t\in [0,\Delta]: \hat X_t \leq I(S_t)\}$ is the first stopping time for $\hat X_t$ hitting $I(S_t)$ during $[0,\Delta]$.

Value functions $(\ref{define of value function with recap and partially observed})$ and $(\ref{v(x,S)})$ are equivalent since $D_t$ is deterministic. Same folds for the HJB equations $(\ref{hjb2})$ and $(\ref{hjb3})$. 
The classical dynamic programming principle requires on the first stage that the value function be measurable. 
 However, our model is nontrivial in the presence of time delay. The value function defined has no priori regularity such as measurability and continuity (which are very difficult to prove in advance).  In the literature, \cite{BT11} and \cite{T12} (Theorem 3.3 in Chapter 3 and Theorem 4.3 in Chapter 4) study the weak dynamic programming principle, which avoids the technical difficulties
related to the measurable selection argument. \cite{ARS17} apply the weak dynamic programming principle for optimal consumption and investment with fixed transaction costs because the value function is known discontinuous at the boundary of the solvency region.  Due to the measurability issue, we give a weak dynamic programming principle (DPP) for value function $\hat V(\hat X,S)$ as defined in (\ref{v(x,S)}). Denote the lower and upper semicontinuous envelope by
\begin{eqnarray}\label{envelope definition}
 \begin{aligned}
\hat V_{*}(\hat X, S) &= \underset{(\hat X', S')\rightarrow (\hat X, S)}{ \lim\inf} \hat V(\hat X', S'), \ \    \hat V^{*}(\hat X, S) &= \underset{(\hat X', S')\rightarrow (\hat X, S)}{ \lim\sup} \hat V(\hat X', S').
\end{aligned}\end{eqnarray}
We have the following weak DPP for $\hat V(\hat X,S)$ (proof is in Appendix \ref{appendix_proof_wdpp}).
\begin{proposition} \label{prop_wdpp}
The value function $\hat V$ satisfies the weak DPP, i.e. for any stopping time $\theta$,
{\fontsize{10.5pt}{10.5pt}
\begin{eqnarray*} \begin{aligned}
&\hat  V (\hat X,S) \leq \sup_{\pi\in\Pi}\mathbb{E}\Bigg[ \int_0^{ \theta} e^{-\delta_1 u }dL^{\pi}_u-\sum_i e^{-\delta_1 (t^{\pi}_i+\Delta)}\left(s^{\pi}_i+K  \right)  \mathbf{1}_{\{ t^{\pi}_i+\Delta< \theta\}} \quad \quad \nonumber\\
&+ \mathbb{E}\Big[e^{-\delta_1  (t^{\pi}_{k_{\theta}^{\pi}}+\Delta)}  \Big( \hat V^*(\hat X_{t^{\pi}_{k_{\theta}^{\pi}}+\Delta} +s^{\pi}_{k_{\theta}^{\pi}},S_{t^{\pi}_{k_{\theta}^{\pi}}+\Delta}) -s^{\pi}_{k_{\theta}^{\pi}}-K\Big)\Big| \mathcal{G}_{\theta} \Big] \mathbf{1}_{\{\theta\geq t_{k_\theta^{\pi}}^{\pi},t_{k_\theta^{\pi}}^{\pi}+\Delta <  \hat \tau^\pi  \}} \\
& +\mathbb{E}\Big[e^{-\delta_1 \hat\tau^\pi} \omega X_{\hat \tau^{\pi}}^+  \Big| \mathcal{G}_{\theta} \Big] \mathbf{1}_{\{\theta\geq t_{k_\theta^{\pi}}^{\pi},t_{k_\theta^{\pi}}^{\pi}+\Delta \geq \hat \tau^\pi \}} + e^{-\delta_1  \theta } \hat V^{*} (\hat X_\theta,S_\theta)\mathbf{1}_{\{\theta<t_{k_\theta^{\pi}}^{\pi}\wedge \hat\tau^\pi \}} + e^{-\delta_1 \hat\tau^\pi }\omega X_{\hat \tau^{\pi}}^+ \mathbf{1}_{\{\hat \tau^\pi \leq \theta<t_{k_\theta^{\pi}}^{\pi}\}}
 \Bigg],  \\
&\hat  V (\hat X,S) \geq \sup_{\pi\in\Pi}\mathbb{E}\Bigg[ \int_0^{ \theta} e^{-\delta_1 u }dL^{\pi}_u-\sum_i e^{-\delta_1 (t^{\pi}_i+\Delta)}
\left(s^{\pi}_i+K  \right)  \mathbf{1}_{\{ t^{\pi}_i+\Delta< \theta\}} \nonumber\\
&+ \mathbb{E}\Big[e^{-\delta_1  (t^{\pi}_{k_{\theta}^{\pi}}+\Delta)}  \Big( \hat V_*(\hat X_{t^{\pi}_{k_{\theta}^{\pi}}+\Delta} +s^{\pi}_{k_{\theta}^{\pi}},S_{t^{\pi}_{k_{\theta}^{\pi}}+\Delta}) -s^{\pi}_{k_{\theta}^{\pi}}-K\Big)\Big| \mathcal{G}_{\theta} \Big] \mathbf{1}_{\{\theta\geq t_{k_\theta^{\pi}}^{\pi},t_{k_\theta^{\pi}}^{\pi}+\Delta <  \hat \tau^\pi  \}} \\
& +\mathbb{E}\Big[e^{-\delta_1 \hat\tau^\pi} \omega X_{\hat \tau^{\pi}}^+  \Big| \mathcal{G}_{\theta} \Big] \mathbf{1}_{\{\theta\geq t_{k_\theta^{\pi}}^{\pi},t_{k_\theta^{\pi}}^{\pi}+\Delta \geq \hat \tau^\pi \}} + e^{-\delta_1  \theta } \hat V_{*} (\hat X_\theta,S_\theta)\mathbf{1}_{\{\theta<t_{k_\theta^{\pi}}^{\pi}\wedge \hat\tau^\pi \}} + e^{-\delta_1 \hat\tau^\pi }\omega X_{\hat \tau^{\pi}}^+ \mathbf{1}_{\{\hat \tau^\pi \leq \theta<t_{k_\theta^{\pi}}^{\pi}\}} \Bigg],
  \end{aligned}\end{eqnarray*}}
where $k_{\theta}^{\pi} : = \min\{i:  \theta \leq t_{i}^{\pi} +\Delta \}$ and $t_i^\pi = +\infty$ if $t_i^\pi\geq \hat \tau^\pi$.
\end{proposition}

Based on the weak DPP, we then give the viscosity characterization for the partially observed model.

\begin{theorem}\label{partial_viscosity property}
Assume $m> 0$ and $\Delta>0$. The HJB equation (\ref{hjb3}) has a unique viscosity solution $\hat V=\hat V(\hat X,S)$. Define $\hat\varphi(\hat E,D,S)= D\hat V(\hat E/D,S)$. Then $\hat\varphi$ is the value function of the partially observed model. 
\end{theorem}

\textit{Remark}. For the partially observed model, to give a similar verification theorem as in Theorem \ref{semi-explicit solution with recap}, we need some regularity results of the value function and the resulting free boundary which, however, are notoriously difficult to prove. Hence, we only describe the optimal strategy in terms of the dividend region $\mathbf{DR}$, continuation region $\mathbf{CR}$, and recapitalization region $\mathbf{RR}$ as defined below:
\begin{eqnarray*}
\mathbf{DR}:&=&\left\{(\hat X,S)\in \Omega: 1- \frac{\partial }{\partial \hat X}\hat V=0\right\},\\
\mathbf{CR}:&=& \left\{(\hat X,S)\in \Omega: \left[\mathcal{L}+\left(\sigma^2-\left(S/m+\sigma\rho\right)^2\right)\frac{\partial}{\partial S}\right] \hat V=0 \right\},\\
\mathbf{RR}:&=& \left\{(\hat X,S)\in \Omega: \mathcal{P}\hat V-\hat V=0 \right\}.
\end{eqnarray*}

\begin{proof} Based on the weak DPP, we now prove that the value function $\hat V(\hat X,S)$ in (\ref{v(x,S)})  is a viscosity solution to HJB (\ref{hjb3}) in the solvency region $\Omega$.
Define
\begin{eqnarray}\label{partial_F}
F(\hat X,S,v,Dv,D^2v): &=
&\min\Big\{F_1(\hat X,S,v,Dv,D^2v), \quad D_{\hat X}v-1 \Big\},\nonumber\\
F_1(\hat X,S,v,Dv,D^2v) : &= & \delta_1 v -(\alpha-\mu)(1+\hat X) D_{\hat X}v+ \left((S/m+\sigma\rho)^2-\sigma^2\right) D_S v \nonumber\\
&&- \frac{1}{2}(1+\hat X)^2\left(S/m+\sigma\rho\right)^2 D_{\hat X\hat X} v,
\end{eqnarray}
where $Dv=[D_{\hat X}v, D_{S}v]^T$ is the Jacobian matrix of $v$ and  $D^2 v=\left(\begin{array}{cc} D_{\hat X\hat X}v & D_{\hat X S} v\\
 D_{S \hat X}v &D_{SS} v\end{array} \right)$ is the Hessian matrix of $v$.

When $S_0=m \sigma (1-\rho)$, we have $S_t\equiv m \sigma (1-\rho)$ for all $t\geq 0$. In this case, by Theorem \ref{semi-explicit solution with recap}, we know $\hat V(\hat X; m \sigma (1-\rho)) = V(\hat X; \kappa_1, \omega_1)$, where the new liquidation barrier $\kappa_1=I(m\sigma(1-\rho))$ and the new liquidation value $\omega_1= \frac{\omega }{\kappa_1} \psi(I(m \sigma (1-\rho)),m \sigma (1-\rho))= \frac{\omega }{\sqrt{2\pi}\kappa_1}\int_{\mathbb{R}} \left[(\kappa_1+1)e^{-\frac{m \sigma (1-\rho)}{2}+u \sqrt{m \sigma (1-\rho)}}-1\right]^{+} e^{-\frac{u^2}{2}} du$.
This implies that we are able to restrict attention to $\Omega_1=\{(\hat X, S): \hat X>I(S), 0<S<m\sigma(1-\rho) \}$ and $\Omega_2=\{(\hat X, S): \hat X>I(S), m\sigma(1-\rho) <S <\bar{S} \}$ respectively. Without loss of generality, we only consider the case in $\Omega_1$, that is,
\begin{eqnarray}\label{p_hjb33}\left\{ \begin{aligned}
&\max\left\{F_1(\hat X,S,\hat V,D\hat V,D^2\hat V), \quad 1- D_{\hat X}\hat V, \quad  \mathcal{P}\hat V-\hat V\right\} =0,  \text{ in } \Omega_1,\\
&\hat V(\hat X,S)=\omega \psi(I(S),S), \text{ when } \hat X=I(S), \\
& \hat V(\hat X,S)= V(\hat X; \kappa_1, \omega_1), \text{ when } S=m\sigma(1-\rho).
\end{aligned}\right.
\end{eqnarray}

$\textbf{Definition of viscostiy solution}$. Throughout, $v_*$ and $v^*$ will denote the lower and upper-semicontinuous envelops of a local bounded function $v$, as defined in (\ref{envelope definition}).  A function $v$ is a viscosity solution to Problem (\ref{p_hjb33}) if $v$ is both a viscosity supersolution and a viscosity subsolution. That is,\\
(a) \textbf{Viscosity supersolution}: for any $(\hat X_0,S_0)\in \Omega_1$, any $\varphi\in C^{2,1}(\Omega_1)$ with $v_*\geq \varphi$ and $v_*-\varphi$ attains its minimum $0$ at $(\hat X_0,S_0)$, i.e $0=(v_*-\varphi)(\hat X_0,S_0)=\underset{(\hat X,S)\in \Omega_1}{\min} (v_*-\varphi)$, we have
\begin{eqnarray*}\left\{\begin{aligned}
&  \min\{F_1(\hat X,S, v_*,D\varphi,D^2\varphi), D_{\hat X}\varphi-1, v_{*}-\mathcal{P}v_{*}\}\geq 0,  \text{ at point }(\hat X_0,S_0) \in \Omega_1,\\
&   v(\hat X_0,S_0)\geq V(\hat X_0; \kappa_1, \omega_1),  \text{ if } S_0=m\sigma(1-\rho), \\
& v(\hat X_0, S_0) \geq \omega  \psi(I(S),S), \text{ if } \hat X_0 = I(S).
\end{aligned}\right.
\end{eqnarray*}
(b)  \textbf{Viscosity subsolution}: for any $(\hat X_0,S_0)\in \Omega_1$, any $\varphi\in C^{2,1}(\Omega_1)$ with $v^*\leq \varphi$ and $v^*-\varphi$ attains its maximum $0$ at $(\hat X_0,S_0)$, i.e $0=(v^*-\varphi)(\hat X_0,S_0)=\underset{(\hat X,S)\in \Omega_1}{\max} (v^*-\varphi)$, we have
\begin{eqnarray*}\left
\{\begin{aligned}
&  \min\{F_1(\hat X,S, v^*,D\varphi,D^2\varphi), D_{\hat X}\varphi-1, v^{*}-\mathcal{P}v^{*}\}\leq  0, \text{ at point }(\hat X_0,S_0) \in \Omega_1,\\
&   v(\hat X_0,S_0)\leq V(\hat X_0; \kappa_1, \omega_1),  \text{ if } S_0=m\sigma(1-\rho),\\
& v(\hat X_0, S_0) \leq \omega  \psi(I(S),S), \text{ if } \hat X_0 = I(S).
\end{aligned}\right.
\end{eqnarray*}

To prove Theorem \ref{partial_viscosity property}, we first prove two lemmas below to verify that $\hat V(\hat X,S)$ in (\ref{v(x,S)}) is a viscosity supersolution and suobsolution to Problem (\ref{p_hjb33}).
\begin{lemma} \label{supersolution} The value function $\hat V$ in (\ref{v(x,S)}) is a viscosity supersolution to Problem (\ref{p_hjb33}).
\end{lemma}
\noindent\textit{Proof of Lemma \ref{supersolution}}:
Choose any $(\hat X_0,S_0)\in \Omega_1, \varphi\in C^{2,1}(\Omega_1)$ such that $(\hat V_*-\varphi)(\hat X_0,S_0)= 0$ and $\hat V_* \geq \varphi$ in $\Omega_1$.  We take any admissible control $L_0=l$ with $l>0$ at time 0: \begin{eqnarray*}\varphi(\hat X_0,S_0)=\hat V(\hat X_0,S_0)\geq   \hat V(\hat X_0-l,S_0)+l\geq \varphi(\hat X_0-l,S_0)+l.
\end{eqnarray*}
Sending $l\rightarrow 0$, hence, $D_{\hat X}\varphi(\hat X_0,S_0)-1\geq0 $.

By the weak dynamic programming principle, if the shareholders choose to issue equity at initial time, then we have $\hat V(\hat X_0,S_0)\geq \mathcal{P} \hat V_{*}(\hat X_0,S_0)$. Denote by $\hat X_t^{\hat X_n,S_n}$ and $S_t^{S_n}$ the process of expected equity-to-debt ratio and uncertainty level starting from initial point $(\hat X_n,S_n)$.
By Fatou's Lemma, tower property, and the property of lower semi-continuous envelop $\hat V_{*}$, i.e.,
$\underset{(\hat X,S)\to (\hat X_0,S_0)}{\underline{\lim}} \hat V_{*}(\hat X,S) \geq \hat V_{*}(\hat X_0,S_0),$
we have\\
{\fontsize{10.5pt}{10.5pt}
\begin{equation*}\begin{aligned}
& \hat V_{*}(\hat X_0,S_0)\geq  \underset{n\rightarrow \infty}{\underline{\lim}} \mathcal{P} \hat V_{*}(\hat X_n,S_n) \\
= & \underset{n\rightarrow \infty}{\underline{\lim}} \sup_{s\in(0,\bar{s})} \mathbb{E}\Big[e^{-\delta_1\Delta} [\hat V_*(\hat X_{\Delta}^{\hat X_n,S_n}+s,S_\Delta^{S_n})-s-K]\mathbf{1}_{\{\hat{\tau}>\Delta\}} +e^{-\delta_1 \hat \tau} \omega  \psi(I(S_{\hat \tau}^{S_n}), S_{\hat \tau}^{S_n})\mathbf{1}_{\{\hat \tau\leq \Delta\}} \Big]\\
 \geq &\sup_{s\in(0,\bar{s})}  \underset{n\rightarrow \infty}{\underline{\lim}}  \mathbb{E}\Big[e^{-\delta_1\Delta} [\hat V_*(\hat X_{\Delta}^{\hat X_n,S_n}+s,S_\Delta^{S_n})-s-K]\mathbf{1}_{\{\hat{\tau}>\Delta\}} +e^{-\delta_1 \hat \tau} \omega  \psi(I(S_{\hat \tau}^{S_n}), S_{\hat \tau}^{S_n})\mathbf{1}_{\{\hat \tau\leq \Delta\}} \Big]\\
  \geq &\sup_{s\in(0,\bar{s})}   \mathbb{E}\Big[e^{-\delta_1\Delta} \mathbb{E}\big[\underset{n\rightarrow \infty}{\underline{\lim}} [\hat V_*(\hat X_{\Delta}^{\hat X_n,S_n}+s,S_\Delta^{S_n})-s-K]\mathbf{1}_{\{\hat{\tau}>\Delta\}} \big|\mathcal{G}_{\Delta}\big] +e^{-\delta_1 \hat \tau} \omega  \psi(I(S_{\hat \tau}^{S_0}), S_{\hat \tau}^{S_0})\mathbf{1}_{\{\hat \tau\leq \Delta\}} \Big],
 \end{aligned}
\end{equation*}}
for any sequence $(\hat X_n,S_n)$ converging to $(\hat X_0,S_0)$. This gives $ \hat V_{*}(\hat X_0,S_0)  \geq  \mathcal{P} \hat V_{*}(\hat X_0,S_0)$.

It remains to show that $F_1(\hat X_0,S_0, \hat V_*,D\varphi,D^2\varphi) \geq 0$. By the definition of $\hat V_*$,
we choose the sequence $\{\gamma_n\}_n$ and $\{h_n\}_n$ where
\begin{eqnarray*}
\gamma_n: = \hat V(\hat X_n, S_n) - \varphi(\hat X_n,S_n) \rightarrow 0, \quad h_n>0, \quad h_n, \frac{\gamma_n}{h_n} \rightarrow 0, \quad \hat X_n \rightarrow \hat X_0, S_n \rightarrow S_0.
\end{eqnarray*}
Now choose small $\varepsilon$ to guarantee $B(\hat X_n,S_n; \varepsilon):=\{(\hat X,S): |\hat X-\hat X_n|<\varepsilon, |S-S_n|<\varepsilon\}\subseteq \Omega_1$  and a sequence of admissible strategy $\pi_n \in \Pi((\hat X_n, S_n))$ with liquidation time $\hat \tau^{\pi_n}$ satisfying that
there is no dividend payment and equity issuance before time $\tau_n$, where
\begin{eqnarray*}\tau_n:=h_n \wedge \inf\{t\geq0: (\hat X_t^{\hat X_n,S_n,\pi_n}, S_t^{S_n})\notin B(\hat X_n,S_n; \varepsilon) \},
\end{eqnarray*}
 and $\hat X_t^{\hat X_n,S_n,\pi_n}$ denotes the expected equity-to-debt ratio under policy $\pi_n$ starting from $(\hat X_n,S_n)$.
Applying Ito's formula and Dynkin's formula to smooth function $\varphi$  and noting that the integrand in the stochastic integral is bounded during $[0,\tau_n]$, we get
\begin{eqnarray*}\begin{aligned}
&\hat V(\hat X_n, S_n)-\gamma_n=\varphi(\hat X_n,S_n) \\
=& \mathbb{E}\left[e^{-\delta_1\tau_n}\varphi(\hat X_{\tau_n}^{\hat X_n,S_n,\pi_n},S_{\tau_n}^{S_n}) +\int_0^{\tau_n} e^{-\delta_1t}F_1(\hat X_t^{\hat X_n,S_n,\pi_n},S_t^{S_n}, \varphi,D\varphi,D^2\varphi) dt\right].
\end{aligned}\end{eqnarray*}
Therefore, by the weak dynamic programming principle, $\hat V_{*}\geq \varphi$,  and $\delta_1>0$,
 \begin{eqnarray*}\begin{aligned}
\hat V(\hat X_n, S_n)
\geq & \mathbb{E}\left[ e^{-\delta_1\tau_n}\hat V_{*}(\hat X_{\tau_n}^{\hat X_n,S_n,\pi_n},S_{\tau_n}^{S_n})\right]
\geq  \mathbb{E}\left[e^{-\delta_1\tau_n}\varphi(\hat X_{\tau_n}^{\hat X_n,S_n,\pi_n},S_{\tau_n}^{S_n})\right]\\
=&\hat V(\hat X_n, S_n)-\gamma_n-\mathbb{E}\left[\int_0^{\tau_n} e^{-\delta_1t} F_1(\hat X_t^{\hat X_n,S_n,\pi_n},S_t^{S_n}, \varphi,D\varphi,D^2\varphi) dt\right].
\end{aligned}
\end{eqnarray*}
It follows that \begin{equation*}
\frac{1}{h_n}\mathbb{E}\left[\int_0^{\tau_n} e^{-\delta_1t} F_1(\hat X_t^{\hat X_n,S_n,\pi_n},S_t^{S_n}, \varphi, D\varphi,D^2\varphi) dt \right]+\frac{\gamma_n}{h_n}\geq 0.
\end{equation*}
Notice that $\tau_n = h_n$ when $n$ is large enough.
Sending $n\rightarrow\infty$ and using the mean-value theorem and the dominated convergence theorem, we conclude that $F_1(\hat X_0,S_0, \varphi,D\varphi,D^2\varphi) \geq0.$ Thanks to $\varphi(\hat X_0,S_0) = \hat V_*(\hat X_0,S_0)$, this completes the proof of Lemma \ref{supersolution}.

\begin{lemma}\label{subsolution} The value function $\hat V(\hat X,S)$ in (\ref{v(x,S)}) is a viscosity subsolution to Problem (\ref{p_hjb33}).
\end{lemma}
\noindent\textit{Proof of Lemma \ref{subsolution}}: Choose any $(\hat X_0,S_0)\in \Omega_1, \varphi\in C^{2,1}(\Omega_1)$ such that $(\hat V^*-\varphi)(\hat X_0,S_0)= 0$ and $\hat V^*(\hat X_0,S_0) \leq\varphi(\hat X_0,S_0)$ on $\Omega_1$.  If $\hat V^*(\hat X_0,S_0)\leq \mathcal{P}\hat V^*(\hat X_0,S_0)$, the subsolution inequality apparently holds. If $\hat V^*(\hat X_0,S_0) > \mathcal{P}\hat V^*(\hat X_0,S_0)$, we prove it by contradiction. Assume, to the contrary, that
\begin{equation}\label{partial_viscosity_contradiction0}
\lambda:=\min\{ F_1(\hat X_0,S_0, \hat V^*,D\varphi,D^2\varphi), \hat V^*(\hat X_0,S_0)-\mathcal{P} \hat V^{*} (\hat X_0,S_0), D_{\hat X}\varphi(\hat X_0,S_0)-1\} >0.
\end{equation}
Define
\begin{equation}
\begin{aligned}\label{partial_Ps}
 &\mathcal{P}_{s} V(\hat X,S): = \mathbb{E}\Big[e^{-\delta_1\Delta} [  V(\hat X_{\Delta}^{\hat X,S}+s,S_\Delta^{S})-s-K]\mathbf{1}_{\{\hat{\tau}>\Delta\}}  +e^{-\delta_1 \hat \tau} \omega  \psi(I(S_{\hat \tau}^{S}), S_{\hat \tau}^{S})\mathbf{1}_{\{\hat \tau\leq \Delta\}} \Big] \\
 & = \mathbb{E}\Big[ \mathbb{E}\big[e^{-\delta_1\Delta} [ V(\hat X_{\Delta}^{\hat X,S}+s,S_\Delta^{S})-s-K]\mathbf{1}_{\{\hat{\tau}>\Delta\}} \big|\mathcal{G}_{\Delta}\big]  +e^{-\delta_1 \hat \tau} \omega  \psi(I(S_{\hat \tau}^{S}), S_{\hat \tau}^{S})\mathbf{1}_{\{\hat \tau\leq \Delta\}} \Big]. 
\end{aligned}\end{equation}
Now we show that for any admissible issuance $s \in \mathcal{G}_{\Delta}$, $\mathcal{P}_{s}\hat V^*$ is upper semicontinuous.
Let $(\hat X_n,S_n)\rightarrow (\hat X_0,S_0)$, then we have $(\hat X_\Delta^{\hat X_n,S_n},  S_\Delta^{S_n})\rightarrow (\hat X_\Delta^{\hat X_0,S_0},  S_\Delta^{S_0})$ under $\mathcal{G}_\Delta$. By the linear growth property of $\hat V$, the formula in expectation in (\ref{partial_Ps}) has an upper bound $e^{-\delta_1\Delta} \hat X_\Delta^{\hat X_n,S_n}+C_0+C_1S_\Delta^{S_n}$.
Note that 
\begin{eqnarray*}
e^{-\delta_1\Delta}\hat X_{\Delta}^{\hat X_n,S_n} = (\hat X_n+1)e^{-(\delta-\alpha)\Delta - \frac{1}{2}\int_0^\Delta \left(\frac{S_t^{S_n}}{m}+\sigma\rho\right)^2dt + \int_0^\Delta \left(\frac{S_t^{S_n}}{m}+\sigma\rho\right)d\tilde{\mathcal{B}}_t }-e^{-\delta_1\Delta}
\end{eqnarray*}
and the bounded sequence $S_t^{S_n}/m+\sigma\rho \rightarrow S_t^{S_0}/m+\sigma\rho$ for all $t\in[0,\Delta]$ when $n\rightarrow \infty$.
Hence we have $\underset{n\rightarrow\infty}{\lim}\mathbb{E}[e^{-\delta_1\Delta} \hat X_\Delta^{\hat X_n,S_n}+C_0+C_1S_\Delta^{S_n}] = \mathbb{E}[e^{-\delta_1\Delta} \hat X_\Delta^{\hat X_0,S_0}+C_0+C_1S_\Delta^{S_0}]$.
Applying the Fatou's Lemma, we then obtain
$\mathcal{P}_s \hat V^*(\hat X_0,S_0)\geq \mathcal{P}_s ( \underset{n\rightarrow \infty}{\lim\sup}  \hat V^*(\hat X_n,S_n)) \geq \underset{n\rightarrow \infty}{\lim\sup}  \mathcal{P}_s \hat V^*(\hat X_n,S_n)$. Thus $\mathcal{P}_{s}\hat V^*$ is upper semicontinuous.

For each $(\hat X_n,S_n)$ and any small $\varepsilon>0$, there exists $s_n \in \mathcal{G}_{\Delta}$ depending on $\hat X_\Delta^{\hat X_n,S_n}$ such that $\mathcal{P}\hat V^*(\hat X_n,S_n) < \mathcal{P}_{s_n} \hat V^*(\hat X_n,S_n)+\varepsilon$.
Using the upper semicontinuity property of $\mathcal{P}_{s}\hat V^*$, we obtain
\begin{equation*}
\begin{aligned}
\mathcal{P}\hat V^*(\hat X_0,S_0) &=\sup_{s\in(0,\bar{s})} \mathcal{P}_s \hat V^*(\hat X_0,S_0) \geq \sup_{s\in(0,\bar{s})} \underset{n\rightarrow \infty}{\lim\sup}  \mathcal{P}_s \hat V^*(\hat X_n,S_n)\\
 &\geq \underset{n\rightarrow \infty}{\lim\sup}  \mathcal{P}_{s_n} \hat V^*(\hat X_n,S_n) \geq \underset{n\rightarrow \infty}{\lim\sup}  \mathcal{P} \hat V^*(\hat X_n,S_n) -\varepsilon.
\end{aligned}
\end{equation*}
Therefore, $\mathcal{P}\hat V^*$ is also upper semicontinuous.
Because $\hat V^*$ and $\mathcal{P}\hat V^*$ are upper-semicontinuous and $\varphi\geq \hat V^*$, we claim that there exists $\varepsilon>0$ such that
\begin{equation}\label{partial_viscosity_contradiction}
\min\{ F_1(\hat X,S, \varphi,D\varphi,D^2\varphi), \quad \varphi(\hat X,S)-\mathcal{P} \hat V^{*}(\hat X,S), \quad D_{\hat X}\varphi(\hat X,S)-1\} \geq \frac{\lambda}{2},
\end{equation}
for all $(\hat X,S) \in B(\hat X_0,S_0; \varepsilon)\subseteq\Omega_1.$
Otherwise, if no such $B(\hat X_0,S_0; \varepsilon)$ exists, then there exists a sequence $(\hat X_n, S_n)\rightarrow (\hat X_0,S_0)$ and $\varphi(\hat X_n,S_n)-\mathcal{P}\hat V^*(\hat X_n,S_n)<\frac{\lambda}{2}$ for large $n$. Sending $n\rightarrow\infty$, we have $\varphi(\hat X_0,S_0)\leq \underset{n\rightarrow \infty}{\lim\sup}   \mathcal{P}\hat V^*(\hat X_n,S_n) +\frac{\lambda}{2}\leq \mathcal{P}\hat V^*(\hat X_0,S_0) +\frac{\lambda}{2}$, which contradicts (\ref{partial_viscosity_contradiction0}).
Now choose $(\hat X_n,S_n)\rightarrow (\hat X_0,S_0)$ in $\Omega_1$, such that $\hat V(\hat X_n,S_n)\rightarrow \hat V^*(\hat X_0,S_0)$. We have
\begin{eqnarray}\label{subsolution4}
 \varphi(\hat X_n,S_n)-\hat V(\hat X_n,S_n)\rightarrow 0.\end{eqnarray}
We can choose $\varepsilon$ small enough to gaurantee $B(\hat X_n; \varepsilon/2) : =\{\hat X: |\hat X -\hat X_n |< \varepsilon \} \subseteq (I(S),+\infty)$ for all $n$. For any admissible strategy $\pi_n\in  \Pi((\hat X_n,S_n))$, define a sequence of stopping time:
\begin{equation*}\begin{aligned}& \bar{\tau}_n:= R\wedge \inf\{t>0: \hat X_t^{\hat X_n,S_n,\pi_n} \notin B(\hat X_n; \varepsilon/2) \},\\
& \theta_n:=t_1^{\pi_n} \wedge \bar{\tau}_n,
\end{aligned}\end{equation*}
where $R>0$ is a fixed constant, and we set $t_1^{\pi_n}=+\infty$ if there is no equity issuance time determined before liquidation time $\hat\tau^{\pi_n}$.

By the definition of $\bar\tau_n$, we know $(\hat X_t^{\hat X_n,S_n,\pi_n},S_t^{S_n})$ is inside $\Omega_1$ during $[0,\theta_n)$ so that there is no liquidation before $\theta_n$.
By the  weak dynamic programming principle in Proposition \ref{prop_wdpp}, 
\begin{eqnarray}\label{subsolution1}
\begin{aligned}
&\hat V(\hat X_n,S_n) \leq \sup_{\pi_n\in \Pi((\hat X_n,S_n))}
\mathbb{E}\Bigg[\int_0^{ t_1^{\pi_n} \wedge \bar{\tau}_n} e^{-\delta_1t} dL_t^{\pi_n} +e^{-\delta_1\bar \tau_n}\hat V^*(\hat X_{\bar \tau_n}^{\hat X_n,S_n,\pi_n},S_{\bar \tau_n}^{S_n})\mathbf{1}_{\{\bar \tau_n <  t^{\pi_n}_1 \}} \\
& + e^{-\delta_1(t^{\pi_n}_1+\Delta)} \mathbb{E}\left[\hat V^*\big(\hat X_{t^{\pi_n}_1+\Delta}^{\hat X_n,S_n,\pi_n}+s_1^{\pi_n},S_{t_1^{\pi_n}+\Delta}^{S_n}\big)-s_{1}^{\pi_n} -K \Big| \mathcal{G}_{t^{\pi_n}_{1}}\right]\mathbf{1}_{\{ \bar \tau_n\geq t^{\pi_n}_{1}, t^{\pi_n}_{1}+\Delta <\hat\tau^{\pi_n} \}}\\
&+ e^{-\delta_1 \hat\tau^{\pi_n}} \mathbb{E}\Big[ \omega \left(X_{\hat\tau^{\pi_n}}^{\hat X_n,S_n,\pi_n} \right)^+\Big| \mathcal{G}_{t^{\pi_n}_{1}}  \Big] \mathbf{1}_{\{\bar \tau_n\geq  t^{\pi_n}_{1}, t^{\pi_n}_{1}+\Delta \geq \hat\tau^{\pi_n} \}}
\Bigg].
\end{aligned}\end{eqnarray}

Applying Ito's formula and Dynkin's formula to smooth function $\varphi$  and noting that the integrand in the stochastic integral is bounded during $[0,\theta_n)$, we get
\begin{eqnarray*}\begin{aligned}
\varphi(\hat X_n,S_n)
= & \mathbb{E}\Big[e^{-\delta_1\theta_n}\varphi(\hat X_{\theta_n-}^{\hat X_n,S_n,\pi_n},S_{\theta_n-}^{S_n})\mathbf{1}_{\{\theta_n <t_1^{\pi_n}\}}+e^{-\delta_1\theta_n}\varphi(\hat X_{\theta_n-}^{\hat X_n,S_n,\pi_n},S_{\theta_n-}^{S_n})
 \mathbf{1}_{\{\theta_n=t_1^{\pi_n}\}}\\
 &+\int_0^{\theta_n-}e^{-\delta_1t} D_{\hat X}\varphi d(L_t^{\pi_n})^c +\int_{0}^{\theta_n-}e^{-\delta_1 t}F_1(\hat X^{\hat X_n,S_n,\pi_n}_t,S^{S_n}_t, \varphi,D\varphi,D^2\varphi) dt \\
 &-\sum_{0\leq u< \theta_n} e^{-\delta_1 u}[\varphi(\hat X^{\hat X_n,S_n,\pi_n}_u,S^{S_n}_u)-\varphi(\hat X^{\hat X_n,S_n,\pi_n}_{u-},S^{S_n}_{u-}) ] \Big].
 \end{aligned}
\end{eqnarray*}
Notice that $\varphi_{\hat X}\geq 1+\frac{\lambda}{2}$ in $B(\hat X_0,S_0; \varepsilon)$, $(\hat X_{\theta_n}^{\hat X_n,S_n,\pi_n},S_{\theta_n}^{S_n})$ locates in $B(\hat X_0,S_0; \varepsilon)$ for large $n$, 
and
$\hat X_{t_1^{\pi_n}}^{\hat X_n,S_n,\pi_n} \in \bar{B}(\hat X_n; \varepsilon/2) \subseteq B(\hat X_0; \varepsilon)$ for large $n$, where $\bar{B}(\hat X_n; \varepsilon/2)$ is the closure of $B(\hat X_n; \varepsilon/2)$, and we have used $\Delta L_{t_1^{\pi_n}}^{\pi_n}=0$.
We can derive
\begin{eqnarray}\label{subsolution2}
\begin{aligned}
&\varphi(\hat X_n,S_n)
=   \mathbb{E}\Big[e^{-\delta_1 \bar \tau_n}\varphi(\hat X_{\bar \tau_n-}^{\hat X_n,S_n,\pi_n},S_{\bar \tau_n-}^{S_n})\mathbf{1}_{\{\bar \tau_n < t_1^{\pi_n}\}}+e^{-\delta_1t_1^{\pi_n}}\varphi(\hat X_{t_1^{\pi_n}-}^{\hat X_n,S_n,\pi_n},S_{t_1^{\pi_n}-}^{S_n})
 \mathbf{1}_{\{\bar\tau_n \geq  t_1^{\pi_n}\}}\\
 &+\int_0^{\theta_n-}e^{-\delta_1t} D_{\hat X}\varphi d(L_t^{\pi_n})^c +\int_{0}^{\theta_n-}e^{-\delta_1 t}F_1(\hat X^{\hat X_n,S_n,\pi_n}_t,S^{S_n}_t, \varphi,D\varphi,D^2\varphi) dt \\
 &-\sum_{0\leq u < \bar\tau_n\wedge t_1^{\pi_n} } e^{-\delta_1 u}[\varphi(\hat X^{\hat X_n,S_n,\pi_n}_u,S^{S_n}_u)-\varphi(\hat X^{\hat X_n,S_n,\pi_n}_{u-},S^{S_n}_{u-}) ]\Big]\\
\geq &  \mathbb{E}\Big[e^{-\delta_1\bar \tau_n}\varphi(\hat X_{\bar \tau_n-}^{\hat X_n,S_n,\pi_n},S_{\bar \tau_n-}^{S_n})\mathbf{1}_{\{\bar \tau_n < t_1^{\pi_n}\}}+e^{-\delta_1t_1^{\pi_n}}\varphi(\hat X_{t_1^{\pi_n}}^{\hat X_n,S_n,\pi_n},S_{t_1^{\pi_n}}^{S_n})
 \mathbf{1}_{\{\bar\tau_n \geq   t_1^{\pi_n}\}}\\
 &+\int_0^{\theta_n-}e^{-\delta_1t} D_{\hat X}\varphi d (L_t^{\pi_n})^c+\int_{0}^{\theta_n-} e^{-\delta_1 t}\underbrace{ F_1(\hat X^{\hat X_n,S_n,\pi_n}_t,S^{S_n}_t, \varphi,D\varphi,D^2\varphi) }_{\geq \lambda/2 \text{ for large } n} dt \\
 &+\sum_{0\leq u<\bar\tau_n\wedge t_1^{\pi_n}} e^{-\delta_1 u}(1+\frac{\lambda}{2}) (L_{u}^{\pi_n} -L_{u-}^{\pi_n}) \Big]. 
 \end{aligned}
\end{eqnarray}
Thanks to (\ref{partial_viscosity_contradiction}),
\begin{eqnarray*}\begin{aligned}
&\varphi(\hat X_{t_1^{\pi_n}}^{\hat X_n,S_n,\pi_n},S_{t_1^{\pi_n}}^{S_n})
 \mathbf{1}_{\{\bar\tau_n \geq t_1^{\pi_n} \}}\geq \left(\mathcal{P}\hat V^*(\hat X_{t_1^{\pi_n}}^{\hat X_n,S_n,\pi_n},S_{t_1^{\pi_n}}^{S_n}) +\frac{\lambda}{2}\right) \mathbf{1}_{\{\bar\tau_n \geq t_1^{\pi_n}\}}\\
\geq \mathbb{E} \Bigg[&e^{-\delta_1(t^{\pi_n}_1+\Delta-t_1^{\pi_n})} \mathbb{E}\left[\hat V^*\big(\hat X_{t^{\pi_n}_1+\Delta}^{\hat X_n,S_n,\pi_n}+s_1^{\pi_n},S_{t_1^{\pi_n}+\Delta}^{S_n}\big)-s_{1}^{\pi_n} -K \Big| \mathcal{G}_{t^{\pi_n}_{1}}\right]\mathbf{1}_{\{\bar\tau_n \geq t_1^{\pi_n}, t^{\pi_n}_{1}+\Delta  < \hat\tau^{\pi_n}\}}  \\
 &+ e^{-\delta_1 (\hat\tau^{\pi_n} -t_1^{\pi_n})} \mathbb{E}\Big[ \omega \left( X_{\hat\tau^{\pi_n}}^{\hat X_n,S_n,\pi_n} \right)^+\Big| \mathcal{G}_{t^{\pi_n}_{1}}  \Big] \mathbf{1}_{\{\bar\tau_n \geq t_1^{\pi_n}, t^{\pi_n}_{1}+\Delta \geq \hat\tau^{\pi_n} \}}\Bigg] + \frac{\lambda}{2}\mathbf{1}_{\{\bar\tau_n \geq t_1^{\pi_n}\}}.
 \end{aligned}
\end{eqnarray*}
Combining with  (\ref{subsolution2}) and $D_{\hat X}\varphi -1\geq \frac{\lambda}{2}$ for large $n$, we can derive
\begin{eqnarray}\label{subsolution3}\begin{aligned}
&\varphi(\hat X_n,S_n) \geq  \mathbb{E}\Bigg[ \int_0^{\theta_n-}e^{-\delta_1t}dL_t^{\pi_n}+e^{-\delta_1\bar \tau_n}\varphi(\hat X_{\bar \tau_n-}^{\hat X_n,S_n,\pi_n},S_{\bar \tau_n-}^{S_n})\mathbf{1}_{\{\bar \tau_n < t_1^{\pi_n}\}} \\
 &  + e^{-\delta_1(t^{\pi_n}_1+\Delta)} \mathbb{E}\left[\hat V^*\big(\hat X_{t^{\pi_n}_1+\Delta}^{\hat X_n,S_n,\pi_n}+s_1^{\pi_n},S_{t_1^{\pi_n}+\Delta}^{S_n}\big)-s_{1}^{\pi_n} -K \Big|\mathcal{G}_{t^{\pi_n}_{1}}\right]\mathbf{1}_{\{\bar\tau_n \geq t_1^{\pi_n}, t^{\pi_n}_{1}+\Delta <\hat\tau^{\pi_n}\}}  \\
 &+ e^{-\delta_1 \hat\tau^{\pi_n}} \mathbb{E}\Big[ \omega \left(X_{\hat\tau^{\pi_n}}^{\hat X_n,S_n,\pi_n} \right)^+\Big| \mathcal{G}_{t^{\pi_n}_{1}}  \Big] \mathbf{1}_{\{\bar\tau_n \geq t_1^{\pi_n}, t^{\pi_n}_{1}+\Delta \geq \hat\tau^{\pi_n} \}}\\
 &+ \int_0^{\theta_n-}e^{-\delta_1t}\frac{\lambda}{2}dL_t^{\pi_n}+\int_{0}^{\theta_n} e^{-\delta_1t}\frac{\lambda}{2} dt + e^{-\delta_1 t_1^{\pi_n}} \frac{\lambda}{2} \mathbf{1}_{\{\bar\tau_n \geq   t_1^{\pi_n}\}}  \Bigg].
\end{aligned}
\end{eqnarray}
Notice that  $S_t^{S_n}$ is always continuous  and
$\hat X_{\bar\tau_n}^{\hat X_n,S_n,\pi_n}$ may not locate in $B(\hat X_n;\varepsilon/2)$.
However, similar as the proof of Theorem 5.1 in Chapter 8 in \cite{FS06}, we can find $\zeta_n\in[0,1]$ such that $\tilde X_{n} = \hat X_{\bar \tau_n-}^{\hat X_n,S_n,\pi} -\zeta_n\Delta L_{\bar\tau_n}^{\pi_n}$ with $\Delta L_{\bar\tau_n}^{\pi_n}: = L_{\bar\tau_n}^{\pi_n}-L_{\bar\tau_n-}^{\pi_n} = \hat X_{\bar \tau_n-}^{\hat X_n,S_n,\pi_n} - \hat X_{\bar \tau_n}^{\hat X_n,S_n,\pi_n}$ satisfying
$\tilde X_{n} \in \partial B(\hat X_n;\varepsilon/2)$ if $\bar\tau_n\leq R$ and $\tilde X_{n} =\hat X_{\bar \tau_n}^{\hat X_n,S_n,\pi_n} \in B(\hat X_n;\varepsilon/2)$ if $\bar\tau_n>R$. Since $\varphi_{\hat X}\geq 1+\lambda/2$, we have
\begin{eqnarray}\label{partial_varphi_dis}
\varphi(\tilde X_n, S_{\bar \tau_n}^{S_n}) -\varphi(\hat X_{\tau_n-}^{\hat X_n,S_n,\pi_n}, S_{\bar \tau_n-}^{S_n}) \leq -\zeta_n(1+\lambda/2) \Delta L_{\bar\tau_n}^{\pi_n}.
\end{eqnarray}

We claim that $\hat V^*(\hat X,S)\geq \hat V^*(\hat X',S) + (\hat X-\hat X')$  for $\hat X>\hat X'>I(S)$.  This is because apparently we have $\hat V(\hat X,S)\geq \hat V(\hat X',S) + (\hat X-\hat X')$, and there exists a sequence $(\hat X_m,S_m)$ with $\hat X_m\in (I(S),\hat X)$ converging to $(\hat X',S)$ such that $\hat V(\hat X_m,S_m)\rightarrow \hat V^*(\hat X',S)$. Sending $m\rightarrow \infty$ in  $\hat V(\hat X,S_m)\geq \hat V(\hat X_m,S_m) + (\hat X-\hat X_m)$, we can derive $\hat V^*(\hat X,S)\geq \hat V^*(\hat X',S) + (\hat X-\hat X')$. Using this inequality and $\hat X_{\bar\tau_n}^{\hat X_n,S_n,\pi_n}= \tilde X_n -(1-\zeta_n)\Delta L_{\bar\tau_n}^{\pi_n}$, we obtain
$\hat V^*(\tilde X_n, S_{\bar \tau_n}^{S_n}) \geq \hat V^*(\hat X_{\bar\tau_n}^{\hat X_n,S_n,\pi_n}, S_{\bar \tau_n}^{S_n}) + (1-\zeta_n)\Delta L_{\bar\tau_n}^{\pi_n}.$
Recalling that $\varphi(\tilde X_n, S_{\bar \tau_n}^{S_n})\geq \hat V^*(\tilde X_n, S_{\bar \tau_n}^{S_n})$, from (\ref{partial_varphi_dis}) we have
\begin{eqnarray}\label{partial_v_varphi}
\varphi (\hat X_{\bar\tau_n-}^{\hat X_n,S_n,\pi_n}, S_{\bar \tau_n-}^{S_n}) \geq \hat V^*(\hat X_{\bar\tau_n}^{\hat X_n,S_n,\pi_n}, S_{\bar \tau_n}^{S_n}) + (1+\zeta_n\lambda/2)\Delta L_{\bar\tau_n}^{\pi_n}.
\end{eqnarray}
Substituting (\ref{partial_v_varphi}) into (\ref{subsolution3}) gives
\begin{equation}\label{subsolution3_3}
\begin{aligned}
&\varphi(\hat X_n,S_n) \geq  \mathbb{E}\Bigg[ \int_0^{\bar\tau_n\wedge t_1^{\pi_n}}e^{-\delta_1t}dL_t^{\pi_n}+e^{-\delta_1\bar\tau_n}\hat V^*(\hat X_{\bar\tau_n}^{\hat X_n,S_n,\pi_n},S_{\bar\tau_n}^{S_n})\mathbf{1}_{\{\bar\tau_n<t^{\pi_n}_1 \}} \\
&  + e^{-\delta_1(t^{\pi_n}_1+\Delta)} \mathbb{E}\left[\hat V^*\big(\hat X_{t^{\pi_n}_1+\Delta}^{\hat X_n,S_n,\pi_n}+s_1^{\pi_n},S_{t_1^{\pi_n}+\Delta}^{S_n}\big)-s_{1}^{\pi_n} -K \Big|\mathcal{G}_{t^{\pi_n}_{1}}\right]\mathbf{1}_{\{\bar\tau_n \geq t_1^{\pi_n}, t^{\pi_n}_{1}+\Delta \leq \hat\tau^{\pi_n}\}}  \\
 &+ e^{-\delta_1 \hat\tau^{\pi_n}} \mathbb{E}\Big[ \omega \left(X_{\hat\tau^{\pi_n}}^{\hat X_n,S_n,\pi_n} \right)^+\Big| \mathcal{G}_{t^{\pi_n}_{1}}  \Big] \mathbf{1}_{\{\bar\tau_n \geq t_1^{\pi_n}, t^{\pi_n}_{1}+\Delta > \hat\tau^{\pi_n} \}}\\
 &+\frac{\lambda}{2}\bigg[\int_0^{\theta_n-}e^{-\delta_1t} dL_t^{\pi_n}+\int_{0}^{\theta_n} e^{-\delta_1t} dt + e^{-\delta_1 t_1^{\pi_n}}  \mathbf{1}_{\{\bar\tau_n \geq   t_1^{\pi_n}\}}+\zeta_ne^{-\delta_1\bar\tau_n}\Delta L_{\bar\tau_n}^{\pi_n} \mathbf{1}_{\{\bar\tau_n<t^{\pi_n}_1 \}} \bigg] \Bigg].
\end{aligned}
\end{equation}
To lead to a contradiction, we want to prove that
{\fontsize{10pt}{10pt}
\begin{equation*}
\underset{n\rightarrow \infty}{\underline{\lim}}  \sup_{\pi_n\in\Pi((\hat X_n,S_n))} \bigg[\int_0^{\theta_n-}e^{-\delta_1t} dL_t^{\pi_n}+\int_{0}^{\theta_n} e^{-\delta_1t} dt + e^{-\delta_1 t_1^{\pi_n}}  \mathbf{1}_{\{\bar\tau_n \geq   t_1^{\pi_n}\}}+\zeta_ne^{-\delta_1\bar\tau_n}\Delta L_{\bar\tau_n}^{\pi_n} \mathbf{1}_{\{\bar\tau_n<t^{\pi_n}_1 \}} \bigg] >0.
\end{equation*}}
We consider two cases as follows.\\
(1). If $\bar{\tau}_n \leq R$, we have $\tilde X_{n} \in \partial B(\hat X_n;\varepsilon/2)$. Note that the $C^2$ function
$
g_n(\hat X,S) := G_n[ (\hat X-\hat X_n)^2 -\varepsilon^2/4]$
satisfies
\begin{equation*}\left\{ \begin{aligned}
&\min\{F_1(\hat X,S,g_n,Dg_n,D^2g_n) +1, \quad D_{\hat X}g_n +1, \quad g_n+1\}\geq 0, \ \  \text{ in } B(\hat X_n,S_n;\varepsilon/2),\\
& g_n = 0, \  \  \text{ on } \partial \bar{B}(\hat X_n, S_n;\varepsilon/2),
\end{aligned}\right.\end{equation*}
where \begin{equation*} \begin{aligned}
G_n = \min \left\{\frac{1}{\delta_1\varepsilon^2/4+(\alpha-\mu)(1+\hat X_n+\varepsilon/2)\varepsilon+(1+\hat X_n+\varepsilon/2)^2(\bar{S}/m+\sigma\rho)^2} , \frac{1}{\varepsilon}, \frac{4}{\varepsilon^2}\right\}.
\end{aligned}\end{equation*}
By  Ito's formula and the nice property of $g_n$ above, we infer that for any admissible strategy $\pi_n\in \Pi((\hat X_n,S_n))$,
{\fontsize{10.5pt}{10.5pt}
\begin{eqnarray*}\begin{aligned}
&g_n(\hat X_n,S_n)
=  \mathbb{E}\Big[e^{-\delta_1\theta_n}g_n(\hat X_{\theta_n-}^{\hat X_n,S_n,\pi_n},S_{\theta_n-}^{S_n})\mathbf{1}_{\{\theta_n<t_1^{\pi_n}\}}+e^{-\delta_1\theta_n}g_n(\hat X_{\theta_n-}^{\hat X_n,S_n,\pi_n},S_{\theta_n-}^{S_n})
 \mathbf{1}_{\{\theta_n=t_1^{\pi_n}\}}\\
 &+\int_0^{\theta_n-}e^{-\delta_1t} D_{\hat X}g_n d(L_t^{\pi_n})^c +\int_{0}^{\theta_n-}e^{-\delta_1 t}F_1(\hat X^{\hat X_n,S_n,\pi_n}_t,S^{S_n}_t, g_n,Dg_n,D^2g_n) dt \\
 &-\sum_{0\leq u <  \theta_n} e^{-\delta_1 u}[g_n(\hat X^{\hat X_n,S_n,\pi_n}_u,S^{S_n}_u)-g_n(\hat X^{\hat X_n,S_n,\pi_n}_{u-},S^{S_n}_{u-}) ] \Big] \\
 \geq & \mathbb{E}\Big[e^{-\delta_1\bar\tau_n}g_n(\hat X_{\bar\tau_n-}^{\hat X_n,S_n,\pi_n},S_{\bar\tau_n-}^{S_n})\mathbf{1}_{\{\bar\tau_n < t_1^{\pi_n}\}}+e^{-\delta_1t_1^{\pi_n}}(\underbrace{g_n(\hat X_{t_1^{\pi_n}}^{\hat X_n,S_n,\pi_n},S_{t_1^{\pi_n}}^{S_n})+1}_{\geq 0})
 \mathbf{1}_{\{\bar \tau_n\geq t_1^{\pi_n}\}}\\
 &+\int_0^{\theta_n-}e^{-\delta_1t} (\underbrace{D_{\hat X}g_n +1}_{\geq 0})d(L_t^{\pi_n})^c +\int_{0}^{\theta_n-}e^{-\delta_1 t}(\underbrace{F_1(\hat X^{\hat X_n,S_n,\pi_n}_t,S^{S_n}_t, g_n,Dg_n,D^2g_n) +1}_{\geq 0})dt \\
&-\sum_{0\leq u < \bar\tau_n \wedge t_1^{\pi_n}} e^{-\delta_1 u} \Delta L_{u}^{\pi_n}-
 \int_0^{\theta_n-}e^{-\delta_1t} d(L_t^{\pi_n})^c  - \int_{0}^{\theta_n-}e^{-\delta_1 t} dt-e^{-\delta_1t_1^{\pi_n}} \mathbf{1}_{\{\bar \tau_n\geq t_1^{\pi_n}\}} \Big].
 \end{aligned}
\end{eqnarray*}}
Since $D_{\hat X}g_n\geq -1$, we have
\begin{eqnarray*}\begin{aligned}
g_n(\hat X_{\bar\tau_n-}^{\hat X_n,S_n,\pi_n},S_{\bar\tau_n-}^{S_n}) -\underbrace{g_n(\tilde X_n,S_{\bar\tau_n}^{S_n})}_{=0 }\geq -1(\hat X_{\bar\tau_n-}^{\hat X_n,S_n,\pi_n} -\tilde X_n) =-\zeta_n \Delta L_{\bar\tau_n}^{\pi_n}.
 \end{aligned}
\end{eqnarray*}
The two inequalities above imply that for any admissible strategy $\pi_n\in \Pi((\hat X_n,S_n))$,
\begin{equation*}\begin{aligned}
 &\mathbb{E}\Big[
 \int_0^{\theta_n-}e^{-\delta_1t} dL_t^{\pi_n}  + \int_{0}^{\theta_n}e^{-\delta_1 t} dt+ e^{-\delta_1t_1^{\pi_n}} \mathbf{1}_{\{\bar \tau_n\geq t_1^{\pi_n}\}} +e^{-\delta_1\theta_n}\zeta_n \Delta L_{\bar\tau_n}^{\pi_n} \mathbf{1}_{\{\bar \tau_n< t_1^{\pi_n}\}}\Big]\\
 &\geq - g_n(\hat X_n,S_n) =G_n\varepsilon^2/4.
 \end{aligned}
\end{equation*}
 (2). If $\bar\tau_n>R$, we have $\zeta_n=0$ and $\tilde X_{n} =\hat X_{\bar \tau_n}^{\hat X_n,S_n,\pi_n}$. In this case, we find that $\int_{0}^{\theta_n} e^{-\delta_1t} dt + e^{-\delta_1 t_1^{\pi_n}}  \mathbf{1}_{\{\bar\tau_n \geq   t_1^{\pi_n}\}} \geq e^{-\delta_1 R}$ if $t_1^{\pi_n}\leq R$ and $\int_{0}^{\theta_n} e^{-\delta_1t} dt + e^{-\delta_1 t_1^{\pi_n}}  \mathbf{1}_{\{\bar\tau_n \geq   t_1^{\pi_n}\}} \geq \frac{1}{\delta_1}(1-e^{-\delta_1R})$ if $t_1^{\pi_n} > R$. Thus,
\begin{align*}
\bigg[\int_{0}^{\theta_n} e^{-\delta_1t} dt + e^{-\delta_1 t_1^{\pi_n}}  \mathbf{1}_{\{\bar\tau_n \geq   t_1^{\pi_n}\}} \bigg] \geq \min\left\{(1-e^{-\delta_1R})/\delta_1, e^{-\delta_1 R}\right\}.
\end{align*}
Therefore, combining two cases above, taking supremum for $\pi_n\in \Pi((\hat X_n,S_n))$ in (\ref{subsolution3_3}), and by the weak DPP formula (\ref{subsolution1}), we obtain
 \begin{equation*}\begin{aligned}
\varphi(\hat X_n,S_n) \geq \hat V(\hat X_n,S_n) + \frac{\lambda}{2}\min\left\{ G_n \varepsilon^2/4,(1-e^{-\delta_1R})/\delta_1, e^{-\delta_1 R} \right\},
\end{aligned}\end{equation*}
 which contradicts (\ref{subsolution4}) when $n$ goes to infinity as $\underset{n\rightarrow \infty}{\underline{\lim}} G_n >0$. This completes the proof of Lemma \ref{subsolution}.

Therefore, a combination of Lemma \ref{supersolution} and \ref{subsolution} yields that the value function $\hat V$ in (\ref{v(x,S)}) is a viscosity solution to Problem  (\ref{p_hjb33}) on $\Omega_1$.

In order to prove the uniqueness of viscosity solution, we give a comparison principle as follows.
\begin{lemma} \label{partial_comparison}
\textbf{(Comparison principle)}: Let $v,u$ be respectively viscosity subsolution and supersolution of Problem (\ref{p_hjb33}), and both are of at most linear growth. Then $v^*\leq u_*$ on $\Omega_1$.
\end{lemma}
\noindent\textit{Proof of Lemma \ref{partial_comparison}}:
We will prove the comparison principle in several steps.\\
\textit{Step 1}. Let us show that $\eta-$strict supersolution always exists, that is, for any $\eta\in (0,1)$, there exists a $\eta-$strict supersolution $u^\eta$ to
\begin{eqnarray}\label{partial_etauper}
\min\{F_1(\hat X,S, v,D v,D^2 v), v-\mathcal{P}v,  D_{\hat X}v-1\}= \eta.
\end{eqnarray}
Consider the function $u^\eta(\hat X,S) = u(\hat X,S)  +\eta\hat X +cS+C^\eta$, where $u(\hat X,S)$ is a viscosity supersolution to (\ref{p_hjb33}), and $C^\eta$ and $c$ are two nonnegative constants to be determined later.  Note that $D_{\hat X}u^\eta = D_{\hat X} u+\eta\geq 1+\eta$, and
\begin{eqnarray*}\begin{aligned}
&
F_1(\hat X,S, u^\eta,D u^\eta,D^2 u^\eta)\\
& = F_1(\hat X,S, u,D u,D^2 u)+\left((S/m+\sigma\rho)^2-\sigma^2\right) (\eta\hat X+cS+C^\eta)_S-\mathcal{L} (\eta\hat X+cS+C^\eta)\\
&\geq 0 -(\alpha-\mu)(1+\hat X)\eta +(\delta-\mu) ( \eta\hat X+cS+C^\eta)+\left((S/m+\sigma\rho)^2-\sigma^2\right)c
\geq \eta
\end{aligned}
\end{eqnarray*}
if $C^\eta \geq  \frac{ \sigma^2c+(1+\alpha-\mu)\eta-(\delta-\alpha)\eta (I(\bar{S}\wedge (\Phi^{-1}(a))^2) \wedge \kappa )}{\delta-\mu}$. Moreover, as  $u_*\geq \mathcal{P}u_*$, we have
\begin{eqnarray*}\begin{aligned}
\mathcal{P} u^\eta_*(\hat X,S)&\leq \mathcal{P} u_* + \sup_{s\geq 0}\mathbb{E}\Big[e^{-\delta_1\Delta}  [(\eta(\hat X_{\Delta}+s)+cS_\Delta+C^\eta]\mathbf{1}_{\{\hat{\tau}>\Delta\}}   \Big],\\
& \leq \mathcal{P} u_*  + e^{-\delta_1\Delta}(\eta \bar{s} +cS +cm\sigma(1-\rho)+C^\eta)+  \mathbb{E}[e^{-\delta_1\Delta}\eta\hat X_{\Delta}\mathbf{1}_{\{\hat{\tau}>\Delta\}} ]\\
&\leq u_* + \eta\hat X +cS+ C^\eta  -\eta = u^\eta_* -\eta, \quad \text{for large $C^\eta$},
\end{aligned}
\end{eqnarray*}
where $\mathbb{E}[e^{-\delta_1\Delta}\hat X_{\Delta}\mathbf{1}_{\{\hat{\tau}>\Delta\}} ]\leq \mathbb{E}[e^{-\delta_1\Delta}\hat X_{\Delta}] = e^{-(\delta-\alpha)\Delta}(1+\hat X) - e^{-\delta_1\Delta}$ and $\Delta>0$ are used in the last inequality. 
Therefore, we can always find a large constant $C^\eta$  such that $u^\eta$ is a supersolution to (\ref{partial_etauper}). This indicates the existence of $\eta-$strict supersolution that is of at most linear growth.\\
\textit{Step 2}. Now we claim that if $v$ and $u^\eta$ are respectively viscosity subsolution and $\eta-$strict supersolution of (\ref{p_hjb33}) that are of at most linear growth, then
$v^*\leq u^\eta_*$.  We employ the technique in \cite{CIL92} and use a method of contradiction. Assume, to the contrary, that there exists some point $(\hat X^0, S^0)\in \Omega_1$ such that
 \begin{eqnarray}\label{partial_comparison_contra}
 \gamma:=v^*(\hat  X^0,S^0) - u_*^\eta(\hat X^0,S^0)>0.\end{eqnarray}
Let \begin{eqnarray*}
\begin{aligned} 
& M_{\theta} : = \underset{(\hat X,S),(\hat X',S')\in \Omega_1}{\sup} \left\{v^*(\hat X,S)-u_*^\eta(\hat X', S') - \psi_\theta(\hat X,S,\hat X',  S')\right\}, \\
& \psi_\theta(\hat X,S, \hat X', S') := \frac{\theta}{2} \Big(|\hat X-\hat X'|^2+|S-S'|^2 \Big).
\end{aligned}
\end{eqnarray*}
Since $u^\eta_*$ and $v$ grow at most linearly at infinity and $\psi_\theta$ grows quadratically at infinity, there is a maximizer $(\hat X_\theta, \hat X_\theta', S_\theta, S_\theta')$ such that
\begin{eqnarray*}M_\theta= v^*(\hat X_\theta,S_\theta) - u_*^\eta(\hat X'_\theta, S'_\theta) - \psi_\theta(\hat X_\theta,S_\theta, \hat X'_\theta,  S'_\theta),
\end{eqnarray*} and there is a sequence $\theta_n\rightarrow \infty$ such that
\begin{eqnarray*}
(\hat X_n,S_n,  \hat X_n', S_n') := (\hat X_{\theta_n},S_{\theta_n}, \hat X'_{\theta_n}, S'_{\theta_n}) \rightarrow (\hat X_0,  S_0, \hat X'_0, S'_0), \quad n\rightarrow\infty.
\end{eqnarray*}
We claim that $\hat X_0 = \hat X'_0$, $S_0=S'_0$, and
\begin{eqnarray}\label{partial_doublepoints}
\begin{aligned}
&\theta_n|\hat X_n-\hat X'_n|^2, \quad \theta_n|S_n-S'_n|^2 \rightarrow 0, \\
& M_{\theta_n} \rightarrow M_{\infty} : = \sup_{(\hat X,S) \in \Omega_1}\big\{ (v^*-u^\eta_*)(\hat X,S)\big\}.
\end{aligned}
\end{eqnarray}
By the linear growth property proved in Appendix \ref{appendix_linear growth condition}, we have $v^*(\hat X,S) \leq \hat X+C_1S+C_2$ and $u^{\eta}_* (\hat X,S) \geq (1+\eta)\hat X +cS+C^\eta-C_0$ for some positive constants $C_0$, $C_1$, and $C_2$. Therefore, the boundedness of $S$ implies
\begin{eqnarray*}\begin{aligned}
\underset{(\hat X,S)\in\Omega_1}{\sup} \big\{ (v^* - u^{\eta}_*)(\hat X,S) \big\}<+\infty.
\end{aligned}
\end{eqnarray*}
Hence there exists a bounded maximizer $(\hat X^*, S^*)\in \Omega_1$ of $(v^*-u^\eta_*)(\hat X,S)$. Then
\begin{eqnarray*}\begin{aligned}
(v^*-u^\eta_*)(\hat X^*,S^*)  \leq M_{\theta_n}=v^*(\hat X_n,S_n) - u_*^\eta(\hat X'_n, S'_n) - \psi_{\theta_n}(\hat X_n, S_n, \hat X'_n, S'_n).
\end{aligned}
\end{eqnarray*}
Sending $n\rightarrow \infty$, we obtain
\begin{eqnarray*}\begin{aligned}
&\frac{1}{2}\underset{n\rightarrow \infty}{\lim\sup} \left \{\theta_n |\hat X_n -\hat X'_n|^2+\theta_n|S_n-S'_n|^2\right\} \\
\leq &\underset{n\rightarrow \infty}{\lim\sup}\Big\{v^*(\hat X_n,S_n) - u_*^\eta(\hat X'_n, S'_n) \Big\}  - (v^*-u^\eta_*)(\hat X^*,S^*)  \\
\leq & v^*(\hat X_0,S_0) - u_*^\eta(\hat X'_0, S'_0) - (v^*-u^\eta_*)(\hat X^*,S^*)  <\infty.
\end{aligned}
\end{eqnarray*}
This indicates $\underset{n\rightarrow \infty}{\lim\sup} \left \{\theta_n |\hat X_n -\hat X'_n|^2 +\theta_n |S_n-S'_n|^2\right\}<\infty$. Noticing $\theta_n\rightarrow \infty$, we conclude $\hat X_0 = \hat X'_0$ and $S_0=S'_0$. Moreover, by the definition of $(\hat X^*, S^*)$ as a maximizer of $(v^*-u^\eta_*)(\hat X,S)$, we have
\begin{eqnarray*}\begin{aligned}
0\leq  (v^*-u_*^\eta)(\hat X_0,S_0) -(v^*-u^\eta_*)(\hat X^*,S^*)  \leq 0.
\end{aligned}
\end{eqnarray*}
Thus $(\hat X_0,S_0)$ is the maximizer of $(v^*-u^\eta_*)(\hat X,S)$ and (\ref{partial_doublepoints}) holds.

By the semi-continuity property of $u^\eta_*$ and $v^*$, we have
\begin{eqnarray}\label{partial_vmu}\begin{aligned}
\underset{n\rightarrow \infty}{\lim\sup} M_{\theta_n} &= \underset{n\rightarrow \infty}{\lim\sup} \left\{v^*(\hat X_n,S_n) - u_*^\eta(\hat X'_n, S'_n) - \psi_{\theta_n}(\hat X_n,S_n, \hat X'_n,  S'_n)\right\}\\
& \leq \underset{n\rightarrow \infty}{\lim\sup} \left\{v^*(\hat X_n,S_n) - u_*^\eta(\hat X'_n, S'_n) \right\}\\
& \leq \underset{n\rightarrow \infty}{\lim\sup} v^*(\hat X_n,S_n) - \underset{n\rightarrow \infty}{\lim\inf} u_*^\eta(\hat X'_n, S'_n) \\
& \leq (v^*-u^\eta_*)(\hat X_0,S_0).
\end{aligned}
\end{eqnarray}
By (\ref{partial_comparison_contra}), $M_{\theta_n} \geq \gamma -\psi_{\theta_n}(\hat X^0,S^0,  \hat X^0, S^0) = \gamma>0$. Hence, the limit $M_\infty>0$ and $(v^*-u^\eta_*)(\hat X_0,S_0)>0$.
Since $v^*\leq u^\eta_*$ on $\partial \Omega_1$, we deduce that $(\hat X_0,S_0)\notin \partial \Omega_1$ and therefore $(\hat X_n, S_n, \hat X_n', S_n')$ is a local maximizer of $v^*(\hat X,S)-u_*^\eta(\hat X', S') - \psi_{\theta_n}(\hat X,S, \hat X', S')$ in $\Omega_1$. \\
\textit{Step 3}. 
Applying Theorem 3.2 in \cite{CIL92},
there exist
\begin{eqnarray*}\begin{aligned}
&(D_{\hat X,S}\psi_{\theta_n}(\hat X_n,  S_n, \hat X_n', S_n'),A_n) \in \bar{J}^{2,+} v^*(\hat X_n,S_n), \\
& (-D_{\hat X',S'}\psi_{\theta_n}(\hat X_n, S_n, \hat X_n',  S_n'),B_n) \in \bar{J}^{2,-} u^\eta_*(\hat X_n', S_n'),
\end{aligned}\end{eqnarray*}
 where $D_{\hat X,S}\psi_{\theta_n}:= [D_{\hat X}\psi_{\theta_n}, D_{S}\psi_{\theta_n} ]^{T},$
$D_{\hat X',S'}\psi_{\theta_n} := [D_{\hat X'}\psi_{\theta_n}, D_{S'}\psi_{\theta_n} ]^{T}$,
and  $A_n, B_n \in \mathbb{R}^{2\times 2}$ satisfy
\begin{eqnarray*}
\left(\begin{array}{cc} A_n& 0 \\ 0& -B_n\end{array}\right)& \leq& D^2 \psi_{\theta_n}(\hat X_n, S_n, \hat X_n', S_n') + \frac{1}{\theta_n} (D^2 \psi_{\theta_n}(\hat X_n, S_n,  \hat X_n', S_n'))^2.
\end{eqnarray*}
By calculation,
\begin{eqnarray*}\begin{aligned}
D_{\hat X,S}\psi_{\theta_n}(\hat X_n, S_n, \hat X_n', S_n') =
& [\theta_n (\hat X_n -\hat X'_n), \quad \theta_n(S_n-S'_n) ]^{T},\\
-D_{\hat X',S'}\psi_{\theta_n}(\hat X_n, S_n,\hat X_n',  S_n') =& [ \theta_n (\hat X_n -\hat X'_n),\quad  \theta_n(S_n-S'_n) ]^{T},
\end{aligned}\end{eqnarray*}
and
$D^2 \psi_{\theta_n}(\hat X_n, S_n,\hat X_n',  S_n')=
\theta_n\left(\begin{array}{cc}I_2& -I_2 \\ -I_2& I_2\end{array}\right),$
where $I_2$ is $2\times 2$ identity matrix. 
Therefore, we have
\begin{eqnarray}\label{partial_matrixmode}
\left(\begin{array}{cc} A_n& 0 \\ 0& -B_n\end{array}\right) \leq
3\theta_n \left(\begin{array}{cc} I_2& -I_2 \\ -I_2& I_2\end{array}\right).
\end{eqnarray}
It follows
{\fontsize{10.5pt}{10.5pt}
\begin{equation}\label{partial_twomin}
\begin{aligned}
&\min\{F(\hat X_n, S_n, v^*(\hat X_n,S_n), D_{\hat X,S}\psi_{\theta_n}(\hat X_n, S_n, \hat X_n',  S_n'), A_n), v^*(\hat X_n,S_n) - \mathcal{P}v^*(\hat X_n,S_n)\}\leq 0,\\
&\min\{F(\hat X'_n, S'_n, u^\eta_*(\hat X'_n,S'_n), -D_{\hat X',S'}\psi_{\theta_n}(\hat X_n,  S_n, \hat X_n', S_n'), B_n), u^\eta_*(\hat X'_n,S'_n) -\mathcal{P}u^\eta_*(\hat X'_n,S'_n) \}\geq \eta.
\end{aligned}\end{equation}}
Based on the first inequality in (\ref{partial_twomin}), we consider the following three cases.\\
(1).  Case 1: If $v^*(\hat X_n,S_n)-\mathcal{P}v^*(\hat X_n,S_n)\leq 0$ in (\ref{partial_twomin}), then there exists a  sequence  $s_{\theta_n} \in \mathcal{G}_{\Delta}$ such that
\begin{eqnarray*}\begin{aligned}
v^*(\hat X_n,S_n) \leq \mathcal{P}_{s_{\theta_n}} v^*(\hat X_n,S_n)+\frac{\eta}{2},  \quad u^\eta_*(\hat X'_n,S'_n) \geq \mathcal{P}_{s_{\theta_n}}u^\eta_*(\hat X'_n,S'_n)  +\eta,
\end{aligned}\end{eqnarray*}
where $\mathcal{P}_{s} V(\hat X,S)$ is defined in (\ref{partial_Ps}).
Because $|\hat X_n -\hat X'_n|, |S_n-S'_n|\rightarrow 0$, $\hat X_n \rightarrow \hat X_0$, $S_n \rightarrow S_0$ with $|X_0|+|S_0|<\infty$, $\psi(I(S),S)\leq I(S)+1$, $S$ is bounded in $[0,\bar{S}]$, and $\hat X_t+1$ follows geometric Brownian motion during time $[0,\Delta)$, we can obtain
\begin{eqnarray*}\begin{aligned}
&\mathcal{P}_{s_{\theta_n}} v^*(\hat X_n,S_n) - \mathcal{P}_{s_{\theta_n}}u^\eta_*(\hat X'_n,S'_n) -\psi_{\theta_n}(\hat X_n,S_n,  \hat X_n', S_n')\\
 \leq  &e^{-\delta_1 \Delta} M_\infty + O(1)\Big |\mathbb{P}(\hat\tau>\Delta | \hat X_0 = \hat X_n,S_0=S_n) - \mathbb{P}(\hat\tau>\Delta | \hat X_0 = \hat X_n',S_0=S'_n) \Big | \\
 &+ O(|S_n-S_n'|) + O(1)\Big |\mathbb{P}(\hat\tau\leq \Delta | \hat X_0 = \hat X_n,S_0=S_n) - \mathbb{P}(\hat\tau\leq \Delta | \hat X_0 = \hat X_n',S_0=S'_n) \Big |\\
 & +\mathbb{E}\left[e^{-\delta_1\Delta}\psi_{\theta_n}(\hat X_\Delta^{\hat X_n,S_n}+s_{\theta_n},  S_\Delta^{S_n}, \hat X_\Delta^{\hat X'_n,S'_n}+s_{\theta_n}, S_\Delta^{S_n'})\right]-\psi_{\theta_n}(\hat X_n,  S_n, \hat X_n', S_n')\\
 \leq & e^{-\delta_1 \Delta} M_\infty +o(1) + O(1)\psi_{\theta_n}(\hat X_n, S_n, \hat X_n',  S_n'), \text{ for large $n$}.
\end{aligned}\end{eqnarray*}
Therefore, combining with  $\psi_{\theta_n}(\hat X_n,  S_n,\hat X_n', S_n')=o(1)$ for large $n$, we have
\begin{eqnarray*}\begin{aligned}
M_{\theta_n} &= v^*(\hat X_n,S_n) - u^\eta_*(\hat X'_n,S'_n) -\psi_{\theta_n}(\hat X_n,  S_n,\hat X_n', S_n') \\
& \leq \mathcal{P}_{s_{\theta_n}} v^*(\hat X_n,S_n) - \mathcal{P}_{s_{\theta_n}}u^\eta_*(\hat X'_n,S'_n) -\psi_{\theta_n}(\hat X_n, S_n, \hat X_n',  S_n') -\frac{\eta}{2}\\
& \leq e^{-\delta_1 \Delta} M_\infty +o(1) -\frac{\eta}{2}, \text{ for large $n$},
\end{aligned}\end{eqnarray*}
which contradicts (\ref{partial_doublepoints}) for large $n$.\\
(2). Case 2: If $D_{\hat X_n}\psi_{\theta_n}(\hat X_n, S_n, \hat X_n',  S_n')=\theta_n (\hat X_n -\hat X'_n) \leq 1$ in (\ref{partial_twomin}), noticing that  $-D_{\hat X'_n}\psi_{\theta_n}(\hat X_n, S_n, \hat X_n', S_n')=\theta_n (\hat X_n -\hat X'_n) \geq 1+\eta$, we have $0 \leq -\eta,$
which yields a contradiction.\\
(3). Case 3: If $F_1(\hat X_n, S_n, v^*(\hat X_n,S_n), D_{\hat X,S}\psi_{\theta_n}(\hat X_n,  S_n,\hat X_n',  S_n'), A_n)\leq 0$ in (\ref{partial_twomin}), noticing that $F_1(\hat X'_n, S'_n, u^\eta_*(\hat X'_n,S'_n), -D_{\hat X',S'}\psi_{\theta_n}(\hat X_n, S_n, \hat X_n',  S_n'), B_n)\geq \eta$, we have
\begin{eqnarray*}\begin{aligned}
-\eta \geq &  \delta_1(v^*(\hat X_n,S_n) -u^\eta_*(\hat X'_n,S'_n))
- (\alpha-\mu)\theta_n |\hat X_n-\hat X'_n|^2 \\
&+\frac{\theta_n}{m}|S_n-S'_n|^2{(S_n/m+S'_n/m+2\sigma\rho)}  \\
&  -\frac{1}{2} \text{Tr} \left\{\left(\begin{array}{cc} C_nC_n^T& C_nG_n^T \\ G_nC_n^T& G_nG_n^T \end{array}\right) \left(\begin{array}{cc} A_n& 0 \\ 0& -B_n\end{array}\right)\right\},
\end{aligned}\end{eqnarray*}
where $C_n = \left(\begin{array}{cc} (1+\hat X_n)(S_n/m+\sigma\rho)& 0 \\ 0& 0\end{array}\right)$ and $G_n=\left(\begin{array}{cc} (1+\hat X'_n)(S'_n/m+\sigma\rho)& 0 \\ 0& 0\end{array}\right)$.
Notice that $\hat X_n \rightarrow \hat X_0, \hat X'_n \rightarrow \hat X_0$ with $|\hat X_0|, |S_0|<\infty$ and $(v^*-u^\eta_*)(\hat X_0,S_0)>0$. From (\ref{partial_doublepoints}), (\ref{partial_vmu}),  and (\ref{partial_matrixmode}),  we have
\begin{eqnarray*}\begin{aligned}
0>& -\eta \geq  \delta_1(v^*(\hat X_n,S_n) -u^\eta_*(\hat X'_n,S'_n))
 +o(1)  \\
 & -\frac{1}{2} \text{Tr} \left\{\left(\begin{array}{cc} C_nC_n^T& C_nG_n^T \\ G_nC_n^T& G_nG_n^T \end{array}\right) 3\theta_n \left(\begin{array}{cc} I_2& -I_2 \\ -I_2& I_2\end{array}\right)  \right\} \\
= &  \delta_1(v^*(\hat X_n,S_n) -u^\eta_*(\hat X'_n,S'_n))
 +o(1)   \\
 &-\frac{3\theta_n}{2} \bigg((S_n/m+\sigma\rho)(1+\hat X_n) - (S'_n/m+\sigma\rho)(1+\hat X'_n)\bigg)^2 \\
\geq &  \delta_1(v^*(\hat X_n,S_n) -u^\eta_*(\hat X'_n,S'_n))
 +o(1)  \\
 &-\frac{3\theta_n}{2}  \bigg(2(S_n/m+\sigma\rho)^2|\hat X_n-\hat X_n|^2 + 2 |S_n/m-S'_n/m|^2)(1+\hat X'_n)^2\bigg)\\
\geq &  \delta_1(v^*(\hat X_n,S_n) -u^\eta_*(\hat X'_n,S'_n))
 +o(1) -C\big(\theta_n |\hat X_n-\hat X'_n|+\theta_n|S_n-S'_n|^2\big) \\
 \geq &  \frac{\delta_1}{2} (v^*(\hat X_n,S_n) -u^\eta_*(\hat X'_n,S'_n))+o(1) >0,  \text { when $n$ is sufficiently large},
\end{aligned}\end{eqnarray*}
which yields a contradiction.
Therefore, according to the three cases above, we have $v^*\leq u^\eta_*$, which leads to the desired result by sending $\eta\rightarrow 0$. This completes the proof of Lemma  \ref{partial_comparison}.

The uniqueness of viscosity solution is a straightforward corollary of the comparison principle. Therefore, combining  Lemma \ref{supersolution}, \ref{subsolution}, \ref{partial_comparison}, and Proposition \ref{partial_growth condition}, we complete the proof of Theorem \ref{partial_viscosity property}.
\end{proof}

\begin{proposition}\label{special case} 
Assume $m> 0$. Then we have:\\
(i) If $a> 50\%$, the liquidation barrier  $I(S)$ defined in $(\ref{hat tao})$ falls in $S$ when $S\leq (\Phi^{-1}(a))^2$ and rises in $S$ when $S>(\Phi^{-1}(a))^2$. If $a\leq 50\%$,  $I(S)$ rises in $S$.\\
(ii)  When  $\Delta=0$, it is never optimal to issue equity when $\hat X_t>I(S_t)$.
\end{proposition}

\begin{proof}
(i). It is easy to verify
\begin{align*}
I'(S) = \frac{1+\kappa}{2\sqrt{S}}e^{\frac{1}{2}S-\Phi^{-1}(a)\sqrt{S}}\left(\sqrt{S}-\Phi^{-1}(a)\right).
\end{align*}
When $a<50\%$, we have $\Phi^{-1}(a)<0$ and $I'(S)\geq 0$ so that $I(S)$ decreases in $S$. When $a>50\%$, we have $\Phi^{-1}(a)>0$ so that $I'(S)\leq 0$ if $S\in[0,(\Phi^{-1}(a))^2]$ and $I'(S)>0$ if $S>(\Phi^{-1}(a))^2$, Hence, $I(S)$ first decreases in $S$ when $S\in[0,(\Phi^{-1}(a))^2]$ and then increases in $S$ when $S>(\Phi^{-1}(a))^2$. \\
(ii). Suppose that there is an admissible strategy $\pi=\{L_t^{\pi}, (t_i^{\pi}, s_i^{\pi})\}$ such that the bank issues equity before the liquidation barrier is reached.  
We need to show that $\pi$ is suboptimal. Indeed, we can construct another admissible strategy $\tilde \pi$ such that the new strategy yields a higher value function.
We define  $t^*\equiv t_{i^*}:=\min\{t_i: X_{t_i}^{\pi}>I(S_{t_i^{\pi}})\}$.  As $X_{t^*}^{\pi}>I(S_{t^*})$,  we can always find a small $\varepsilon>0$  such that $X_t^{\tilde \pi}>I(S_t)$ for all $t\in[t^*, t^*+\varepsilon]$. The new strategy $\tilde \pi:=\{L_t^{\tilde \pi}, (t_i^{\tilde \pi}, s_i^{\tilde \pi} )\}$ is defined as $L_t^{\tilde \pi}:=L_t^{\pi}$, and
\begin{equation*}
 t_i^{\tilde \pi}:= \left\{
\begin{aligned}
&t_i^{\pi}, & i& \leq  i^*,\\
&t_{i^*}^{\pi}+\varepsilon & i&= i^*+1, \\
& t_{i-1}^{\pi} & i &> i^*+1,
\end{aligned} \right.
\quad s_i^{\tilde \pi}:=\left\{
\begin{aligned}
&s_i^{\pi}, & i&< i^*,\\
&s_{i^*}^{\pi}-\xi , & i&= i^*,\\
&\xi e^{(\alpha-\mu)\varepsilon+C(t^*,\varepsilon)} & i&= i^*+1, \\
& s_{i-1}^{\pi} & i &> i^*+1,
\end{aligned} \right.
\end{equation*}
where $C(t^*, \varepsilon)=(S_{t^*+\varepsilon}/m+\sigma\rho)\tilde B_{t^*+\varepsilon}-(S_{t^*}/m+\sigma\rho)\tilde{\mathcal{B}}_{t^*}$, and $\xi$ is to be determined later.
  Obviously, the new strategy $\tilde \pi$ is admissible, and $X_t^{\tilde \pi}$ can completely replicate $X_t^{\pi}$ before $t^*$ and after $t^*+\varepsilon$. Therefore, the difference of value functions at time $t^*$ under these two strategies is given as
\begin{eqnarray*}\label{value difference of replication}
& & \hat V_{\tilde \pi}(\hat X_{t^*}, S_{t^*}; t)-\hat V_{\pi}(\hat X_{t^*}, S_{t^*}; t) \nonumber \\
&&=-e^{-(\delta-\mu)\varepsilon}\mathbb{E}\left[\xi e^{(\alpha-\mu)\varepsilon+(S_{t^*+\varepsilon}/m+\sigma\rho)\tilde B_{t^*+\varepsilon}-(S_{t^*}/m+\sigma\rho)\tilde B_{t^*}}+K|\mathcal{G}_{t^*}\right]+ \xi+K\nonumber \\
&&= \xi\left(1-e^{-(\delta-\alpha)\varepsilon +\frac{1}{2}\int_{t^*}^{t^*+\varepsilon}(S_u/m+\sigma\rho)^2du}\right)+K(1-e^{-(\delta-\mu)\varepsilon}).
\end{eqnarray*}
To make the RHS of the above equation positive, we can choose $\xi$ as
\begin{equation*}
\xi=\left\{\begin{aligned} &s_i^{\pi}  &\text{if }&  \delta - \alpha \geq A_1,\\
&\frac{(\delta-\mu)K}{2(A_1-\delta+\alpha)} &\text{if }& \delta - \alpha \leq A_1\end{aligned}\right.
 \end{equation*}
 where $A_1=\max(\sigma^2/2, (S/m+\sigma\rho)^2/2)$.
 Then we have $\hat V_{\tilde \pi}(\hat X_{t^*}, S_{t^*}; t)>\hat V_{\pi}(\hat X_{t^*}, S_{t^*}; t) $. This implies that $\tilde \pi$ is better than  $\pi$. This yields the desired result.
\end{proof}

Note that the liquidation barrier $I$ depends on the variance $S$. 
If regulators weight more the risk of liquidating a solvent bank than the risk of not liquidating an insolvent bank (i.e., if $c_2/(c_1+c_2)=a>50\%$) and if the accounting asset uncertainty level $S<(\Phi^{-1}(a))^2$, then the regulators  prefer  to postpone  bank liquidation.
That is, if the regulators liquidate a bank, then they want to be sure that the bank is insolvent or the probability for that is high.
 However, under our model parameters ($a=80\%$ and $S<(\Phi^{-1}(a))^2$), by the result above, the bank has an incentive to increase the uncertainty level $S$ since the uncertainty hides the insolvency risk and the liquidation barrier $I(S)$ falls, and thus, the bank is liquidated later.

Note that if there is no uncertainty in the capital issuance ($\Delta$ is zero) then the bank does not face default risk since it can always order new equity just before its default time. This conflicts real financial markets as many banks issue equity during some special periods even when their Capital ratio stays above the minimum requirement.  In reality, it takes time for banks to sell equity in open markets.

Figure \ref{figure region depiction} depicts the dividend region, continuation region, and recapitalization region in the $\hat X$-$S$ plane. Intuitively, for any given accounting asset uncertainty level $S$, banks should pay dividends if the equity-to-debt ratio is sufficiently high, and issue new equity if the ratio is low enough. In the continuation region, banks take no action.
\begin{figure}[ht]
\centering
  \includegraphics[height=7.2cm,width=8.4cm]{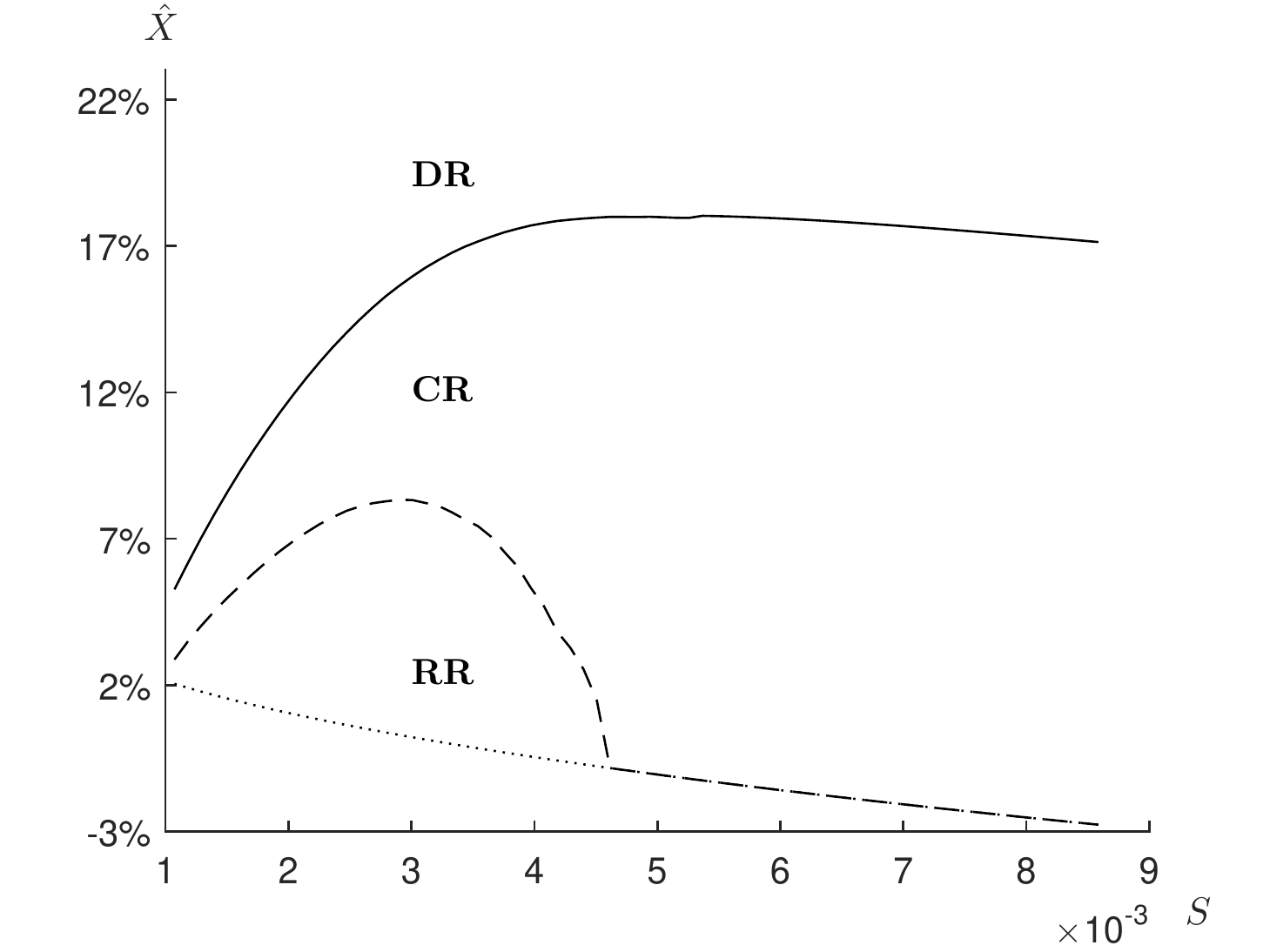}
	\caption[Three regions of the partially observed model]{\textbf{Three regions of the partially observed model.} This figure depicts the three regions, $\textbf{RR}, \textbf{CR}$, and $\textbf{DR}$ for the partially observed model. They represent the recapitalization region, continuation region, and dividend region, respectively. The solid line is the dividend boundary (between $\textbf{DR}$ and $\textbf{CR}$), the dashed line is the recapitalization boundary (between $\textbf{CR}$ and $\textbf{RR}$), and the dotted line is the liquidation barrier $I(S)$. $\hat{X}$ is the expected equity-to-debt ratio and $S$ is the accounting asset uncertainty level. The parameters are chosen from Panel B in Table \ref{table  allbankparameter}. $\alpha=12.85\%$, $\mu=10.52\%$, $\delta=25.70\%$, $\sigma=5.21\%$, $\kappa=4.80\%$, $\hat \kappa=1.15\%$, $a=79.93\%$, $\omega=25.10\%$, $m=2.85\%$, $\rho=-26.71\%$, $\Delta=0.50$, and $K=0.20\%.$ }
 \label{figure region depiction}
\end{figure}

\section{Comparative Statics}\label{comparative statics}
In this section we analyze how different parameters affect the value function and optimal strategy in Subsection \ref{partially_theorysection}. As we cannot solve the HJB (\ref{hjb3}) analytically, we solve it numerically.
When $S=m \sigma (1-\rho)$, we have $S_t\equiv m \sigma (1-\rho)$ for all $t\geq0$. In this special case, by the definition of value function in (\ref{define of value function with recap and partially observed}) and (\ref{v(x,S)}), $\hat V(\hat X,m \sigma (1-\rho))$ degenerates to a fully observed case with new liquidation barrier $\kappa_1=I(m\sigma(1-\rho))$ and new liquidation value $\omega_1$, where 
\begin{eqnarray*}\begin{aligned}
\omega_1: &= \frac{\omega }{\kappa_1} \psi(I(m \sigma (1-\rho)),m \sigma (1-\rho))\\
&= \frac{\omega }{\sqrt{2\pi}\kappa_1}\int_{\mathbb{R}} \left[(\kappa_1+1)e^{-\frac{m \sigma (1-\rho)}{2}+u \sqrt{m \sigma (1-\rho)}}-1\right]^{+} e^{-\frac{u^2}{2}} du.
\end{aligned}\end{eqnarray*}
 By Theorem \ref{semi-explicit solution with recap}, we know $\hat V(\hat X; m \sigma (1-\rho)) = V(\hat X; \kappa_1, \omega_1)$, which is the boundary condition of $\hat V(\hat X,S)$ at $S=m\sigma (1-\rho)$.
Using the penalty method in \cite{DZ08}, we  numerically solve the HJB equation (\ref{hjb3}) separately in regions
\[
\{(\hat X,S): \hat X\geq I(S), S\leq m\sigma(1-\rho)\} \text{ and } \{(\hat X,S): \hat X\geq I(S),  m\sigma(1-\rho) \leq S <\bar{S}\}
\]
with boundary conditions
 $ \hat V(I(S),S)
 =\frac{\omega }{\sqrt{2\pi}}\int_{\mathbb{R}} \left[(I(S)+1)e^{-\frac{S}{2}+u \sqrt{S}}-1\right]^{+} e^{-\frac{u^2}{2}} du$ and $ \hat V(\hat X, m\sigma(1-\rho))=V(\hat X; \kappa_1, \omega_1).$

By $(\ref{S(t)})$, when $t$ is high, that is, when the shareholders and regulators have followed the noisy accounting information for a long time, then $S=m\sigma(1-\rho)$, and it is constant.
In this section, we analyze how the regulators' liquidation parameter $a$ (which equals $c_2/(c_1+c_2)$), minimum equity-to-debt ratio $ \kappa$, asset return volatility $\sigma$, accounting information noise $m$, and signal and asset return correlation $\rho$ affect the bank market value and the optimal dividend and recapitalization policy under the long-term accounting information noise level $(S=m\sigma(1-\rho))$ and the model parameters estimated in Section~\ref{calibration} (note in particular that the regulators' liquidation parameter $a$ is about 80\%).
Figure \ref{figure st} illustrates that after 1.5 years, the noise level is close to the long-term level even though initially, the noise level is twice the long-term level. We use $(\ref{hat X_t})$ to understand some of the comparative statics discussed below.

By the dynamic process (\ref{hat X_t}), the volatility of $\hat X_t$ is $S_t/m+\sigma\rho$, and we use this to understand some of the comparative statics discussed below.

\begin{figure}[H]
\centering
  \includegraphics[height=6.8cm,width=7.8cm]{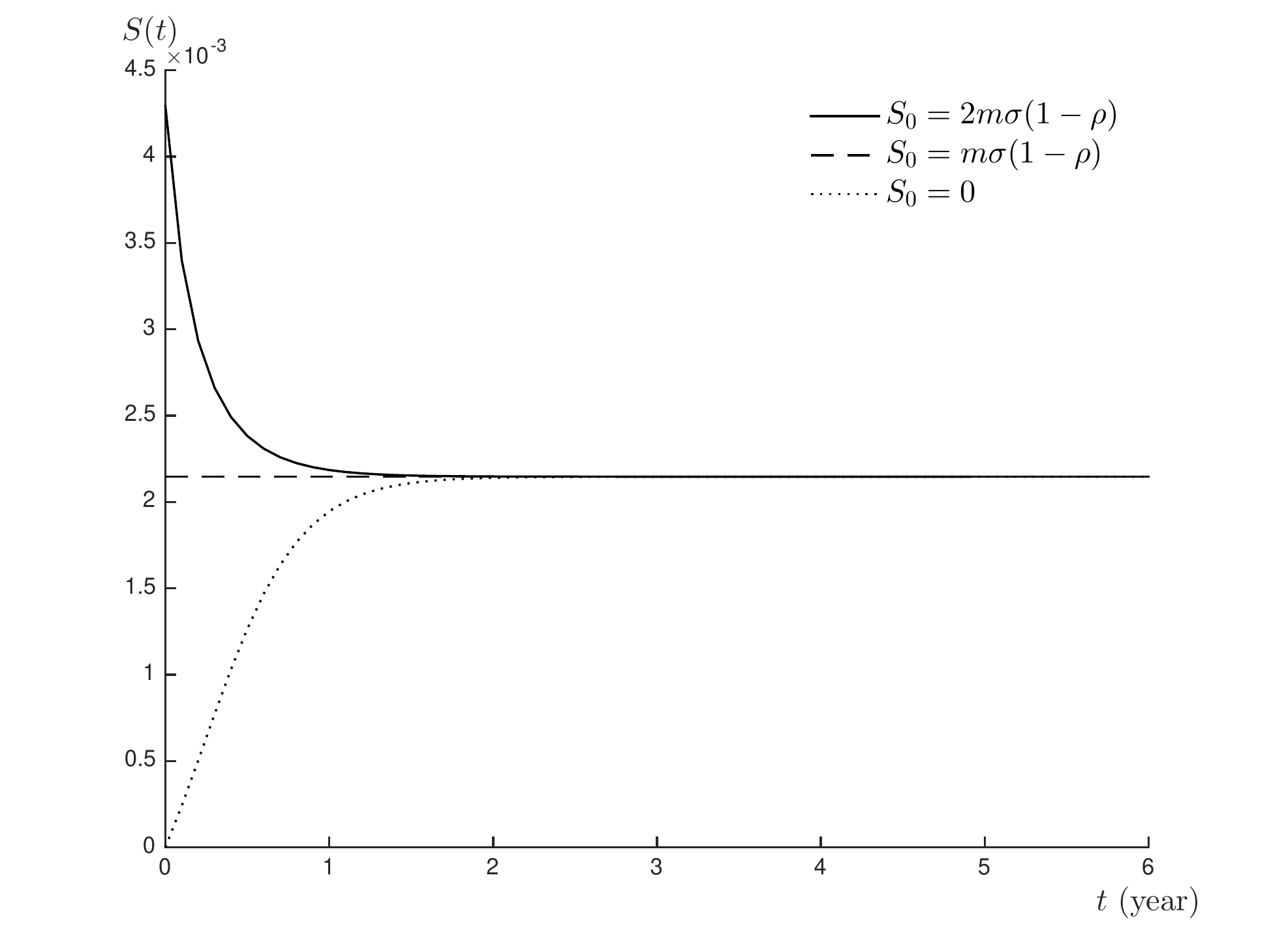}
 \caption[Accounting asset noise $S_t$ over time]{\textbf{Accounting asset noise $S_t$ over time.} This figure shows $S_t$ with respect to time $t$ under the parameter estimates in Table \ref{table  allbankparameter}. Thus, the parameter values: $\sigma=5.19\%$, $m=2.96\%$, and $\rho=-39.86\%$. We observe the accounting asset noise converges to its long-term limit $m\sigma(1-\rho)$.}
 \label{figure st}
\end{figure}

Figure \ref{figure st} illustrates that after 1.5 years the noise level is close to the long-term level even though initially the noise level is twice the long-term level.

Figure~\ref{recap_partiallymodelvalue}(i) shows how the regulators' liquidation parameter $a$ affects the bank's market equity value in terms of bank debt, that is, $\hat V(\hat X,S)$.
If the regulators weight more the risk of liquidating a solvent bank than the risk of not liquidating an insolvent bank (i.e., if $c_2>c_1$), then they liquidate the bank only when there is more than 50\% probability that the bank is insolvent.
 Under our parameter estimates in Table~\ref{table allbankparameter}, the liquidation parameter $a = 80\%$ and the partially observed model gives a higher market equity value than the corresponding fully observed model.
Thus, in this case the bank benefits from noisy accounting information.
On the other hand, if the regulators can liquidate the bank even if they are less than 50\% certain that the bank is insolvent, then the partially observed model gives a lower value than the fully observed model.
Figure~\ref{recapbarrier2}(i) shows that when the liquidation parameter $a$ rises, then the liquidation barrier $I$ falls. This decreases both the recapitalization barrier $u_1$ and the dividend barrier $u_2$.

Figure~\ref{recap_partiallymodelvalue}(ii) shows that, as expected, the higher the minimum equity-to-debt ratio $\kappa$ is, the lower the bank value.
Figure~\ref{recapbarrier2}(ii) shows that when $\kappa$ rises, the bank hedges the liquidation risk by raising all the barriers ($I$, $u_1$, and $u_2$).

Figure~\ref{recap_partiallymodelvalue}(iii) shows that when $a=80\%$ then the liquidation risk falls in uncertainty level S and, therefore, the bank equity value rises.  
By Proposition \ref{special case}(i), the opposite is true when $a=20\%$ (this is true for all $a<50\%$).
Figure
~\ref{figure region depiction}  shows first that, by Proposition \ref{special case}(i), when $a=80\%$, then the liquidation barrier $I$ falls in $S$ when $S$ is below $(\Phi^{-1}(0.8))^2 = 0.71$, which is greater than the limit $m\sigma(1-\rho)=0.0019$.
The opposite is true when $a=20\%$.
When $S$ is low in Figure
~\ref{figure region depiction},
then the diffusion term of the expected equity-to-debt ratio in $(\ref{hat X_t})$ is close to zero. Therefore, at low values of $S$, $\hat X_t$ is close to a deterministic process, and the dividend barrier $u_2$ and the recapitalization barrier $u_1$ are close to each other.
However, when $S$ rises, then $\hat X_t$ fluctuates more, and to minimize frequent recapitalization costs, the barriers depart from each other. Further, when $S$ is high, the bank finds it optimal not to use the recapitalization option at all. This is also illustrated  in Figure~\ref{u2 and u1 wrt delay}.

Since we have $S=m\sigma(1-\rho)$, $S$ increases in $\sigma$ and $m$ and decreases in $\rho$. Therefore, $\sigma$ and $m$ have a similar, but weaker, effect on the bank value as $S$ has, and $\rho$ has an opposite effect (compare Figure~\ref{recap_partiallymodelvalue}(iv), Figure~\ref{recap_partiallymodelvalue}(v), and  \ref{recap_partiallymodelvalue}(vi) with Figure~\ref{recap_partiallymodelvalue}(iii)).
By $(\ref{hat X_t})$ and $S=m\sigma(1-\rho)$, when $\sigma$ is low in Figure~\ref{recapbarrier2}(iii), then the diffusion term of the expected equity-to-debt ratio is almost zero, and as with $S$ in Figure~\ref{figure region depiction}, the dividend barrier $u_2$ and the recapitalization barrier $u_1$ are close to each other. Further, when $\sigma$ rises, the liquidation, recapitalization, and dividend barriers behave similarly as with respect to $S$.
Given $(\ref{hat tao})$ and Proposition \ref{special case}(i), the liquidation barrier falls in $m$ in Figure~\ref{recapbarrier2}(iv) and rises in $\rho$ in Figure~\ref{recapbarrier2}(v).
The liquidation barrier changes the dividend and recapitalization barriers, respectively.

\begin{figure}[H]
\centering
\begin{tabular}{cc}
\includegraphics[width=0.46\textwidth]{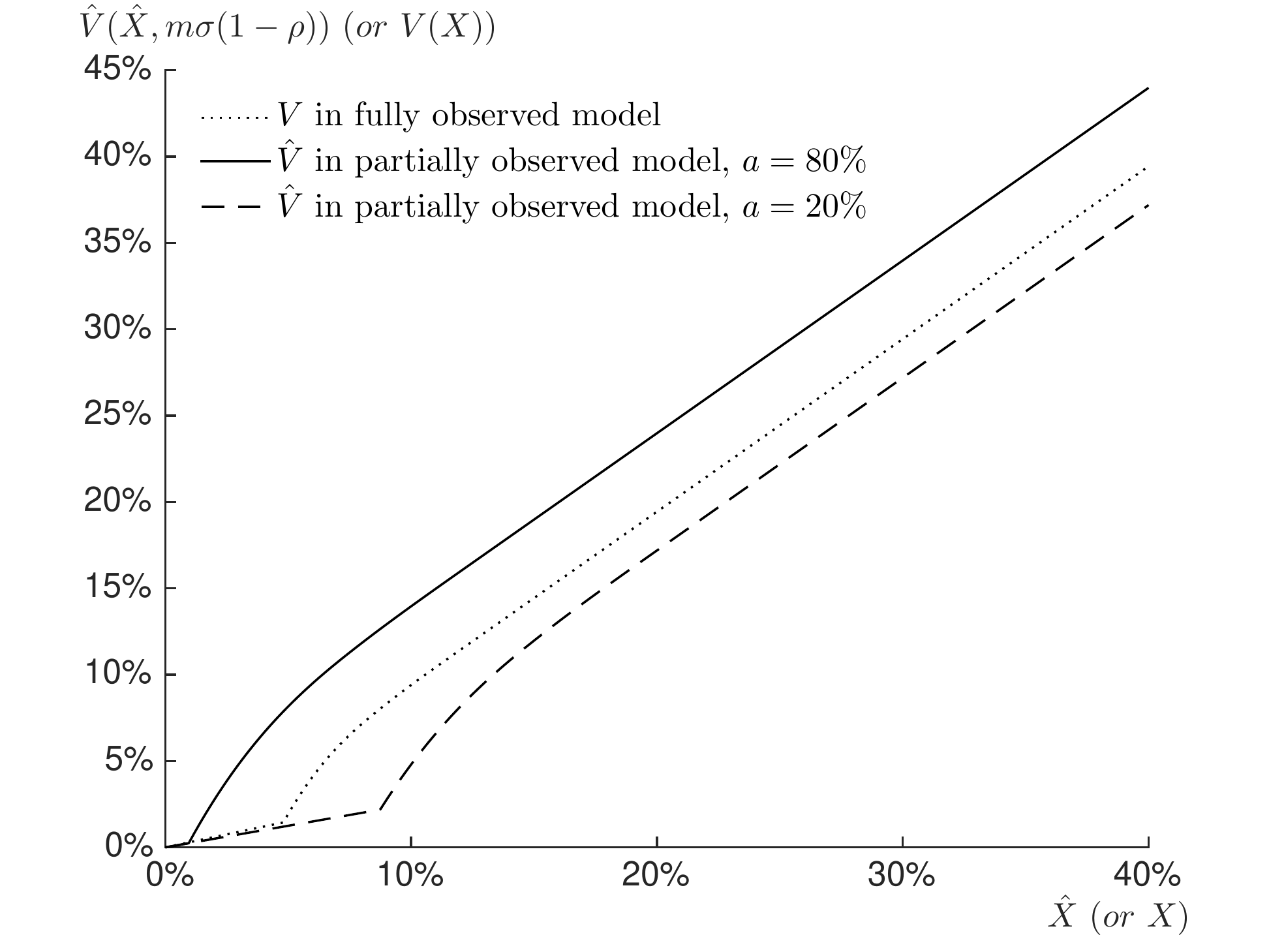} & \hspace{-.5cm}
\includegraphics[width=0.46\textwidth]{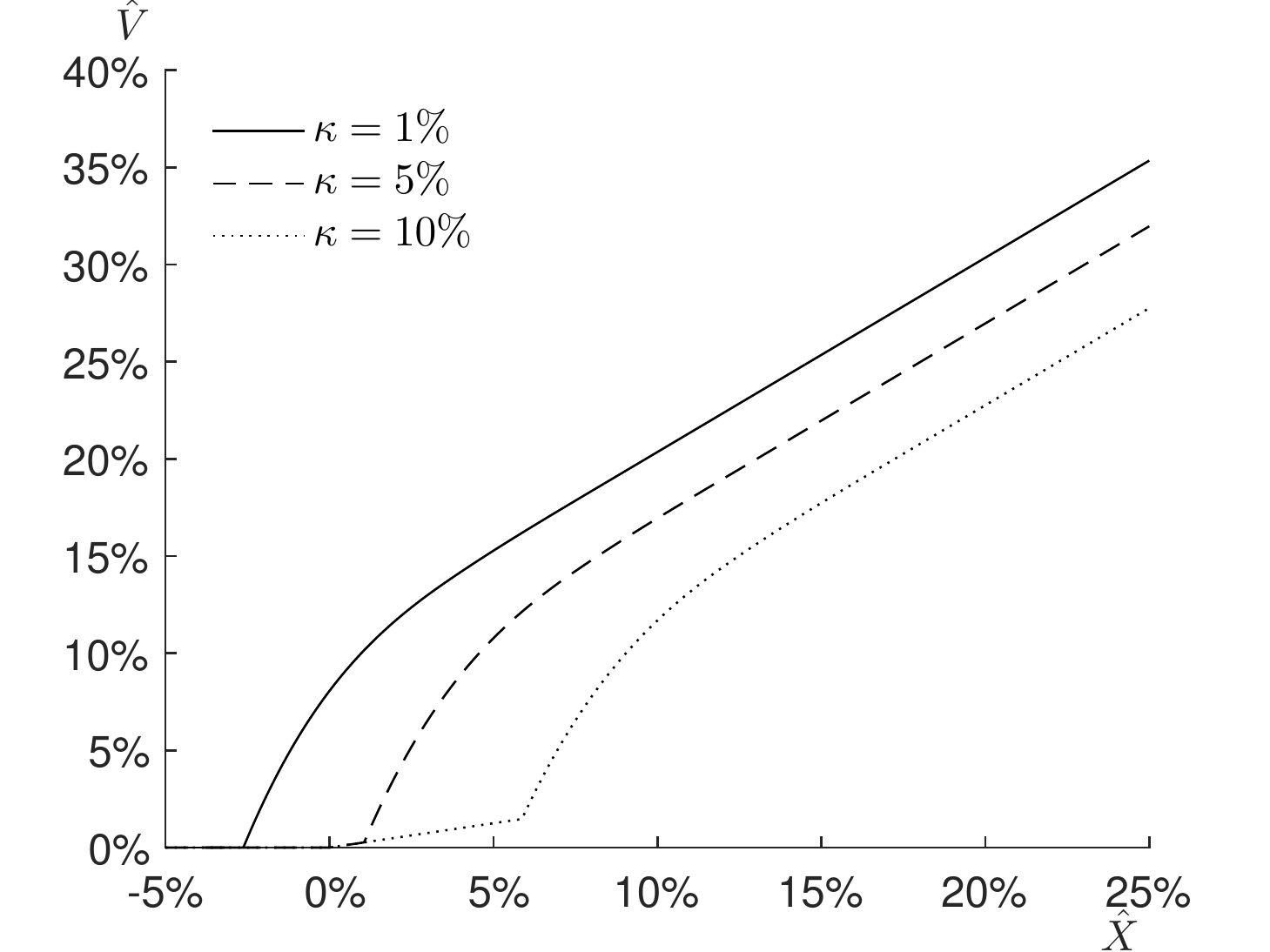} \\
{\small (i) regulators' confidence level $a$} & {\small (ii) minimum capital level $\kappa$}  \\
\includegraphics[width=0.46\textwidth]{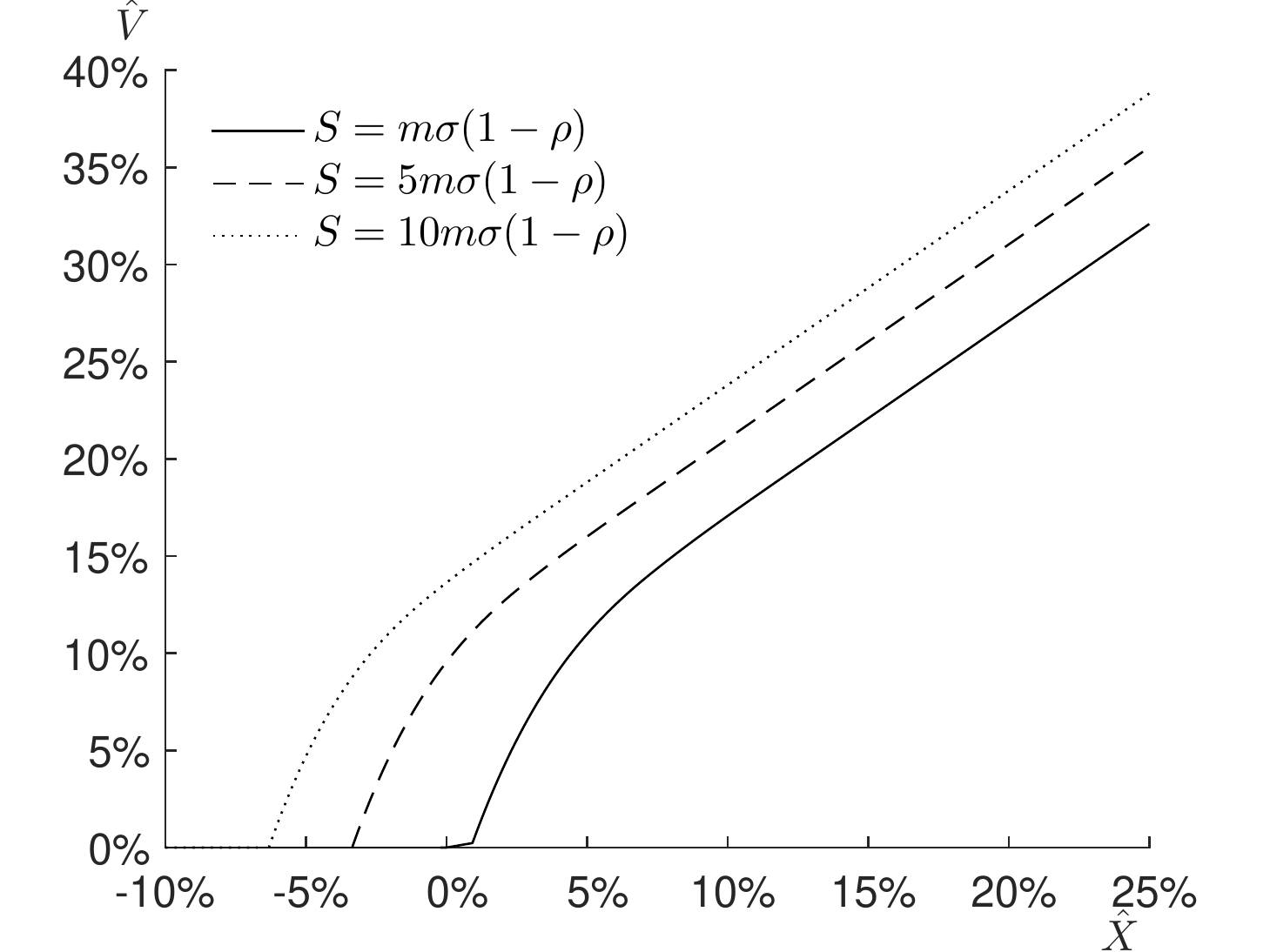} &\hspace{-.5cm}
\includegraphics[width=0.46\textwidth]{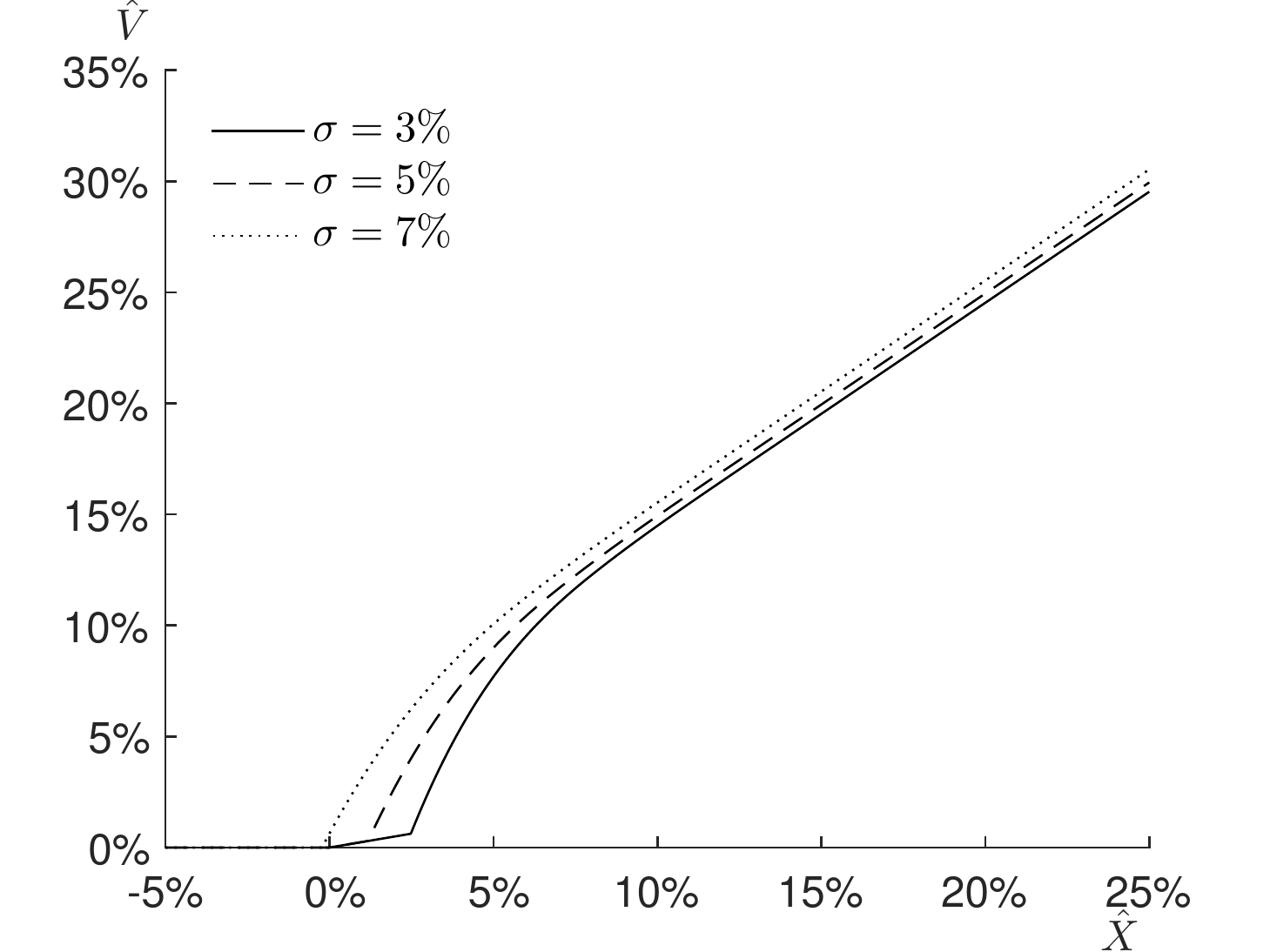} \\
{\small (iii) accounting asset uncertainty $S$} & {\small (iv) asset return volatility $\sigma$} \\
\includegraphics[width=0.46\textwidth]{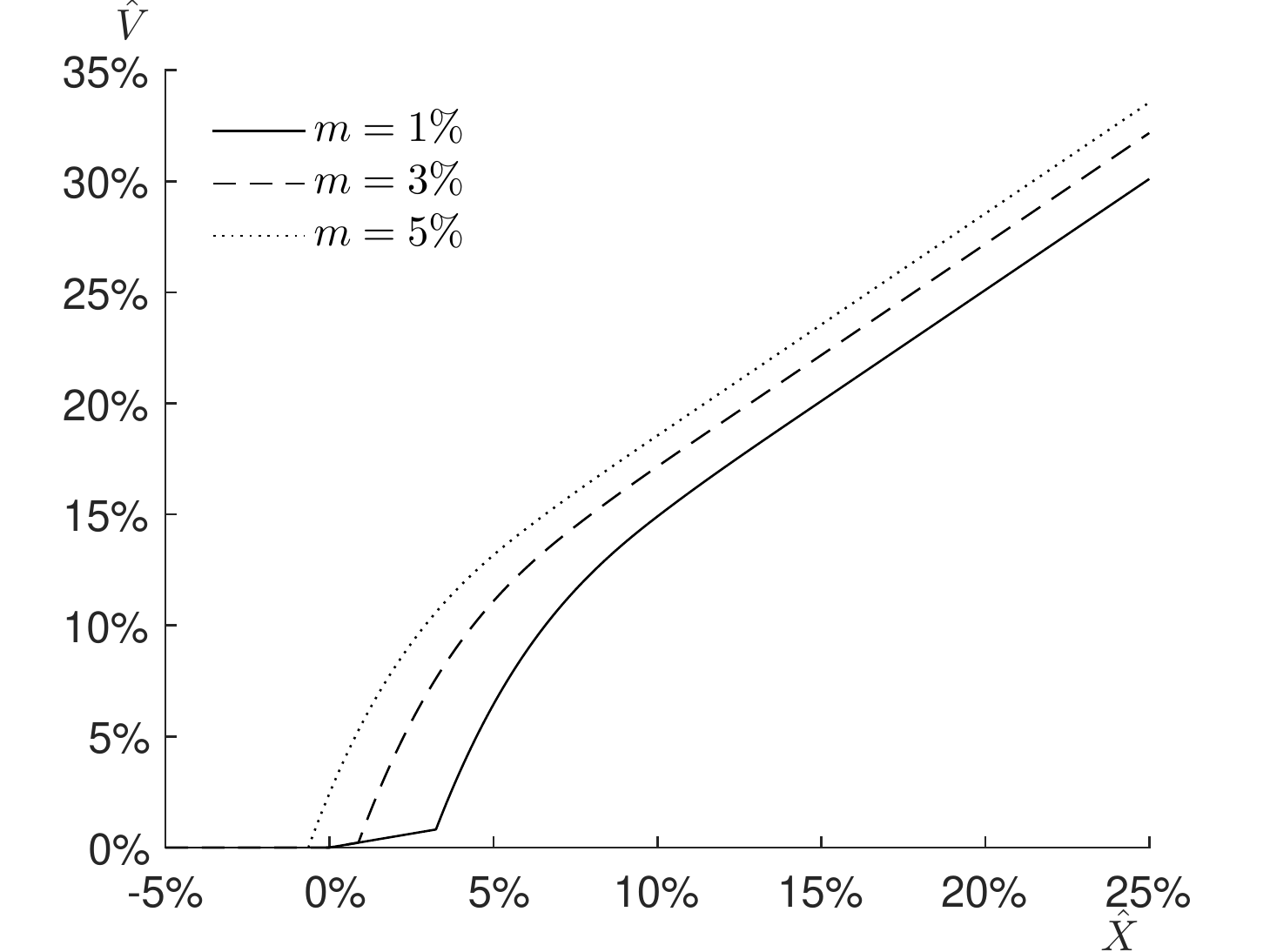} &\hspace{-.5cm}
\includegraphics[width=0.46\textwidth]{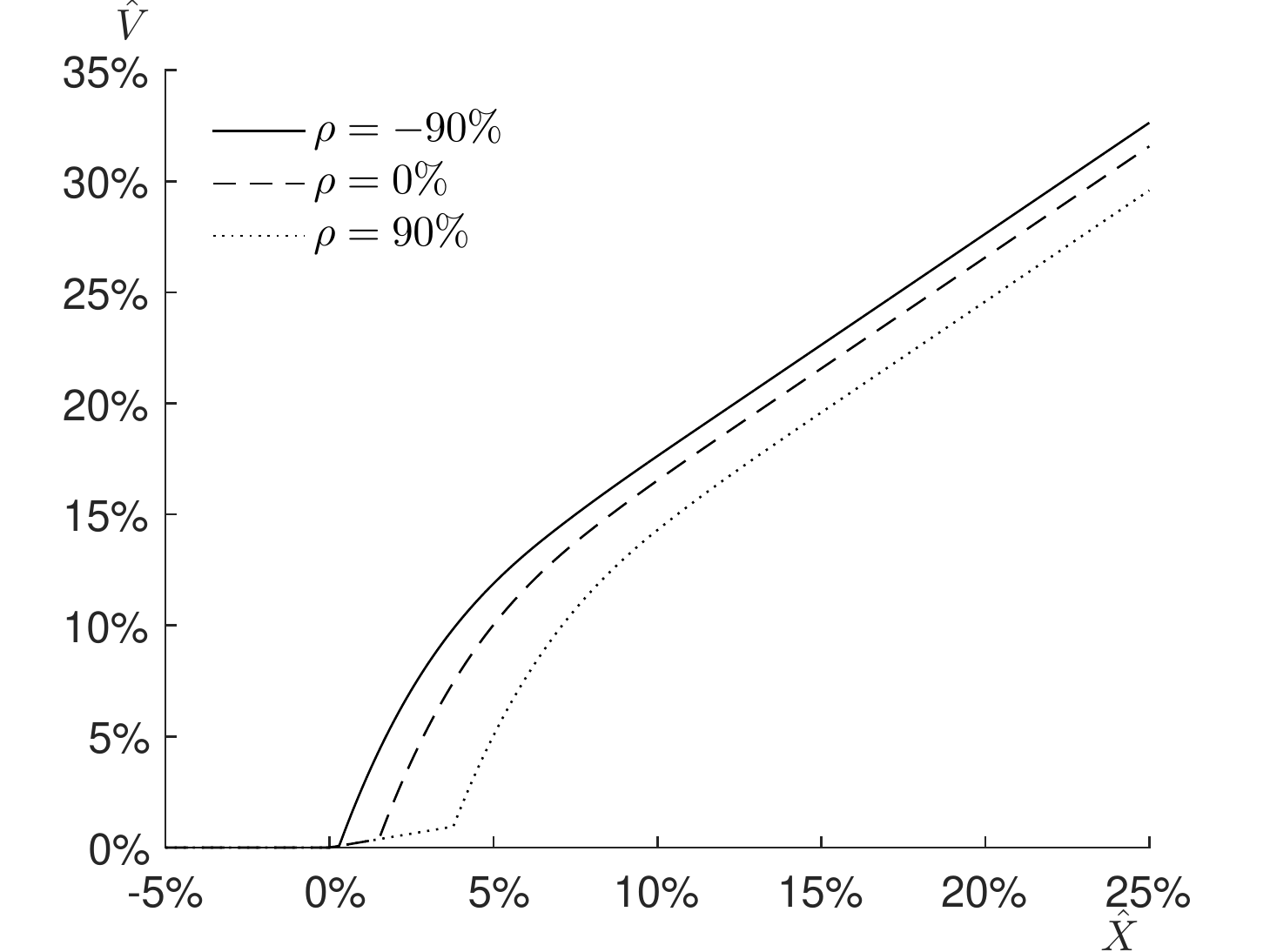}\\
{\small (v) accounting information noise $m$} & {\small (vi) signal and asset return correlation $\rho$}
\end{tabular}
\caption[The value function with different parameters]{\textbf{The value function with different parameters.}
  The parameter values are from Table \ref{table allbankparameter}: $\alpha=12.85\%$, $\mu=10.52\%$, $\delta=25.70\%$, $\sigma=5.21\%$, $\kappa=4.80\%$, $\hat \kappa=1.15\%$, $a=79.93\%$, $\omega=25.10\%$, $m=2.85\%$, $\rho=-26.71\%$, $\Delta=0.50$, and $K=0.20\%.$ In all the panels except in (iii), the accounting asset uncertainty $S$ equals its long-term level $m\sigma(1-\rho)$.}
  \label{recap_partiallymodelvalue}
\end{figure}

\begin{figure}[H]
\centering
\begin{tabular}{cc}
\includegraphics[width=0.46\textwidth]{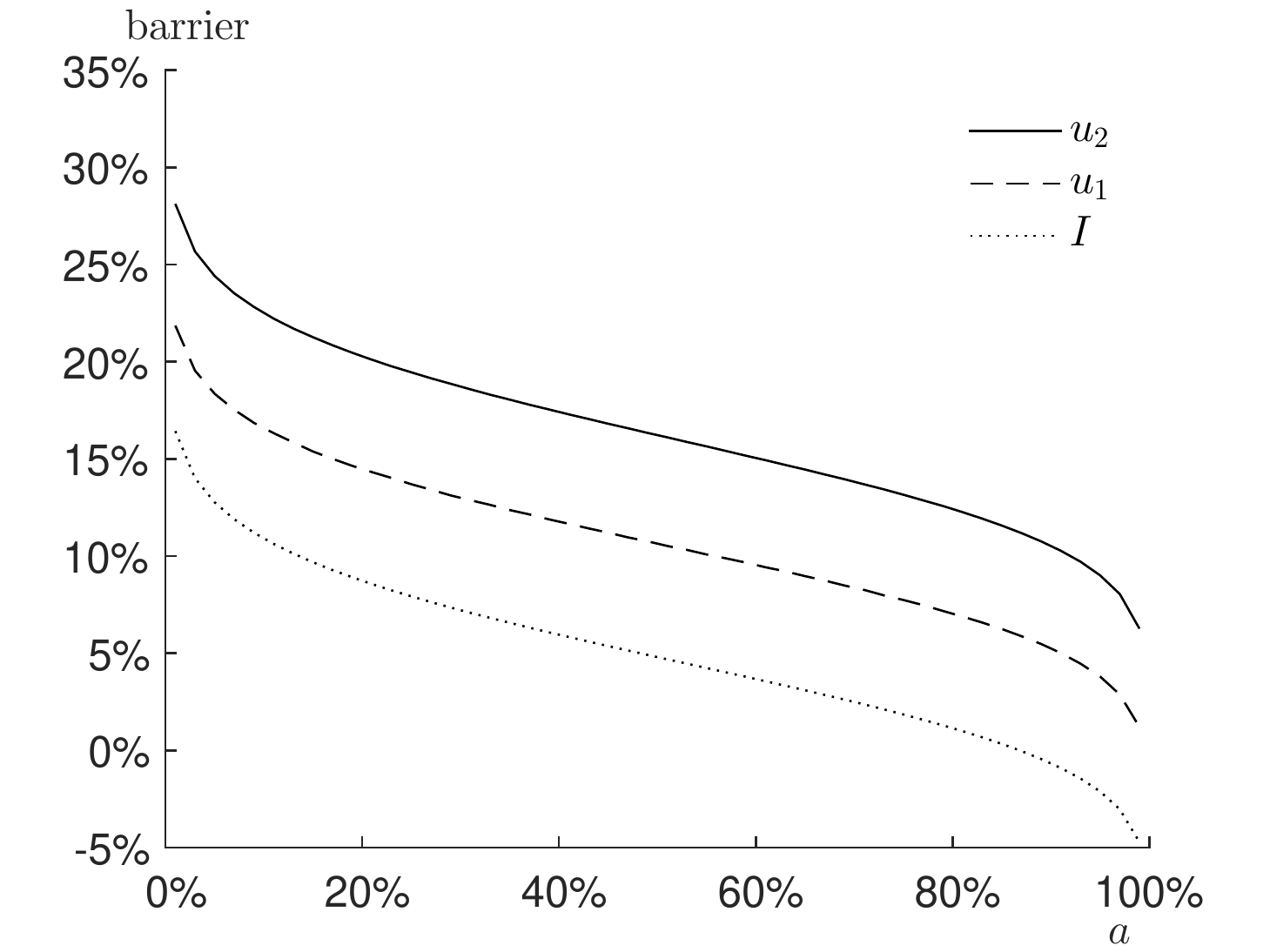} & \hspace{-.5cm}
\includegraphics[width=0.46\textwidth]{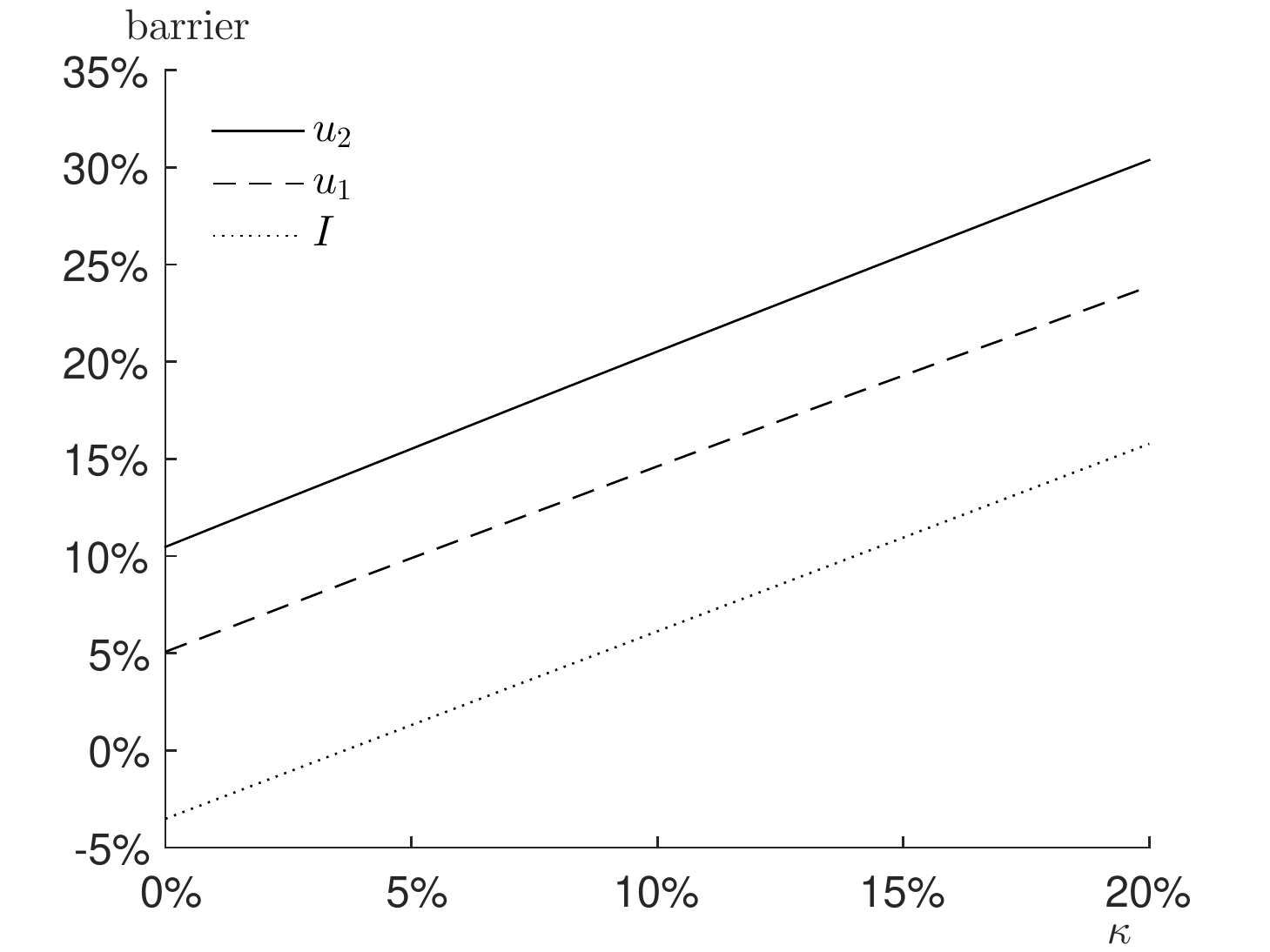} \\
{\small (i) barriers with different $a$} & {\small (ii) barriers with different $\kappa$}  \\
\includegraphics[width=0.46\textwidth]{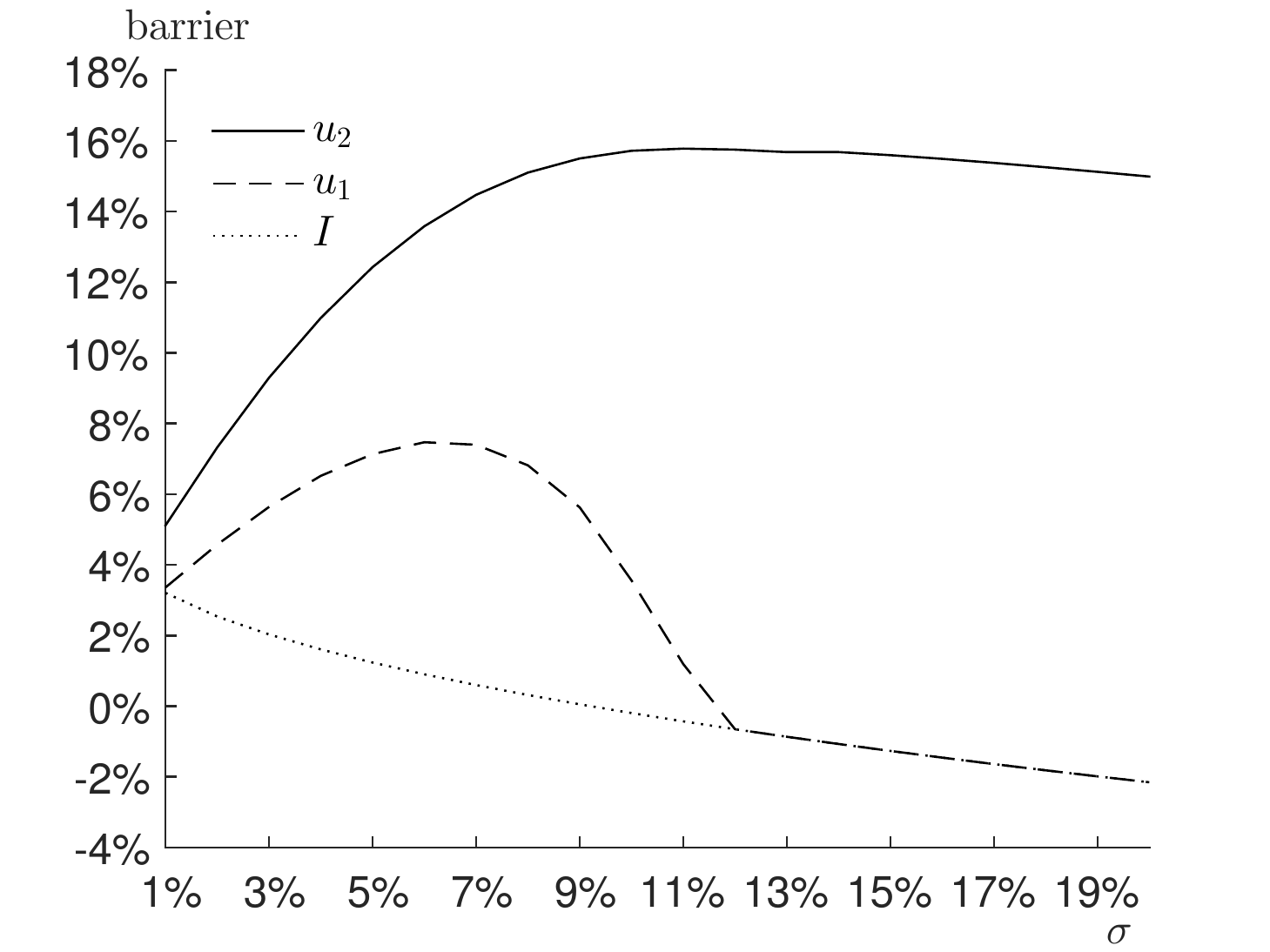} &\hspace{-.5cm}
\includegraphics[width=0.46\textwidth]{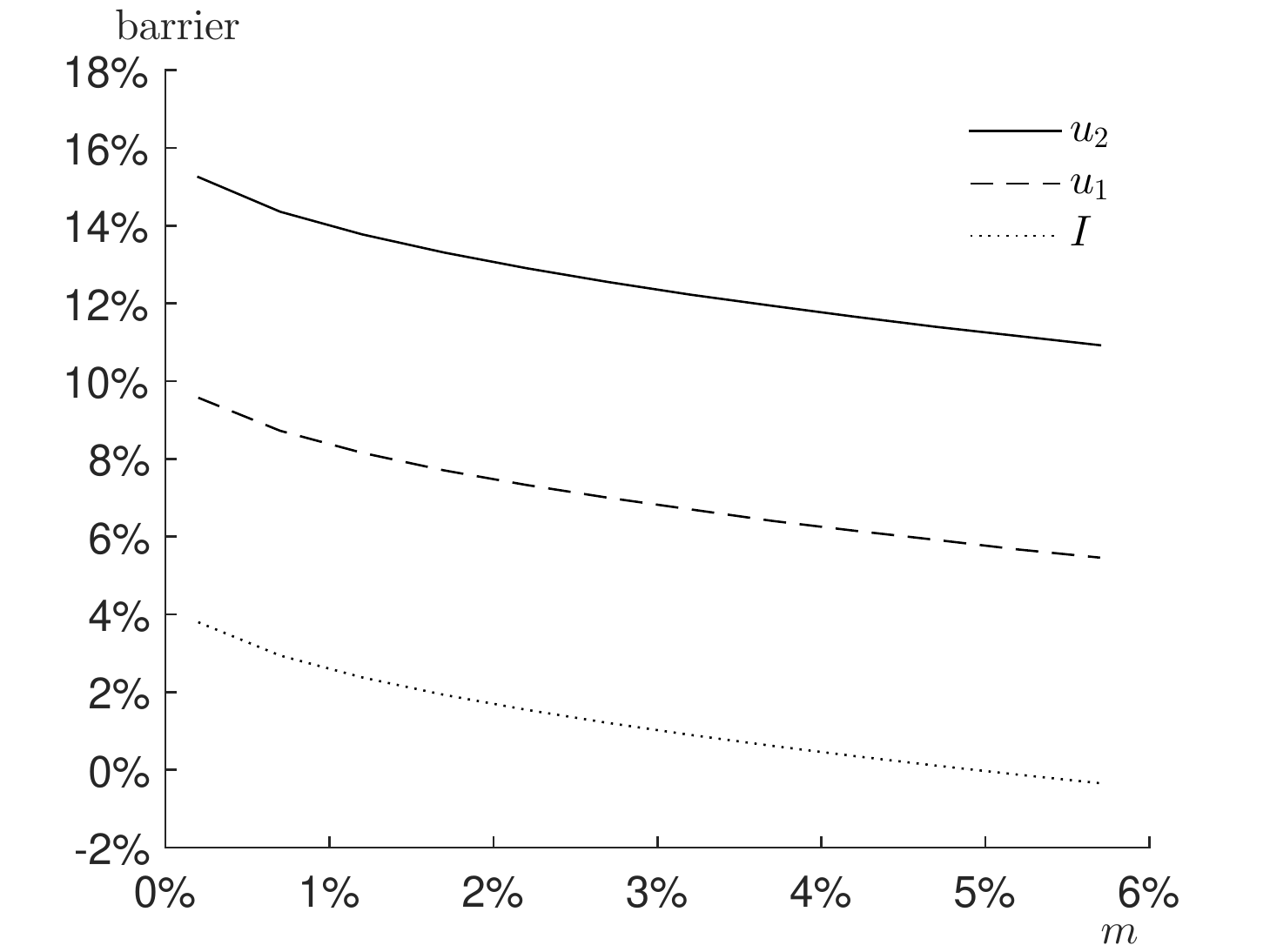} \\
{\small (iii) barriers with different $\sigma$} & {\small (iv) barriers with different $m$} \\
\includegraphics[width=0.46\textwidth]{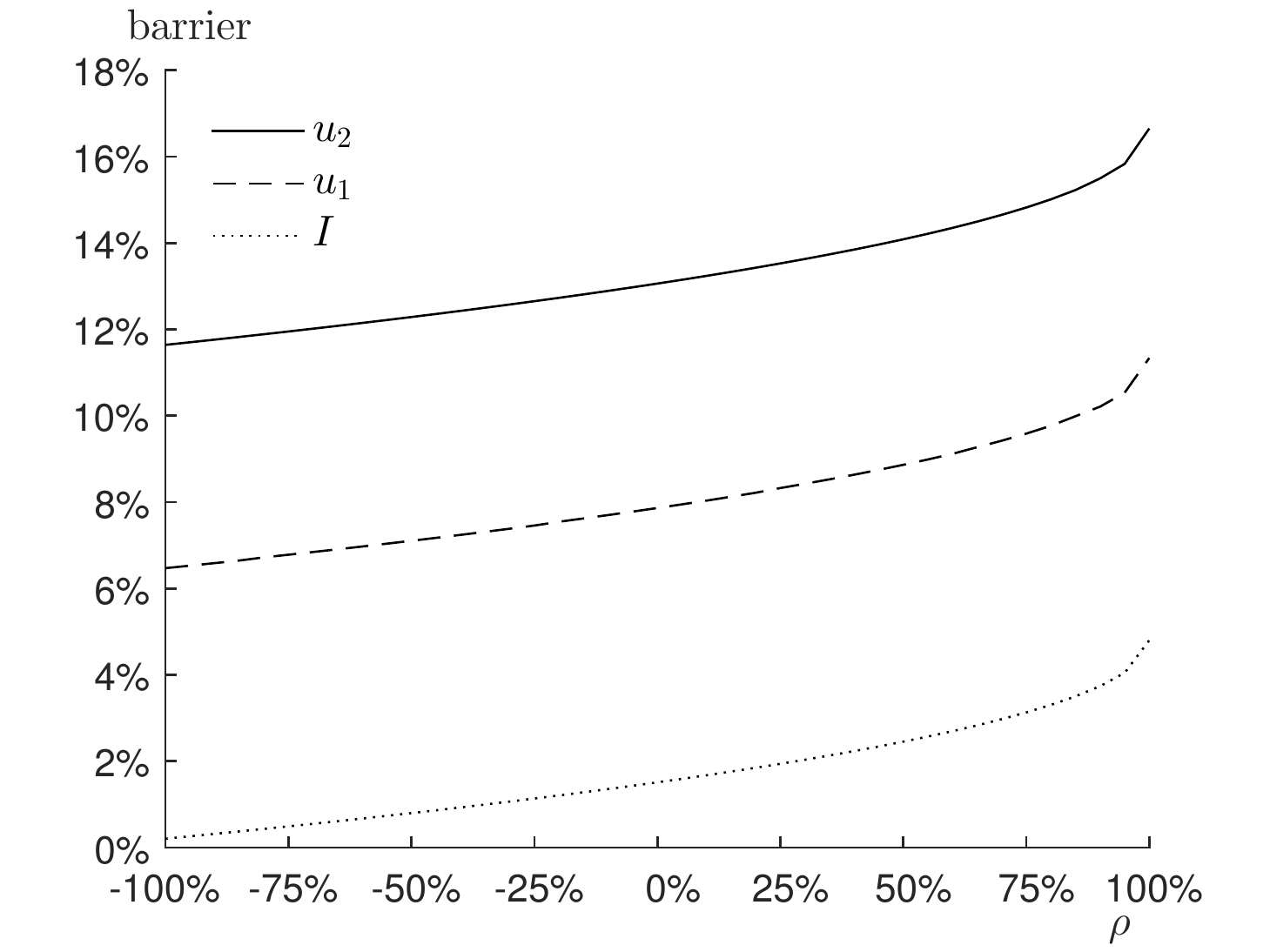} &\hspace{-.5cm}
\includegraphics[width=0.46\textwidth]{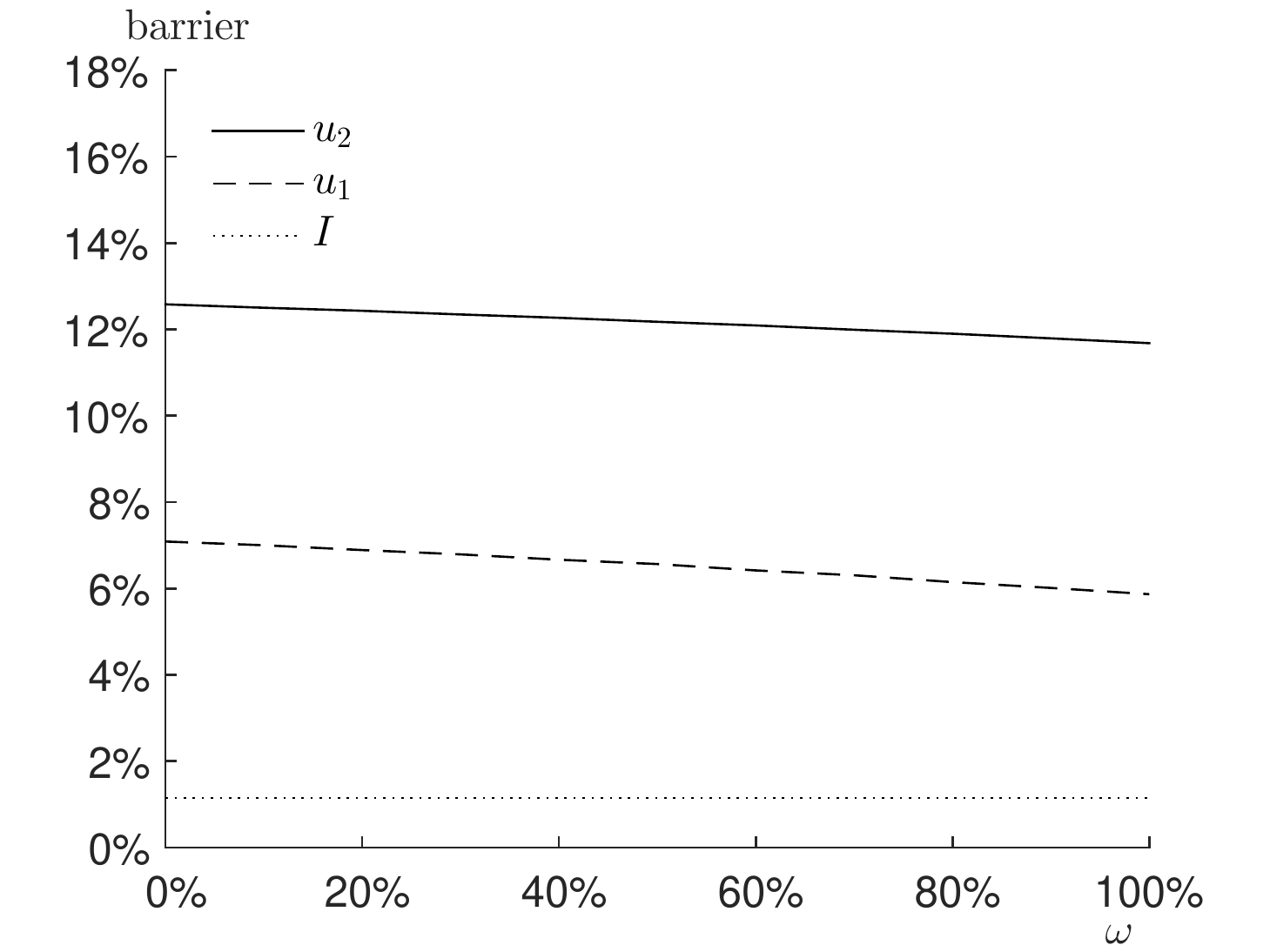}\\
{\small (v) barriers with different $\rho$} & {\small (vi) barriers with different $\omega$}
\end{tabular}
\caption[Optimal barriers $(u_2$, $u_1$, and $I)$ under different parameter values]{\textbf{Optimal barriers $(u_2$, $u_1$, and $I)$ under different parameter values.}
  The parameter values are from Table \ref{table allbankparameter}: $\alpha=12.85\%$, $\mu=10.52\%$, $\delta=25.70\%$, $\sigma=5.21\%$, $\kappa=4.80\%$, $\hat \kappa=1.15\%$, $a=79.93\%$, $\omega=25.10\%$, $m=2.85\%$, $\rho=-26.71\%$, $\Delta=0.50$, and $K=0.20\%.$ In all the panels, the accounting asset uncertainty $S$ equals its long term level $m\sigma(1-\rho)$.}
\label{recapbarrier2}
\end{figure}

\begin{figure}[tbh]
\centering
\begin{tabular}{cc}
\includegraphics[width=0.46\textwidth]{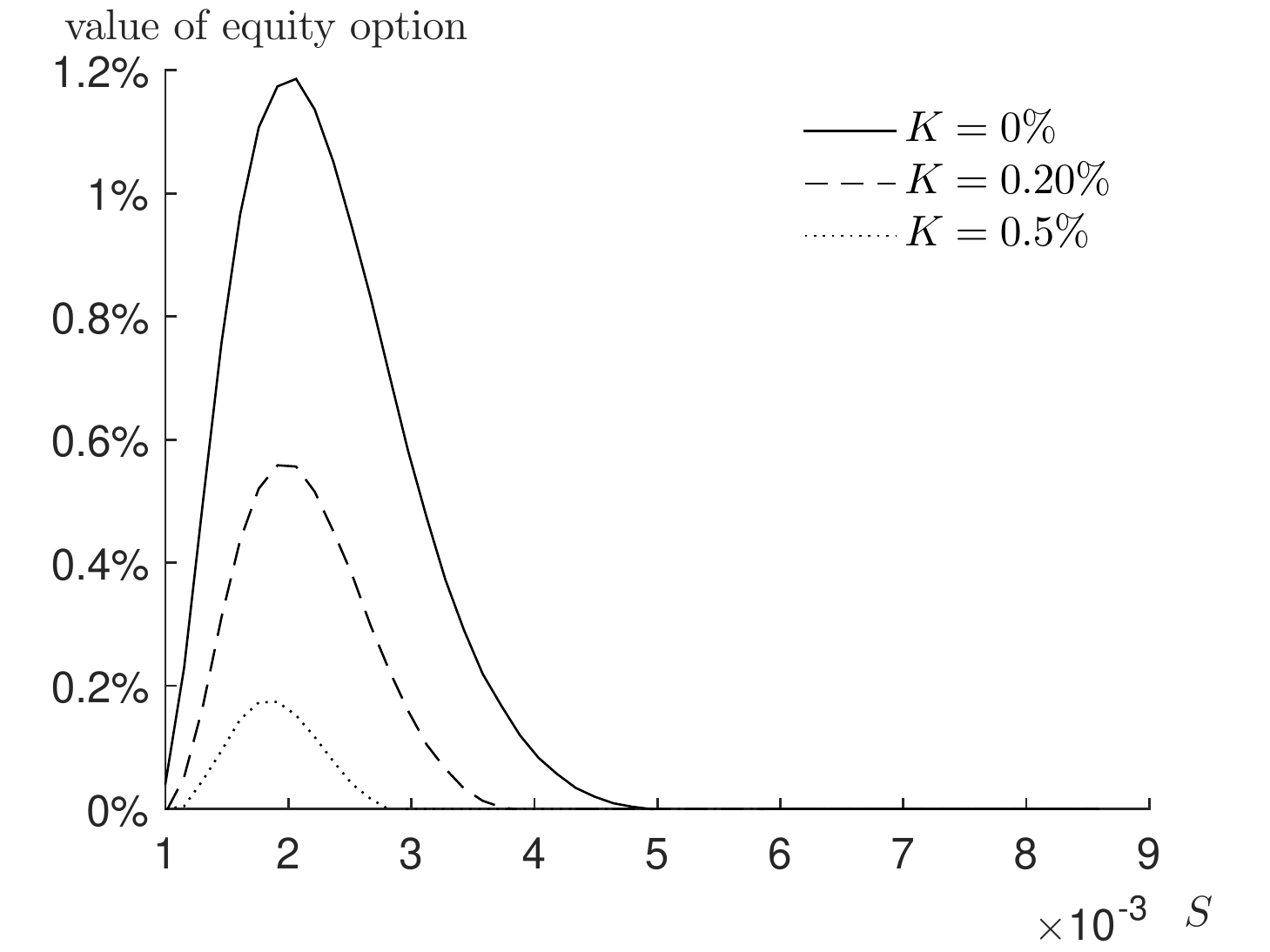} & \hspace{-.5cm}
\includegraphics[width=0.46\textwidth]{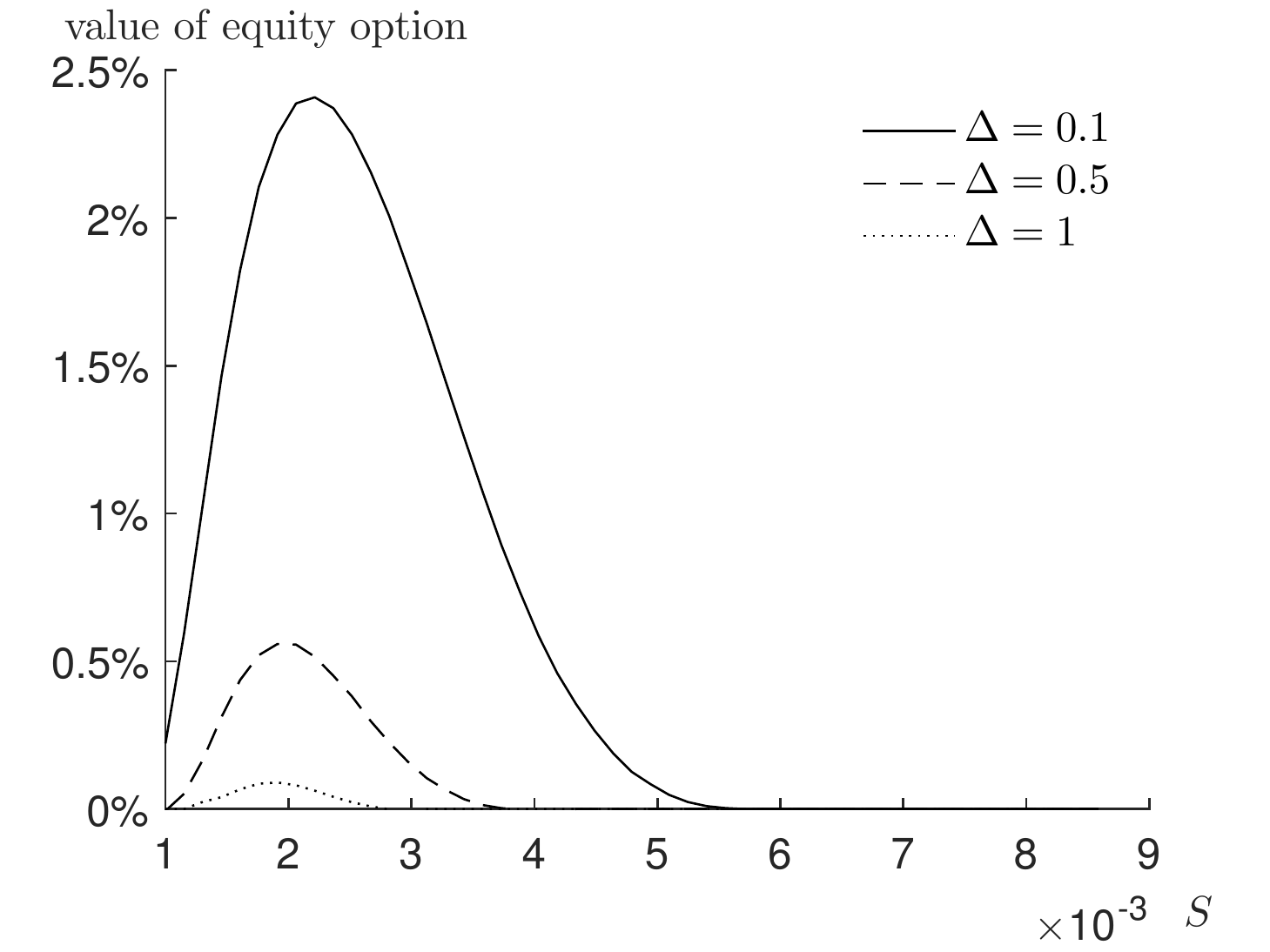} \\
{\small (i) } & {\small (ii) }
\end{tabular}
\caption[Three barriers  and the value of equity option with different ${K}$ and ${\Delta}$]{\textbf{Three barriers  and the value of equity option with different $\mathbf{K}$ and $\mathbf{\Delta}$.} Value of equity option is plotted against static uncertainty $S$ with different cost $K$ and delay time $\Delta$ for partially observed model. The value of equity issue option is the difference between value function with and without recapitalization. This value goes to $0$ when static uncertainty $S$ becomes too large.  Parameter values are from Table \ref{table allbankparameter}: $\alpha=13.25\%$, $\mu=10.42\%$, $\delta=26.95\%$, $\sigma=5.19\%$, $\hat \kappa_0=0.92\%$, $m=2.96\%$, $\rho=-39.86\%$, $\Delta=0.5$, and $K=0.20\%.$ }
\label{u2 and u1 wrt delay}
\end{figure}

By $(\ref{define of value function with recap and partially observed})$, the bank value rises in $w$, the proportional liquidation value in terms of book equity (not reported for brevity). Figure~\ref{recapbarrier2}(vi) shows that when $w$ rises then the bank takes more risk by taking more leverage, and thus, the dividend barrier $u_2$ and the recapitalization barrier $u_1$ fall. 

To analyze further the sensitivity of the optimal policy and bank value with respect to the accounting asset uncertainty level $S$, volatility $\sigma$, accounting information noise $m$, proportional liquidation value $\omega$, correlation $\rho$, equity issuance time delay $\Delta$, and fixed cost $K$, we calculate the elasticities for the dividend barrier, recapitalization barrier, and liquidation barrier, as well as the model market value with respect to these model parameters. For example, the elasticity with respect to $\sigma$ for the dividend barrier $u_2$ is given by
\begin{eqnarray*}
\text{Average}\left\{\frac{d(\text{dividend barrier }u_2)/\text{ dividend barrier }u_2}{d(\text{volatility }\sigma)/\text{ volatility }\sigma}\right\},
\end{eqnarray*}
where the differential $dx$ represents an infinitely small change in the variable $x$, and the average is over the values in Figure~\ref{recapbarrier2}(iii).
When we calculate the elasticity with the liquidation barrier $I$ or correlation $\rho$, they might be negative. In that case, we use absolute value in the denominators above.
Note that with other variables than $S$, we have $S = m \sigma (1-\rho)$.

The elasticity results are in Table \ref{elasticities}.
The table implies that the uncertainty parameters $S$, $\sigma$, $m$, and $\rho$ affect the dividend and recapitalization policy and bank value more than the recapitalization frictions $\Delta$ and $K$.
The dividend barrier $u_2$ is most sensitive to the accounting asset uncertainty $S$ and accounting information noise $m$, the recapitalization barrier $u_1$ is most sensitive to the accounting information noise $m$ and volatility $\sigma$, and the liquidation barrier $I$ is most sensitive to the  accounting asset uncertainty $S$ and  volatility $\sigma$.
The signs of the elasticities are as explained above.
For instance, a $1\%$ increase in the accounting asset uncertainty decreases the liquidation barrier by about $3.8\%$.
The bank value is most sensitive to the accounting information noise $m$ and the correlation $\rho$.
That is, asset smoothing raises the bank value. For instance, a $1\%$ increase in the accounting information noise raises the bank value by about $0.04\%$.
Thus, if bankers have equity-based compensation, they have an incentive to create and increase the noise.

\begin{table}[ht]
\caption[Elasticities]{\textbf{Elasticities.} This table uses parameter values from Table \ref{table allbankparameter}. 
 In all the rows except in the first, the accounting asset uncertainty level $S$ equals its long-term level $m\sigma(1-\rho)$.}
\label{elasticities}
\begin{center}
\begin{tabular}{ crrr r }
\hline \hline
\multicolumn{1}{ c }{Variable }&  $u_2$ & $u_1$ & $I$  & $\hat V$  \\
\hline
elasticity w.r.t \\
$S$ & $0.2038$ & $-0.1417$ & $-3.756$ & $0.0245$\\
$\sigma$ & $0.0802$ & $-0.1856$ & $-3.5408$ & $0.0171$\\
$m$ & $-0.1348$ & $-0.2307$ & $-2.3859$ &  $0.0436$\\
$\omega$ & $-0.0094$ & $-0.0034$ & $0$ & $0.0002$ \\
$\rho$ & $0.1049$ & $0.1682$ & $0.4961$ & $-0.0302$\\
$\Delta$ & $0.0808$ & $-0.1444$ &$0$ & $-0.0146$ \\
$K$ &  $0.0017$ & $-0.0543$ & $0$ & $-0.0004$\\
\hline \hline
\end{tabular} 
\end{center}
\end{table}

\section{Model Calibration}\label{calibration}

\subsection{Dataset and Summary Statistics}
To calibrate our model, we use a sample of 292 publicly traded U.S. banks between the first quarter of 1993 and the fourth quarter of 2015. The dataset is from the CRSP/Compustat Merged Database.
We use these banks since they have market equity prices and all the needed accounting variables for our analysis during the sample period (total assets, debt, tier 1 equity, number of common shares outstanding, stock price, dividend payments, income before extraodinary items, net cash flows from operating activities, 
retained earnings, convertible debt, total liability, preferred stock, deferred taxes, loan loss provisions, and real estate loan).
Consistent with Figure~\ref{figure bank structure}, we model bank debt in $(\ref{debt})$ as the total assets minus tier 1 equity.\footnote{The average debt-to-assets ratio among the sample banks is $87.8\%$ and the corresponding average deposits-to-assets ratio is $75.0\%$.}
Table \ref{basic information}  provides the summary statistics of the data.

\begin{table}[ht]
\caption[Summary statistics]{\textbf{Summary statistics.} This table reports variable names, means, standard deviations, minimum and maximum values from the dataset. The dataset contains $292$ publicly traded U.S. banks from Q1 1993 to Q4 2015. $m$ corresponds to million U.S. dollars. }
\label{basic information}
\begin{center}
\resizebox{\linewidth}{!}{%
\begin{tabular}{ lr rrrr }
\hline \hline
\multicolumn{1}{ c }{Variable } & \multicolumn{1}{ c }{Mean} & \multicolumn{1}{ c }{(Std. Dev.)}  &  \multicolumn{1}{ c }{Max} &  \multicolumn{1}{ c }{Min}    \\
\hline
Total assets (million U.S. dollars)& $30,102 $ & $(169,244)$ & $2,577,148$ & $23$   \\
Debt  (million U.S. dollars)& $17,869 $  & $(89,011 )$   & $2,098,061 $ &  $17$  \\
Tier 1 equity (million U.S. dollars) & $3,181 $ & $(18,826)$ & $319.059$  &$8$   \\
Number of common shares outstanding (million)& $166$  & $(628)$ & $10,822$ & $3$  \\
Stock price (U.S. dollars) & $30.63$  & $(38.97)$ & $189.00$ & $0.26$ \\
Dividend (million U.S. dollars)& $141$ & $(711)$  & $12,437$ & $0$  \\
Income before extraordinary items & $217$ & $(1,274)$ & $12,762$ & $-5,584$\\
Net cash flow from operating activities & $669$ & $(3,825)$ & $61,385$ & $-1,024$\\
Stock Return (Q2 in 2007 to Q4 in 2008) & $-13.91\%$ &$(32.65\%)$ & $66.53\%$ & $-93.71\%$   \\
Preferred stocks (million U.S. dollars)& $8 m$ & $(25)$ & $701 $ & $0$ \\
Deferred taxes (million U.S. dollars) & $19 m$ & $(64)$ & $2433 $ & $1000$ \\
Retained Earnings (million U.S. dollars)& $153$  & $(1,325)$ & $30,803$ & $0$ \\
Convertible debt  (million U.S. dollars) & $105$ & $(240)$ & $834$ & $0$\\
Total liability (million U.S. dollars) &$370$  &$(2,145)$ & $359,250$& $2$\\
Preferred stock  (million U.S. dollars) &$33$ &$(225)$  & $302$ &  $-4,049$\\
Deferred taxes (million U.S. dollars)  & $78$ & $(467)$ & $6,804$  &  $-8,826$\\
Loan loss provisions (million U.S. dollars)  & $45$ & $(402)$ & $13,380$ & $-1,567$\\
Real estate loan (million U.S. dollars)  & $8$ & $(38)$ & $573$ & $0$ \\
\hline \hline
\end{tabular} }
\end{center}
\end{table}

\subsection{Calibration of the Fully Observed Model}\label{calibration_fully}
By the fully observed model in Subsection \ref{fully observed model} and \ref{fully_theorysection}, the total assets follow a geometric Brownian motion process outside the dividend and recapitalization times.
To estimate the expected proportional change $\alpha$ and asset volatility $\sigma$, we calculate first the quarterly return time series of the total assets, and after that, the estimates for $\alpha$ and $\sigma$ are the mean and the standard deviation of the time series.
%
We estimate the proportional change of bank debt, $\mu$, in a similar way from the time series of bank debt $D$.
Each quarter $D$ is the total assets minus tier 1 capital.

According to  \cite{PK06}, issuance time delay $\Delta$ equals $0.5$ year, and the fixed cost of equity issuance in terms of risk weighted assets is $0.25\%$.
These correspond to our model's cost parameter $K =0.20\%$ (in terms of debt $D$) and recapitalization delay $\Delta=0.5$.\footnote{Since our issuance cost is in terms of bank debt, by the average debt-to-asset ratio among the sample banks ($87.8\%$) and the risk weighted assets in terms of total assets ($70\%$), the cost parameter $K =0.20\%$. Risk weighted assets are a weighted sum of the bank's nominal exposures, where the weights depend on product type and counterparty sector. For large banks, risk weighted assets are typically between $65$ and $70$ percent of total assets. Here we use $70\%$ (see e.g. \cite{PK06}).}
We get the minimum  equity-to-debt ratio $\kappa$ as follows.
The smallest equity-to-debt ratio in our dataset is about $4.9\%$. Since there are no bank liquidations in our dataset, we set $\kappa = 4.8\%$ (in terms of debt $D$), and it also corresponds to the Basel minimum capital requirement.\footnote{\label{footnote_kappa}The Basel minimum tier 1 capital level is $6\%$ of the risk weighted assets (see ``Basel III: A global regulatory framework for more resilient banks and banking systems'' available at \url{http://www.bis.org/publ/bcbs189.pdf}). The risk-weighted assets are about 70\% of total assets, so the minimum equity-to-assets ratio requirement is $4.2\%$.  In our sample the average debt-to-assets ratio is $87.8\%$, which gives the minimum equity-to-debt ratio requirement of $4.8\%$ (which equals $4.2\%/87.8\%$). }
We calibrate the discount rate $\delta$ and the proportional liquidation value $\omega$ by fitting the fully observed value function $(\ref{define of value function with recap and partially observed})$ with $m=0$ to the realized market equity value in the least square sense.
All the calibrated parameters are shown in Table~\ref{table allbankparameter}.

\subsection{Calibration of the Partially Observed Model}\label{calibration_partially}
Because of the opaqueness, banks'  true asset value process $Y_t$ is partially observed. The observed quarterly noisy accounting asset value is denoted by $Y_t^{ac}$.  Let us  define $M_t=\log Y_t$ and $M_t^{ac}=\log Y_t^{ac}$.
Since we have quarterly data, let us introduce quarterly time index $k$.
Then the signal process $Z_t$ in $(\ref{signalprocess})$ gives $M_{k}^{ac}=Z_{k}-Z_{k-1}$ and,
by the Euler scheme (see e.g. \cite{B96}), we have 
\begin{eqnarray}\label{discrete model} \begin{aligned}
M_{k}&=M_{k-1}+\left(\alpha-\frac{1}{2}\sigma^2\right)+\sigma \varepsilon_k,\quad M_{k}^{ac}&=M_{k} +m e_k,
  \end{aligned}
\end{eqnarray}
where  $\varepsilon_k$ and $e_k$ are standard normal random variables with the correlation $\rho$. 
Note that here we use quarter as a unit of time; however, in the end, we transform the quarterly estimates to annual estimates.

The state space model $(\ref{discrete model})$ is a linear state space model with correlated error terms.  To estimate the model parameters, let us define
$
\varepsilon_k=\rho e_k + \sqrt{1-\rho^2}\eta_k
$
where $\varepsilon_k$ and $e_k$ are the standard normal random variables with the correlation $\rho$ in $(\ref{discrete model})$, $\eta_k$ is a standard normal random variable, and $e_k$ and $\eta_k$ are independent. Now $(\ref{discrete model})$ can be written as\footnote{Note that there is no restriction on the sign of the correlation $\rho$ for stability because our model is linear and time-invariant (all the coefficients are constant) and satisfies the condition $\lim_{k\uparrow \infty}\frac{m^2}{(m^2+P_{k})(1+\sigma\rho/m)}<1$. Therefore, the Kalman filtering is a stable dynamic system, and the steady state is independent on the initial distribution (see e.g. \cite{K15}). If $1+\sigma\rho/m=0$, we use $e_k=\rho \varepsilon_k+\sqrt{1-\rho^2}\eta_k$ instead.
 }
\begin{eqnarray*}\begin{aligned}\label{discrete model change}
 & M_{k}= \frac{M_{k-1}}{1+\sigma\rho/m}+\frac{\alpha-\sigma^2/2}{1+\sigma\rho/m}+\frac{\sigma \rho M_k^{ac}/m}{1+\sigma\rho/m}+\frac{\sigma\sqrt{(1-\rho^2)}\eta_k}{1+\sigma\rho/m},\\
 & M_{k}^{ac} = M_{k}+me_{k}.
 \end{aligned}\end{eqnarray*}
By \cite{H94}, we write the Kalman filter recursion as
\begin{eqnarray*}
& & a_{k|k}= a_{k}+ P_{k} F_{k}^{-1}v_{k},\quad  P_{k|k}=P_{k}-P_{k}F_{k}^{-1}P_{k},\\
& &  a_{k+1}=a_{k|k}/(1+\sigma\rho/m)+\sigma \rho  M_{k}^{ac}/(m+\sigma\rho)+(\alpha-\sigma^2/2)/(1+\sigma\rho/m),\\
&& P_{k+1}=P_{k|k}/(1+\sigma\rho/m)^2 +\sigma^2(1-\rho^2)/(1+\sigma\rho/m)^2,
\end{eqnarray*}
\begin{eqnarray*}
& &F_{k}=\text{var}[v_{k}| \mathcal{G}_{k-1}]=P_{k}+m^2,
\end{eqnarray*}
where $a_{k+1}=\mathbb{E}[M_{k+1}|\mathcal{G}_t]$, $a_{k|k}=\mathbb{E}[M_{k}|\mathcal{G}_{k}]$,  $v_{k}=M_{k}^{ac}-\mathbb{E}[M_{k}^{ac}|\mathcal{G}_{k-1}]=M_{k}^{ac}-a_{k}$, $P_{k+1}=\text{var}[M_{k+1}| \mathcal{G}_{k}],$ and $P_{k |k}=\text{var}[M_{k}| \mathcal{G}_{k}]$ for $k=1,2,\dots, n$, and $n$ is the number of quarterly data used in the estimation.
We assume that $M_0$ is normally distributed with the mean $ a_0$ and variance $P_0$, and therefore, the log-likelihood is given by
\begin{eqnarray}\label{log likelihood}
\log L( M_0^{ac}, M_{1}^{ac},...,M_{n}^{ac})&=&\log p(M_0^{ac})+\sum_{k=1}^n \log p(M_{k}^{ac}|\mathcal{G}_{k-1})\\
&=&-\frac{n+1}{2}\log(2\pi)-\frac{1}{2}\sum_{t=0}^n \left(\log|F_{k}|+v_{k} F_{k}^{-1} v_{k}\right),\nonumber
\end{eqnarray}
where 
 $p(\cdot| \mathcal{G}_{k-1})$ is the probability density of a normal distribution with the mean $a_{k}$ and variance $F_{k}$, and $v_{k}=M_{k}^{ac}- a_{k}$.
Given the parameters, the expectation $ \hat M_{k}:=\mathbb{E}[M_{k}|\mathcal{G}_{k}]$ and variance $P_{k}$ are updated  as follows:
\begin{eqnarray}\label{kalman update}\begin{aligned}
 \hat M_{k+1}=& \hat M_{k}+\frac{P_{k+1}}{P_{k+1}+m^2}(M_{k+1}^{ac}- \hat M_{k})\\
 & +\frac{m\sigma\rho (M_{k}^{ac} - \hat M_{k})}{(P_{k+1}+m^2)(1+\sigma\rho/m)}+\frac{(\alpha-\sigma^2/2)m^2}{(P_{k+1}+m^2)(1+\sigma\rho/m)},\\
P_{k+1}=&\frac{P_{k} m^2}{(P_{k}+m^2)(1+\sigma\rho/m)^2}+\frac{\sigma^2(1-\rho^2)}{(1+\sigma\rho/m)^2}.
\end{aligned}
\end{eqnarray}
In theory, we only need to maximize the log-likelihood function (\ref{log likelihood}) to find the optimal parameters $\alpha$, $m$, $\sigma$, and $\rho$. However, the function is a four-dimensional nonconcave function and is sensitive with respect to the correlation $\rho$, dynamic volatility $\sigma$, and signal volatility $m$. Therefore, we use first the Kalman iteration step (\ref{kalman update}) to get the mean and variance estimates of the state variable $M_{k}$ and then simulate the augmented  state $\{M_{k}, \alpha, \sigma, m,\rho\}_{k}$ using sequential importance resampling (SIR) particle filtering (see e.g. \cite{D00}).\footnote{Particle filtering is  a simulation tool, where the probability distribution of the state is formulated by a cloud of weight particles.  We use systematic resampling to avoid the degeneration of particles and jittering approach to avoid a fixed parameter estimation problem (see e.g. \cite{LW01}).  Since our discrete time model is time-invariant and Gaussian, the approximated distribution formulated by the particles converges to the true distribution of the unknown state almost surely when the number of particles goes to infinity (see. e.g. \cite{CD02}). More variational inference methods can be found in \cite{han2017stein}, \cite{han2018stein}, \cite{han2020stein}).} Information filtering and retiring of time-series/images/videos can be found in \cite{lombardo2019deep}, \cite{jun2015numerical}, \cite{han2016bootstrap} and \cite{han2017high}.

We first generate a sample process for $M_k^{ac}$ under the above parameter values by using $(\ref{discrete model})$, and define $\Theta=\{\alpha, \sigma, m, \rho\}$. The particle filtering approximates the distribution of $M_k|M_k^{ac}$ and $\Theta | M_k^{ac}$ by sufficiently large set of $N$ particles $M_k^{(1)}, \Theta_{k}^{(1)},$ $M_k^{(2)}, \Theta_{k}^{(2)},\dots, M_k^{(N)}, \Theta_{0}^{(N)}$ with discrete probability mass of $w_k^{(1)}, w_k^{(2)},\dots,w_k^{(N)}$.
At time step $k=0$, we  give a initial guess for $\{ M_0^{(i)}, \Theta_{0}^{(i)}\}_i$ with weight $\{w_0^{(i)}\}_i$ and sample mean $\{M_0, \Theta_0\}$, where $i$ is the index for  $N$ particles.  $\{w_0^{(i)}\}_i$ indicates the distribution of  parameters (this initial distribution can be arbitrary, like uniform distribution, or normal distribution). After that, at each time step $k$ $(k\geq 1)$,  we execute following steps:

\begin{itemize}
\item[(1)] Derive $\hat M_{k}^{(i)}$ by Kalman iteration (\ref{kalman update}) with $\Theta_{k-1}^{(i)}$.
\item[$(2)$] Sample $ M_{k}^{(i)} \sim  \mathcal{N}(\cdot |\hat M_{k}^{(i)}, P_{k}^{(i)} )$.
\item[$(3)$] By $(\ref{discrete model})$ and (\ref{discrete model change}),  we update weight and normalise the weight
\begin{eqnarray*}\label{weight of particle}
w_k^{(i)}=w_{k-1}^{(i)}\frac{p(M_{k}^{ac}| M_{k}^{(i)})p(M_k^{(i)}| M_{k-1}^{(i)}, M^{ac}_{k-1})}{p(M_k^{(i)}| M_{k-1}^{(i)}, M^{ac}_{k})}, \ \   \tilde w_k^{(i)}= w_k^{(i)}/\sum_{i=1}^N w_k^{(i)}
\end{eqnarray*}
where the density $ p( \cdot | M_{k}^{(i)})= \phi(\cdot | M_{k}^{(i)},m^2)$, $p(\cdot| M_{k-1}^{(i)}, M^{ac}_{k-1})=\phi(\cdot | M_{k-1}^{(i)}+\alpha-\sigma^2/2, \sigma^2)$, and
$p(\cdot| M_{k-1}^{(i)}, M^{ac}_{k})=\phi(\frac{M_{k-1}^{(i)}}{1+\sigma\rho/m}+\frac{\alpha-\sigma^2/2}{1+\sigma\rho/m}+\frac{\sigma \rho M_k^{ac}/m}{1+\sigma\rho/m}, \frac{\sigma^2(1-\rho^2)}{(1+\sigma\rho/m)^2})$.
\item[$(4)$] Jittering:  sample $\Theta_{k}^{(i)}   \sim \mathcal{N}( \cdot |a \Theta_{k-1}^{(i)}+(1-a)  \bar \Theta_{k-1}, \   Q_{k-1}) $,  where $\bar \Theta_{k-1}$ and $Q_{k-1}$ are the mean and variance of $\{\Theta_{k-1}^{(i)}\}_i$ with weight $\{\tilde w_{k}^{(i)}\}_i$.
\item[$(5)$] Resample If the effective sample size $1/\sum_{i}^N \tilde w_{k}^{(i)}$ falls below $2N/3$: First we define $\tilde M_{k}^{(i)}:=M_{k}^{(i)}$, $\tilde \Theta_{k}^{(i)}:=\Theta_{k}^{(i)}$, and sample $u_1\sim U[0,1/n]$.  We then find the minimum index $j(i)$ for each $i$, such that  $\sum_{i=1}^{j(i)} w_k^{(i)}>(j-1)/N+u_1$.
Replace the  state variables $M_{k}^{(i)} = \tilde M_{k}^{(j(i))}$,  $\Theta_{k}^{(i)}=\tilde \Theta_{k}^{(j(i))}$, and $w_k^{(i)}=1/N$ for all $i=1,2,\dots, N$.
\item[$(6)$] Update state $\{ M_{k}, \Theta_{k} \}$ by the sample mean of $\{ M_{k}^{(i)}, \Theta_{k}^{(i)} \}_i$ with weight $\{w_k^{(i)}\}_i$.

\end{itemize}

We test our simulation method by setting 
$\alpha=0.04,\sigma=0.05, m=0.03,$ and $\rho=-0.30.$
Here we use $N = 2000$ particles for each augment state  to match the generated samples. The test result is shown in Figure \ref{test}.
Figure  \ref{test} illustrates the convergence of the parameter estimates to the true values. 

\begin{figure}[H]
\centering
\begin{tabular}{cc}
\includegraphics[width=0.46\textwidth]{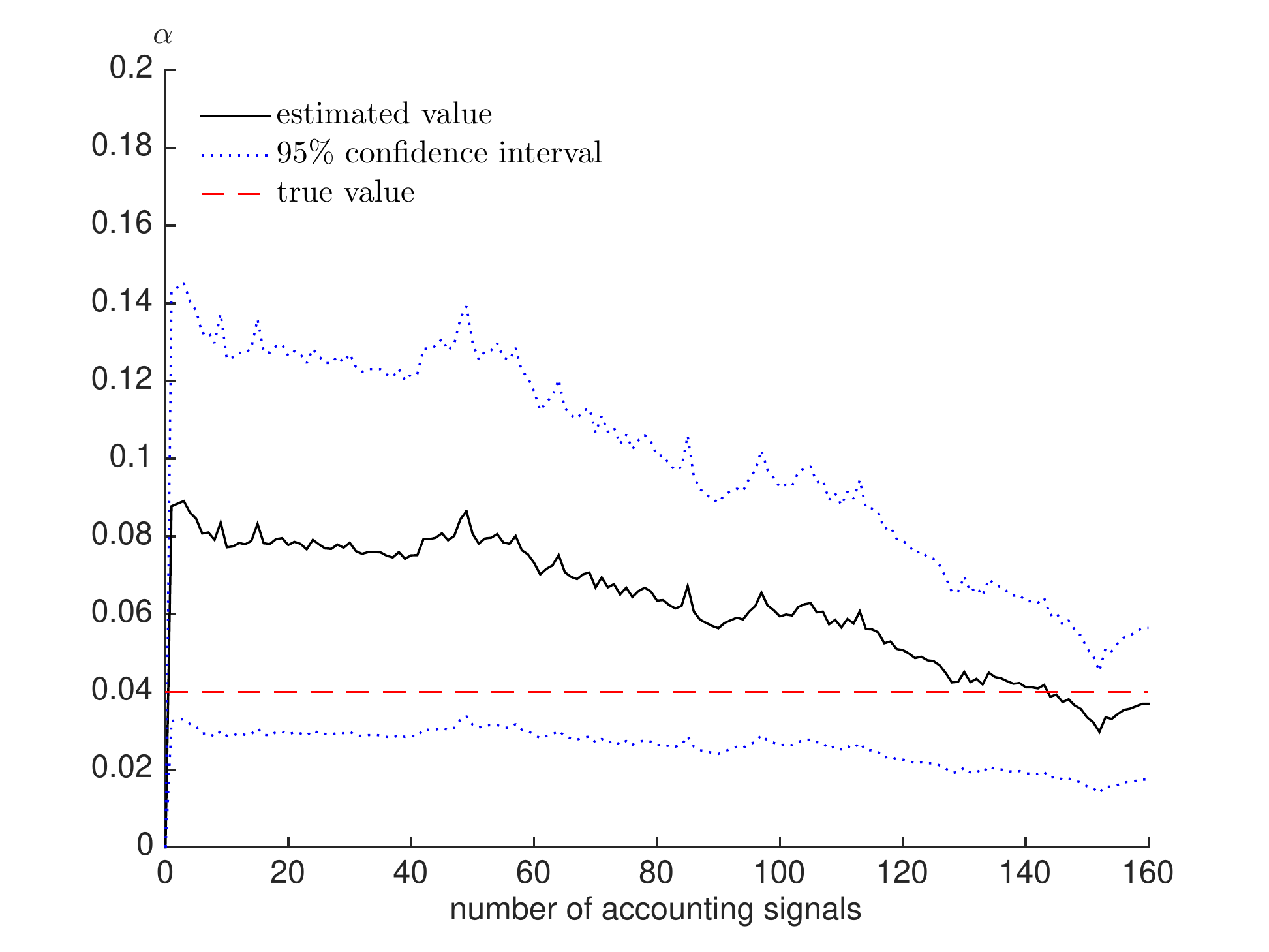} & \hspace{-.5cm}
\includegraphics[width=0.46\textwidth]{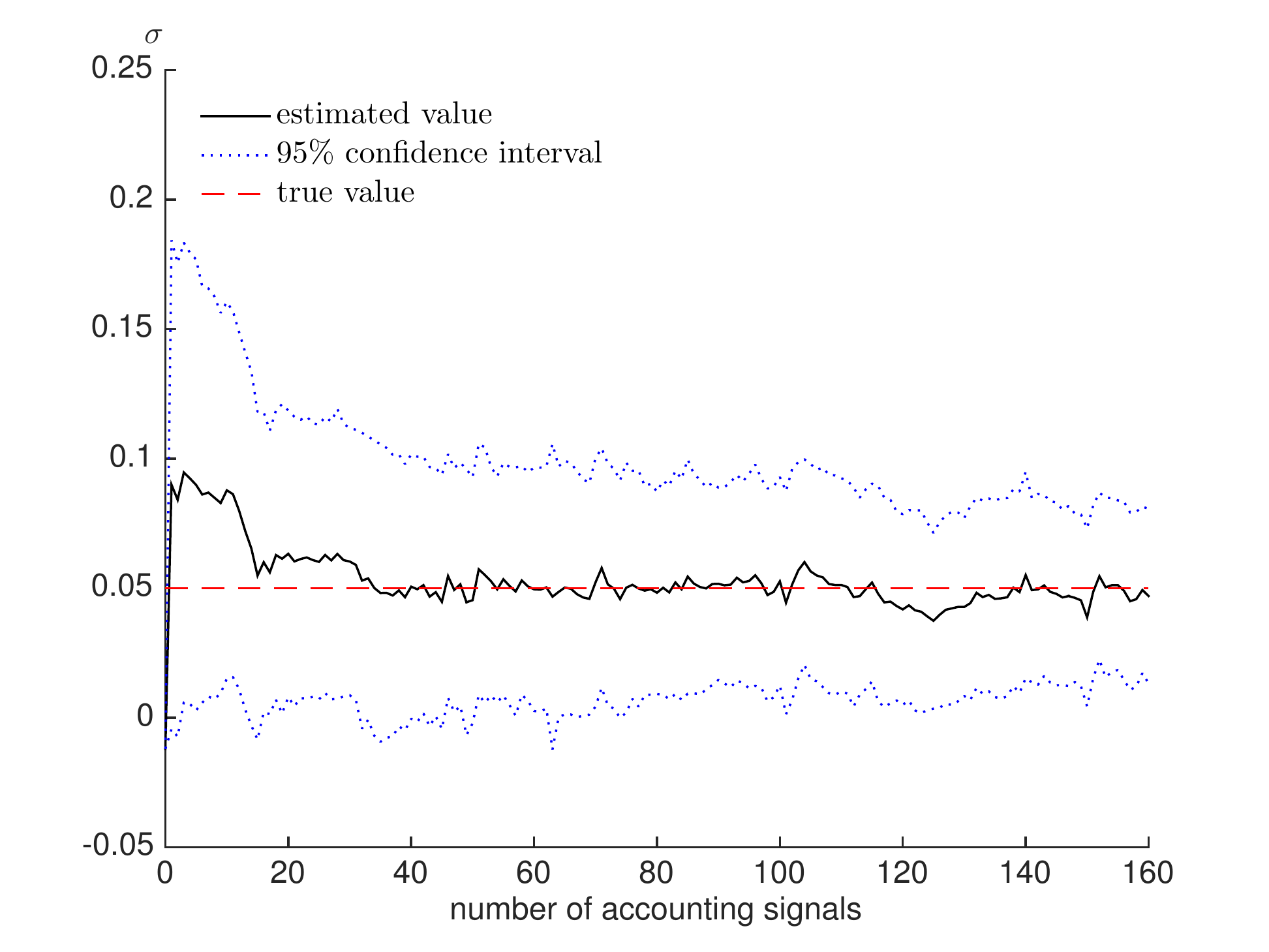} \\
{\small (i) parameter  $\alpha$} & {\small (ii) parameter $\sigma$}  \\
\includegraphics[width=0.46\textwidth]{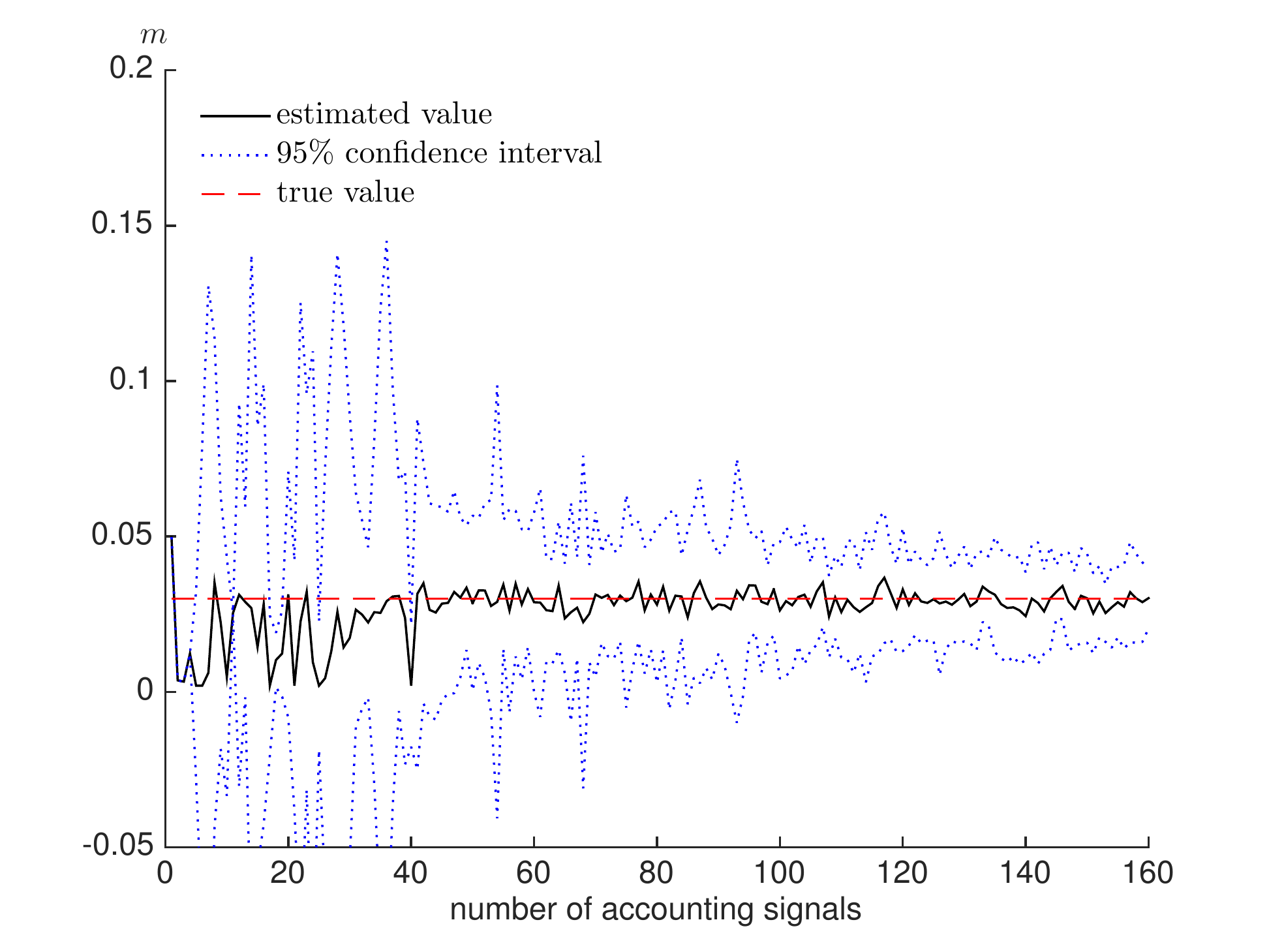} &\hspace{-.5cm}
\includegraphics[width=0.46\textwidth]{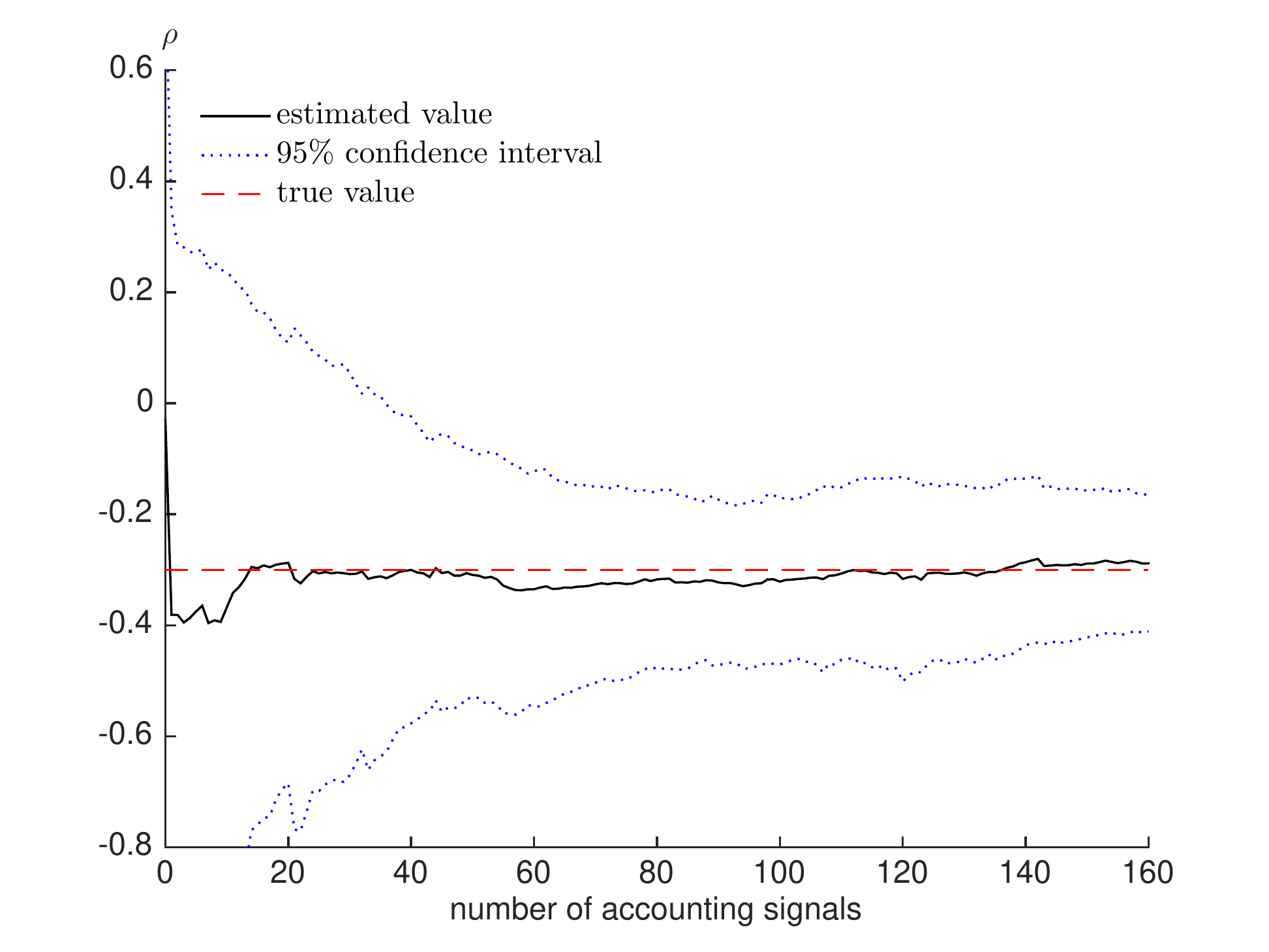} \\
{\small (iii) parameter  $m$} & {\small (iv) parameter  $\rho$} 
\end{tabular}
\caption[Parameter estimation algorithm]{ \textbf{Parameter estimation algorithm.} 
This figure illustrates the parameter estimation via SIR particle filtering with jittering. The true parameter values: $\alpha=0.04$, $\sigma=0.05$, $m=0.03$, and $\rho=-0.30$.}
\label{test}
\end{figure}

The parameter estimates results for $\alpha$, $\sigma$, $m$, and $\rho$ are reported in Table \ref{table allbankparameter}. As in Subsection~\ref{calibration_fully}, the parameter $\mu$ is estimated directly from the banks' debt time series. Further, $K=0.20\%$ and $\Delta=0.5$ as in the fully observed model. By $(\ref{dSt})$ and $(\ref{hat tao})$, we define the liquidation barrier corresponding to $S_t=m\sigma(1-\rho)$ as follows:
\begin{eqnarray}\label{hat kappa}
\hat \kappa:=\lim_{t\rightarrow \infty} I(S_t)
=-1+(1+\kappa)e^{\frac{1}{2}m\sigma(1-\rho)-\Phi^{-1}(a)\sqrt{m\sigma(1-\rho)} },
\end{eqnarray}
where $\kappa$ is the minimum equity-to-debt ratio of the fully observed model, $a$ is the liquidation parameter of bank regulators and it equals $c_2/(c_1+c_2)$ as explained in Section~\ref{section with recap}.
The liquidation barrier $\hat \kappa$, proportional liquidation value $\omega$,  and discount factor $\delta$ are calibrated by fitting the partially observed value function $(\ref{define of value function with recap and partially observed})$ to the realized market equity values in the least square sense.
These parameters are common for all the banks, and they are reported in Table \ref{table allbankparameter}.
The values of $\omega$ and $\delta$ are close to the corresponding parameter values of the fully observed model; however, due to the accounting noise, $\hat \kappa$ is substantially lower than $\kappa$ of the fully observed model in Section~\ref{calibration_fully}.  Note that in the fully observed model $\kappa = 4.8\%$ and it corresponds to the Basel minimum
tier 1 capital ratio.

Given the values of $\kappa$, $\hat \kappa$, $m$, $\sigma$, and $\rho$, we solve for the regulators' liquidation parameter $a$ from $(\ref{hat kappa})$.
The estimate of parameter $a$ is reported in Table \ref{table allbankparameter}.
As can be seen, the estimate is greater than $50\%$, which, by Proposition \ref{special case}(i), implies that the liquidation barrier falls in uncertainty $S$, and thus, the banks benefit from noisy accounting information.
Our estimate of $a$ is consistent with the papers discussed in footnote \ref{footnote4} in Chapter \ref{chp: Intro}.


Table~\ref{table allbankparameter} also shows that, because of the signal noise $m$ and the negative correlation between the accounting noise and the assets' dynamic uncertainty ($\rho=-26.71\%$), the volatility of asset value in the partially observed model is $5.21\%$, which is significantly greater than the volatility in the fully observed model ($3.11\%$).
Thus, the banks smooth their asset value dynamics  using the accounting noise, and this way, they are able to hide about one-third of the true underlying asset volatility. On average, the noise in the reported accounting asset values raises the banks' market
equity value by $7.8\%$,  which is the relative difference between the value function  under $m=2.85\%$ and $m=0$ at the average dividend barrier $12.58\%$ (see $u_2$ in Table~\ref{table allbankparameter}). In this way, banks hide their solvency risk from the banking regulators. Because of the volatility difference, the fully observed model gives an average dividend barrier of $11.22\%$, and the corresponding number for the partially observed model is $12.58\%$, which is closer to the sample average equity to debt ratio of $13.51\%$. The partially observed model also gives a higher recapitalization barrier. Hence, under the noisy accounting values, bank owners know that the true asset uncertainty is higher, and they hedge that by paying less dividends and issuing more equity.


\begin{table}[ht]
\caption[Model calibration]{\textbf{Model calibration.} This table reports the estimated model parameters and the barriers of the optimal policy for the sample banks in Table \ref{basic information} (full sample, Q1 1993--Q4 2015).  Here $\hat \kappa$  is the long-term liquidation barrier defined in $(\ref{hat kappa})$ in the partially observed model and $\kappa$ is the  liquidation barrier in the fully observed model, 
$u_1$ is the recapitalization barrier, $u_2$ is the dividend barrier, and $u_0$ is the dividend barrier without recapitalization. 
$t$-statistics are in the parentheses.
Significance
levels are indicated by ${}^{***} p < 0.01$, ${}^{**} p < 0.05$, and ${}^{*} p < 0.1$.}
\label{table allbankparameter}
\begin{center}
\resizebox{\linewidth}{!}{%
\begin{tabular}{rrrr rrrr rrrr r}
\hline 
\hline
\multicolumn{13}{ c }{ Fully observed model}\\
\hline
All banks & $\mu$ & $\alpha$ & $\sigma$ &  $\delta$ & $\omega$&  $u_1$& $u_2$ & $u_0$ \\
Mean & $10.52\%^{***}$  & $11.59\%^{***}$ & $3.45\%^{***}$ & $23.30\%^{***}$ & $31.50\%^{***}$   &  $6.44\%^{***}$ & $11.22\%^{***}$  & $11.27\%^{***}$ \\
& $(16.70)$ & $(15.43)$ & $(15.68)$ & $(6.93)$ & $(8.50)$  & $(12.90)$ & $(16.60)$ & $(12.86)$\\
\hline
correlation \\
$\mu$ & &$79.24\%$ & $7.13\%$ & $25.91\%$ & $16.50\%$  & $31.25\%$ & $35.77\%$ & $39.37\%$ \\
$\alpha$ & & & $12.80\%$ &$24.56\%$ & $15.05\%$  & $38.07\%$ & $44.89\%$ & $37.18\%$ \\
$\sigma$ &  & &  & $-16.55\%$ & $-32.14\%$  & $13.66\%$ & $41.78\%$ & $34.46\%$ \\
$\delta$ & & & & & $65.10\%$   & $6.39\%$ &$-17.04\%$ & $-16.86\%$\\
$\omega$ & & & & &  &  $38.05\%$ & $24.25\%$ & $2.66\%$\\
$u_1$ & & & & & & &   $83.93\%$ & $79.65\%$ \\
$u_2$ & & & & & & & &   $97.74\%$\\

\hline \hline
\multicolumn{13}{ c }{ Partially observed model}\\
\hline
 All banks & $\mu$ & $\alpha$ & $\sigma$ & $m $  & $\rho$ & $\delta$ & $\hat \kappa$ & $a$ & $\omega$ & $u_1$ & $u_2$ & $u_0$\\
Mean & $10.52\%^{***}$ & $12.85\%^{***}$ &$5.21\%^{***}$ &$2.85\%^{***}$ &$-26.71\%^{***}$ & $25.70\%^{***}$ & $1.15\%^{***}$  & $79.93\%^{***}$  & $25.10\%^{***}$ & $7.03\%^{***}$ & $12.58\%^{***}$ & $13.40\%^{***}$ \\
& $(16.70)$ & $(16.47)$ & $(15.32)$ & $(15.00)$ & $(-8.56)$ & $(11.17)$ & $(13.14)$ & $(12.97)$  & $(11.83)$ & $(18.89)$ & $(18.26)$ & $(17.71)$\\
\hline
correlation \\
$\mu$ & & $76.89\%$ & $1.87\%$ & $7.67\%$ & $14.05\%$ & $60.54\%$ & $-2.00\%$ & $2.06\%$ & $14.30\%$ &$25.34\%$  & $28.18\%$ & $25.65\%$ \\
$\alpha$ & &  & $5.60\%$ & $9.32\% $ & $0.69\%$ & $64.32\%$ & $-5.96\%$ & $-3.14\%$ &$7.05\%$ &$23.44\%$ & $27.02\%$ & $25.05\%$ \\
$\sigma$ & & & & $51.41\%$ & $4.48\%$ & $28.06\%$ & $-15.48\%$ & $-25.14\%$ &  $20.17\%$& $-20.46\%$ & $5.25\%$ & $8.59\%$\\
$m$ & & & & & $2.38\%$ & $28.06\%$ & $-12.49\%$ & $-29.29\%$ & $13.91\%$ & $-8.40\%$ & $0.95\%$ & $6.15\%$\\
$\rho$ & &&&&& $-13.45\%$ & $12.47\%$ & $9.88\%$  & $7.22\%$ &$0.20\%$ &$2.53\%$ &$0.97\%$\\
$\delta$ &&&&&&& $5.49\%$ &  $-17.95\%$  & $2.05\%$ &  $-22.62\%$ & $-3.16\%$ & $-4.38\%$\\
$\hat \kappa$ &&&& &&&&   $-32.17\%$ & $-23.33\%$   & $8.89\%$ & $9.39\%$ & $-0.59\%$\\
$a $ &&&& &&&& &  $32.25\%$ & $1.41\%$ & $-8.71\%$ & $-7.12\%$   \\
$\omega$ &&&& &&&& && $8.43\%$ & $10.57\%$ & $15.74\%$\\
$u_1$ &&&& &&&& &&&  $81.49\%$ & $84.41\%$\\
$u_2$ &&&& &&&& &&&& $96.62\%$\\
\hline\hline
\end{tabular}}
\end{center}
\end{table}

\subsection{Out-of-Sample Tests}
In this subection, we report the out-of-sample tests of the fully observed and partially observed models. The out-of-sample period is 2004--2015, which includes the latest financial crisis in 2007--2009. We analyze each out-of-sample year separately, and select the in-sample period as the time period before each out-of-sample year.
For instance, for the year 2010, the in-sample period is 1993--2009.
For this analysis, we calculate the equity issuance as (see e.g. \cite{B02})\footnote{We also calculated the equity issuance as follows: equity issuance $=$ sale of common and preferred stock $-$ purchase of common and preferred stock, or  equity issuance $=$ sale of common and preferred stock. The results are similar as reported in this subsection (not reported for brevity).}
\begin{eqnarray*}\label{Baker issue}
\text{ Equity issuance}=\Delta \text{ book equity}  - \Delta \text{ balance sheet retained earnings}, 
\end{eqnarray*}
where $\text{ book equity}= \text{total assets } - \text{ total liabilities } - \text{ preferred stock }  + \text{ deferred taxes}$  $+ \text{ convertible debt }$,
$\Delta$book equity is the current book equity minus the previous book equity, and $\Delta$balance sheet retained earnings is calculated correspondingly.

First, we analyze the asset smoothing of the partially observed model. Table \ref{difference between expectation and observation in recession} shows that the partially observed model predicts significantly lower total assets (and, therefore, also significantly lower book equity value) during the financial crisis 2007--2009 than the corresponding reported accounting values. This is due to the asset value smoothing, which is more obvious during the financial crisis than outside that. Before and after the crisis, the situation is opposite; the model expected total assets are higher than the reported asset values, but this difference is insignificant.

Second, in Table \ref{difference between expectation and observation in recession},  we test the out-of-sample performances of the model market equity value and the dividend and recapitalization policy of the partially observed and fully observed models. 
The model predicted dividend at year $k$ equals $\mathbb{P}(u_2<\hat X_k)(\hat X_k-\mathbb{E}[u_2])$ and the model predicted recapitalization equals $\mathbb{P}(u_1>\hat X_k)(\mathbb{E}[u_2]-\mathbb{E}[u_1])$, where the distributions of $u_1$ and $u_2$ are the distributions of the bank-level $u_1$ and $u_2$ estimates (see Table $\ref{table allbankparameter}$). Overall, the partially observed model explains the out-of-sample variations substantially better than the fully observed model.
More specifically, the average $R^2$ values over all the out-of-sample years for the partially observed model's market equity value, dividends, and recapitalization are  $85\%$, $52\%$, and $14\%$, while the corresponding $R^2$ values for the fully observed model are $79\%$, $35\%$, and $4\%$, respectively.
This suggests that the investors and bank regulators view accounting reports as noisy and act accordingly.
That is, regulators consider the accounting noise when they decide to take actions or not with a possible insolvent bank.
Further, when shareholders decide the banks' dividend and recapitalization policy, they also consider the accounting reports as noisy and this way take the asset smoothing and the regulators' actions into account.
By our estimate of $a$ in Table~\ref{table allbankparameter} and Proposition \ref{special case}(i), the regulators postpone the actions with a possible insolvent bank the higher the accounting uncertainty, and this raises the bank value if the bank does asset smoothing, as we showed in Section~\ref{comparative statics}.

In Table \ref{correlation of volatility difference}, we analyze which kind of banks use asset smoothing. For that, we use two asset smoothing measures: $(i)$ the difference between the asset return volatilities of partially and fully observed models and $(ii)$ the product of signal noise and the correlation between the accounting noise and the assets' dynamic uncertainty.
Note that, by Table~\ref{table allbankparameter}, metric $(i)$ is significantly greater than zero and metric $(ii)$ is significantly less than zero.
That is, the higher the metric $(i)$ and the lower the metric $(ii)$ are, the more the bank uses asset smoothing.

In Panel A of Table~\ref{correlation of volatility difference}, we first consider the correlation between asset smoothing metrics and different bank characteristics.
As can be seen, loan loss provision, stock return, nonperforming assets, real estate loans, the volatility of reported asset values, and the observed equity-to-debt ratio have significant correlations with the asset smoothing metrics.
Therefore, in Panel B of Table~\ref{correlation of volatility difference}, we use these variables as independent variables in regression models, where dependent variables are the smoothing metrics $(i)$ and $(ii)$.
The regression models imply that banks with a high level of loan loss provisions (in both the models), nonperforming assets (only the volatility difference model),  real estate loans (only the volatility difference model),  and with a low asset return volatility (in both the models) use more asset smoothing.
This is consistent with \cite{A99}, \cite{LR95}, and \cite{Fl13}, who find that loan loss provisions have a negative correlation with the earnings process and that the balance sheet composition of a bank affects its opacity.
The $R^2$ of these regressions are $66.6\%$ and $47.8\%$, indicating that the bank characteristics explain substantial variation in the asset smoothing metrics.

\begin{table}[ht]
\center
\caption[Out-of-sample tests]{\textbf{Out-of-sample tests}. This table first reports the relative difference between the expected total assets value $\hat Y$ of the partially observed model and the corresponding reported accounting assets value $Y^{ac}$ in out-of-sample from 2004 to 2015, and the same for the predicted book equity value $\hat E$ and the reported book equity value $E^{ac}$.
Second, this table reports  $R^2$ of a regression model between the realized and model predicted variables.
$t$-statistics are in the parentheses.
Significance levels are indicated by ${}^{***} p < 0.01$, ${}^{**} p < 0.05$, and ${}^{*} p < 0.1$.}
\label{difference between expectation and observation in recession}
\begin{center}
\resizebox{\linewidth}{!}{%
\begin{tabular}{ crrr rrrr rr  rrr}
\hline \hline
\multicolumn{1}{ c }{Out of sample / Year} &  2004 & 2005 &  2006 & 2007 &  2008 &   2009  &   2010 &  2011  &  2012 &  2013  &   2014  &  2015  \\
\hline
\multicolumn{1}{ c }{$(\hat Y-Y^{ac})/Y^{ac}$ } \\
\multicolumn{1}{ c }{Mean}  & $0.16\%$ & $0.34\%$ & $0.31\%$ & $-0.45\%^{*}$
 & $-0.99\%^{***}$ & $-0.63\%^{**}$ & $0.20\%$ & $0.20\%$ & $0.39\%$ & $0.38\%$ & $0.37\%$ & $0.12\%$
\\
\multicolumn{1}{ c }{} & $(0.67)$ & $(1.42)$ & $(1.29)$ & $(-1.88)$ & $(-4.30)$ & $(-2.63)$ & $(0.83)$ & $(0.83)$ & $(1.63)$ & $(1.58)$ & $(1.54)$ & $(0.50)$\\
\multicolumn{1}{ c }{Max}  & $5.78\%$  & $5.78\%$ & $3.90\%$
&  $3.27\%$ & $1.15\%$ & $1.66\%$  & $2.15\%$ & $4.45\%$ & $5.38\%$ & $4.15\%$  & $4.61\%$ & $5.42\%$
\\
\multicolumn{1}{ c }{Min}  & $-2.54\%$ & $-3.74\%$ & $-6.75\%$
 &  $-11.33\%$ & $-18.73\%$ & $-18.75\%$ & $-13.41\%$ & $-9.42\%$ & $-6.24\%$ &$-7.07\%$ & $-4.84\%$ & $-6.22\%$
\\
\hline
\multicolumn{1}{ c }{$(\hat E-E^{ac})/E^{ac}$ }     \\
\multicolumn{1}{ c }{Mean}  & $1.47\%$ & $2.78\%$ & $2.70\%$
& $-3.93\%^{*}$ & $-8.66\%^{***}$ & $-5.27\%^{**}$ & $1.91\%$  & $3.63\%$ &$3.37\%$ & $3.21\%$ & $3.34\%$ & $2.58\%$  \\
\multicolumn{1}{ c }{} & $(0.67)$ & $(1.26)$ & $(1.27)$ & $(-1.82)$ & $(-4.05)$ & $(-2.45)$ & $(0.88)$ & $(1.64)$ & $(1.52)$ & $(1.45)$ & $(1.51)$ & $(1.16)$\\
\multicolumn{1}{ c }{Max}  & $41.31\%$  & $46.29\%$ & $30.85\%$
& $20.54\%$ & $8.81\%$ & $14.86\%$ & $25.01\%$ & $47.74\%$ & $47.23\%$ & $35.25\%$ & $45.23\%$ & $30.23\%$\\
\multicolumn{1}{ c }{Min} & $-23.38\%$ & $-17.20\%$ & $-51.45\%$
 & $-101.59\%$ & $-125.25\%$ & $-98.03\%$ & $-77.44\%$ & $-45.31\%$ & $-42.79\%$ &$-28.51\%$ & $-32.56\%$ & $-45.43\%$\\
\hline
\multicolumn{13}{ l }{ \it Model market equity value vs.  realized  market equity value:  $y=a+b x +\text{ error}$,
where $y$ is the log of realized value and $x$ is the log of model value. } \\
\multicolumn{13}{ l }{ \it{\underline{Fully observed model}} } \\
 \multirow{2}{*}{$a$} & $-1.49$ & $-1.55$  & $-2.60$  & $-2.65$ & $-2.35$ & $-1.12^{*}$  & $1.21^{*}$ & $-1.34$ & $-0.07$ & $-0.04$ & $-0.06$
 & $0.35$\\
    & $(-1.27)$ & $(-1.29)$ & $(-1.11)$ & $(-0.98)$ & $(1.42)$ & $(1.88)$ & $(1.83)$ & $(0.22)$ & $(-0.04)$ & $(0.04)$ & $(0.11)$ & $(0.77)$\\
\multirow{2}{*}{$b$} & $1.04^{***}$ & $1.04^{***}$ & $1.04^{***}$ & $1.02^{***}$  & $1.03^{***}$ & $1.02^{***}$  & $0.87^{***}$ & $0.98^{***}$ & $0.98^{***}$ & $0.93^{***}$ & $0.96^{***}$ & $0.96^{***}$ \\
    & $(32.67)$ & $(32.67)$ & $(32.67)$ & $(32.00)$ & $(31.67)$ & $(24.50)$ & $(22.75)$ & $(29.00)$ & $(21.00)$ & $(22.50)$ & $(25.00)$ & $(24.00)$\\
 \multicolumn{1}{ c }{ $R^2$} &  $75.70\%$ & $75.30\%$ & $75.32\%$  & $75.76\%$ & $75.81\%$  & $76.10\%$  & $73.00\%$ & $82.78\%$  & $86.74\%$  & $82.38\%$ & $83.91\%$ & $83.90\%$\\
\multicolumn{13}{ l }{ \it{\underline{Partially observed model}} } \\
 \multirow{2}{*}{$a$} & $0.63$ & $0.47$  & $0.33$  & $1.12$ & $1.17$  & $1.41^{**}$ & $1.38^{**}$ & $1.07$ & $0.61$ & $1.15$ & $0.86$
 & $0.88$\\
    & $(0.90)$ & $(0.68)$ & $(0.47)$ &$(1.57)$ & $(1.60)$ & $(2.31)$ & $(2.12)$ & $(1.64)$ & $(0.83)$ & $(1.55)$ & $(1.12)$ & $(1.14)$\\
\multirow{2}{*}{$b$} & $0.99^{***}$ & $0.99^{***}$ & $1.00^{***}$ & $0.97^{***}$  & $0.96^{***}$ & $0.91^{***}$  & $0.96^{***}$ & $0.94^{***}$ & $0.94^{***}$ & $0.96^{***}$ & $0.98^{***}$ & $0.97^{***}$ \\
     & $(33.00)$ & $(33.00)$ & $(33.33)$ & $(32.33)$ & $(32.00)$ & $(30.33)$ & $(32.00)$ & $(31.33)$ & $(23.50)$ & $(24.00)$ & $(24.50)$ & $(24.25)$\\
\multicolumn{1}{ c }{ $R^2$} &  $83.74\%$ & $84.20\%$ & $84.45\%$  & $78.45\%$ & $86.48\%$  & $86.71\%$  & $86.10\%$ & $87.62\%$  & $86.87\%$  & $86.70\%$ & $86.73\%$ & $86.32\%$\\
\hline

\multicolumn{13}{ l }{ \it Model dividends vs.  realized dividends:  $y=a+b x +\text{ error}$, where $y$ is the log of realized value and $x$ is the log of model value. } \\
\multicolumn{13}{ l }{ \it{\underline{Fully observed model}} } \\
 \multirow{2}{*}{$a$} & $-1.61$ & $-0.66$  & $-1.32$  & $-1.16$ & $1.11$  & $1.52$ & $1.53$ & $1.17$ & $0.34$ & $1.43$ & $0.99$
 & $-1.88$\\
    & $(-0.76)$ & $(-0.70)$ & $(-0.67)$ & $(0.57)$ & $(-0.58)$ & $(-0.60)$ & $(-0.02)$ & $(-0.53)$ & $(-0.80)$ & $(-0.84)$ & $(-1.13)$ & $(-1.19)$\\
\multirow{2}{*}{ $b$} & $0.91^{***}$ & $0.87^{***}$ & $0.93^{***}$ & $0.89^{***}$  & $0.88^{***}$ & $0.87^{***}$  & $0.84^{***}$ & $0.86^{***}$ & $0.86^{***}$ & $0.88^{***}$ & $0.93^{***}$ & $0.91^{***}$ \\
    & $(11.37)$ & $(11.37)$ &$(11.38)$ & $(14.83)$ & $(15.00)$ & $(14.50)$ & $(14.00)$ &$(10.75)$ &$(9.77)$ &$(9.77)$ &$(9.10)$ &$(9.10)$\\
\multicolumn{1}{ c }{$R^2$}  & $35.33\%$ & $34.75\%$ & $35.58\%$ &  $33.45\%$ & $32.22\%$ & $36.89\%$  & $39.65\%$ & $38.70\%$  & $37.21\%$  & $32.23\%$ & $33.31\%$  & $34.78\%$ \\

\multicolumn{13}{ l }{ \it{\underline{Partially observed model}} } \\
 \multirow{2}{*}{ $a$} & $-3.55^{**}$ & $-2.83^{*}$  & $-2.17$  & $0.17$ & $-0.12$ & $-1.11$  & $-0.01$ & $-1.41$ & $-2.65$ & $-2.90$ & $-2.95$
 & $-2.53$\\
   & $(-2.46)$ & $(-1.91)$ & $(-1.42)$ & $(0.16)$ & $(-0.10)$ & $(-0.85)$ &$(-0.01)$ & $(-0.92)$ & $(-1.45)$ & $(-1.60)$ & $(-1.62)$ & $(-1.45)$\\
\multirow{2}{*}{ $b$} & $1.13^{***}$ & $1.08^{***}$ & $1.02^{***}$ & $0.97^{***}$  & $0.95^{***}$ & $0.93^{***}$  & $0.96^{***}$ & $0.98^{***}$ & $1.06^{***}$ & $1.07^{***}$ & $1.06^{***}$ & $1.03^{***}$ \\
    & $(16.14)$ & $(15.42)$ & $(12.75)$ & $(12.13)$ & $(15.83)$ & $(15.50)$ & $(16.00)$ & $(14.00)$ & $(11.77)$ & $(11.89)$& $(11.78)$ & $(11.44)$ \\
\multicolumn{1}{ c }{ $R^2$} &  $57.18\%$ & $53.14\%$ & $50.53\%$  & $48.40\%$ & $53.37\%$  & $55.22\%$  & $51.93\%$ & $51.02\%$  & $49.81\%$  & $52.46\%$ & $48.73\%$ & $50.12\%$\\
\hline

\multicolumn{13}{ l }{ \it Model recapitalization vs.  realized  recapitalization:  $y=a+b x +\text{ error}$, where $y$ is realized value and $x$ is the model value.} \\
\multicolumn{13}{ l }{ \it{\underline{Fully observed model}} } \\
 \multirow{2}{*}{ $a$ (million)} & $1025.77^{*}$ & $4341.08^{*}$  & $2650.41^{**}$  & $2307.01^{**}$  & $3187.50$  & $4318.52^{*}$ & $3108.97^{**}$ & $3202.43^{*}$ & $5028.32$ & $6208.00^{*}$ & $62114.34$ & $6222.90$\\
   & $(1.74)$ & $(1.69)$ & $(2.79)$ & $(2.86)$ & $(1.59)$ & $(1.77)$ & $(2.34)$ & $(1.51)$ & $(1.89)$ & $(2.11)$ & $(1.61)$ &$(1.47)$\\
\multirow{2}{*}{ $b$ } & $-0.26$ & $-0.50$ & $-0.31$ & $-0.21$  & $-0.19$ & $-0.32$  & $-0.13$ & $-0.16$ & $-0.24$ & $-0.23$ & $-0.16$ & $-0.47$ \\
   & $(-0.44)$ & $(-0.17)$ &$(-0.29)$ & $(-0.31)$ & $(-0.13)$ & $(-0.21)$ & $(-0.20)$ & $(-0.29)$ & $(-0.23)$ & $(-0.24)$ & $(-0.15)$ & $(-0.45)$\\
\multicolumn{1}{ c }{ $R^2$} &  $3.43\%$ & $3.71\%$ & $5.21\%$  & $6.73\%$ & $3.25\%$  & $2.47\%$  & $4.00\%$ & $6.17\%$  & $3.89\%$  & $4.86\%$ & $3.13\%$ & $4.27\%$\\

\multicolumn{13}{ l }{ \it{\underline{Partially observed model}} } \\
 \multirow{2}{*}{ $a$ (million)} & $867.38^{*}$ & $935.67^{*}$  & $835.21$  & $311.14$ & $-850.15$  & $839.15$ & $1374.16$ & $1071.63$ & $570.46$ & $238.32$ & $461.25$ & $378.37$\\
    & $(1.71)$ & $(1.73)$ & $(1.63)$ & $(1.07)$ & $(-0.64)$ & $(0.47)$ & $(1.34)$ & $(1.31)$ & $(0.74)$ & $(0.12)$ & $(1.15)$ & $(1.09)$\\
\multirow{2}{*}{$b$} & $1.05^{***}$ & $3.32^{***}$ & $1.29^{***}$ & $1.99^{***}$  & $4.02^{***}$ & $8.01^{***}$  & $5.54^{***}$ & $13.99^{***}$ & $25.63^{***}$ & $19.44^{***}$ & $6.17^{***}$ & $8.03^{***}$ \\
    & $(10.50)$ & $(6.91)$ & $(5.86)$ & $(13.16)$ & $(14.35)$ & $(11.95)$ & $(9.08)$ & $(12.16)$ & $(16.32)$ & $(10.12)$ & $(3.37)$ & $(3.43)$\\
\multicolumn{1}{ c }{$R^2$} &  $9.63\%$ & $13.51\%$ & $8.22\%$  & $13.17\%$ & $17.35\%$  & $17.74\%$  & $10.02\%$ & $13.18\%$  & $17.61\%$  & $11.63\%$ & $15.37\%$ & $15.70\%$\\
\hline
\hline
\end{tabular}}
\end{center}
\end{table}

\begin{table}[ht]
\caption[Bank characteristics and asset smoothing]{\textbf{Bank characteristics and asset smoothing.} This table gives the bank characteristics that explain asset smoothing in terms of $(i)$ the difference between the asset return volatilities of partially and fully observed models and $(ii)$ the product of signal noise and the correlation between the accounting noise and the assets' dynamic uncertainty $(m\rho)$.
Panel A reports the correlation between the asset smoothing metrics and different bank characteristics. The volatility of asset returns is the asset return volatility of the fully observed model.
In Panel B, we use the variables that have significant correlations in Panel A as independent variables in regression models, where dependent variables are the smoothing metrics.
The p-values are in the parentheses. Significance
levels are indicated by ${}^{***} p < 0.01$, ${}^{**} p < 0.05$, and ${}^{*} p < 0.1$.}
\label{correlation of volatility difference}
\begin{center}
\resizebox{\linewidth}{!}{%
\begin{tabular}{ lllll  }
\hline \hline
  \multicolumn{5}{ c }{ Panel A: Correlation with volatility difference and $m\rho$ }\\
\hline
Variables &  \multicolumn{1}{ l }{Volatility difference} &  \multicolumn{1}{ l }{ (p-value)}&  \multicolumn{1}{ l }{$m\rho$} &  \multicolumn{1}{ l }{ (p-value)}\\
\hline
 \multicolumn{1}{ l }{Mean of loan loss provisions}& $10.14\%^{**}$& $(0.042)$
 & $15.63\%^{***}$ & $(0.004)$  \\
  \multicolumn{1}{ l }{Std. Dev. of loan loss provisions}  & $10.63\%^{**}$ & $(0.043)$
  & $-12.62\%^{**}$ & $(0.016)$ \\
 \multicolumn{1}{ l }{Mean of total assets} & $6.98\%$ &$(0.117)$
 & $-5.60\%$ & $(0.170)$  \\
  \multicolumn{1}{ l }{Std. Dev. of total assets} & $5.34\%$ & $(0.182)$
  & $4.89\%$ & $(0.203)$ \\
   \multicolumn{1}{ l }{Mean of cash / earnings}  & $-6.83\%$  & $(0.122)$
   & $4.06\%$ & $(0.245)$\\
   \multicolumn{1}{ l }{Std. Dev. of cash /  earnings}  & $5.37\%$ & $(0.180)$
   & $-5.01\%$ & $(0.197)$\\
   \multicolumn{1}{ l }{Mean of book equity / market equity value}  & $5.44\%$  & $(0.177)$
   & $6.53\%$ & $(0.133)$\\
  \multicolumn{1}{ l }{Std. Dev. of book equity / market equity value} & $-6.48\%$  & $(0.135)$
  & $7.47\%$ & $(0.102)$\\
   \multicolumn{1}{ l }{Mean of stock returns } & $-16.02\%^{***}$ & $(0.003)$
   & $12.49\%^{**}$ & $(0.016)$ \\
   \multicolumn{1}{ l }{Std. Dev. of stock returns }& $6.82\%$   & $(0.123)$
   & $-3.50\%$ & $(0.275)$ \\
   \multicolumn{1}{ l }{Mean of nonperforming assets / total assets} & $23.25\%^{***}$ & $(<0.001)$ & $11.05\%^{**}$ & $(0.030)$
\\
  \multicolumn{1}{ l}{Std. Dev. of nonperforming assets /  total assets} & $12.37\%^{**}$
  & $(0.017)$  & $-0.33\%$ & $(0.478)$\\
   \multicolumn{1}{ l }{Mean of real estate loans /  total assets} & $11.45\%^{**}$ & $(0.025)$
 & $10.49\%^{**}$ & $(0.037)$ \\
  \multicolumn{1}{ l}{Std. Dev. of real estate loans /  total assets} & $-12.22\%^{**}$
  & $(0.018)$ & $10.14\%^{**}$ & $(0.042)$ \\
  \multicolumn{1}{ l}{Volatility of asset returns} & $-51.24\%^{***}$ & $(<0.001)$ & $23.80\%^{***}$ & $(<0.001)$ \\
 \multicolumn{1}{ l}{Observed equity to debt ratio} & $-12.39\%^{**}$ & $(0.017)$ & $13.50\%^{**}$ & $(0.011)$  \\
\hline \hline
  \multicolumn{5}{ c }{ Panel B: Regression model }\\
\hline
Dependent variable &  \multicolumn{1}{ c }{ Volatility difference} &  \multicolumn{1}{ c }{ (p-value)}&  \multicolumn{1}{ l }{$m\rho$}&  \multicolumn{1}{ l }{ (p-value)}\\
\hline
 \multirow{1}{*}{Mean of loan loss provisions}
 & $8.67\times {10^{-11}}^{***}$  & $(0.004)$ & $-1.47\times {10^{-11}}^{**}$ & $(0.035)$\\
  \multirow{1}{*}{Std. Dev. of loan loss provisions}
 & $4.52 \times 10^{-11}$  & $(0.463)$ & $-1.99\times {10^{-11}}^{*}$ & $(0.084)$\\
    \multirow{1}{*}{Mean of stock returns}
  & $2.87 \times 10^{-3}$  & $(0.591)$ & $-4.17\times 10^{-4}$ & $(0.662)$\\
 \multirow{1}{*}{Mean of nonperforming assets / total assets}
  & $0.10^{***}$ & $(<0.001)$ & $0.01$ & $(0.457)$ \\
 \multirow{1}{*}{Std. Dev. of nonperforming assets /  total assets}
  & $0.13$  & $(0.221)$ \\
 \multirow{1}{*}{Mean of real estate loans /  total assets}
 & $25.53^{**}$  & $(0.028)$ & $2.13$ & $(0.494)$\\
 \multirow{1}{*}{Std. Dev. of real estate loans /  total assets}
   & $-0.24$  & $(0.532)$ & $0.17$ & $(0.397)$\\
 \multirow{1}{*}{Volatility of asset returns}
  & $-0.87^{***}$   & $(<0.001)$ & $0.14^{***}$ & $(0.009)$ \\
  \multirow{1}{*}{Observed equity to debt ratio}
 & $-0.06$  & $(0.443)$ & $0.04$ & $(0.263)$\\
 \multirow{1}{*}{Constant}
 & $0.03^{***}$   & $(0.005)$ & $-0.02^{***}$ & $(0.004)$\\
Observations &   \multicolumn{1}{ l }{ $292$} & & \multicolumn{1}{ l }{ $292$}\\
  R-squared
 & \multicolumn{1}{ l }{ $66.6\%$ }  & & \multicolumn{1}{ l }{ $47.8\%$ } \\
   \hline \hline
\end{tabular}}
\end{center}
\end{table}

\chapter{The Stochastic Control Problem in Portfolio Choice}\label{part 2}
\numberwithin{equation}{chapter}
\section{Problem Formulation}\label{coint_model}

\subsection{Labor and Stock Markets}
It is important to incorporate empirically and quantitatively important features of labor income process
(\cite{WWY16}): (1) diffusive and continuous shock, and (2) discrete and jump shock. Following 
the conventional labor market setting as in \cite{V01}, \cite{LT11}, and \cite{BUV14}, we rule out temporary (or transitory) income shocks, because the temporary feature of income risks 
has a eventually negligible impact on the optimal strategies. We specify the labor 
income process $I_t$ with only permanent components of income risks. We consider the following widely 
used geometric Brownian motion process with an exogenously-driven Poisson jump:
\begin{equation}\label{lip}
dI_{t}=\mu_{I}I_{t}dt+\sigma_{I}I_{t}d\tilde{\mathcal{B}}_{t}-(1-\kappa)I_{t-}d\mathcal{N}_{t},~~I_{0}=I>0,
\end{equation}
where $\mu_{I}>0$ is the expected income growth rate, $\sigma_{I}>0$ measures the volatility on income growth, $\tilde{\mathcal{B}}_{t}$ is a
standard one-dimensional Brownian motion, $\kappa$ follows a power distribution over $[0,1]$ with parameter $\nu>0$,\footnote{The probability density function for $\kappa$ with parameter $\nu$ is given by $P_{\kappa}(z)=\nu z^{\nu-1}$, where $0\leq z\leq 1$. This specification is appropriate when adopting a well-behaved distribution for $\kappa$. An expected income loss decreases with respect to an increase of $\nu$ due to the relationship of
$\mathbb{E}[1-\kappa]=1/(\nu+1)$.
}  and $\mathcal{N}_{t}$ is a pure
Poisson jump process with intensity $\delta_{D}$. We assume that the diffusive and continuous shocks (represented by  Brownian motion $\tilde{\mathcal{B}}_t$) and the discrete and jump shocks (represented by  Poisson 
jump process $\mathcal{N}_t$) are independent for technical convenience.\footnote{Some correlation between those can be possibly considered via an additional stochastic process
for the probability distribution of jump size $\kappa$.}

The income volatility $\sigma_{I}$ represents the permanent and uninsurable income risks. The diffusive and continuous income shocks
captured by the Brownian motion $\tilde{\mathcal{B}}_t$ are independently and identically distributed for the income growth rate $dI_{t}/I_{t}$. In the
absence of the Poisson jump process $\mathcal{N}_{t}$, if we rewrite the income process (\ref{lip}) as the dynamics for logarithmic income $\ln{I_t}$  using
Ito's formula, then the logarithmic income follows an arithmetic Brownian motion and is a unit-root process. The change of logarithmic
income has mean $(\mu_{I}-\sigma^{2}_{I}/2)$ and volatility $\sigma_{I}$ per unit of time. As a result, the income shocks, i.e., the
income fluctuations by $\tilde{\mathcal{B}}_t$ have a permanent impact on the levels of income $I_t$.

Now we turn to the modeling and interpreting of discrete and jump shocks in income process. We include the possibility of large income 
shocks into the conventional setup.\footnote{\cite{C92} introduces this type of disastrous labor income shock and \cite{CGM05} also explicitly allow for the possibility of the income shock by following \cite{C92}.} The crucial exogenous shocks arrive 
with a constant probability (\cite{V01}). More specifically, for time $t\geq 0$,
$$
\mathbb{P}\{\tau_{D}\leq t\}=1-e^{-\delta_{D}t},
$$
where $\tau_{D}$ is the time at which the large income shocks arrive and $\delta_{D}>0$ is the intensity for the exogenous Poisson arrivals. As
a result of this specification, an investor's income is exposed to unexpected, exogenous, and permanent reductions from $I_{\tau_{D}-}$ to 
$\kappa I_{\tau_{D}-}$ $(\kappa\in[0,1])$. In other words, the investor receives only $100 \kappa\%$ of labor income immediately before the income 
shocks occur. This positive income assumption after the income shock happens has been well adopted in the previous important life-cycle 
articles.\footnote{\cite{C92} reports that households may have zero income and $10\%$ of permanent income in the low-income state with 
some probabilities equal to $0.5\%$ and $1.3\%$, respectively. \cite{P07} assesses the importance of specific events causing major 
income shocks such as unemployment and disability by considering the low-income state reported by \cite{C92}. \cite{LT11} also 
investigate the impact of a transitory unemployment state on an individual's asset allocation and assume that the individual can obtain $10\%$ 
of labor income in the unemployment state. 
}


An investor can invest his savings in  riskless bonds and  risky stocks. The bond price grows at  a constant rate $r>0$. The 
stock price, $S_{t}$, follows a geometric Brownian motion:\footnote{In the literature on investment, the 
assumption of a geometric Brownian motion for the stock price is widely used combined with the assumption of 
a constant investment opportunity set. However, in reality, the distribution of stock price does not follow the 
geometric Brownian motion. The empirical distributions of stock returns are given as follows: the skewed 
Student-$t$ distribution, the generalized lambda distribution, Johnson system of distributions, the normal inverse 
Gaussian distribution, and the $g$-and-$h$ distribution. One parsimonious way to capture the stochastic nature of 
investment opportunity set is to introduce a two-state Markov regime switching model for fundamental parameters.}
\begin{equation*}\label{labor_stockprocess}
dS_{t}=\mu S_{t}dt+\sigma S_{t}d\mathcal{B}^{1}_{t},
\end{equation*}
where $\mu>r$ is the expected  stock return rate, $\sigma>0$ is the volatility of the return on the stock,
and $\mathcal{B}^{1}_{t}$ is a standard one-dimensional Brownian motion with an instantaneous correlation $\rho\in[-1,1]$ with
the labor income process given in (\ref{lip}),  i.e., $d\mathcal{B}^1_td\tilde{\mathcal{B}}_t = \rho dt$. Here, $\mu$ and $\sigma$ are the stock returns' mean and standard 
deviation, respectively. They summarize an investment opportunity set provided by the stock, i.e., they represent the 
expected return and the risk in the stock market. We assume that $r$, $\mu$, $\sigma$ are constants, i.e., the 
investment opportunity is constant.

Note that an investor partially hedge against his permanent labor income shocks by investing in the stock market. As soon as the investor 
has positive stock investment, the uninsurable component $\sigma_{I}$ of permanent income shocks is reduced to 
$\sqrt{1-\rho^{2}}\sigma_{I}$ due to the correlation assumption between the stock and labor markets. The hedging 
demand rises with the income volatility $\sigma_{I}$.

\subsection{Cointegration Between the Stock and Labor Markets}
In this subsection, we add one more state variable to capture the well-known empirical evidence that the returns to human
capital and stock market returns are highly correlated, i.e., cointegration between the stock and labor markets exist
(see e.g. \cite{BJ97}, \cite{MSV04}, \cite{SV06}, and  \cite{BDG07}). 
Following  \cite{DL10}, 
we assume that labor income 
process $I_t$ is governed by
$$
I_{t}=S_{t}e^{Z_{t}},
$$
where
$
dS_{t}=\mu S_{t}dt+\sigma S_{t}d\mathcal{B}^{1}_{t}.
$
Here, $Z_{t}$ is the difference between the log labor income and log stock price, and is assumed to follow a mean
reverting process
\begin{eqnarray}\label{labor_zprocess}
dZ_{t}=-\alpha(Z_{t}-\overline{z})dt-\sigma_{z}d\mathcal{B}^{1}_{t}+\sigma_{I}d\mathcal{B}^{2}_{t},
\end{eqnarray}
where $\alpha>0$ measures the degree of mean reversion, $\overline{z}$ denotes the long-term mean, $\sigma_{z}$ and 
$\sigma_{I}$ measure the conditional volatilities of the difference change $dZ_{t}$, and $\mathcal{B}^{2}_{t}$ is a 
standard one-dimensional Brownian motion independent of $\mathcal{B}^{1}_{t}$. Given these specifications, we obtain that
\begin{equation}\label{lip2}
dI_{t}/I_{t-}=\{\mu_{I}-\alpha(Z_{t}-\overline{z})\}dt+(\sigma-\sigma_{z})d\mathcal{B}^{1}_{t}+\sigma_{I}d\mathcal{B}^{2}_{t}-(1-\kappa)d\mathcal{N}_{t},~~I_{0}=I>0,
\end{equation}
where
$
\mu_{I}=\mu+\df{1}{2}\sigma^{2}_{z}+\df{1}{2}\sigma^{2}_{I}-\sigma\sigma_{z}.
$

The labor income process (\ref{lip2}) can be calibrated to the empirically observed low contemporaneous
correlations between changes to labor income level and market returns (\cite{CGM05} and \cite{DW13}). 
Concretely, we allow for zero correlation between income shocks and market returns by assuming that 
$\sigma=\sigma_{z}$, so that market risk exposure of labor income dynamics becomes  zero. Instead, our specification 
(\ref{lip2}) for the individual labor income process reflects long-run cointegration between the stock and labor markets. 
The drift term in (\ref{lip2}) captures the notion of long-run dependence between these two markets. When $Z_{t}-\overline{z}<0$, 
the labor income will be increased in the long term, whereas when $Z_{t}-\overline{z}>0$, the labor income will be decreased. 

\subsection{Short Sale and Borrowing Constraints}
Many life-cycle studies explicitly lay out short sale and borrowing constraints to incorporate the realistic U.S.
equity markets (see e.g. \cite{GM05}, \cite{CGM05},  \cite{P07}, \cite{GM08}, \cite{MS10}, \cite{WY10}, and  \cite{LT11}). Indeed, there are a variety of legal and institutional 
constraints that preclude investors from short selling stocks at no cost. 
Further, the presence of borrowing constraints is also consistent 
with the realistic ramifications present in capital markets: many investors are constrained from borrowing against future income, partly because of some realistic market frictions such as informational asymmetry, agency conflicts, and limited enforcement. 

It is important to keep in mind that borrowing constraints with which investors cannot borrow against their future labor income have 
significant effects on their consumption and portfolio decisions over the life cycle (\cite{DL10}). In addition, it is not 
realistic for investors to capitalize their risky investment. In this regard, we try to move the canonical consumption and asset allocation problem with retirement to a more realistic problem by allowing for not only cointegration between the stock and labor markets, but 
also short sale and borrowing constraints.

In the presence of short sale and borrowing constraints, we require that both bond investment $x_{t}$ and stock investment $y_{t}$
are  nonnegative, as a result, financial wealth $W_{t}$ that is the sum of $x_{t}$ and $y_{t}$ is also nonnegative:\footnote{Although the labor income stream is stochastic, allowing for borrowing up to the net present
value of the lowest possible labor income is possible.  \cite{P07} considers this extension
in the robustness check for his model.}
\begin{eqnarray}\label{constraints}
x_{t}\geq 0,~~y_{t}\geq 0,~~W_{t}\equiv x_{t}+y_{t}\geq 0,
\end{eqnarray}
which gives the following dynamics:
\begin{eqnarray}\label{labor_wealthprocess}
dW_{t}=(rW_{t}-c_{t}+I_{t})dt+y_{t}\sigma(d\mathcal{B}^{1}_{t}+\theta dt),~~W_{0}=w\geq 0, 
\end{eqnarray}
where $c_{t}$ is the per-period consumption and $\theta=(\mu-r)/\sigma$ is the Shapre ratio. Investors accumulate wealth by investing
their savings in the riskless and risky assets. Investment in the riskless assets yields a risk-free return $r$, while investment in the risky assets
yields a positive risk premium $y_{t}\sigma\theta=y_{t}(\mu-r)$ in return for bearing stock market risk, which involves
stochastically time-varying term $y_{t}\sigma d\mathcal{B}^{1}_{t}$. Let $\mathcal{A}(w,I,z)$ denote the set of admissible policies such that short sale and borrowing constraints given in (\ref{constraints}) are satisfied.

\subsection{A Portfolio Choice Problem for Voluntary Retirement}
Based on  (\ref{labor_zprocess}), (\ref{lip2}), and (\ref{labor_wealthprocess}), an investor's portfolio selection problem for early retirement with cointegration between the stock and labor markets, and short sale and borrowing 
constraints is to maximize his CRRA utility preference by optimally controlling per-period consumption $c$, investment $y$ in the stock market, and 
the retirement time $\tau$:
\begin{eqnarray}\label{valfcosts}
\begin{aligned}
V(w,I,z) \equiv &  \sup_{(c,y,\tau)\in\mathcal{A}(w,I,z)}\mathbb{E}\Big[\int^{\tau\wedge \tau_D}_{0} e^{-\beta t} \df{c^{1-\gamma}_{t}}{1-\gamma} dt + e^{-\beta \tau\wedge \tau_D} \int_{\tau\wedge \tau_D}^{\tau} e^{-\beta (t-\tau\wedge \tau_D)} \df{c^{1-\gamma}_{t}}{1-\gamma} dt      \\
& \quad \quad \quad \quad \quad \quad \quad  +e^{-\beta \tau}\int^{\infty}_{\tau}e^{- \beta (t-\tau)}\df{(Bc_{t})^{1-\gamma}}{1-\gamma}dt\Big], \\
= & \sup_{(c,y,\tau)\in\mathcal{A}(w,I,z)}\mathbb{E}\Big[\int^{\tau}_{0} e^{-(\beta+\delta_{D})t}
\Big\{\df{c^{1-\gamma}_{t}}{1-\gamma}+\delta_{D}V(W_{t},\kappa I_{t},Z_{t})\Big\}dt\\
&\quad \quad \quad  \quad\quad \quad \quad   +e^{-(\beta+\delta_{D})\tau}\int^{\infty}_{\tau}e^{-(\beta+\delta_{D})(t-\tau)}\df{(Bc_{t})^{1-\gamma}}{1-\gamma}dt\Big],
\end{aligned}
\end{eqnarray}
where $\mathbb{E}$ is the expectation taken at time $0$, and $\beta>0$ is the subjective discount rate. The parameter $B>1$ stands for the leisure preference 
after retirement and implies that the marginal utility of consumption after retirement is larger than that before retirement. Following  \cite{FP07} and \cite{DL10}, we assume 
that the investor does not have any income source after retirement, i.e., labor income $I_{t}$ becomes zero for $t\geq\tau$. 
For calculation simplicity, we assume that the individual also faces disastrous shocks after retirement (This assumption makes no difference because there is zero income after retirement).  As the disastrous shock $\tau_D$ follows a Poisson process with intensity $\delta_D$, we obtain the second equality in (\ref{valfcosts}) via integrating out $\tau_D$. 
We further assume that the investor 
has no bequest motive to simplify our analysis. It is straightforward to extend our problem to include the positive post-retirement income and the bequest 
motive.

Notice that the last integral term in (\ref{valfcosts}) is to be maximized after retirement. This maximization problem is the canonical Merton's (\cite{M69}, \cite{M71}) optimal 
consumption and investment problem without labor income. We denote by $G(W_{\tau})$ the maximal utility value after retirement that has the form of
$$
G(w)=\df{B^{1-\gamma}\overline{K}^{-\gamma}}{1-\gamma}w^{1-\gamma},~~ \text{ where }\overline{K}=\df{\beta}{\gamma}-\df{1-\gamma}{\gamma}
\Big(r+\df{\theta^{2}}{2\gamma}\Big).
$$
Although an investor does not receive any income after retirement, the Merton solution $G(W_{\tau})$ does have an impact on the
pre-retirement strategies via a non-linear option-type component in the retirement decision that gives rise to the optimal characterizations of work and retirement regions. The principle of dynamic programming suggests that the value function (\ref{valfcosts}) 
satisfies:
\begin{equation}\label{valfcosts2}
\begin{aligned}
V(w,I,z)\equiv\sup_{(c,y,\tau)\in\mathcal{A}(w,I,z)}\mathbb{E}\Big[\int^{\tau}_{0}e^{-(\beta+\delta_{D})t}
\Big\{\df{c^{1-\gamma}_{t}}{1-\gamma}+\delta_{D}&V(W_{t},\kappa I_{t},Z_{t})\Big\}dt\\
&+e^{-(\beta+\delta_{D})\tau}G(W_{\tau})\Big].
\end{aligned}
\end{equation}

Moreover, we call DL's model, which is aimed to maximize the investor's utility preference by optimally controlling $(c,y,\tau)$ subjected to the basic dynamic process (\ref{lip}) and (\ref{labor_wealthprocess}) with the instantaneous correlation $\rho=0$, as the benchmark model. Hence, in the benchmark model, there is no cointegration effect between stock and labor markets. Indeed, the benchmark model can be treated as a special case of our cointegration model when $\alpha=0$.  We will compare the benchmark model with our model quantitively in  Section \ref{coint_numerical}.

\section{Theoretical  Analysis}\label{coint_appendix}
\subsection{The HJB Equation}
The value function (\ref{valfcosts2}) should satisfy the following HJB equation (see e.g. Chapter 4 in \cite{BL82} and Chapter 11 in  \cite{O07}): 
for any $w\geq 0$, $I\geq 0$, $z\in\mathbb{R}$,
\begin{equation}\label{vari}
\max_{(c,y)\in\mathcal{A}(w,I,z)}\Big\{\mathcal{L}V(w,I,z), \quad G(w)-V(w,I,z)\Big\}=0.
\end{equation}
where the differential operator $\mathcal{L}$ is given by
$$
\begin{aligned}
\mathcal{L}V=&\df{c^{1-\gamma}}{1-\gamma}-cV_{w}+\df{1}{2}\sigma^{2}y^{2}V_{ww}+\df{1}{2}[\sigma^{2}_{I}+(\sigma-\sigma_{z})^{2}]I^{2}V_{II}
+\df{1}{2}(\alpha^{2}_{z}+\sigma^{2}_{I})V_{zz}\\
&+\sigma(\sigma-\sigma_{z})yIV_{wI}-\sigma\sigma_{z}yV_{wz}+[\sigma^{2}_{I}-(\sigma-\sigma_{z})\sigma_{z}]IV_{zI}\\
&+[y(\mu-r)+rw+I]V_{w}+[\mu_{I}-\alpha(z-\overline{z})]IV_{I}-\alpha(z-\overline{z})V_{z} -\beta V\\
&+\delta_{D}\Big(\mathbb{E}_\kappa[V(w,\kappa I,z)]- V(w,I,z)\Big),
\end{aligned}
$$ 
$\mathbb{E}_{\kappa}$ is the expectation with respect to $\kappa$.
Here, the subscripts of $V$ denote its partial derivatives. The HJB equation (\ref{vari}) implies that there are two regions: one is the work region in which an investor is optimal to work to receive labor income and the other one is the retirement region in which 
he is optimal to exit from the workforce, i.e., to enter retirement. The first variational inequality in (\ref{vari}) becomes zero when the investor stays in the work region, whereas it becomes negative when he enters into the retirement region. 
The presence of a labor income jump shock  is captured by the last expectation term involving $\delta_{D}$ in the differential operator $\mathcal{L}$.

The second variational inequality in (\ref{vari}) measures the difference between value functions after and before retirement. As long as this variational inequality stays negative, i.e.,  the value function $V$ with an unexercised retirement option is larger than the value function $G$ after retirement, an investor is optimal to continue to work. 
As the investor accumulates wealth and thus, once his value function prior to retirement approaches the post-retirement value function, i.e., this variational inequality becomes 
an equality, he enters into the retirement region and consequently the retirement option is exercised. Since the work region and the retirement region cannot be overlapped, 
we should jointly consider those two variational inequalities in (\ref{vari}), to allow the maximum of these two terms to be equal to zero.

The work region and retirement region can be represented as follows:
\begin{eqnarray*}
\textbf{Work Region } &=& \{(w,I,z): \mathcal{L}V(w,I,z) = 0, w\geq 0, I\geq 0, z \in \mathbb{R}\}, \\
\textbf{Retirement Region} &=& \{(w,I,z):  G(w) - V(w,I,z) = 0,  w\geq 0, I\geq 0, z \in \mathbb{R}\}.
\end{eqnarray*}


Using the homogeneity property of the value function, we can reduce the dimensionality
of the problem by the following transformation:
\begin{eqnarray*}
V(w,I,z)=\frac{\bar K^{-\gamma}}{1-\gamma}\left(w+\frac{I}{r}\right)^{1-\gamma} e^{(1-\gamma)u(\xi, z)}, \quad \xi=\frac{I/r}{w+I/r}\in [0, 1].
\end{eqnarray*}
After retirement, we know $G(w)$ satisfies $G(w) = \frac{B^{1-\gamma}\bar K^{-\gamma}}{1-\gamma}w^{1-\gamma}$.
Therefore, the associated HJB equation for new function $u(\xi, z)$ becomes
\begin{eqnarray}\label{vari2}
\max_{\bar y, \bar c}\left\{ \mathcal{L}_1 u(\xi, z),\quad  \mathcal{R} u(\xi)\right\} =0,
\end{eqnarray}
on $\{(\xi, z): \xi\in[0,1], z\in \mathbb{R}\}$, where 
\begin{eqnarray*} \begin{aligned}
\mathcal{L}_1 u =& \left[\frac{1}{2}\sigma^2 \bar y^2 \xi^2 + \frac{1}{2}(\sigma_I^2+(\sigma-\sigma_z)^2)\xi^2(1-\xi)^2 -\sigma(\sigma-\sigma_z)\bar y \xi^2(1-\xi)\right][u_{
\xi\xi}+(1-\gamma) u_{\xi}^2] \\
& +\left[\sigma\sigma_z \bar y \xi + (\sigma_I^2 -\sigma_z(\sigma-\sigma_z)) \xi(1-\xi)\right] [u_{\xi z} +(1-\gamma)u_{\xi}u_z] \\
&+\frac{1}{2}(\sigma_I^2+\sigma_z^2)[u_{zz}+(1-\gamma)u_z^2] +\bigg[\gamma\sigma^2 \bar y^2 +\gamma\sigma(\sigma-\sigma_z)(2\xi-1)\bar y-(
\mu-r)\bar y \\
&-\gamma(\sigma_I^2+(\sigma-\sigma_z)^2)\xi(1-\xi) +(\mu_I-\alpha(z-\bar z))(1-\xi) -r \bigg]\xi u_{\xi}\\
&+ \left[-(1-\gamma)\sigma\sigma_z \bar y +(1-\gamma)(\sigma_I^2-\sigma_z(\sigma-\sigma_z))\xi -\alpha(z-\bar z)\right] u_z\\
& +(\mu-r -\gamma\sigma(\sigma-\sigma_z)\xi)\bar y -\frac{1}{2}\gamma\sigma^2 \bar y^2 -\frac{1}{2}(\sigma_I^2+(\sigma-\sigma_z)^2)\gamma \xi^2 \\
&+(\mu_I-\alpha(z-\bar z))\xi +r -\frac{\beta +\delta_D}{1-\gamma}+\delta_D\mathbb{E}_{\kappa}\left[\frac{(1+(\kappa-1)\xi)^{1-\gamma}}{1-\gamma} e^{(1-\gamma) \left(u\left(\frac{\kappa \xi}{1+(\kappa-1)\xi}\right) - u\right)}  \right]\\
& + \frac{\bar{K}^\gamma}{1-\gamma} e^{-(1-\gamma) u} \bar c^{1-\gamma} - \bar c(1-\xi u_\xi),\\
\mathcal{R} u =& \ln(1+(\kappa-1)x) +\ln B -u,
\end{aligned}\end{eqnarray*}
where $\mathbb{E}_{\kappa}$ is the expectation with respect to $\kappa$, $\bar y := \frac{y}{w+I/r}$, and $\bar c:=\frac{c}{w+I/r}$. 

At boundary $\xi=0$, i.e., when $w=\infty$, the HJB equation is degenerated and the solution approximates Merton case.
At boundary $\xi=1$, i.e., when $w=0$, it is known that the investor should not invest in stock\footnote{It is well documented that optimal investment in the stock market should be zero as wealth approaches zero. This condition is exactly same with the borrowing constraint against future labor income (\cite{DL10}).} and his consumption should not exceed the current labor income. Thus we have
$$
\bar y^*\big|_{\xi = 1}=0, \quad \bar c^*\big |_{\xi=1}=\min\left\{\bar K e^{(1-1/\gamma) u(1,z)}(1-u_\xi(1,z))^{-1/\gamma},  r \right\}.
$$
For $\xi \in(0,1)$, the optimal investment and consumption in the presence of constrained borrowing and short selling are determined by the first order condition:
\begin{eqnarray*}
\bar y^* = \min\left\{ \max\{h(\xi,z), 0\}, 1-\xi\right\}, \quad  \bar c^* = \bar K e^{(1-1/\gamma)u}(1-\xi u_{\xi})^{-1/\gamma},
\end{eqnarray*}
where $\bar y^*\geq 0$ comes from the short selling constraint, $\bar y^* \leq 1-\xi$ comes from the borrowing constraint against future labor income, and 
$$
\begin{aligned}
h(\xi,z)=-\Big[&
\mu-r -\gamma\sigma(\sigma-\sigma_z)\xi+[\gamma\sigma(\sigma-\sigma_z)(2\xi-1)
+r-\mu ]\xi u_\xi -(1-\gamma)\sigma\sigma_z u_z\\ 
&+\sigma\sigma_z\xi[u_{\xi z}+(1-\gamma)u_\xi u_z] 
-\sigma(\sigma-\sigma_z) \xi^2(1-\xi)[u_{\xi\xi}+(1-\gamma)u_{\xi}^2]\Big]\\
&\Big/\Big[ 
{\sigma^2\xi^2[u_{\xi\xi}+(1-\gamma)u_{\xi}^2]+2\gamma\sigma^2\xi u_\xi -\gamma\sigma^2}\Big].
\end{aligned}
$$

\subsection{Verification of the Value Function and Optimal Policy}
We show that the solution of HJB equation (\ref{vari}) coincides with the original utility function defined in (\ref{valfcosts}). 
\begin{proposition}\textbf{(Verification Theorem).} Let $V(w,I,z)$ be a smooth solution to HJB equation 
(\ref{vari}). Assume that for any admissible controls, we have the following transversality condition\begin{eqnarray*}
\lim_{t\rightarrow \infty} \mathbb{E}\bigg[e^{-\beta t} V(W_t, I_t, Z_t)\bigg]  =0. 
\end{eqnarray*}
Then $V(w,I,z)$ equals the value function defined in  (\ref{valfcosts}), and  the optimal strategy is given by  \begin{eqnarray}\label{labor_optimalstrategy}\begin{aligned}
c^*_t =& V_w(W_t^*, I_t, Z_t)^{-1/\gamma}, \\
y^*_t =&\max\left\{\min\bigg\{ \frac{\big[\sigma\sigma_z V_{wz} -\sigma(\sigma-\sigma_z)I_t V_{wI}-(\mu-r)V_w\big](W_t^*, I_t, Z_t)}{\sigma^2 V_{ww}(W_t^*, I_t, Z_t)}, 1 \bigg\},  0  \right\},\\
\tau^* =& \inf\{t\geq 0:  V(W_t^*, I_t, Z_t)\geq G(W_t^*)\}.
\end{aligned} \end{eqnarray}
\end{proposition}


\begin{proof}
We want to prove the solution $V(w,I,z)$ is not less than the value function defined in (\ref{valfcosts}) and the equality achieves under the strategy defined in (\ref{labor_optimalstrategy}).

For any admissible strategy $\{c_t,y_t,\tau\}$, let us define 
\begin{eqnarray*}
M_t = \int_0^t e^{-\beta s} \bigg[(1-R_s)U(c_s) ds+ G(W_s)dR_s \bigg]+ e^{-\beta t} (1-R_t) V(W_t, I_t, Z_t),
\end{eqnarray*}
where $R_t:=\mathbf{1}_{\{t>\tau\}}$. 
Without loss of generality, we assume $R_0=0$. Otherwise if $R_0=1$, then we have $V(w, I, z) \geq \mathbb{E}[\int_0^\infty e^{-\beta t}U(Bc_t) dt ]$  with equality achieved when $c_t=c_t^*, y_t=y_t^*$, and $R_t=R_t^*$.
By the generalized Ito's formula,
\begingroup\makeatletter\def\f@size{11}\check@mathfonts
\begin{eqnarray*}\begin{aligned}
dM_t =& e^{-\beta t}(1-R_t)U(c_t) dt +  e^{-\beta t} G(W_t)dR_t- e^{-\beta t}V(W_t, I_t, Z_t) dR_t  \\
&+ e^{-\beta t}(1-R_t)[V_w(W_t,I_t, Z_t)dW_t+V_I(W_t,I_t, Z_t)dI_t+V_z(W_t,I_t, Z_t)dZ_t]\\
& + e^{-\beta t}(1-R_t)[V_{ww}(W_t,I_t, Z_t)dW_tdW_t+V_II(W_t,I_t, Z_t)dI_tdI_t+V_{zz}(W_t,I_t, Z_t)dZ_tdZ_t ]\\
& + e^{-\beta t}(1-R_t)[V_{wI}(W_t,I_t, Z_t)dW_tdI_t+V_{wz}(W_t,I_t, Z_t)dW_tdZ_t+V_{Iz}(W_t,I_t, Z_t)dI_tdZ_t ]\\
& + (1-R_t)\frac{\partial }{\partial t}\left(e^{-\beta t} V(W_t, I_t, Z_t)\right)dt  \\
=& e^{-\beta t}(1-R_t)U(c_t) dt  +(1-R_t)\left[ e^{-\beta t}(\mathcal{L}V(W_t,I_t,Z_t)-U(c_t))\right]dt \\
& +e^{-\beta t} G(W_t)dR_t- e^{-\beta t}V(W_t, I_t, Z_t) dR_t \\
&+ e^{-\beta t}(1-R_t)\big[\sigma V_w(W_t,I_t, Z_t) y_t  + (\sigma-\sigma_z)V_I(W_t, I_t, Z_t)I_t -\sigma_z V_z(W_t, I_t, Z_t) \big]d\mathcal{B}_t^1\\
& + e^{-\beta t}(1-R_t)\big[\sigma_I  V_I(W_t, I_t, Z_t)I_t+\sigma_I V_z(W_t, I_t, Z_t)  \big]d\mathcal{B}_t^2.
\end{aligned}
\end{eqnarray*}\endgroup
Define $\mathcal{O}_n: = \{(w,I,z): \frac{1}{2n} \leq w\leq n, |z|<n, \frac{1}{2n} \leq I\leq n \}$ and a sequence of stopping time $\theta_n: = n\wedge \inf\{t\geq0: (W_t,I_t,Z_t)\notin \mathcal{O}_n \}$.
We then integrate the above equation from $0$ to $\theta_n$:
\begingroup\makeatletter\def\f@size{10}\check@mathfonts
\begin{eqnarray*}\begin{aligned}
M_{\theta_n} =& M_0 +\int_0^{\tau\wedge\theta_n} (1-R_s)e^{-\beta s} \mathcal{L}V(W_s,I_s,Z_s) ds +\int_{\tau\wedge\theta_n}^{\theta_n} (1-R_s)e^{-\beta s} \mathcal{L}V(W_s,I_s,Z_s) ds\\
&+\int_0^{\theta_n} e^{-\beta s}\bigg[ G(W_s) -V(W_s, I_s, Z_s)\bigg]dR_s \\
&+\int_0^{\theta_n} e^{-\beta s} (1-R_s)\bigg[\sigma V_w(W_s,I_s, Z_s) y_s + (\sigma-\sigma_z)V_I(W_s, I_s, Z_s)I_s -\sigma_z V_z(W_s, I_s, Z_s) \bigg]d\mathcal{B}_s^1 \\
& + \int_0^{\theta_n} e^{-\beta s} (1-R_s)\bigg[\sigma_I  V_I(W_s, I_s, Z_s)I_s+\sigma_I V_z(W_s, I_s, Z_s)  \bigg]d\mathcal{B}_s^2. 
\end{aligned}\end{eqnarray*}\endgroup
By the form of (\ref{vari}) and the definition of $\{c^*_t, y^*_t, R_t^*\}$ in (\ref{labor_optimalstrategy}), we obtain that the first integral is always non-positive for any feasible strategy $\{c_t, y_t, R_t\}$ and is equal to zero for the claimed optimal policy $\{c^*_t, y^*_t, R_t^*\}$ in (\ref{labor_optimalstrategy}). That is because if $(\tau^*\wedge \theta_n)\geq (\tau\wedge\theta_n)$, the solution function $V$ satisfies $\mathcal{L}V\leq 0$ by (\ref{vari}) and the equality achieves under claimed optimal strategy $\{c_t^*, y_t^*\}$, and if $(\tau^*\wedge \theta_n)<(\tau\wedge\theta_n)$, we have $V=G(w)$ and $\mathcal{L} = U(c) - c\partial_w +\frac{1}{2}\sigma^2 y^2 \partial_{ww}+[rw+y(\mu-r)]\partial_w-\beta$ during $[\tau^*\wedge \theta_n, \tau\wedge\theta_n]$ so that $\mathcal{L}V<0$ as $B>1$. Therefore, the first non-positive integral equals to zero only when  $c_t=c^*_t$, $y_t=y^*_t$, and $R_t=R_t^*$. The second integral equals zero for both $\theta_n\leq\tau$ and $\theta_n>\tau$ (in this case, $1-R_t=0$ during $[\tau,\theta_n]$).
The third integral is always non-positive for every feasible policy $\{c_t,y_t,R_t\}$ because $V(W_t, I_t, Z_t)\geq G(W_t)$ and is equal to zero only when $c_t=c_t^*$, $y_t=y_t^*$, and $\tau\geq \tau^*$ as $V(W_t, I_t, Z_t) =G(W_t)$ for $t\geq \tau^*$. The last  two stochastic integrals under expectation equals zero as $V_w(W_t, I_t, Z_t)$, $V_z(W_t, I_t, Z_t)$, and $V_I(W_t, I_t, Z_t)$ are bounded when $(W_t,I_t, Z_t)$ is in a bounded domain during $[0, \theta_n]$. 

Noticing that $M_0 = V(W_0, I_0, Z_0)$, we then take expectation in above equation to get 
\begin{eqnarray*}
V(W_0, I_0, Z_0) &\geq& \mathbb{E} \int_0^{\theta_n} e^{-\beta s} \bigg[(1-R_s)U(c_s)ds + G(W_s)dR_s\bigg] \\
&&+ \mathbb{E}\big[e^{-\beta \theta_n} (1-R_{\theta_n}) V(W_{\theta_n}, I_{\theta_n}, Z_{\theta_n})\big]. 
\end{eqnarray*}
As analyzed above, the equality above holds only for the claimed optimal strategy $\{c_t^*, y_t^*, R_t^*\}$ defined in (\ref{labor_optimalstrategy}).
As $n\rightarrow \infty$, $\theta_n$ increases to infinity with probability 1. By the transversality condition of $V$ and dominant convergence theorem, 
the first expectation above converges to the original utility function $\mathbb{E} [\int_0^\tau e^{-\beta s}U(c_s)ds$ $+e^{-\beta \tau}G(W_{\tau})]$ and the second expectation goes to zero.  Equality holds for the claimed optimal policy $\{c^*_t, y^*_t, R_t^*\}$ and this completes the proof.
\end{proof}

\section{Quantitative Analysis}\label{coint_numerical}
In order to obtain economically plausible implications on optimal portfolio choice
for early retirement with cointegration between the stock and labor markets, we
carry out quantitative analysis with reasonable parameter values. Penalty method (\cite{DZ08}) is used to solve the new HJB (\ref{vari2}) numerically. We set a lower bound of solvency domain at $z_{\text{min}}=\bar z -8\sigma_z$ and an upper bound at $z_{\text{max}} =\bar z+ 8\sigma_z$. We add boundary condition $V_z\big |_{z=z_{\text{min}}}=V_z\big |_{z=z_{\text{max}}}=0$, or equivalently, $u_z\big |_{z=z_{\text{min}}}=u_z\big |_{z=z_{\text{max}}}=0$.  We note that our numerical approach is robust to the choice of $z_{\text{min}}$ and $z_{\text{max}}$ in $z$ direction. 

\subsection{Baseline Parameters}
For the benchmark case, we set asset returns according to  \cite{DL10}.  We set the expected stock return 
$\mu$ to $5\%$ and the risk-free return to $1\%$ and hence the mean equity premium to 
$4\%$. We choose the stock volatility $\sigma=18\%$. The annual subjective 
discount rate $\beta$ is considered to be $4\%$, which is a common value adopted in 
the existing life-cycle literature (see e.g. \cite{CGM05}, \cite{GM05}, \cite{WY10}, and \cite{WWY16}). The higher value of subjective 
discount rate than the risk-free return incorporates a constant mortality rate and makes 
an individual relatively impatient compared with the bond market, so that incentives of 
the wealthier people to save in the form of riskless assets are weakened.

The coefficient $\gamma$ of relative risk aversion is assumed to have a moderate value 
of $3$, which is significantly lower than the upper bound $10$ for relative risk aversion 
suggested by \cite{MP85}. The parameter $B$ of leisure preference after 
voluntary retirement is chosen as $2$ for the baseline value.

As regards to the parameters in labor income dynamics, we set the annual expected rate
$\mu_{I}$ and volatility $\sigma_{I}$ of income growth to $0.5\%$ and
$10\%$, respectively, which are very similar values compared to  \cite{Dn91}, \cite{C92}, \cite{DL10}, and \cite{WWY16}. Accordingly, under these parameter values 
the implied expected change $(\mu_{I}-\sigma^{2}_{I}/2)$ of logarithmic income level 
becomes  zero. We normalize the initial annual rate of labor income $I$ while working 
as  $1$. The recovery parameter $\kappa$ in the probability density function for jump size  is assumed to be constant and set to $80\%$.\footnote{We check the robustness of
our main results when $\kappa$ is assumed to follow a random variable according to a
power distribution (not reported).} Under these specifications, labor income will be 
reduced from $1$ to $0.8$ when a discrete and jump shock takes place. \cite{C92},  \cite{P07}, and \cite{LT11} all consider this kind of low-income state 
in which income level would be even zero or have a significantly lower proportion of 
permanent labor income.\footnote{In line with this, \cite{CDK03} 
assume that the low-income state can be thought of as $20\%$ of permanent income after 
accounting for some roles in a safety net taken by formal and informal insurance markets.}

The labor income process and its risk features are well estimated using the Panel Study of Income Dynamics data set (\cite{C92} and \cite{CGM05}), while the probabilities of a disastrous labor income shock that results in an 
unexpected, permanent, and exogenous labor income drop are very difficult to be estimated 
from the panel data. Unlike the common transitory income shocks (see e.g. \cite{C92}, \cite{CGM05}, and \cite{P07}), our income shocks are regarded as a permanent 
low-income state scenario, which seems to induce observationally similar effects of the 
random horizon.\footnote{For the treatment of random horizon, see  \cite{V01} who 
considers a finite expected lifetime occurring with constant probabilities.}  In order to 
fully reflect some realistic episodes for the calibration of disastrous labor income shocks, 
we consider an \text{extreme} event of firm default, which induces a negative and large 
income reduction followed by unemployment. We relate our disastrous 
labor income shock intensity $\delta_{D}$ to the permanent disastrous event arising from 
firm default. We calibrate the intensity $\delta_{D}$ to Moody's historical data of average 
cumulative issuer-weighted global rates by rating categories from 1983 to 2011 
(\cite{JPR13}). More precisely, we select four rating categories of $Aaa$, $Aa$, 
$A$, and $B$. The calibration results for $\delta_{D}$ are as follows: 0.0001 (Aaa), 0.0012 
(Aa), 0.0030 (A), and 0.0526 (B). The baseline parameter for $\delta_{D}$ is set to $5\%$, 
which is an approximation for the calibration result of rating category $B$ and the same 
value used in \cite{WWY16}).

Many existing life-cycle models (see e.g., \cite{BMS92}, \cite{JK96}, \cite{V01},  \cite{FP07}, and  \cite{DL10}) are not compatible with empirical stylized facts such as stock market 
non-participation and household portfolio share that rises in wealth, without resorting to high 
correlations between labor income shocks and market returns. However, income shocks and stock 
returns are not highly correlated, consistent with the data (\cite{CCGM01} and \cite{GM05}). For instance, \cite{CCGM01} and \cite{GM05} estimate 
the contemporaneous correlation between the changes to labor income and market returns as $15\%$. Most 
importantly, we take the empirically plausible assumption that the contemporaneous correlation 
should be zero, i.e., we set $\rho=0$ in the benchmark model and 
 $\sigma=\sigma_{z}$ in our cointegration model. Under this assumption, market risk exposure of labor 
income dynamics (\ref{lip2}) involveing $(\sigma-\sigma_{z})$ becomes zero. Labor income is active in its response to shocks in the stock market in the long term. This reflects 
the long-run cointegration effect. More precisely, in our labor income dynamics, when $Z_{t}-\overline{z}<0$, labor  income will be expected to rise, whereas when $Z_{t}-\overline{z}>0$, labor income will be expected to 
fall. For the baseline parameter value, we set the degree of mean reversion $\alpha$ and long-term mean 
$\overline{z}$ to $15\%$ and zero, respectively. Further, we specify the initial condition for $Z_{t}$ 
as $Z_{0}=0$, which is highly likely to be steady state.

The Table \ref{coint_table_summary} summarizes our baseline parameter values.

\begin{table}[t]\small
\centering
$$
\begin{array}{l|c|c}
\hline\hline
\mbox{Parameters} & \mbox{Symbol} & \mbox{Value}\\
\hline
\mbox{Risk-free interest rate} & r & 1\%\\
\mbox{Expected rate of stock return} & \mu & 5\%\\
\mbox{Stock volatility} & \sigma & 18\%\\
\mbox{Relative risk aversion} & \gamma & 3\\
\mbox{Leisure preference after voluntary retirement} & B & 2\\
\mbox{Subjective discount rate} & \beta & 4\%\\
\mbox{Expected rate of income growth} & \mu_{I} & 0.5\%\\
\mbox{Volatility on income growth} & \sigma_{I} & 10\%\\
\mbox{Annual rate of labor income} & I & 1\\
\mbox{Recovery parameter} & \kappa & 80$\%$\\
\mbox{Disastrous labor income shock intensity} & \delta_{D} & 5\%\\
\mbox{Volatility on difference between the log stock price and log income} & \sigma_{z} & \sigma\\
\mbox{Degree of mean reversion} & \alpha & 15\%\\
\mbox{Long-term mean} & \overline{z} & 0\\\mbox{Contemporaneous correlation between stock and income returns (DL)} & \rho & 0\\
\hline\hline
\end{array}
$$
\caption[Summary of baseline parameters]{\textbf{Summary of baseline parameters.}}
\label{coint_table_summary}
\end{table}

\subsection{Optimal Consumption}
Figure \ref{coint_MPC} states that the marginal propensities to consume (MPC) out of financial wealth
decreases as wealth increases, consistent with \cite{FP07} and  \cite{DL10}, with and without cointegration between 
the stock and labor markets. This implies the concavity of consumption function (\cite{CK96}). More interestingly, in our model the MPC is much lower than \cite{DL10}, which is a strong indicator
of investors' sentiment toward early retirement. As shown later, wealth at retirement is lower
with cointegration than without cointegration. Because wealth can be used to not only finance future
consumption, but also control the irreversible time of voluntary retirement, early retirement encourages investors to save more compared to the case without conintegration. As a result, investors value a unit of consumption less than a unit of financial wealth in order to accumulate wealth for optimal retirement, implying lower the MPCs than DL (\cite{DL10}).

\begin{figure}[H]
\centering
\begin{tabular}{cc}
\includegraphics[width=0.46\textwidth]{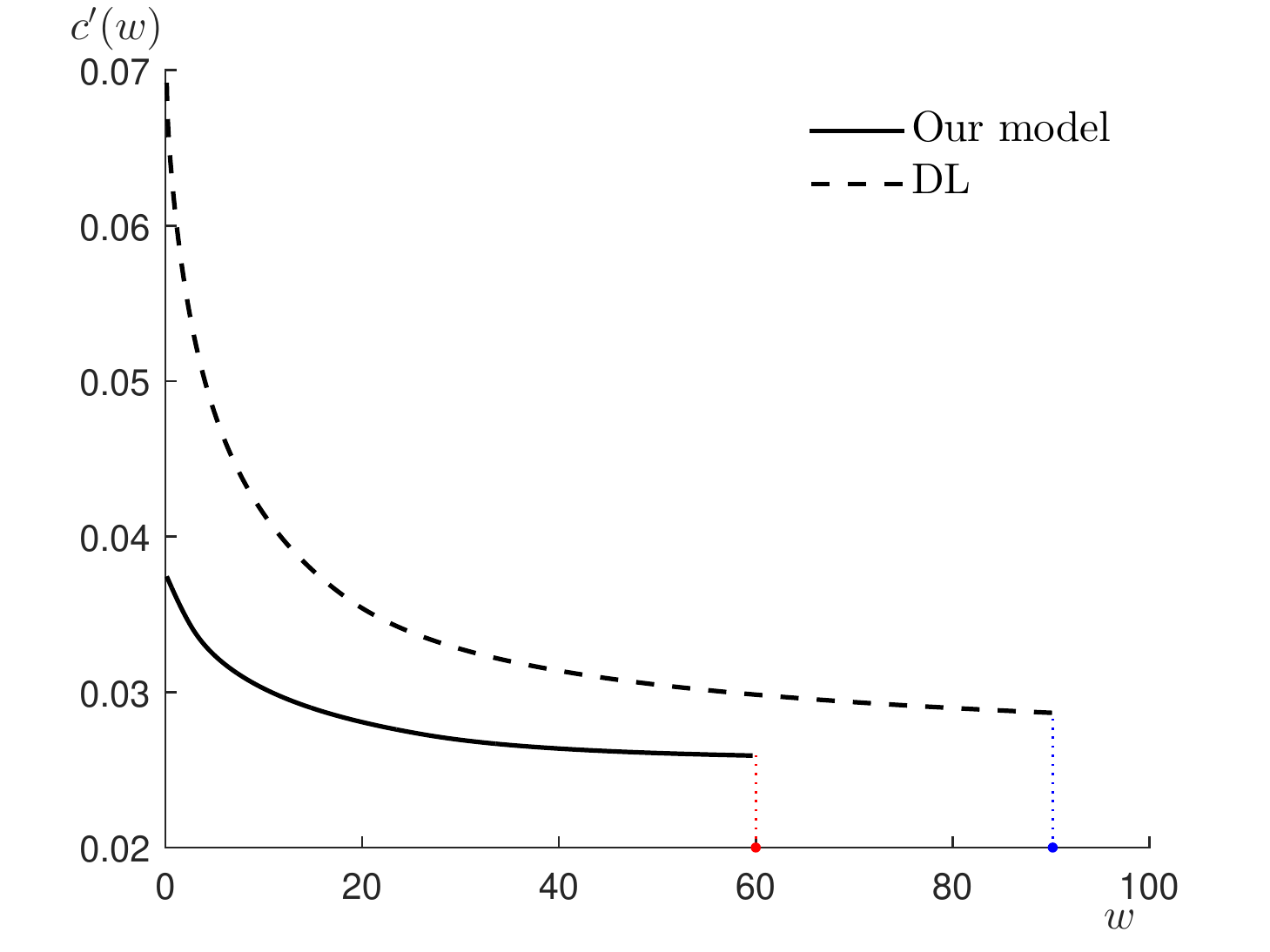} & \hspace{-.5cm}
\includegraphics[width=0.46\textwidth]{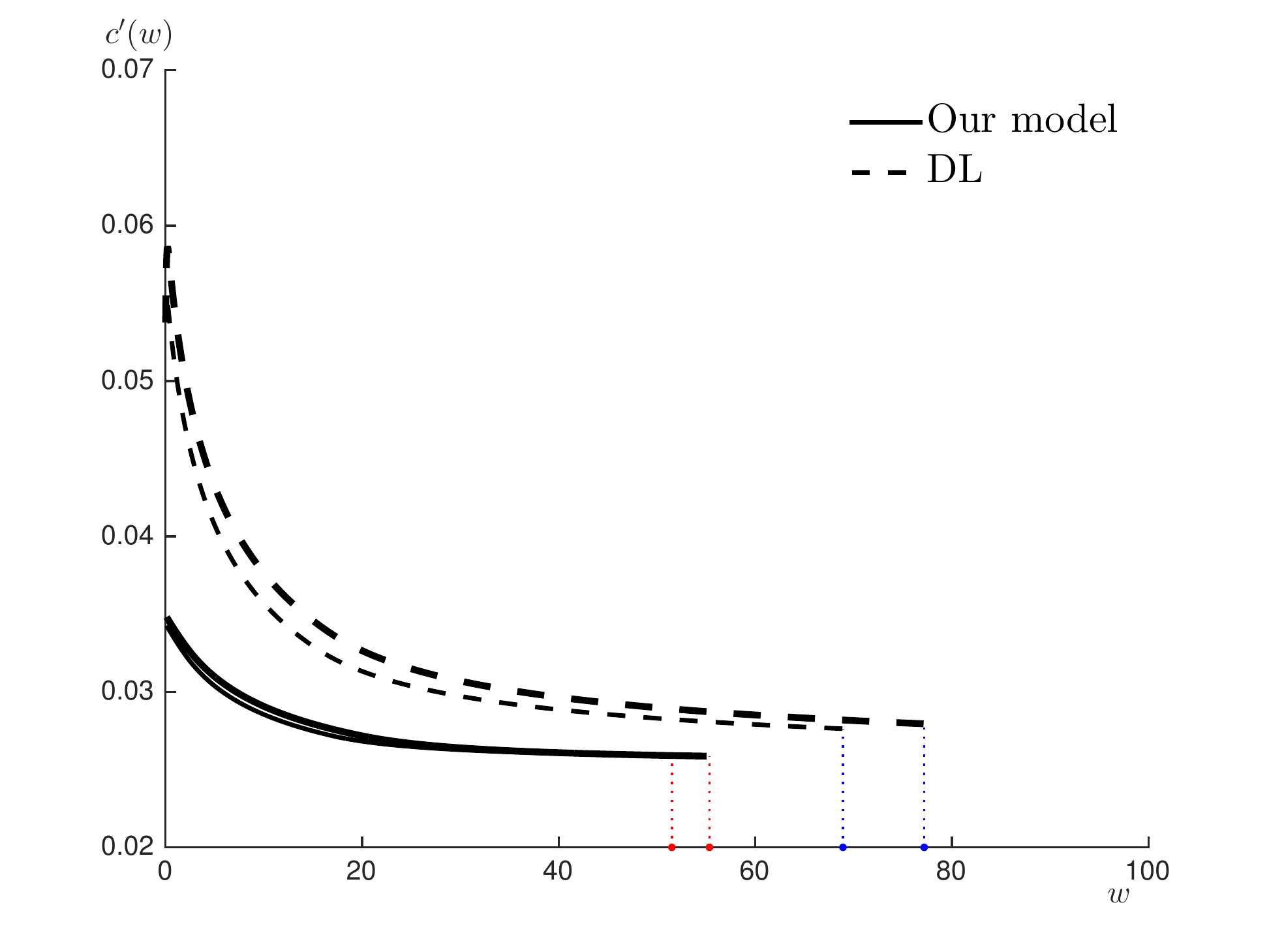} \\
{\small (i) $\delta_{D}=0, z=0$} & {\small (ii) $z=0$, $\delta_{D}=5\%$ for thick line; $\delta_{D}=3\%$ for thin line}
\end{tabular}
\caption[Marginal propensities to consume out of financial wealth (MPC)]{\textbf{Marginal propensities to consume out of financial wealth (MPC).}  Since we normalize  annual rate of labor income as one, i.e., $I=1$, wealth-to-income ratio $w/I$ reduces to financial wealth $w$. The 
dotted line and the solid line represent the \cite{DL10} result and the our result, 
respectively. The projected end points of DL and our model on wealth-to-income ratio horizon (or x-axis) denote wealth thresholds for voluntary retirement over which an investor's optimal choice is to retire 
permanently. In our model, the MPC is much lower than DL. Basic parameters are chosen from Table \ref{coint_table_summary}.}
\label{coint_MPC}
\end{figure}

\subsection{Optimal Investment}
Figure \ref{coint_portfolio share} represents the proportion of financial wealth invested in the stock
market (or the portfolio share) as a function of wealth-to-income ratio. Since we normalize annual
rate of labor income as one, wealth-to-income ratio reduces to financial wealth. While in DL (\cite{DL10}),
the portfolio share decreases in financial wealth, in our model we find that there exists a
target wealth-to-income ratio under which an investor does not participate in the stock market
at all (This is  resolution to the non-participation puzzle), and above which the investor
increases the portfolio share as he  accumulates wealth. As a result of cointegration between
the stock and labor markets, i.e., due to the long-run dependence between those two markets, returns
to human capital and stock market returns become highly positively correlated (\cite{BDG07}) and hence, an investor with little wealth who is away from retirement
does not make any investments in stocks, even when the market risk premium is positive. Put
differently, human capital's implicit equity holdings resulting from cointegration significantly
lower risky asset investments and even lead to zero stock holdings. 

As the investor accumulates wealth, the cointegration effect becomes attenuated and human capital 
starts to act as implicit bond holdings (see e.g. \cite{HL97}, \cite{JK96}, \cite{CGM05}, \cite{FP07}, and \cite{DL10}). Consequently, the investor finds it optimal to increase 
the portfolio share as he  accumulates wealth over a certain threshold, in the interest of striking 
a balance of his optimal portfolio by tilting it toward stocks. 
The presence of downward jumps in labor income, at the most basic level, reduces 
the target wealth-to-income ratio. As a result, the presence of downward jumps in labor income induces an investor to participate in the stock market earlier than without downward jumps, which is caused by additional  hedging component in portfolio against human capital risk.

\begin{figure}[H]
\centering
\begin{tabular}{cc}
\includegraphics[width=0.46\textwidth]{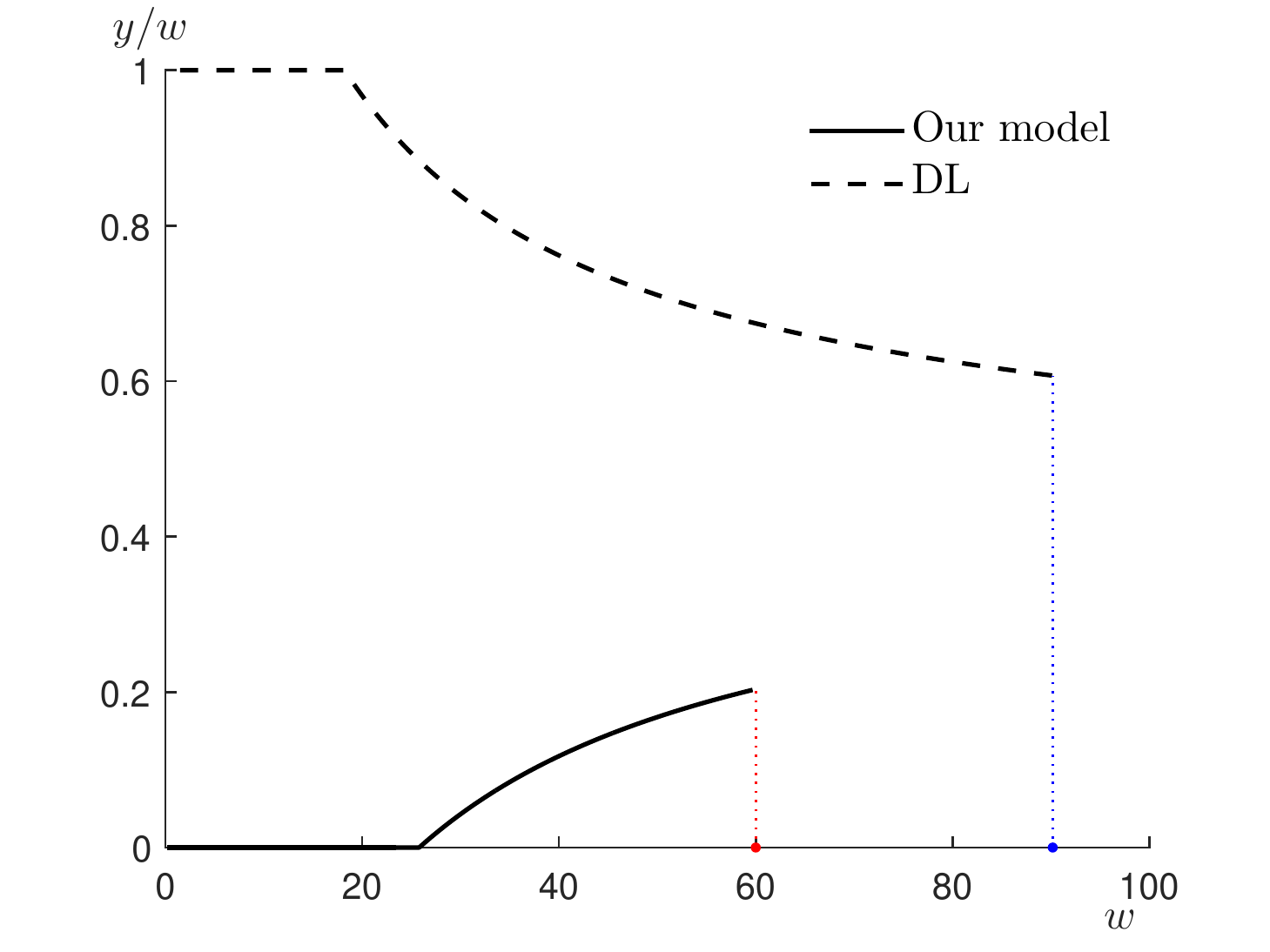} & \hspace{-.5cm}
\includegraphics[width=0.46\textwidth]{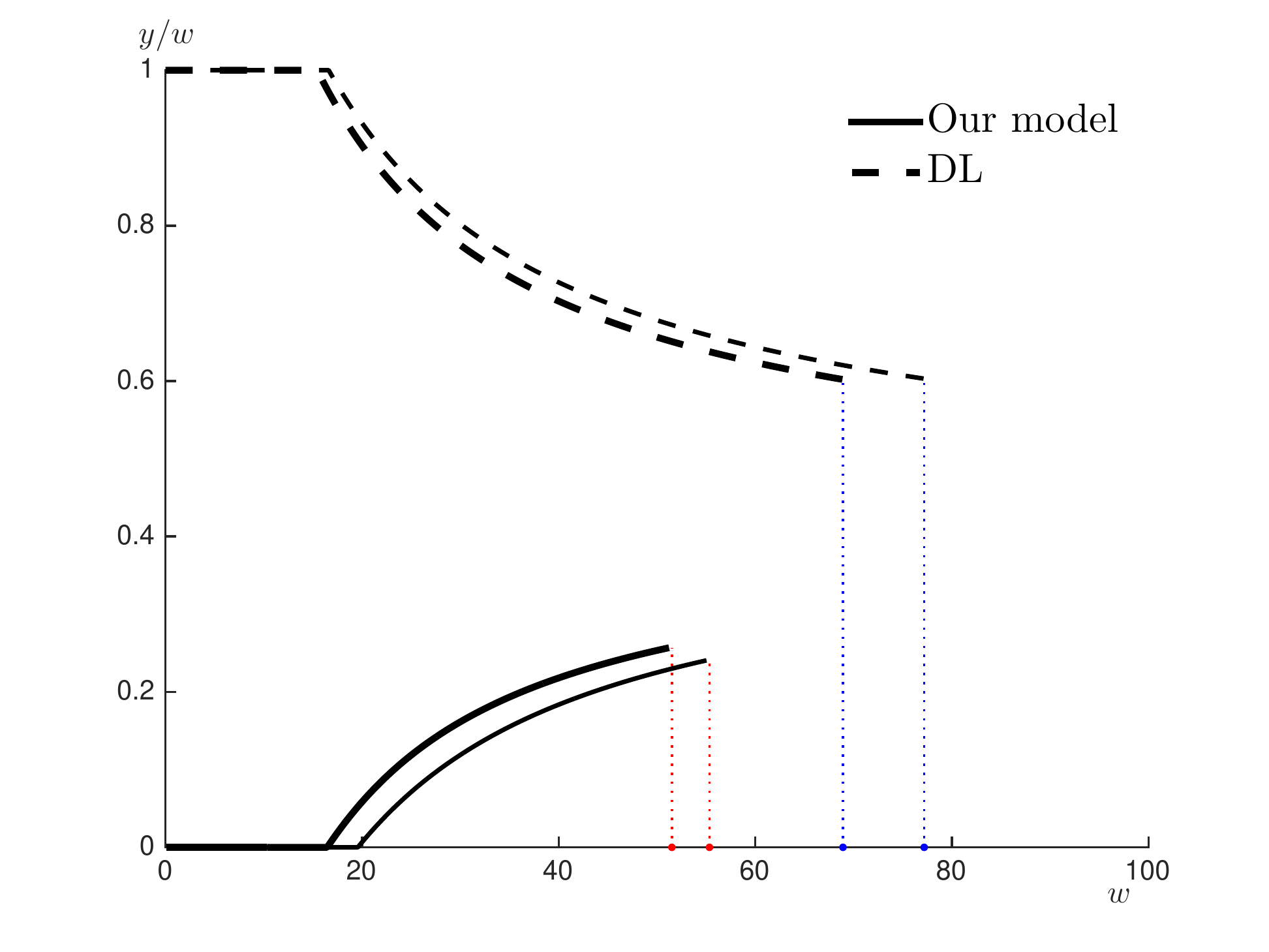} \\
{\small (i) $\delta_{D}=0, z=0$} & {\small (ii) $z=0$, $\delta_{D}=5\%$ for thick line; $\delta_{D}=3\%$ for thin line}
\end{tabular}
\caption[Portfolio share 
as a function of wealth-to-income ratio]{\textbf{Portfolio share 
as a function of wealth-to-income ratio.}  Since we normalize annual rate of labor income as one, i.e., $I=1$, wealth-to-income ratio $w/I$ reduces to financial wealth $w$.  The dotted line and the solid line represent DL (\cite{DL10}) and our
model, respectively. The left panel denotes the case without downward jumps in labor income, whereas 
the right panel stands for the case with downward jumps in labor income.  Basic parameters are chosen from Table \ref{coint_table_summary}. }
\label{coint_portfolio share}
\end{figure}

\noindent\textbf{Changes in investment opportunity and risk aversion}

In relation to the effects of changing investment opportunity set and risk aversion on portfolio 
share, Figure \ref{coint_portfolio share_investment_opportunity} and Figure \ref{coint_portfolio share_gamma}
demonstrate that in DL (\cite{DL10}) and our model, better investment opportunity set and lower risk aversion 
(equivalently, higher expected rate of stock return $\mu$, lower stock volatility $\sigma$, or
lower risk aversion $\gamma$) raises the portfolio share, which can be inferred from the standard mean-variance 
effect rule in optimal portfolio choice (\cite{M69}, \cite{M71}).

\begin{figure}[H]
\centering
\begin{tabular}{cc}
\includegraphics[width=0.46\textwidth]{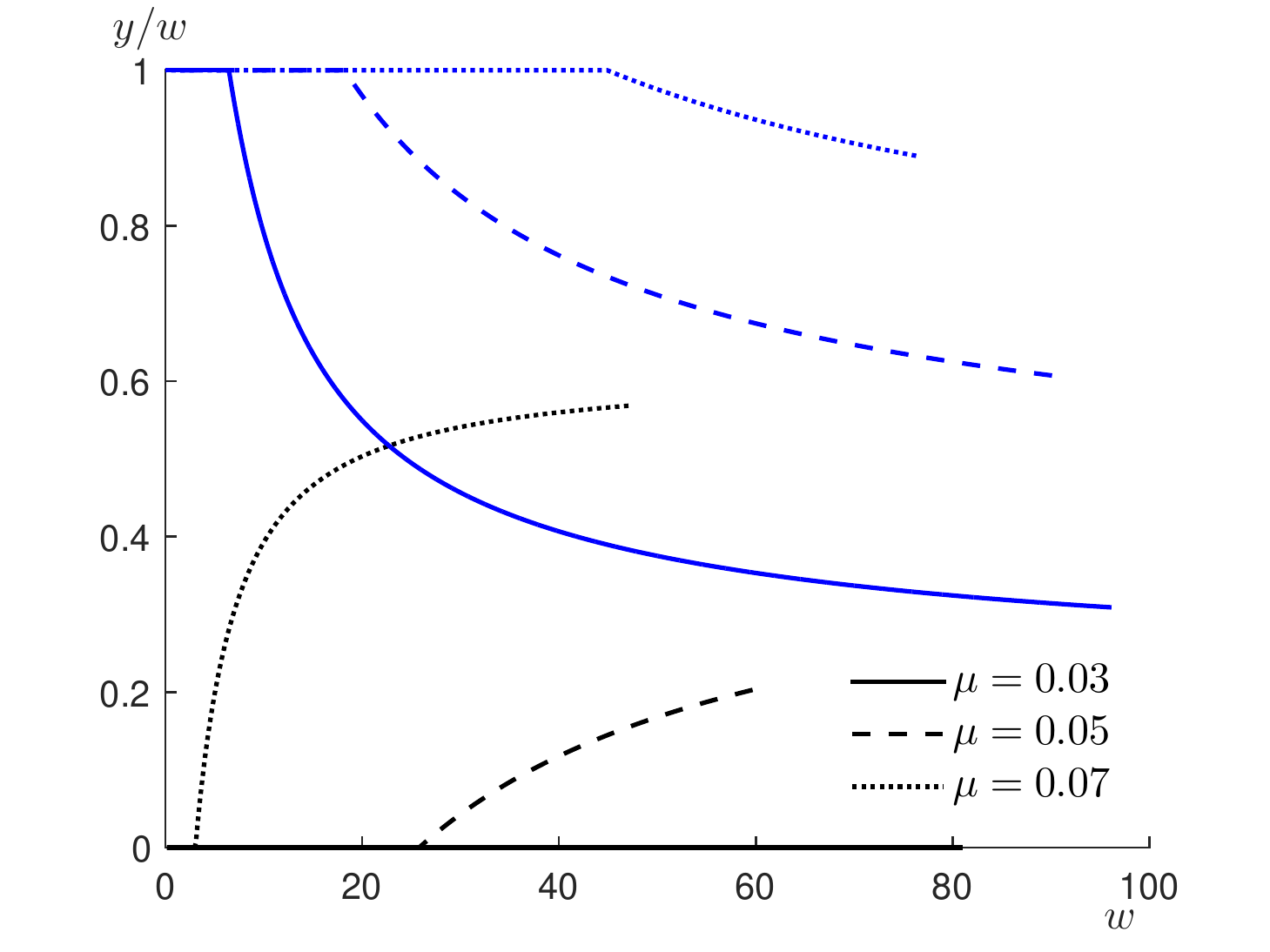} & \hspace{-.5cm}
\includegraphics[width=0.46\textwidth]{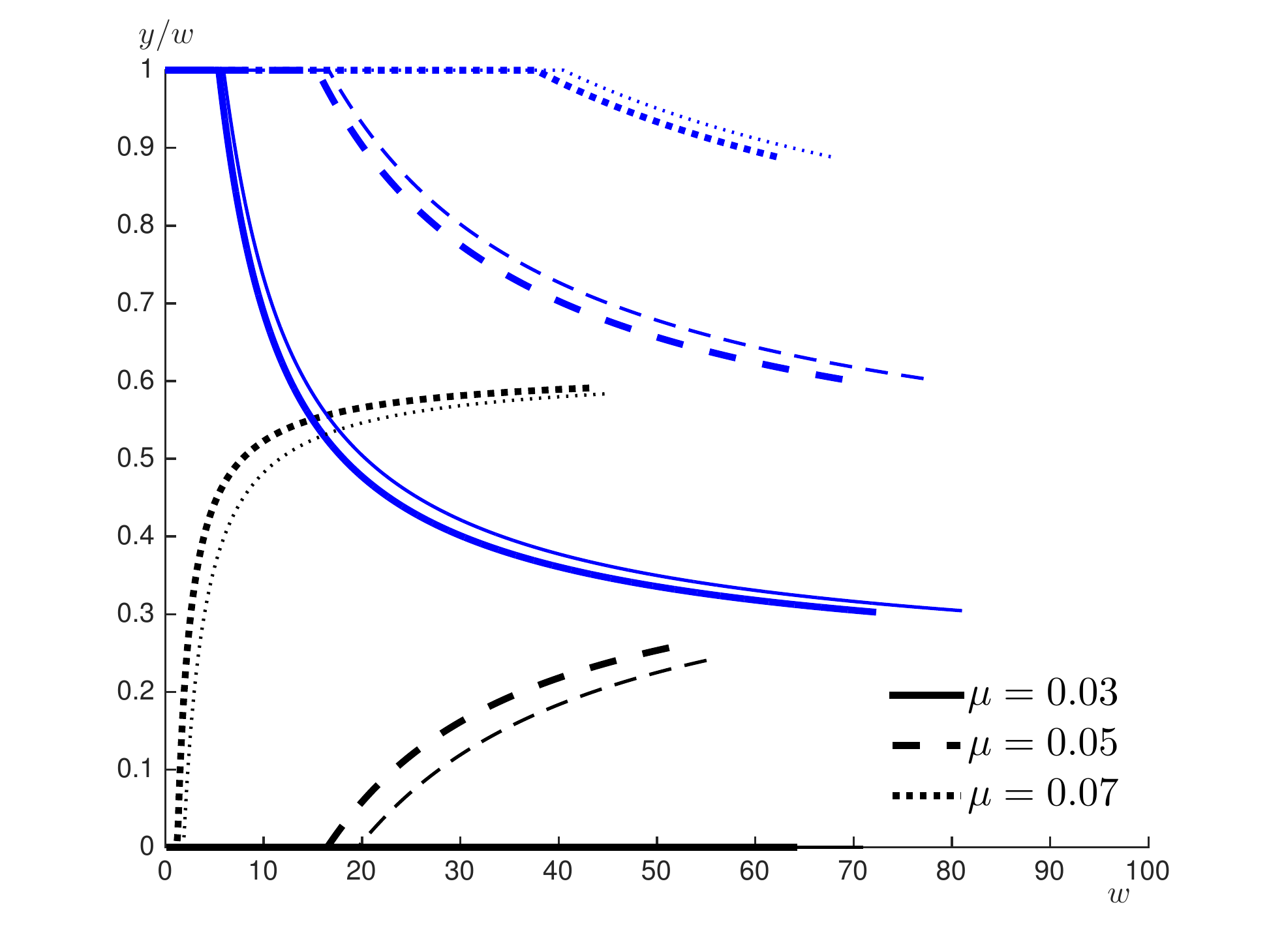} \\
{\small (i) $\delta_{D}=0, z=0$} & {\small (ii) $z=0$, $\delta_{D}=5\%$ for thick line; $\delta_{D}=3\%$ for thin line} \\
\includegraphics[width=0.46\textwidth]{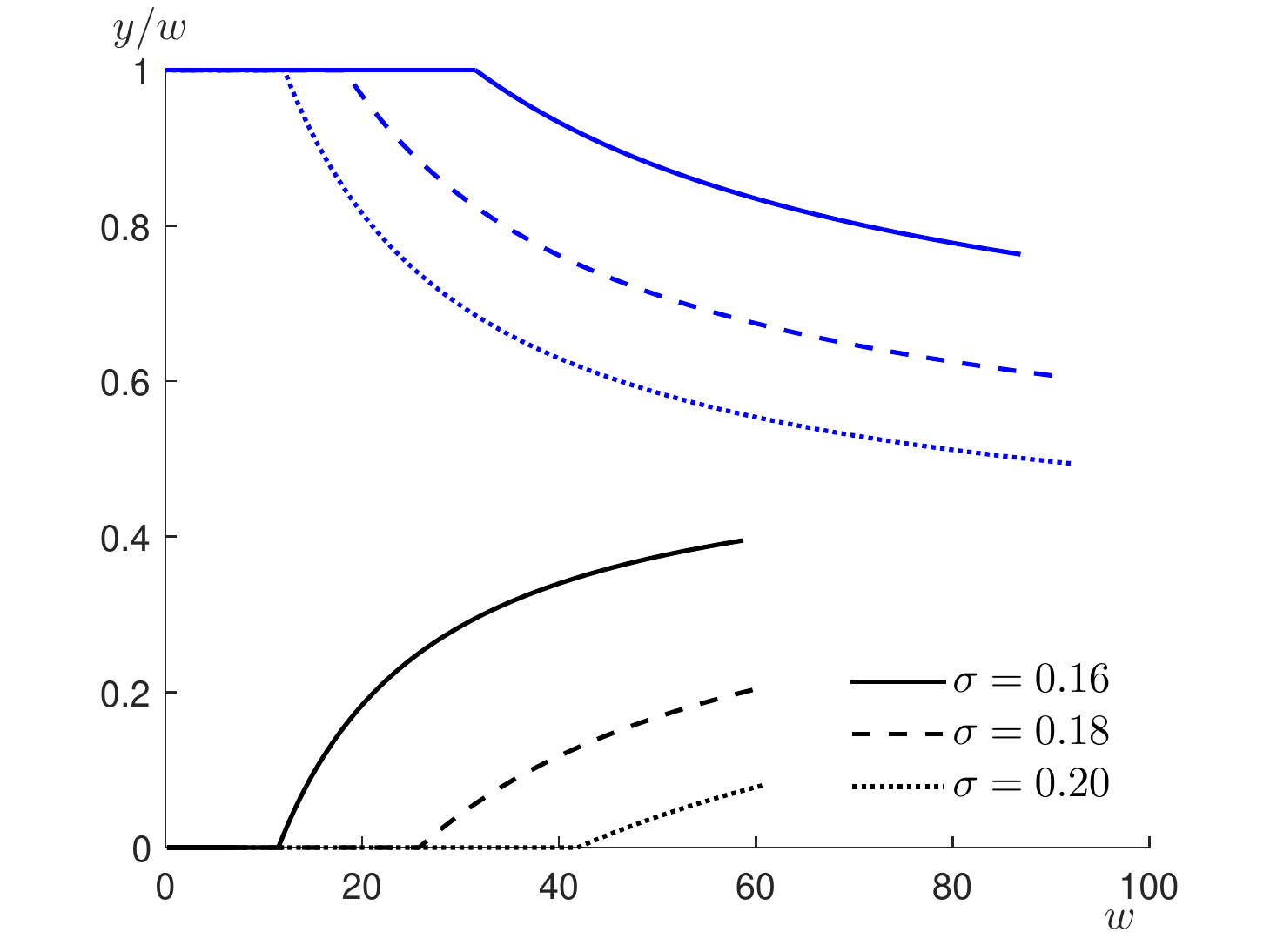} & \hspace{-.5cm}
\includegraphics[width=0.46\textwidth]{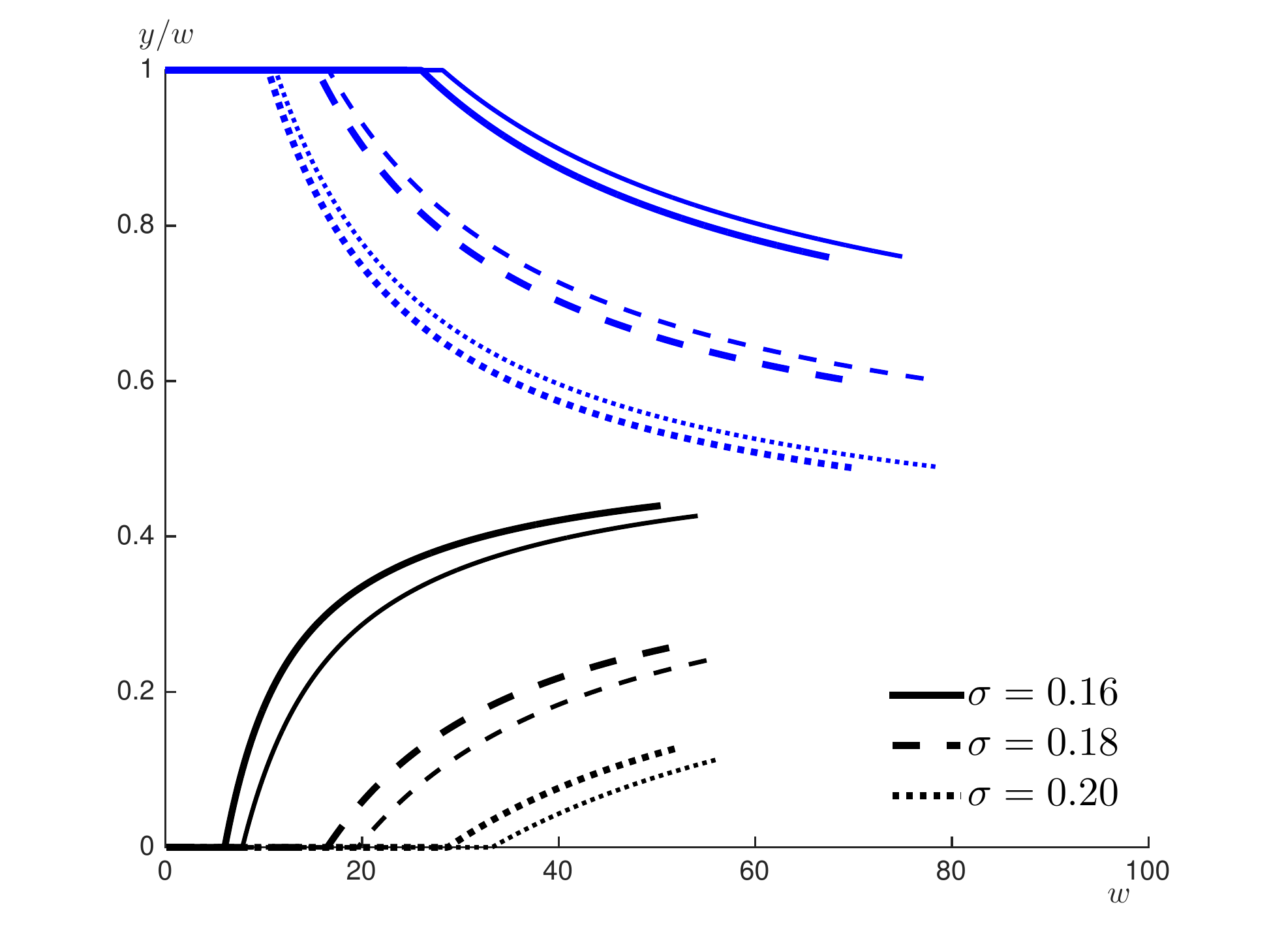} \\
{\small (iii) $\delta_{D}=0, z=0$} & {\small (iv) $z=0$, $\delta_{D}=5\%$ for thick line; $\delta_{D}=3\%$ for thin line}
\end{tabular}
\caption[Sensitivity analysis of portfolio share with respect to investment opportunity set]{\textbf{Sensitivity analysis of portfolio share with respect to investment opportunity set.}  The blue lines (upper three lines) and the
black lines (lower three lines) represent DL (\cite{DL10}) and our model, respectively. The left panels denote the case without downward  jumps in labor income, whereas the right panels stand for the case with downward jumps in labor income. Basic parameters are chosen from Table \ref{coint_table_summary}. }
\label{coint_portfolio share_investment_opportunity}
\end{figure}

\begin{figure}[H]
\centering
\begin{tabular}{cc}
\includegraphics[width=0.46\textwidth]{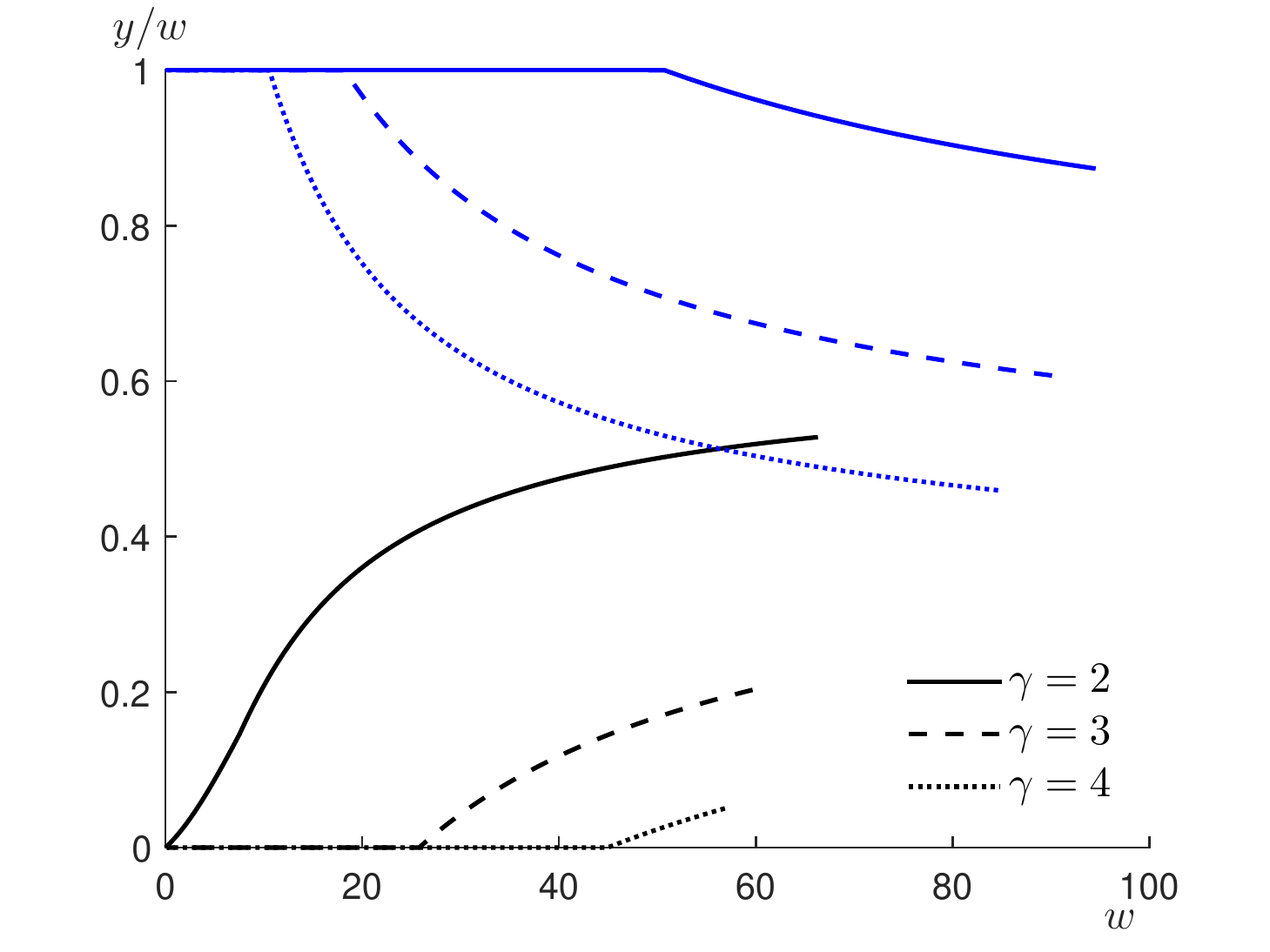} & \hspace{-.5cm}
\includegraphics[width=0.46\textwidth]{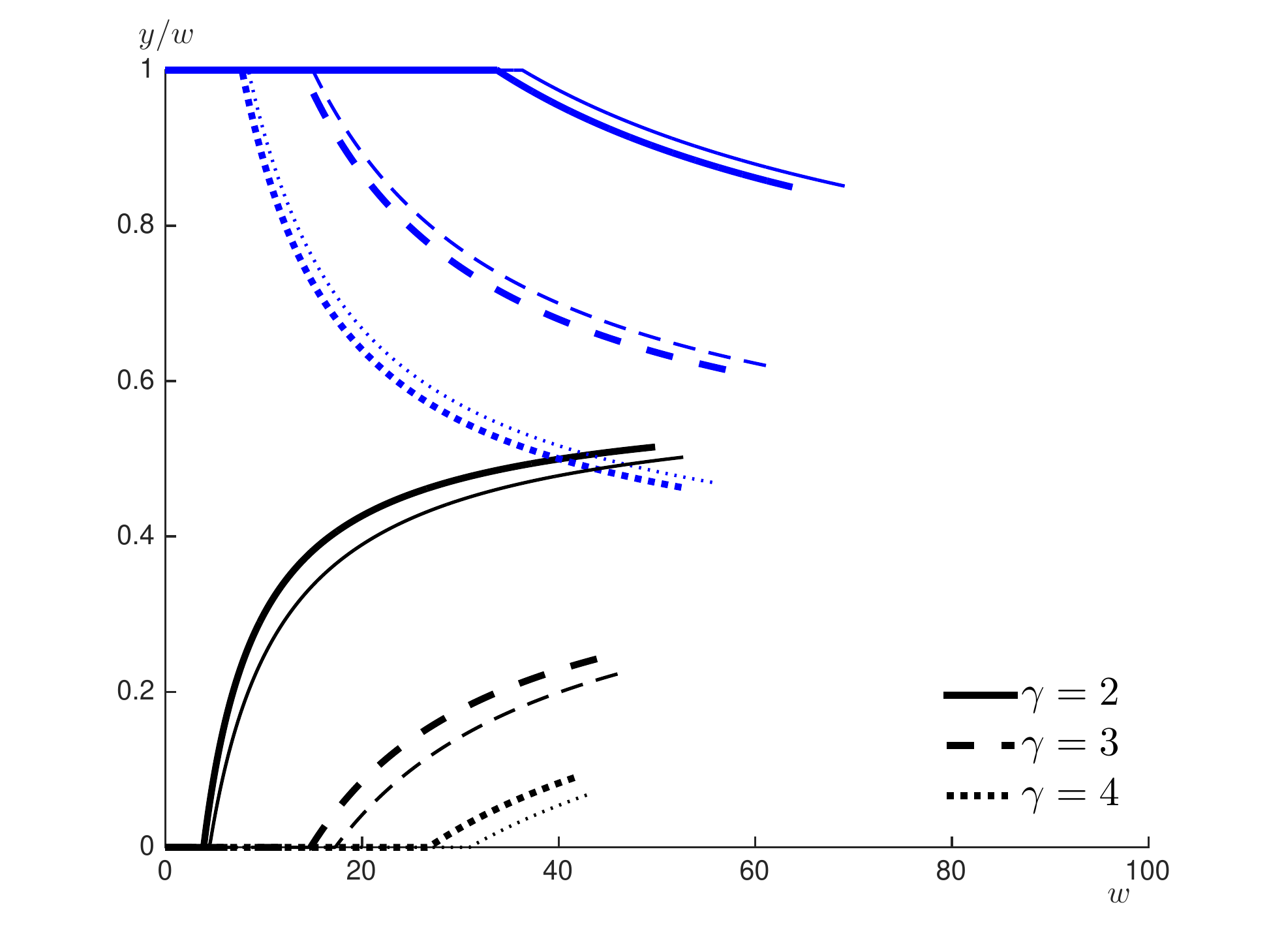} \\
{\small (i) $\delta_{D}=0, z=0$} & {\small (ii) $z=0$, $\delta_{D}=5\%$ for thick line; $\delta_{D}=3\%$ for thin line}
\end{tabular}
\caption[Sensitivity analysis of portfolio share with respect to risk aversion $\gamma$]{\textbf{Sensitivity analysis of portfolio share with respect to risk aversion $\gamma$.}  The blue lines (upper three lines) and the
black lines (lower three lines) represent DL (\cite{DL10}) and our model, respectively. The left panel denotes 
the case without downward jumps in labor income, whereas the right panel stands for the case with downward jumps in labor income. Basic parameters are chosen from Table \ref{coint_table_summary}. }
\label{coint_portfolio share_gamma}
\end{figure}

\noindent\textbf{Changes in volatility on income growth}

When an investor trades a risky asset, he  can achieve partial hedging effects against uninsurable
human capital risk, especially in the presence of cointegration. Figure \ref{coint_portfolio share_income_volatility}
shows that while in DL (or without cointegration) a riskier stream of future labor income (or higher $\sigma_{I}$) lowers the 
portfolio share, which is an accurate reflection of risk 
diversification,\footnote{Theoretical or empirical evidence for pointing out the effect of labor income 
risks has been provided by \cite{BMS92}, \cite{K93}, \cite{K98}, \cite{V01}, \cite{CGM05}, \cite{GM05}, \cite{BDG07}, \cite{WY10}, \cite{MS10}, \cite{LT11}, and \cite{CS14}.} in our model (or with cointegration) the result is reversed. Further, when labor income risk is greater (or
higher $\sigma_{I}$), the target wealth-to-income threshold will be more decreased and consequently, 
the investor participates in the stock market earlier than scheduled. In addition to the standard Merton 
mean-variance rule, the investor has additional hedging demand against uninsurable labor income risk, 
because his labor income is likely to be highly correlated with the stock market in the long run.

\begin{figure}[H]
\centering
\begin{tabular}{cc}
\includegraphics[width=0.46\textwidth]{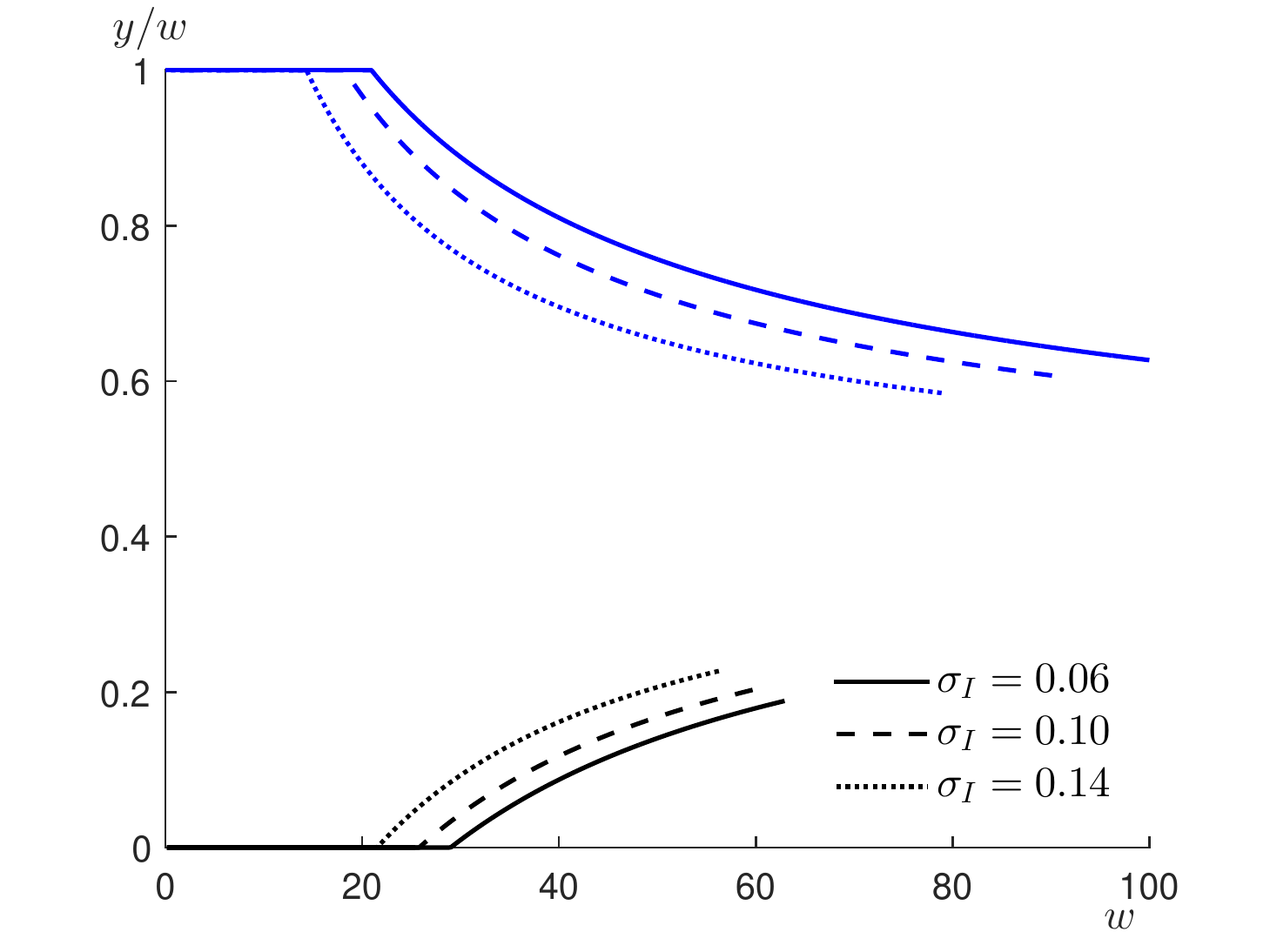} & \hspace{-.5cm}
\includegraphics[width=0.46\textwidth]{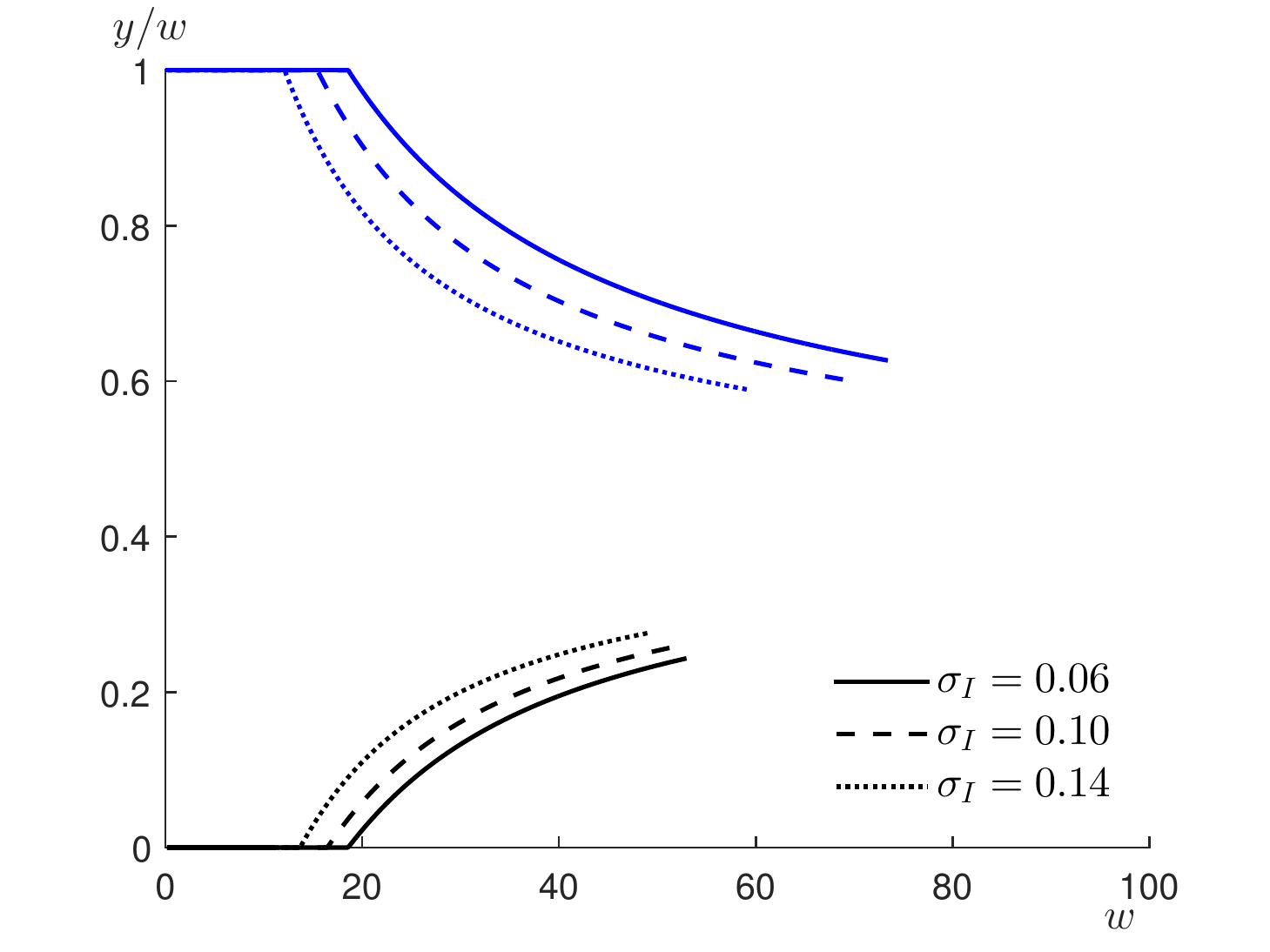} \\
{\small (i) $\delta_{D}=0, z=0$} & {\small (ii) $\delta_{D}=5\%, z=0$}
\end{tabular}
\caption[Sensitivity analysis of portfolio share with respect to  income
growth volatility $\sigma_{I}$]{\textbf{Sensitivity analysis of portfolio share with respect to  income growth volatility $\sigma_{I}$.}  The blue lines (upper three lines) and the
black lines (lower three lines) represent DL (\cite{DL10}) and our model, respectively. The left panel denotes the case without downward jumps in labor income, whereas the right panel  stands for the case with downward jumps in labor income. Basic parameters are chosen from Table \ref{coint_table_summary}. }
\label{coint_portfolio share_income_volatility}
\end{figure}

\noindent\textbf{Changes in degree of mean reversion}

Let's look into further details of the effects of cointegration on portfolio share. Due to the 
mean-reverting process $z_t$ that represents cointegration, when the difference
between current income and its long-run mean is positive, income will fall, whereas when the 
difference is negative, income will increase in the long term. A larger degree of mean reversion 
(or higher $\alpha$) is equivalent to a larger difference between current income and its long-run
mean and hence, it results in more savings in the form of riskless assets to finance future consumption
in anticipation of larger income fluctuations, \text{ceteris paribus} (Figure 
\ref{coint_portfolio share_alpha}). This result can be witness to the fact that an investor delays his 
optimal stock market participation when labor income is more transitory (or higher $\alpha$).

\begin{figure}[H]
\centering
\begin{tabular}{cc}
\includegraphics[width=0.46\textwidth]{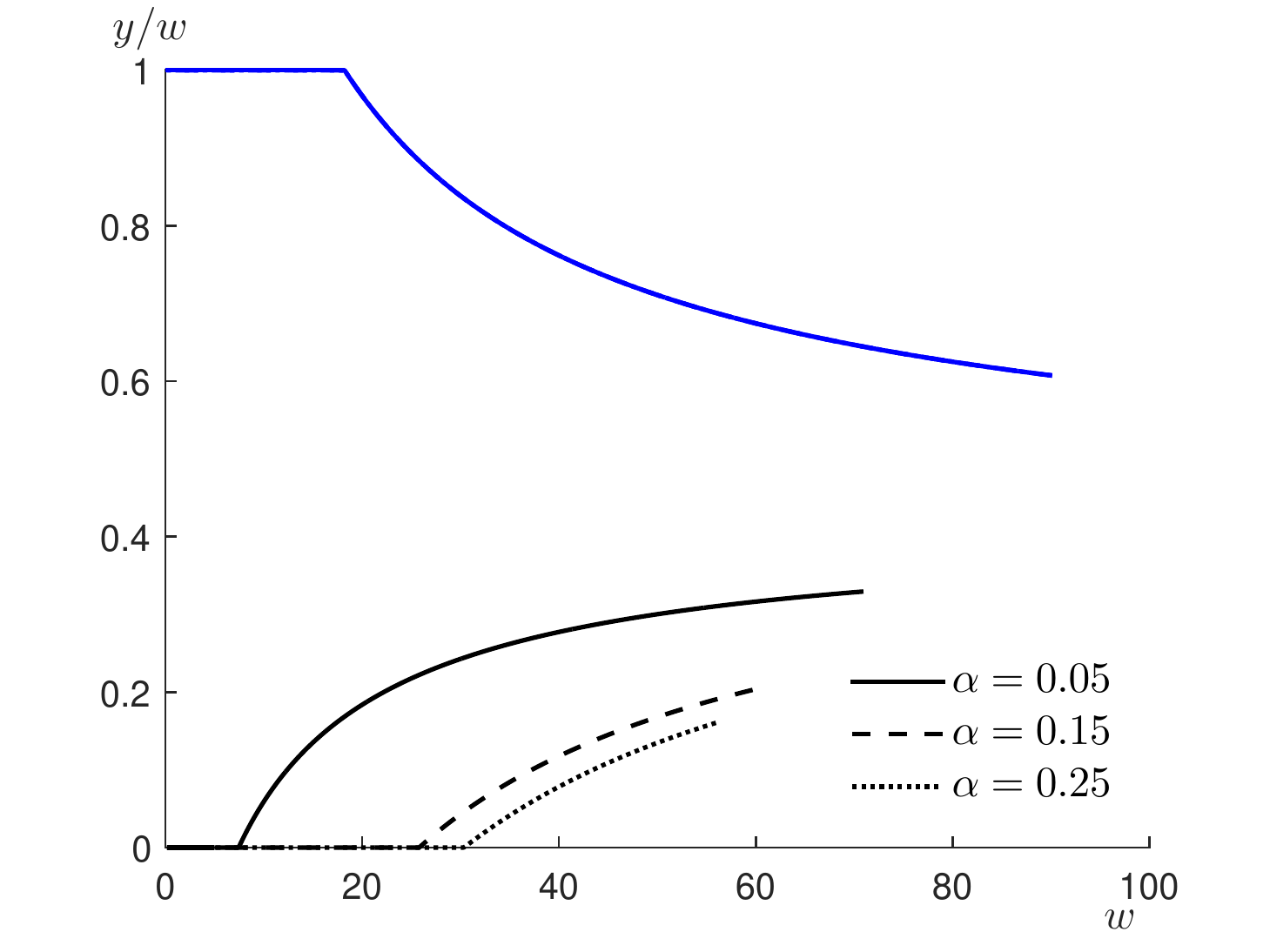} & \hspace{-.5cm}
\includegraphics[width=0.46\textwidth]{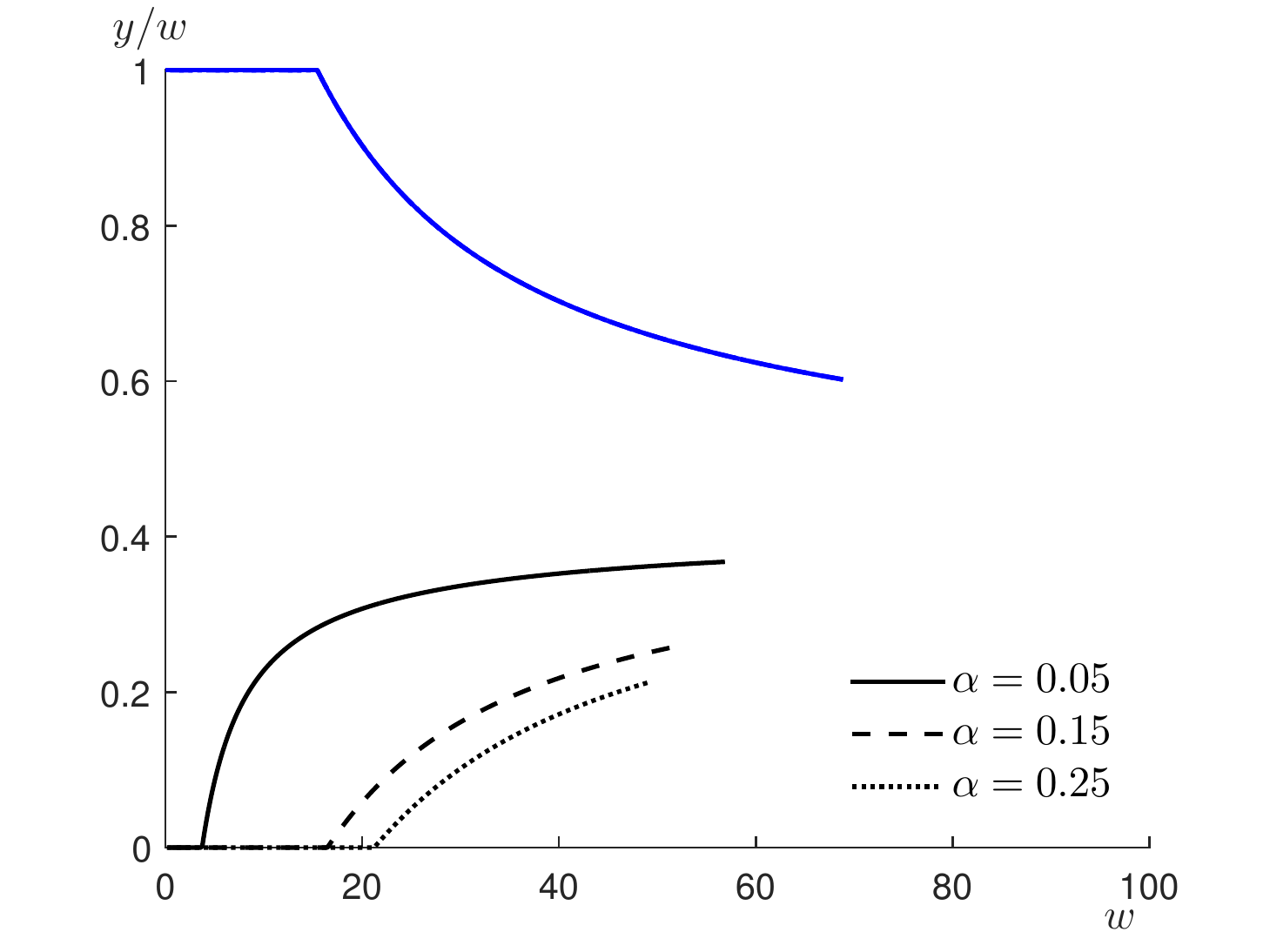} \\
{\small (i) $\delta_{D}=0, z=0$} & {\small (ii) $\delta_{D}=5\%, z=0$}
\end{tabular}
\caption[Sensitivity analysis of portfolio share with respect to mean reversion $\alpha$]{\textbf{Sensitivity analysis of portfolio share with respect to mean reversion $\alpha$.} The blue line (upper line) and the black lines (lower three lines) represent DL (\cite{DL10}) and our model, respectively. The left panel denotes the case without downward jumps in labor income, whereas the right panel stands for the case with downward jumps in labor income.  Basic parameters are chosen from Table \ref{coint_table_summary}. }
\label{coint_portfolio share_alpha}
\end{figure}

\noindent\textbf{Effects of retirement flexibility}

Consistent with \cite{FP07} and \cite{DL10}, we also consider the labor supply along the extensive margin, i.e.,
the flexibility in choosing the irreversible optimal time of retirement.\footnote{\cite{BMS92} and  \cite{LN02} explore some implications of flexible labor 
supply in a continuous fashion on life-cycle strategies. However, working-hours are, if nothing 
else, not flexible, rather irreversible labor supply is consistent with empirical evidence.}
Retirement flexibility has turned out to be a crucial element when deciding optimal portfolio 
choice in the sense that it increases equity holdings when the correlation between labor income shocks and stock returns is set to be very low, consistent with the data. The intuition behind this is that retirement flexibility can
be liable for making labor income's beta negative when labor income and stock returns are not
highly correlated and subsequently, leads to more investment in stocks due to its effective
hedging role against labor income risks (left panel in Figure \ref{coint_portfolio share_retirement_flexibility}). 
However, in the presence of cointegration, i.e., with long-run dependence between the labor and 
stock markets, the result can be reversed. We find that retirement flexibility pushes an investor's
portfolio to be geared toward relatively safe assets. The reason 
is that as a result of cointegration, that is, when returns to human capital vary significantly
with market returns, retirement flexibility strengthens such cointegration effects and 
makes labor income's beta more positive rather than negative. On account of cointegration, a
more conservative investment policy shows up with retirement flexibility than without
it (right panel in Figure \ref{coint_portfolio share_retirement_flexibility}).

\begin{figure}[H]
\centering
\begin{tabular}{cc}
\includegraphics[width=0.46\textwidth]{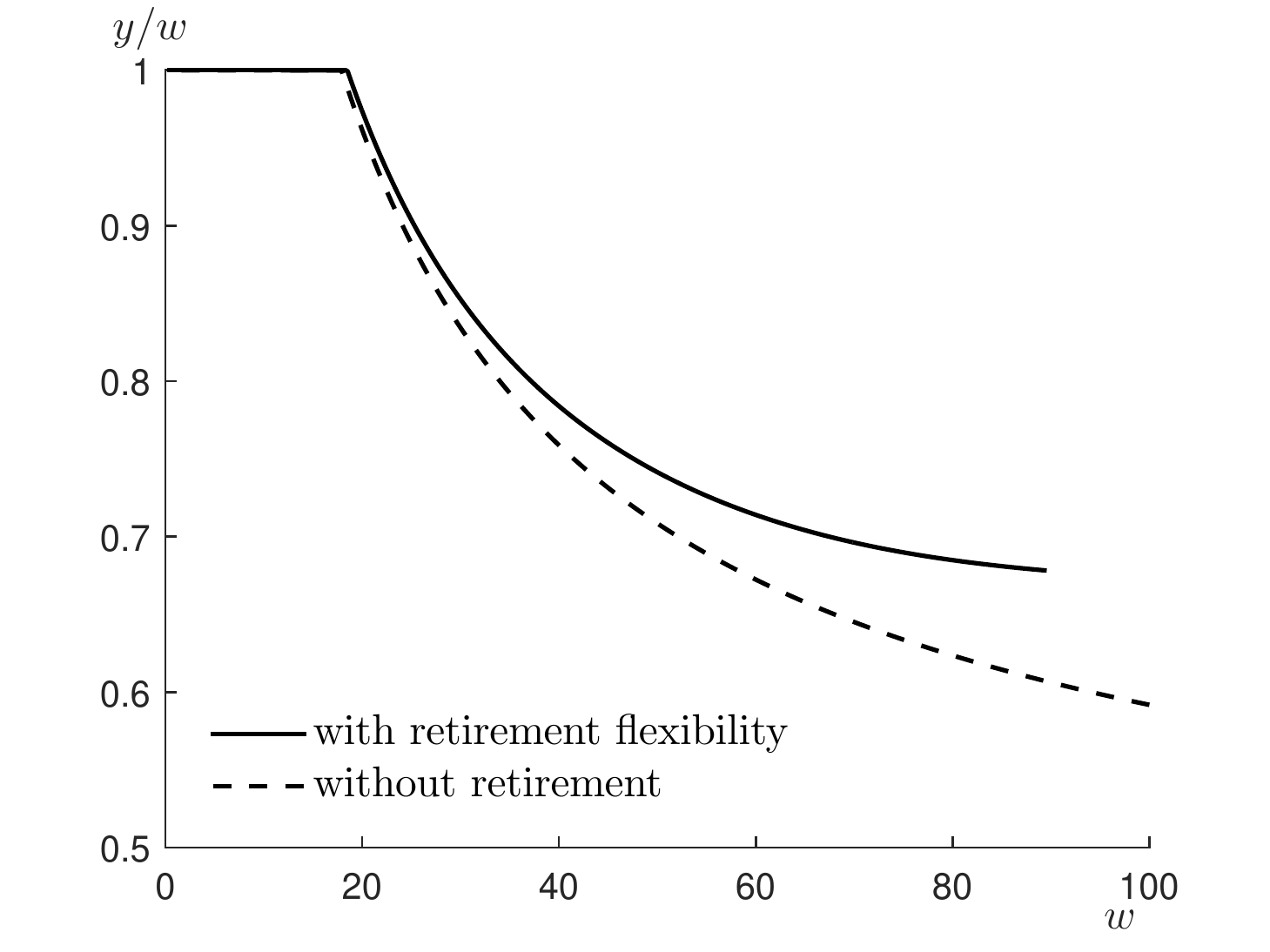} & \hspace{-.5cm}
\includegraphics[width=0.46\textwidth]{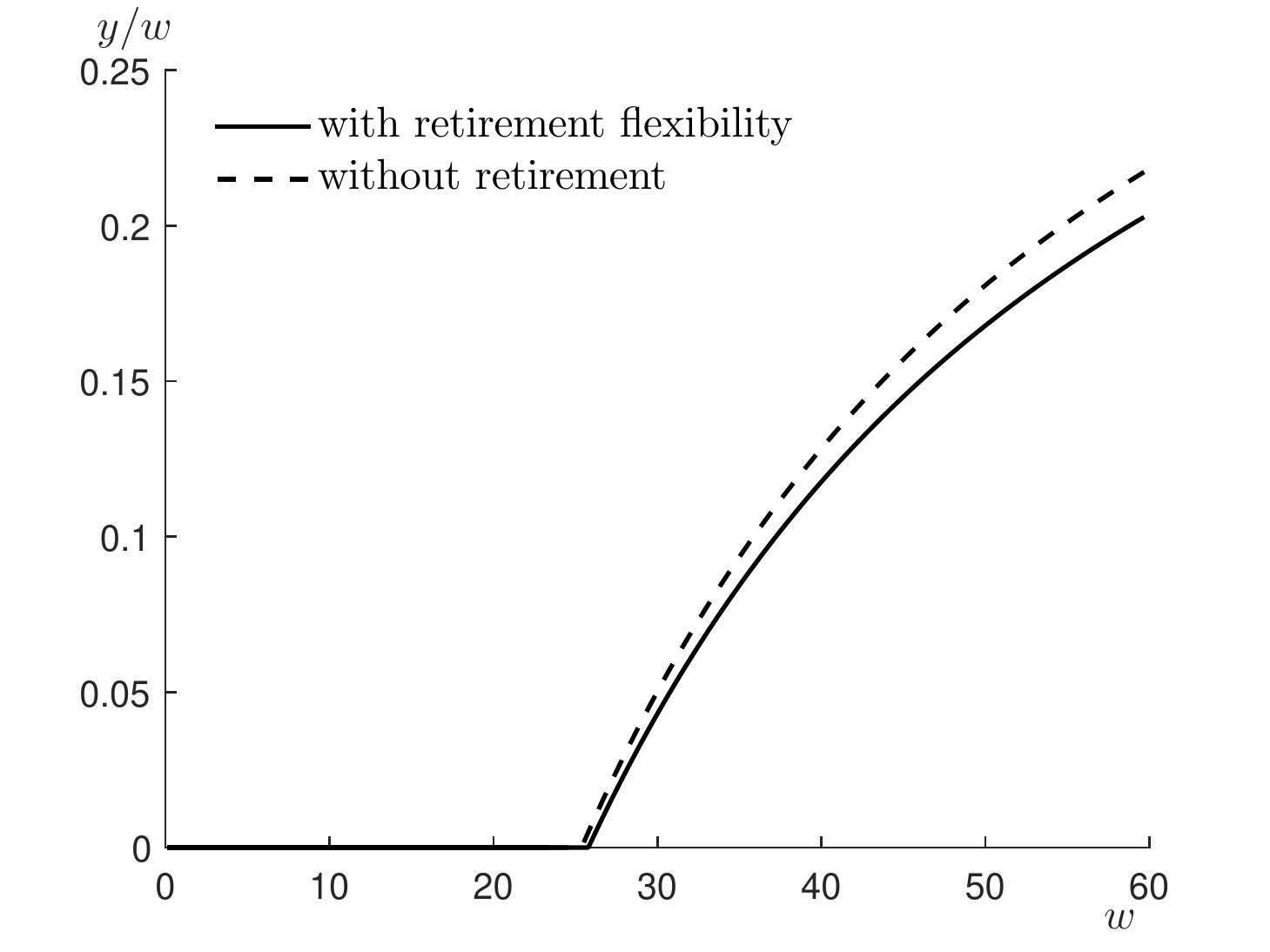} \\
{\small (i) DL (without cointegration)} & {\small (ii) Our model (with conintegration)}
\end{tabular}
\caption[Effects of retirement flexibility on portfolio share]{\textbf{Effects of retirement flexibility on portfolio share.} The solid line and the dotted line 
represent the cases with and without retirement flexibility, respectively. The left panel denotes the case 
without cointegration, whereas the right panel stands for the case with cointegration. 
To reflect empirical observations in the data, the contemporaneous correlation $\rho$ in DL (\cite{DL10}) between the stock and labor markets is zero. Initial value $z$ of 
additional state variable representing cointegration between the stock and labor markets is assumed to be zero. Without cointegration retirement flexibility increases the portfolio share, whereas with 
cointegration it decreases the portfolio share. Basic parameters are chosen from Table \ref{coint_table_summary}. }
\label{coint_portfolio share_retirement_flexibility}
\end{figure}

\subsection{Early Retirement}
Most of the analysis in this subsection is focused on providing economic justification for voluntary
retirement. Early retirement is quite reasonable and even numerically plausible, especially when
wages are expected to fall in the long run. The numerical result is related to empirical evidence
in Issues in Labor Statistics (2000), \cite{GS02}, and \cite{GST10} that early retirement was quite possible between 1995 and 2000 during
which the U.S. economy experienced a stock market boom and a quick rise in the stock market
returns.
In order to dig up economic plausibility of early retirement, we introduce an economic notion
of implicit value of human capital (\cite{K98}), which can be defined by the marginal rates of
substitution between labor income and financial wealth as follows:
$$
\frac{\partial{V(w,I,z)}}{\partial{I}}\Big/\frac{\partial{V(w,I,z)}}{\partial{w}}.
$$
That is, human capital's implicit value is an investor's subjective marginal value of his
future labor income and can become a proxy for the investor's early retirement demand. Once
the investor accumulates wealth, a higher (lower) implicit value implies that he  tends to work
more (less).

Consistent with empirical observation (\cite{CS97} and \cite{CGM05}), 
our model also generates 
the empirically plausible hump-shaped implicit value of human capital (Figure \ref{coint_implicit
value}). More importantly, cointegration results in an earlier peak point in the implicit value
compared to DL. Therefore, wealth at retirement is lower with cointegration
than without it.

\begin{figure}[H]
\centering
\begin{tabular}{cc}
\includegraphics[width=0.46\textwidth]{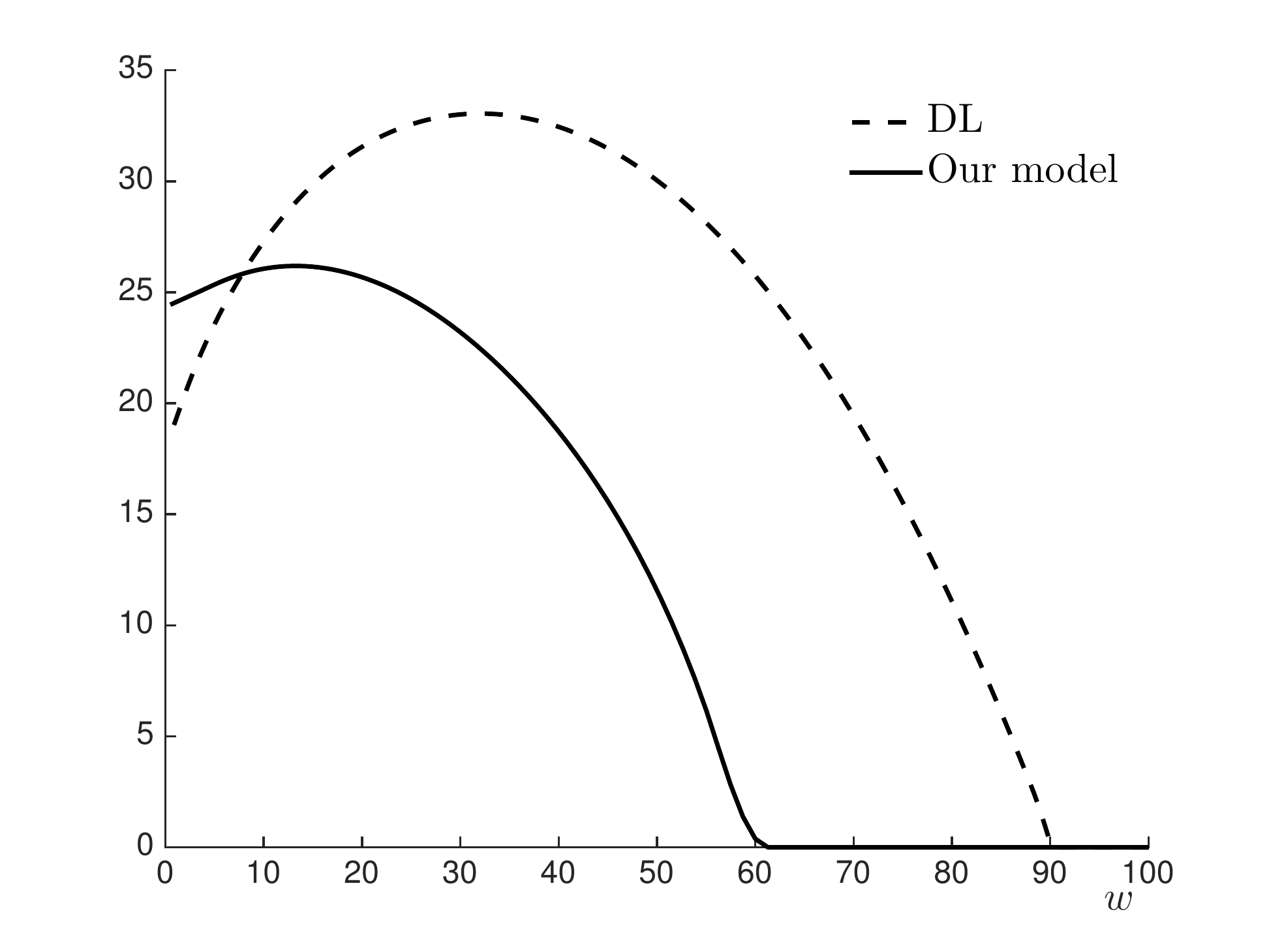} & \hspace{-.5cm}
\includegraphics[width=0.46\textwidth]{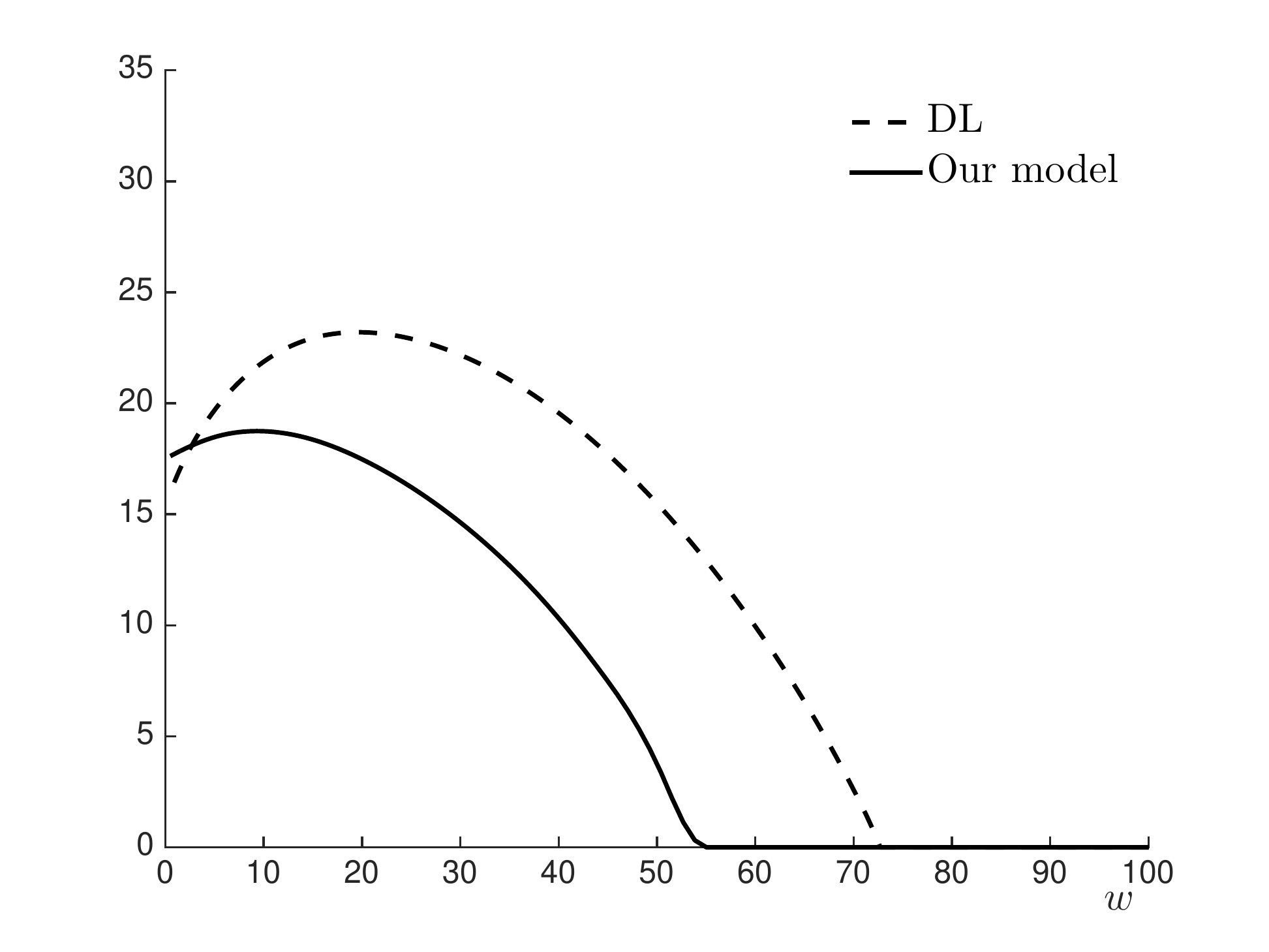} \\
{\small (i) $\delta_{D}=0, z=0$} & {\small (ii) $\delta_{D}=5\%, z=0$}
\end{tabular}
\caption[Implicit value of human capital as a function of wealth-to-income ratio]{\textbf{Implicit value of human capital as a function of wealth-to-income ratio.} The dotted line and the solid line represent DL (\cite{DL10}) result and the our result, respectively. The 
left panel denotes the case without downward jumps in labor income, whereas the right panel stands for the case with downward jumps in labor income.
Basic parameters are chosen from Table \ref{coint_table_summary}. }
\label{coint_implicit value}
\end{figure}

Consistent with \cite{FP07} and \cite{DL10}, our life-cycle model shows that there exists a certain wealth
threshold for voluntary retirement over which an investor is optimal to retire permanently.
However, the existing retirement literature including  \cite{FP07} and \cite{DL10} just concentrates on generating
the threshold instead of finding out the economic justification for optimal retirement. We
find that in the presence of cointegration, the wealth threshold for voluntary retirement
becomes smaller when labor income will be decreased than when it will be increased in the
long run (Figure \ref{coint_early retirement}). This is evidence that early retirement becomes
numerically plausible, especially when wages are expected to decline in the long term.

\begin{figure}[H]
\centering
\begin{tabular}{cc}
\includegraphics[width=0.46\textwidth]{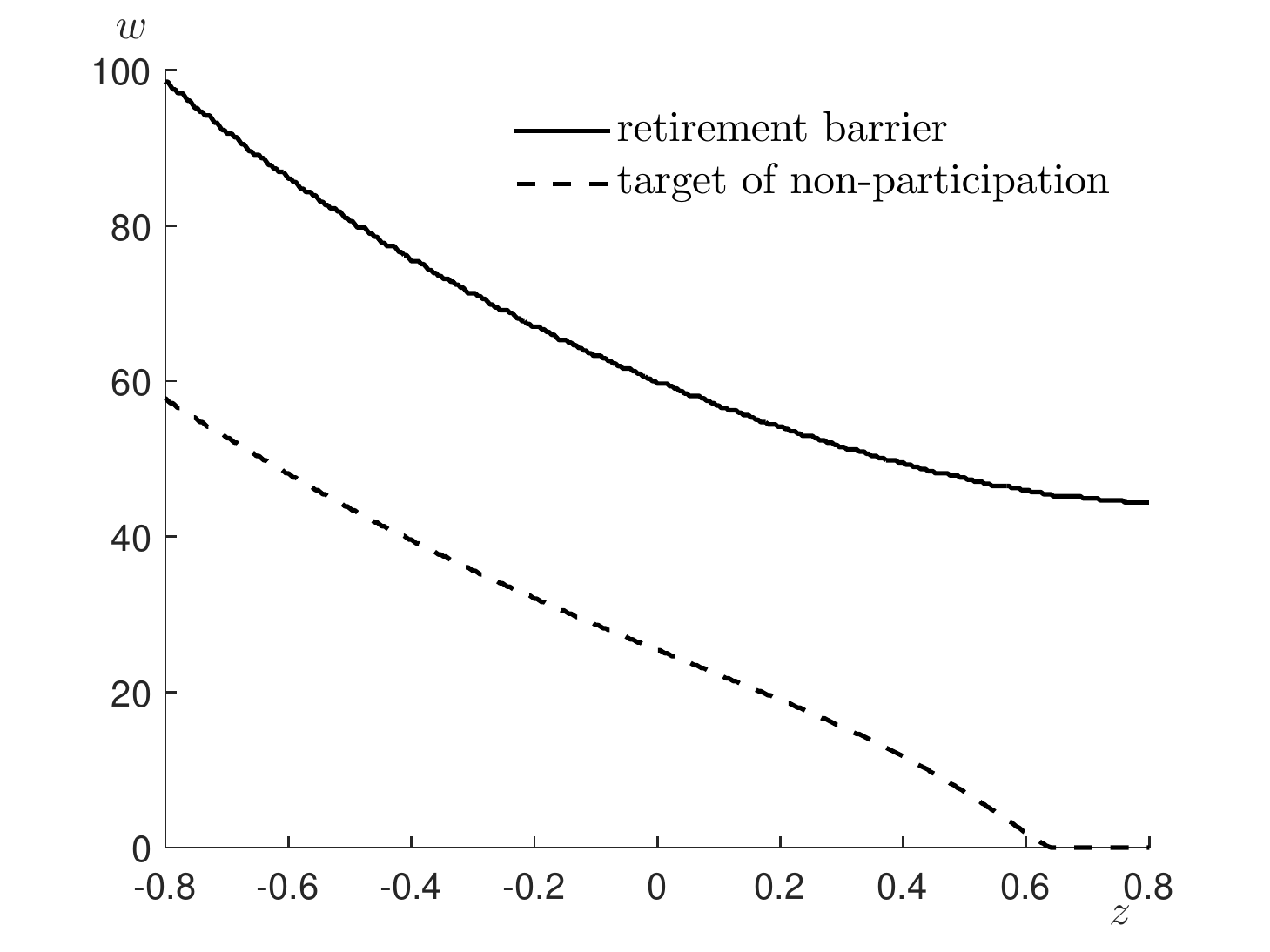} & \hspace{-.5cm}
\includegraphics[width=0.46\textwidth]{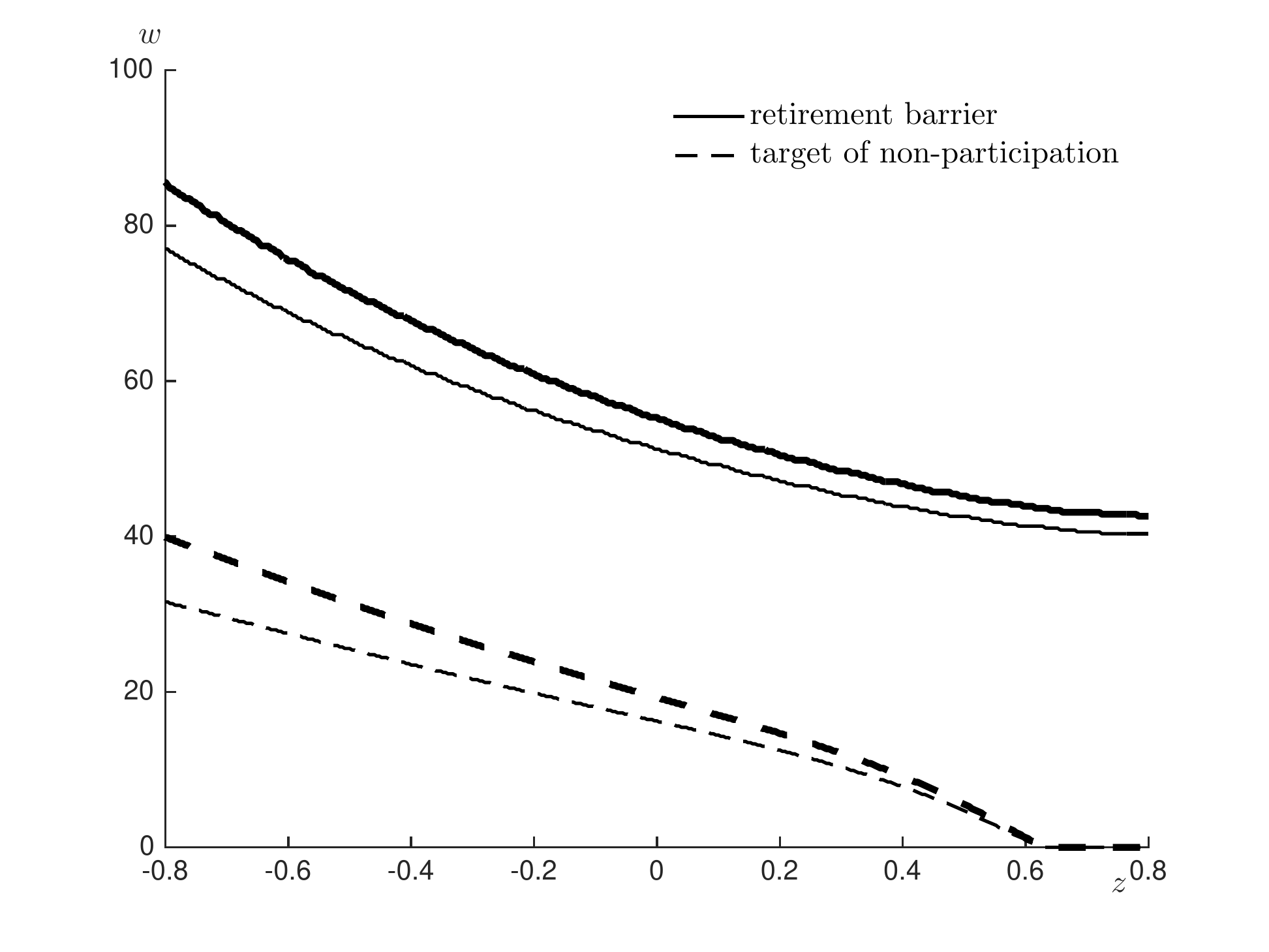} \\
{\small (i) $\delta_{D}=0$} & {\small (ii) $z=0$, $\delta_{D}=5\%$ for thick line; $\delta_{D}=3\%$ for thin line}
\end{tabular}
\caption[Sensitivity analysis of wealth threshold for voluntary retirement and target wealth-to-income ratio with respect to $z$]{\textbf{Sensitivity analysis of wealth threshold for voluntary retirement and target wealth-to-income ratio with respect to $z$.} The solid line and the dotted line represent the wealth threshold for retirement and the target
wealth-to-income ratio, respectively. The left panel denotes the case without downward jumps in labor income,
whereas the right panel stands for the case with those jumps in labor income. 
Basic parameters are chosen from Table \ref{coint_table_summary}. }
\label{coint_early retirement}
\end{figure}

A majority of analysis for early retirement runs much deeper than existing life-cycle literature. We analyze the retirement through comparative statics by changes in a range of fundamental parameters in the financial market.

\noindent{\textbf{Changes in investment opportunity and risk aversion}}

Better investment opportunity lowers the wealth threshold for voluntary retirement (Figure 
\ref{coint_early retirement_investment_opportunity}), which can be related to empirical observation 
that early retirement is quite possible during up markets.

\begin{figure}[H]
\centering
\begin{tabular}{cc}
\includegraphics[width=0.46\textwidth]{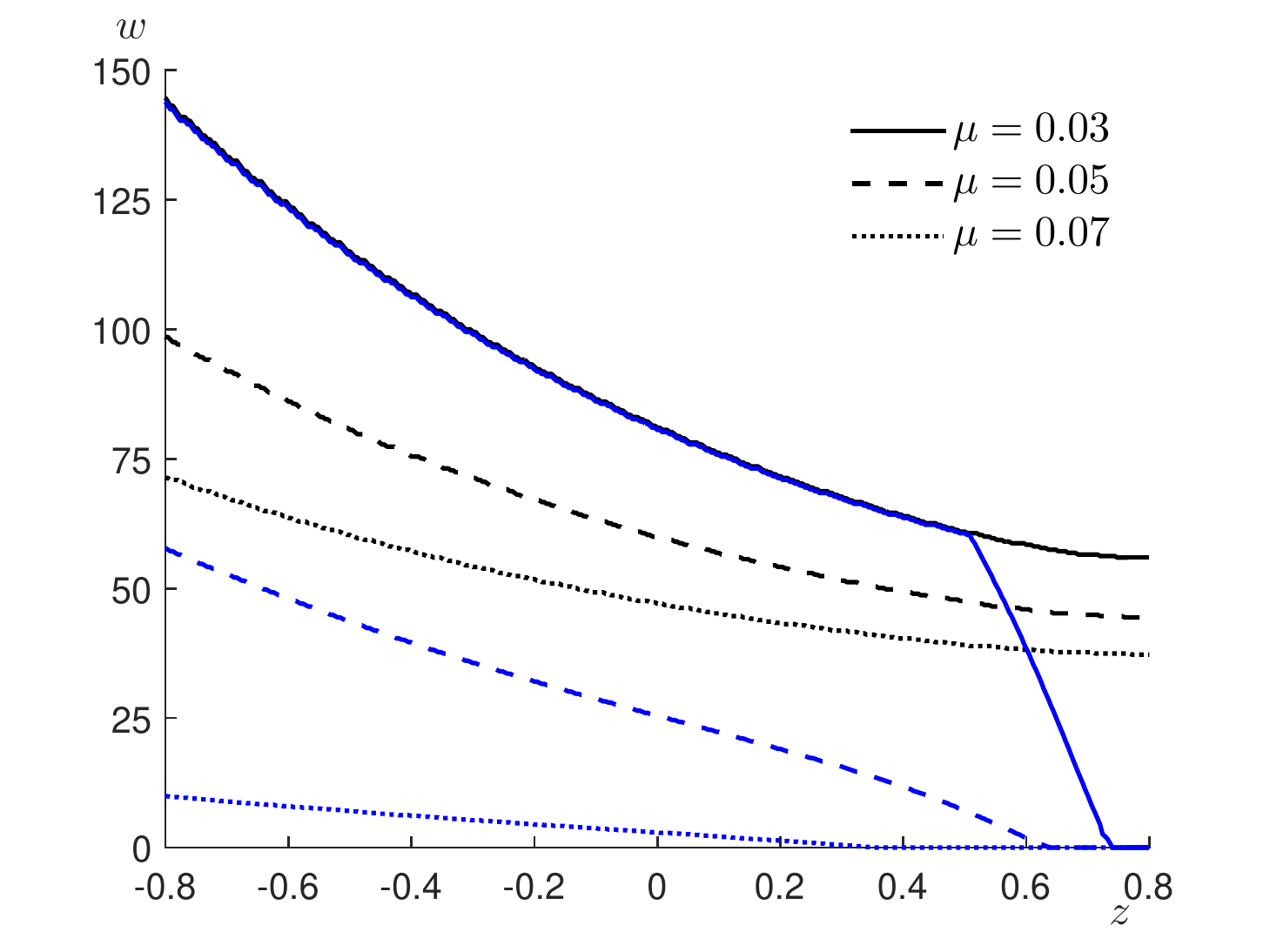} & \hspace{-.5cm}
\includegraphics[width=0.46\textwidth]{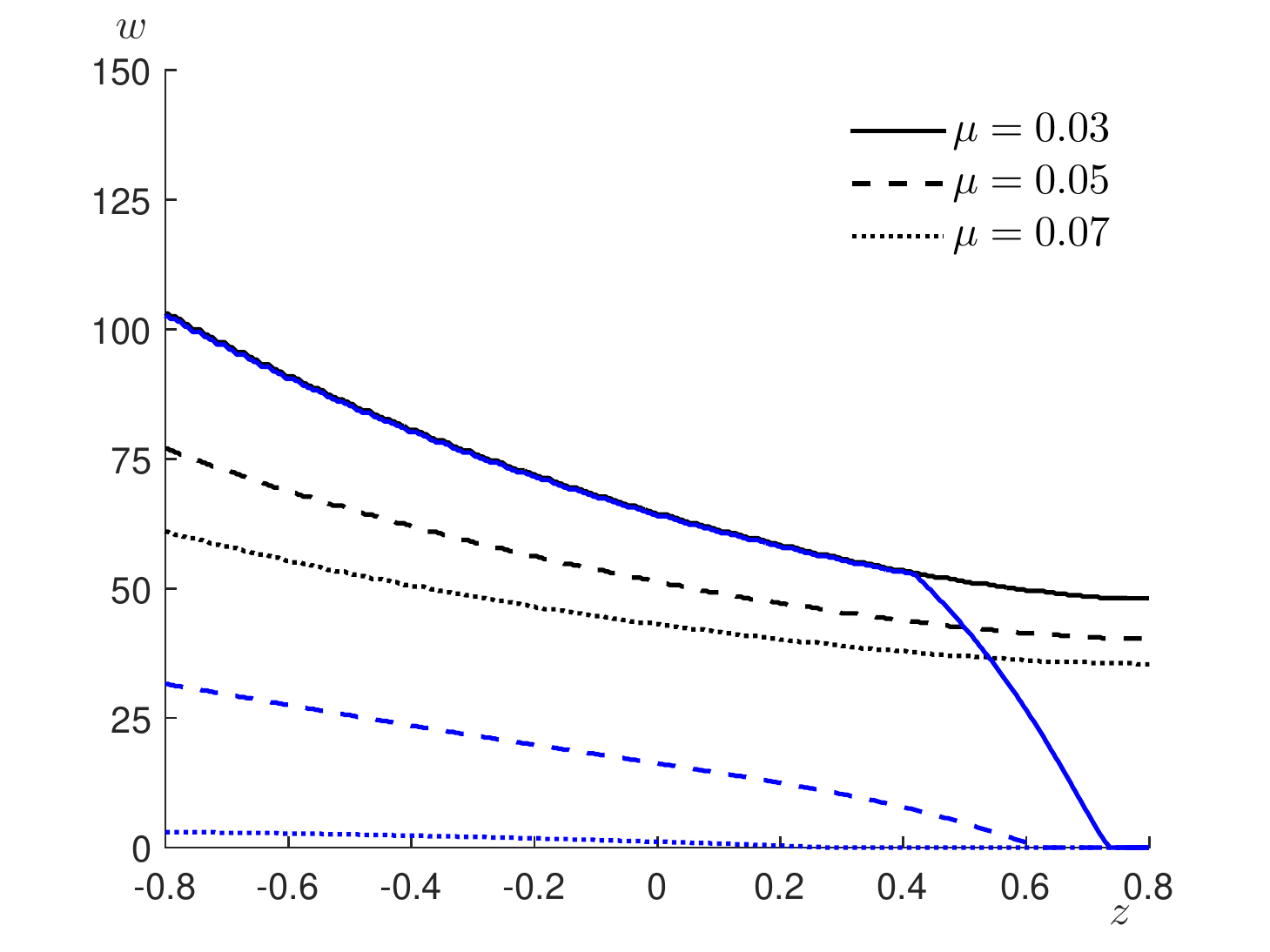} \\
{\small (i) $\delta_{D}=0$} & {\small (ii) $z=0$, $\delta_{D}=5\%$} \\
\includegraphics[width=0.46\textwidth]{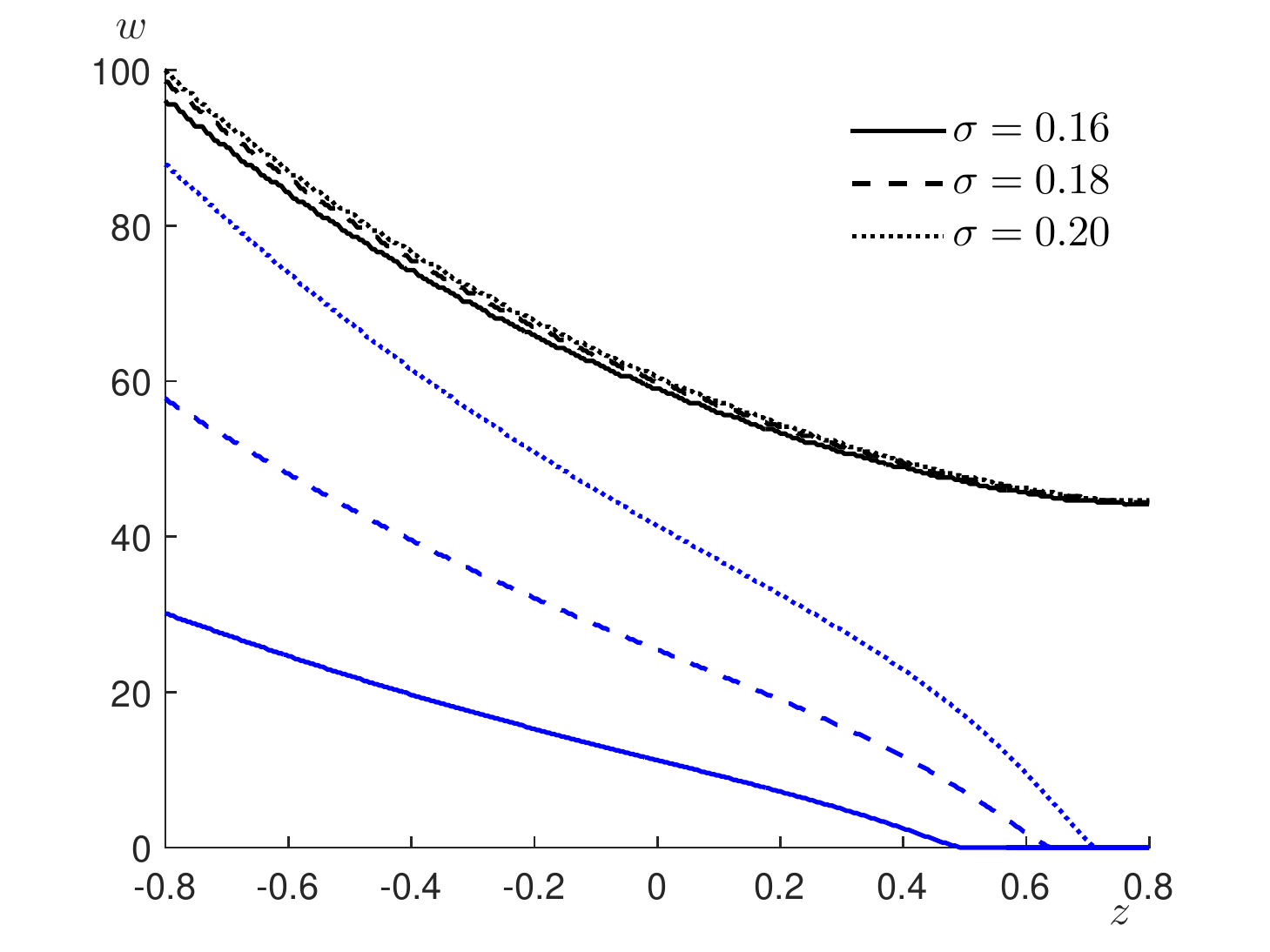} & \hspace{-.5cm}
\includegraphics[width=0.46\textwidth]{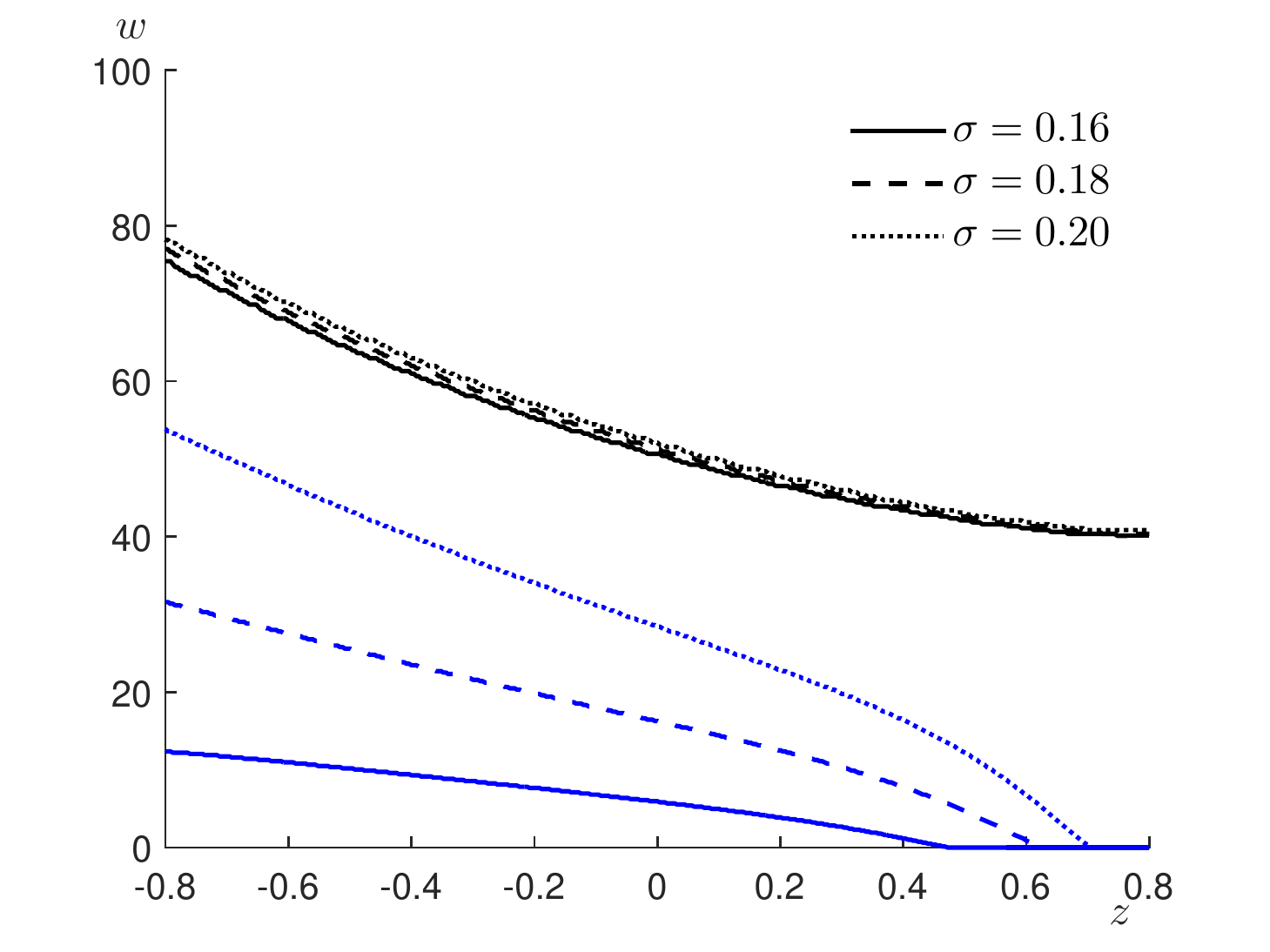} \\
{\small (iii) $\delta_{D}=0$} & {\small (iv) $\delta_{D}=5\%$}
\end{tabular}
\caption[Sensitivity analysis of wealth threshold for voluntary retirement with respect to opportunity set]{\textbf{Sensitivity analysis of wealth threshold for voluntary retirement with respect to opportunity set.} The black lines (upper three lines) and the blue lines (lower three lines) represent the wealth threshold for retirement and the target
wealth-to-income ratio, respectively. The left panels denote the case without downward jumps in labor 
income, whereas the right panels stand for the case with downward jumps in labor income. 
Basic parameters are chosen from Table \ref{coint_table_summary}. }
\label{coint_early retirement_investment_opportunity}
\end{figure}

Interestingly, higher risk aversion decreases the wealth threshold for voluntary retirement (Figure \ref{coint_early retirement_gamma}),
which seems to be conflicting as opposed to  \cite{FP07},  \cite{DL10}. As soon as the standard real option analysis is
applied,\footnote{\cite{HM07} argue that investment time of entrepreneurial
activities can be delayed as an agent becomes more prudent with a larger coefficient of risk aversion.} 
risk aversion is found to further delay optimal retirement. Existing studies such as \cite{FP07} and \cite{DL10} predict 
that risk aversion appears to raise up the wealth threshold for voluntary retirement when labor income risks are fully diversified.  Intuitively, the option to work is more valuable to investors who are more risk averse and they are apparently  trying to avoid the risk of losing such option value. 
However, our model predicts that this result can be reversed when income risks are not spanned by the
market. A natural intuition is that the risk averse investors require an additional
premium for holding the unspanned income risk. Working shorter becomes more
attractive, commanding a lower premium for the undiversifiable income risks. Further, higher risk aversion decreases the amount of future human capital in the presence of uninsurable income risks. Thus, the option for retiring earlier becomes progressively more attractive for more risk averse investors.

\begin{figure}[H]
\centering
\begin{tabular}{cc}
\includegraphics[width=0.46\textwidth]{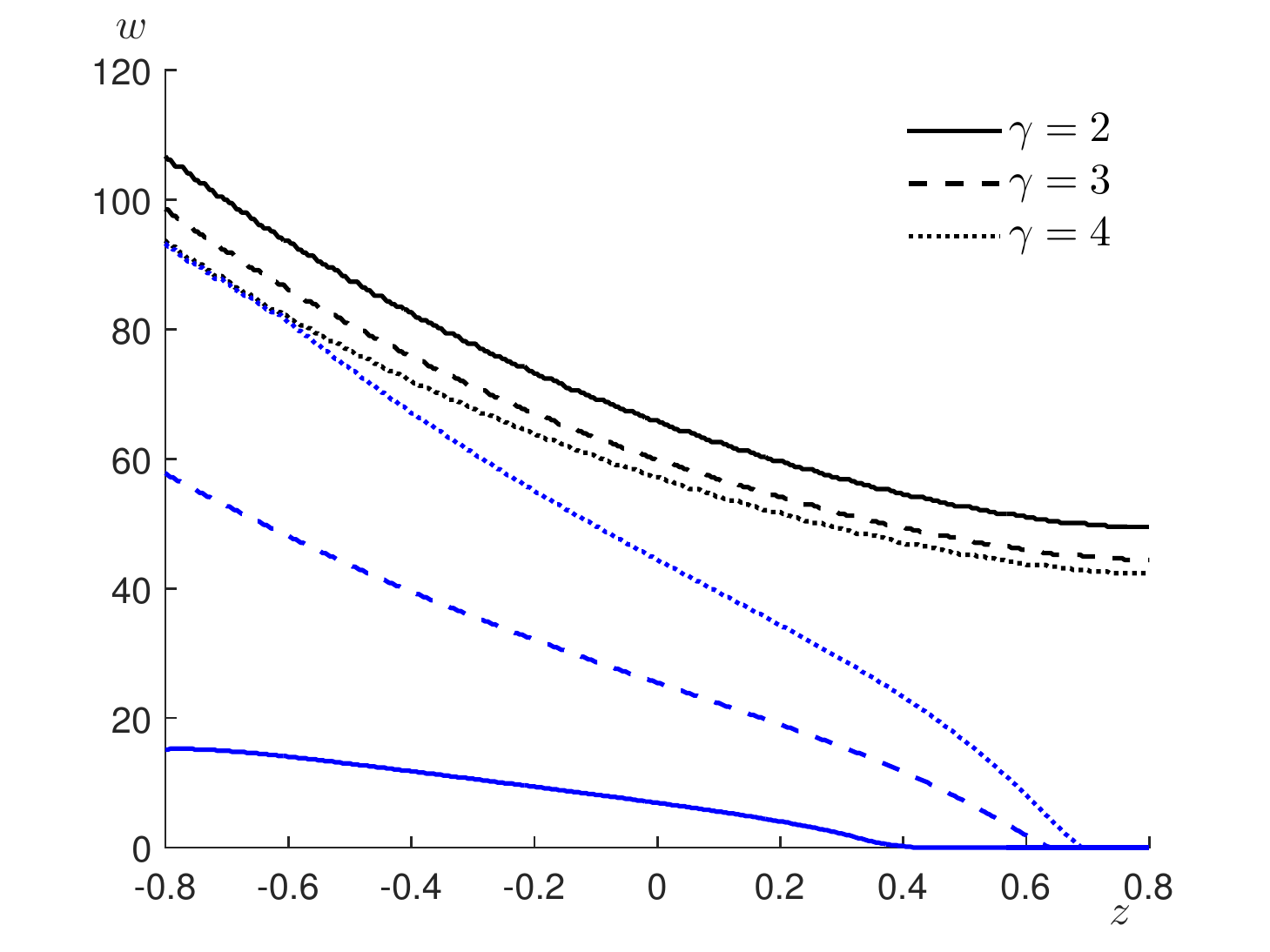} & \hspace{-.5cm}
\includegraphics[width=0.46\textwidth]{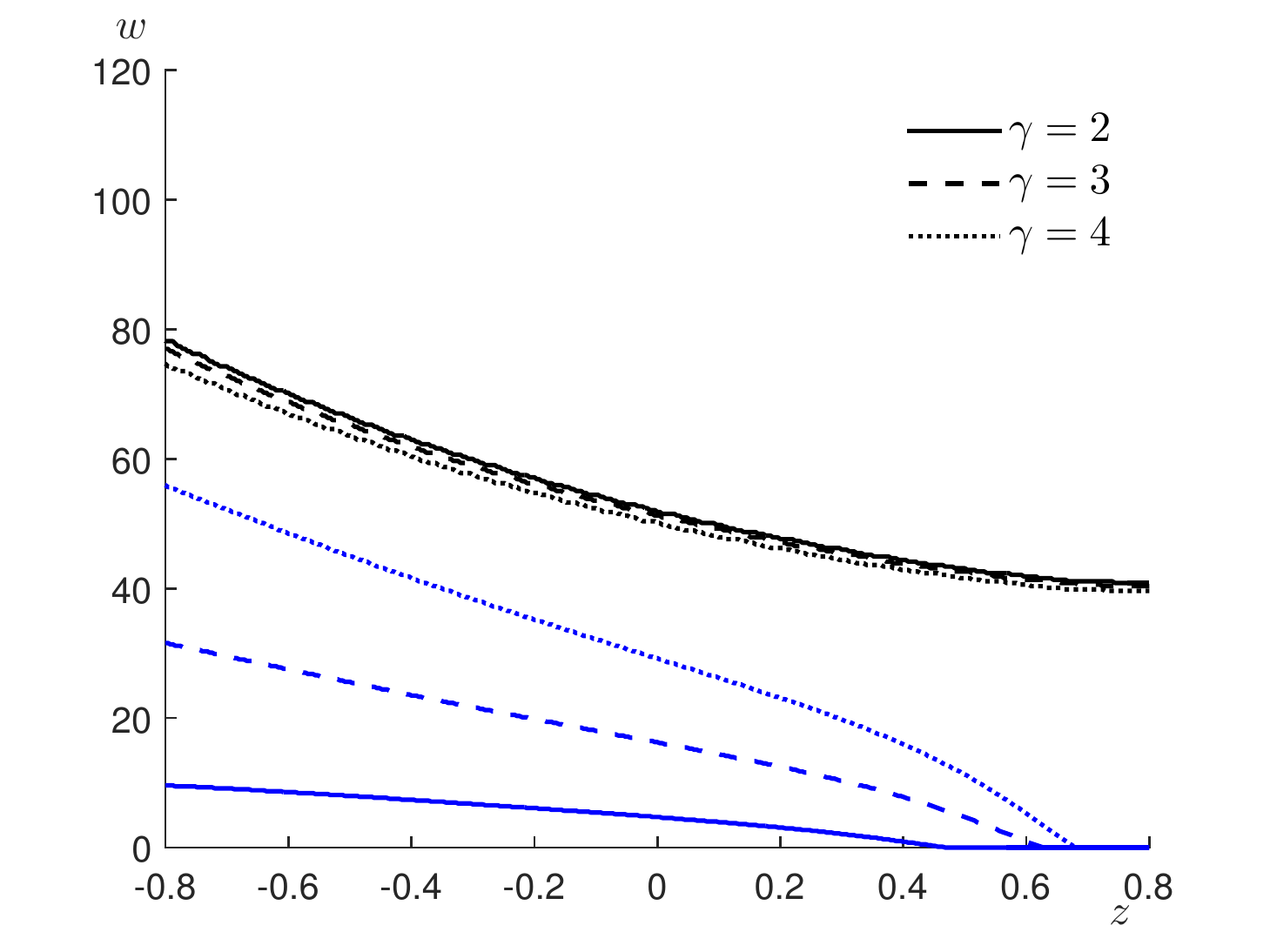} \\
{\small (i) $\delta_{D}=0$} & {\small (ii) $\delta_{D}=5\%$}
\end{tabular}
\caption[Sensitivity analysis of wealth threshold for voluntary retirement with respect to risk aversion $\gamma$]{\textbf{Sensitivity analysis of wealth threshold for voluntary retirement and target wealth-to-income ratio with respect to $z$.} The black lines (upper three lines) and the blue lines (lower three lines) represent the wealth threshold for voluntary retirement
and the target wealth-to-income ratio, respectively. The left panel denotes the case without downward 
jumps in labor income, whereas the right panel stands for the case with downward jumps in labor income. 
Basic parameters are chosen from Table \ref{coint_table_summary}. }
\label{coint_early retirement_gamma}
\end{figure}

\noindent{\textbf{Changes in volatility on income growth}}

In the interest of effects of labor income riskiness on early retirement, a riskier stream of future 
labor income (or higher $\sigma_{I}$) is associated with lower wealth threshold for voluntary 
retirement (Figure \ref{coint_early retirement_income_volatility}). 
As we mentioned in the analysis of portfolio share, an investor shows strong additional hedging demand against uninsurable risks related to labor income. Intuitively, investing more in the
stock market is followed by working shorter when the labor market is down, reducing the labor income risk. 


\begin{figure}[H]
\centering
\begin{tabular}{cc}
\includegraphics[width=0.46\textwidth]{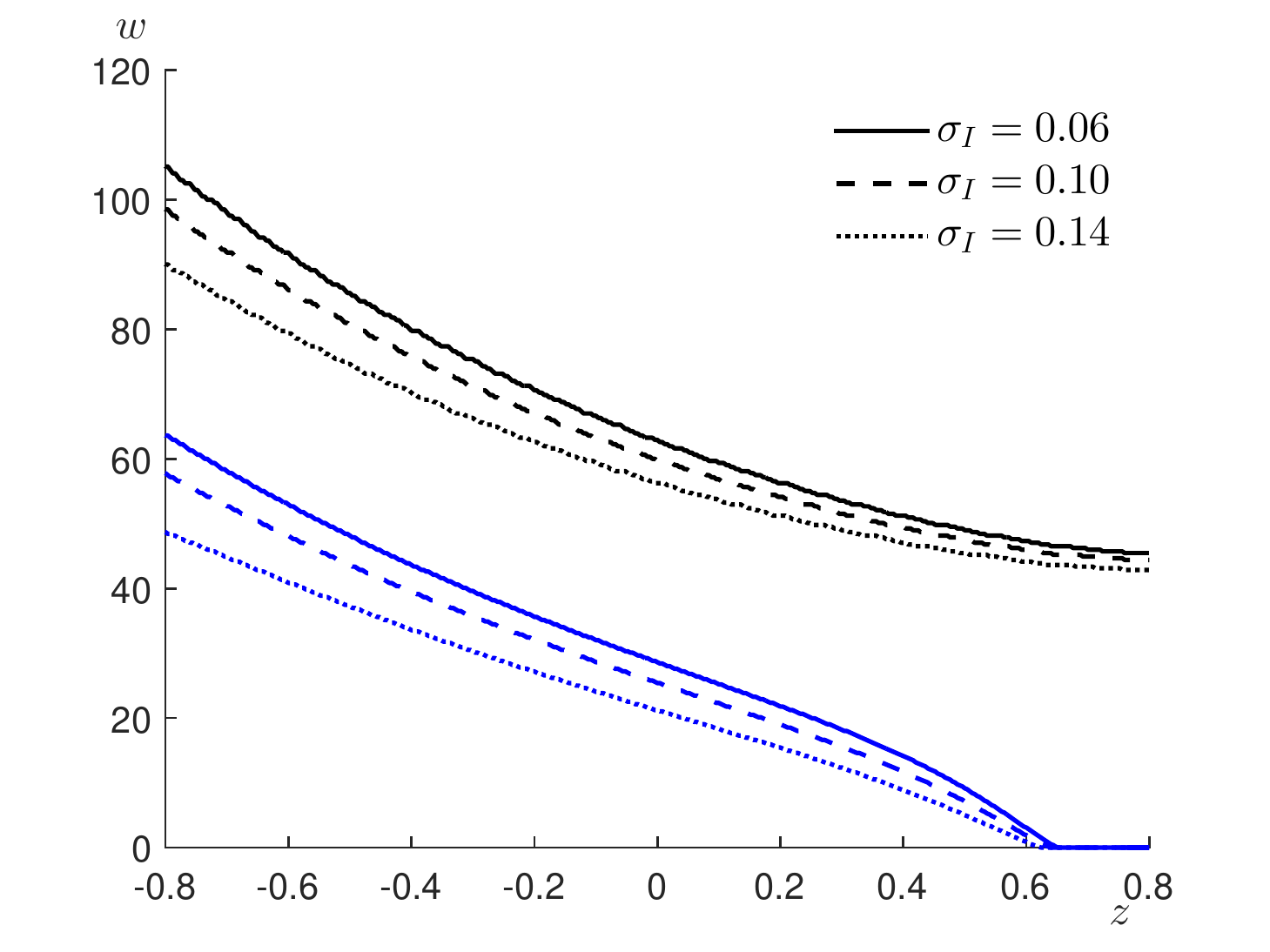} & \hspace{-.5cm}
\includegraphics[width=0.46\textwidth]{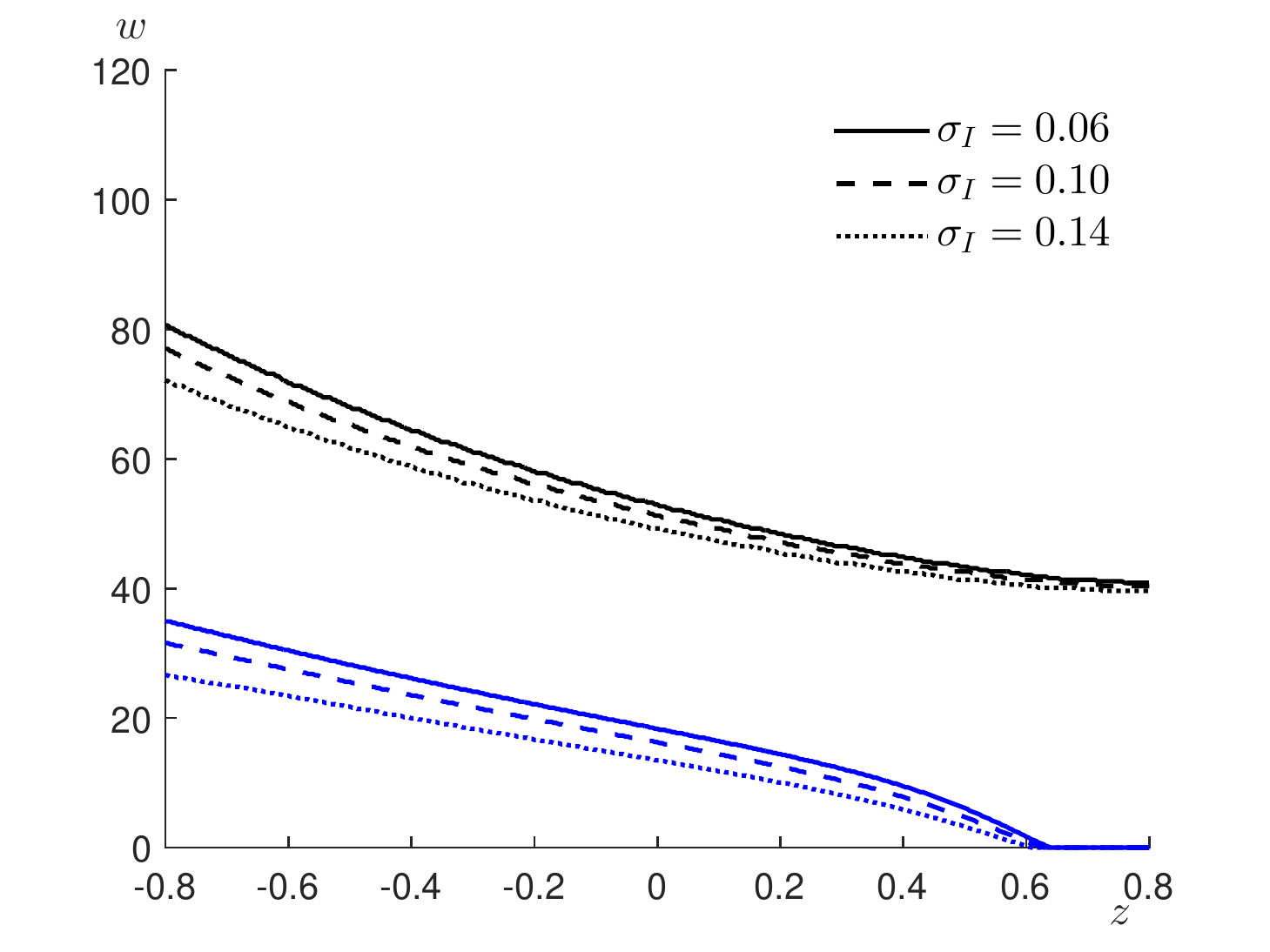} \\
{\small (i) $\delta_{D}=0$} & {\small (ii) $\delta_{D}=5\%$}
\end{tabular}
\caption[Sensitivity analysis of wealth threshold for voluntary retirement with respect to income growth volatility $\sigma_{I}$]{\textbf{Sensitivity analysis of wealth threshold for voluntary retirement with respect to income growth volatility $\sigma_{I}$.} 
The black lines (upper three lines) and the blue lines (lower three lines) represent  the wealth threshold for voluntary retirement and the target wealth-to-income ratio, respectively. The left panel denotes the  case without downward jumps in labor income, whereas the right panel stands for the case with downward jumps in labor income.
Basic parameters are chosen from Table \ref{coint_table_summary}.}
\label{coint_early retirement_income_volatility}
\end{figure}

\noindent\textbf{Changes in degree of mean reversion}

Let's uncover more details of the effects of cointegration on voluntary retirement by changing the
degree of mean reverted process $z_t$ (Figure \ref{coint_early 
retirement_alpha}). An important prediction of the model is that cointegration has two opposing
effects on stock market participation and early retirement, as opposed to the effects of changing 
in fundamental parameter values such as investment opportunity set and income growth volatility. 
It is a done deal that once labor income is more transitory (or higher $\alpha$), optimal portfolio 
takes the form of more riskless assets to reserve enough wealth to finance future consumption, 
delaying an investor's stock market participation. 

Our earlier discussions based on changes in 
investment opportunity or income growth volatility imply that wealth threshold for voluntary 
retirement will be expected to rise when an investor delays stock market participation (Figure
\ref{coint_early retirement_investment_opportunity}, \ref{coint_early retirement_income_volatility}). In line 
with this, as far as more transitory labor income is concerned, higher levels of wealth threshold 
seem to be expected. However, contrary to this projection, more transitory labor income is likely
to decline wealth threshold for voluntary retirement, providing an explanation of early retirement. 
Therefore, the cointegration between the stock and labor markets, i.e., the long-run dependence between two 
markets has the first-order effect on early retirement policy.

\begin{figure}[H]
\centering
\begin{tabular}{cc}
\includegraphics[width=0.46\textwidth]{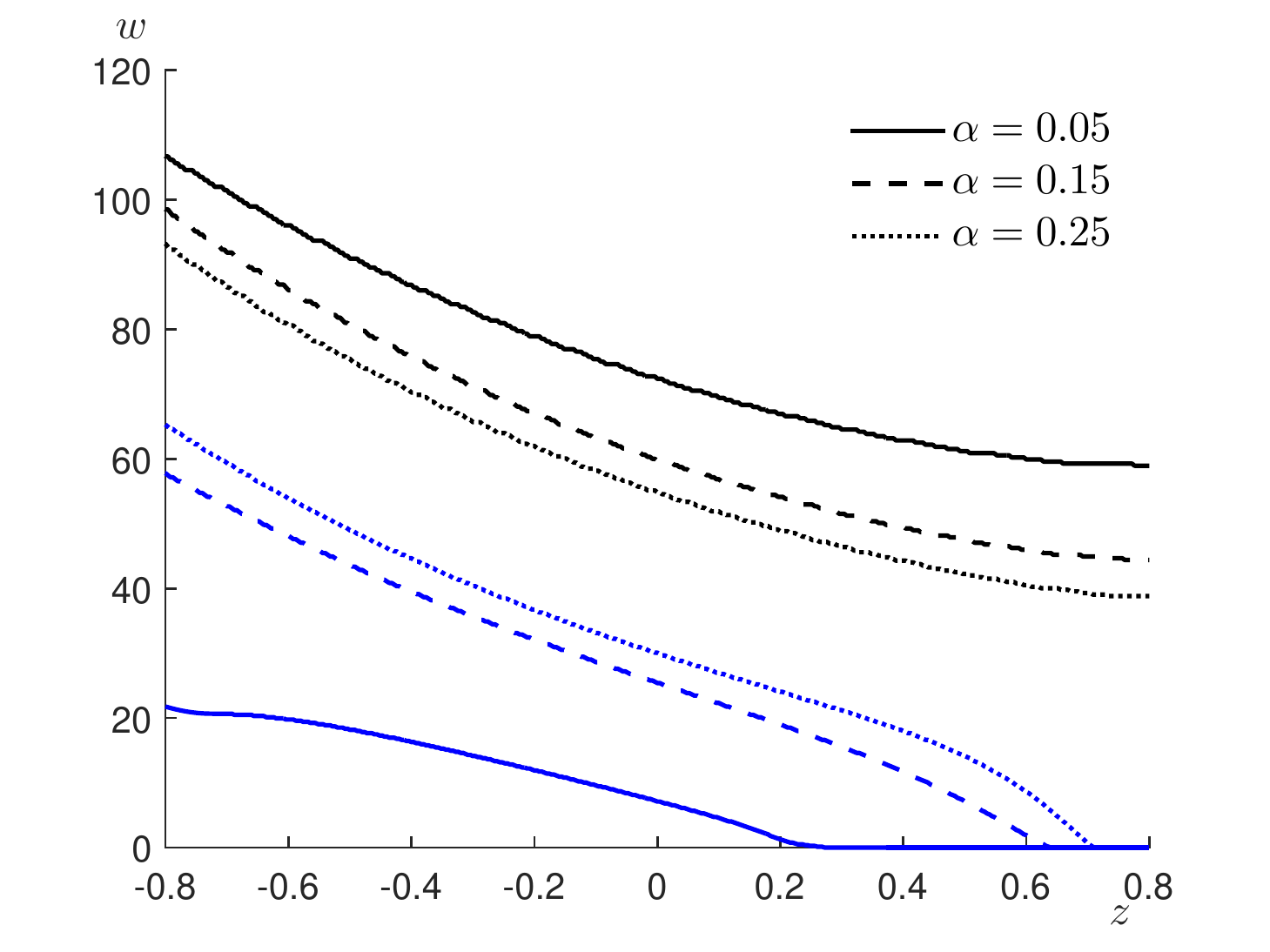} & \hspace{-.5cm}
\includegraphics[width=0.46\textwidth]{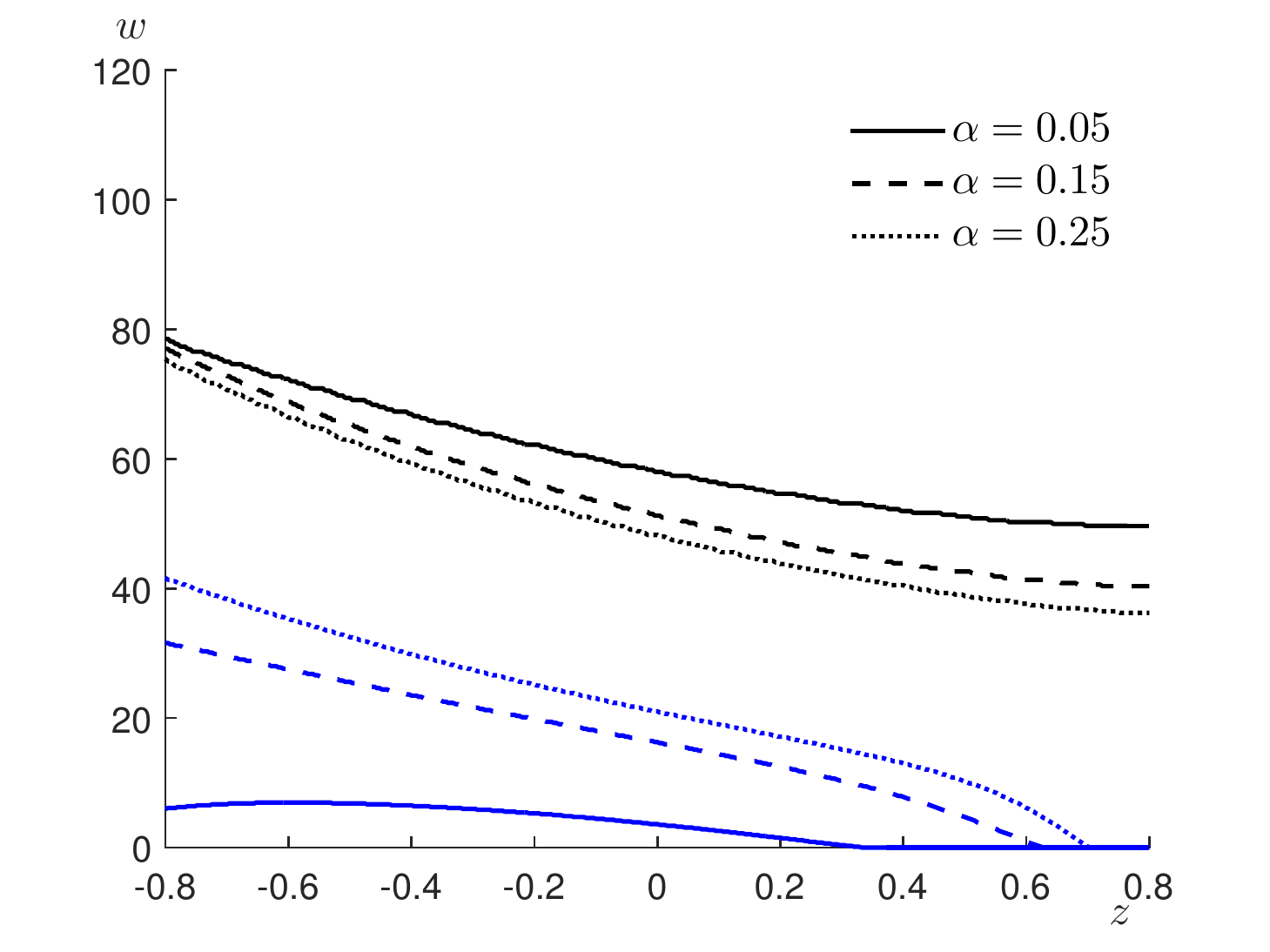} \\
{\small (i) $\delta_{D}=0$} & {\small (ii) $\delta_{D}=5\%$}
\end{tabular}
\caption[Sensitivity analysis of wealth threshold for voluntary retirement with respect to mean reversion $\alpha$]{\textbf{Sensitivity analysis of wealth threshold for voluntary retirement with respect to mean reversion $\alpha$.} 
The black lines (upper three lines) and the blue lines (lower three lines) represent  the wealth threshold for voluntary retirement and the target wealth-to-income ratio, respectively. The left panel denotes the case without downward jumps in labor income, whereas the right panel stands for the case with downward  jumps in labor income.
Basic parameters are chosen from Table \ref{coint_table_summary}.
}
\label{coint_early retirement_alpha}
\end{figure}

To further compare the difference between the expected time to retirement among the optimal policy in our model and DL's policy when the stock and labor markets are cointegrated, we perform a Monte-Carlo simulation. That is, the income process follows (\ref{lip2}) with $z_t$ given in (\ref{labor_zprocess}). 
We are interested in the expected time to retirement and the expected portfolio share invested in stocks before retirement. Table \ref{table coint simulation} shows the results obtained from 10,000 simulated paths under different initial wealth-to-income ratio and cointegration factor $\alpha$. In the presence of cointegration, DL's policy is suboptimal compared to our model as DL's policy totally ignore the cointegration effect between stock and income markets and thus performs naively. Therefore, investors who follow DL's policy have longer expected time to retire. Consistent with Fig \ref{coint_portfolio share}, investors who follow the optimal policy in our model will have much lower expected portfolio share before retirement. Moreover, the expected time to retirement under DL's policy is monotone increasing in the strength of cointegration (factor $\alpha$), while the expected time to retirement under our model's optimal policy does not show monotonicity in $\alpha$. Indeed, for those investors with low initial wealth-to-income ratio, stronger cointegration increase the expected time to retirement because of low future human capital to accumulate wealth. For those investors with high initial wealth-to-income ratio, stronger cointegration shrinks the distance from initial wealth-to-income ratio to the optimal threshold of retirement and thus induces shorter expected time to retirement. Therefore, the cointegration between stock and income markets significantly reduces the allocation in stocks and accelerates earlier retirement for richer (or older) investors. 

\begin{table}[H]
\caption[Model simulation]{\textbf{Model simulation.} The expected time to retirement (in year) and expected portfolio share in stocks before retirement under different policies are simulated when the stock and labor markets are cointegrated. Basic parameters are chosen from Table \ref{coint_table_summary}.}
\label{table coint simulation}
\begin{center}
\resizebox{\linewidth}{!}{%
\begin{tabular}{l c c  c c c}
\hline  
\hline
&& \multicolumn{2}{ c }{Under optimal policy in our model } & \multicolumn{2}{ c }{ Under DL's policy  }\\
&& Expected time to retire  & Expected portfolio share & Expected time to retire  & Expected portfolio share  \\
\hline
\multirow{3}{*}{$\alpha=0.05$} &$w/I=10$ &$106$ &$0.14$ &$128$ &$0.89$ \\  
& $w/I=30$ &$55$  &$0.20$ & $74$  &$0.78$\\
& $w/I=50$ & $23$ &$0.27$ & $41$ &$0.71$ \\
\hline
\multirow{3}{*}{$\alpha=0.15$} &$w/I=10$ &$129$ &$0.04$ &$180$ &$0.85$ \\  
& $w/I=30$ &$58$  &$0.07$ & $105$  &$0.78$\\
& $w/I=50$ & $15$ &$0.19$ & $57$ &$0.71$ \\
\hline
\multirow{3}{*}{$\alpha=0.25$} &$w/I=10$ &$141$ &$0.03$ &$230$ &$0.75$ \\  
& $w/I=30$ &$57$  &$0.07$ & $133$  &$0.76$\\
& $w/I=50$ & $7$ &$0.18$ & $74$ &$0.71$ \\
\hline\hline
\end{tabular}}
\end{center}
\end{table}

\subsection{Mandatory Retirement Age}
The expected time to retirement in Table \ref{table coint simulation} seems too long for individuals with low initial wealth-to-income ratio if we consider human's finite working and living age. The  mandatory retirement age is an age-dependent factor and  makes the model time-dependent if we include mandatory retirement age. We claim that our main results still hold in the presence of mandatory retirement age. For example, to mainly focus on the effect the mandatory retirement age, we rewrite our value function as
\begin{equation}\label{valfmandretire}
V(w,I,z,t) \equiv\sup_{(c,y,\tau)\in\mathcal{A}(w,I,z)}\mathbb{E}\Big[\int^{\tau\wedge T}_{t}e^{- \beta (u-t)}
\df{c^{1-\gamma}_{u}}{1-\gamma} du+e^{- \beta(\tau\wedge T-t)}G(W_{\tau\wedge T})\Big],
\end{equation} 
where $T$ is the mandatory retirement deadline and $\tau\wedge T$ is the shorthand notation for $\min(\tau, T)$. 
Now all the optimal investment and retirement strategies depend on time $t$.  At time $t$, an investor has $T-t$ years to mandatory retirement.
We assume that throughout an investor begins to work at age 20 ($t=0$) and the mandatory retirement age is 70 ($T=50$). The quantitive results are shown in Fig \ref{coint_FiniteTime}.
Panel (i) shows that the wealth threshold for voluntary retirement declines as an investor nears mandatory retirement. Intuitively, as an investor ages, the incentive to keep the option of working ``alive'' is reduced and hence the wealth threshold for voluntary retirement declines. Moreover, at the same age, the optimal wealth threshold for voluntary retirement in the presence of cointegration (the solid line) is lower than the threshold in the absence of cointegration (the dashed line), which implies the early retirement option.  
Panel (ii) shows the portfolio share invested in stocks at $z=0$. It can be seen that the portfolio share increases not only in financial wealth $w$ but also in the physical age, so that the younger investors are not as aggressive as expected and they should hold less or no stock holdings than the older investors. This indicates a plausible hump-shaped profile for stock holdings over life-cycle, which is also shown in other papers such as  \cite{DL10}, \cite{P07}, \cite{WY10}, and \cite{CS14}. Panel (iii) and (iv) depict the target of non-participation and wealth threshold for voluntary retirement as functions of age and cointegration factor $z$. We find that non-participation is more obvious among younger investors and early retirement is more plausible when income is expected to decline in the long term ($z>0$).  By panel (ii) and panel (iii), investors near mandatory retirement always have positive portfolio share in stocks. That is because near the mandatory retirement age,  cointegration does not have sufficient time to act and the option of working is less attractive so that effect of income features on optimal consumption and investment declines. 

\begin{figure}[H]
\centering
\begin{tabular}{cc}
\includegraphics[width=0.46\textwidth]{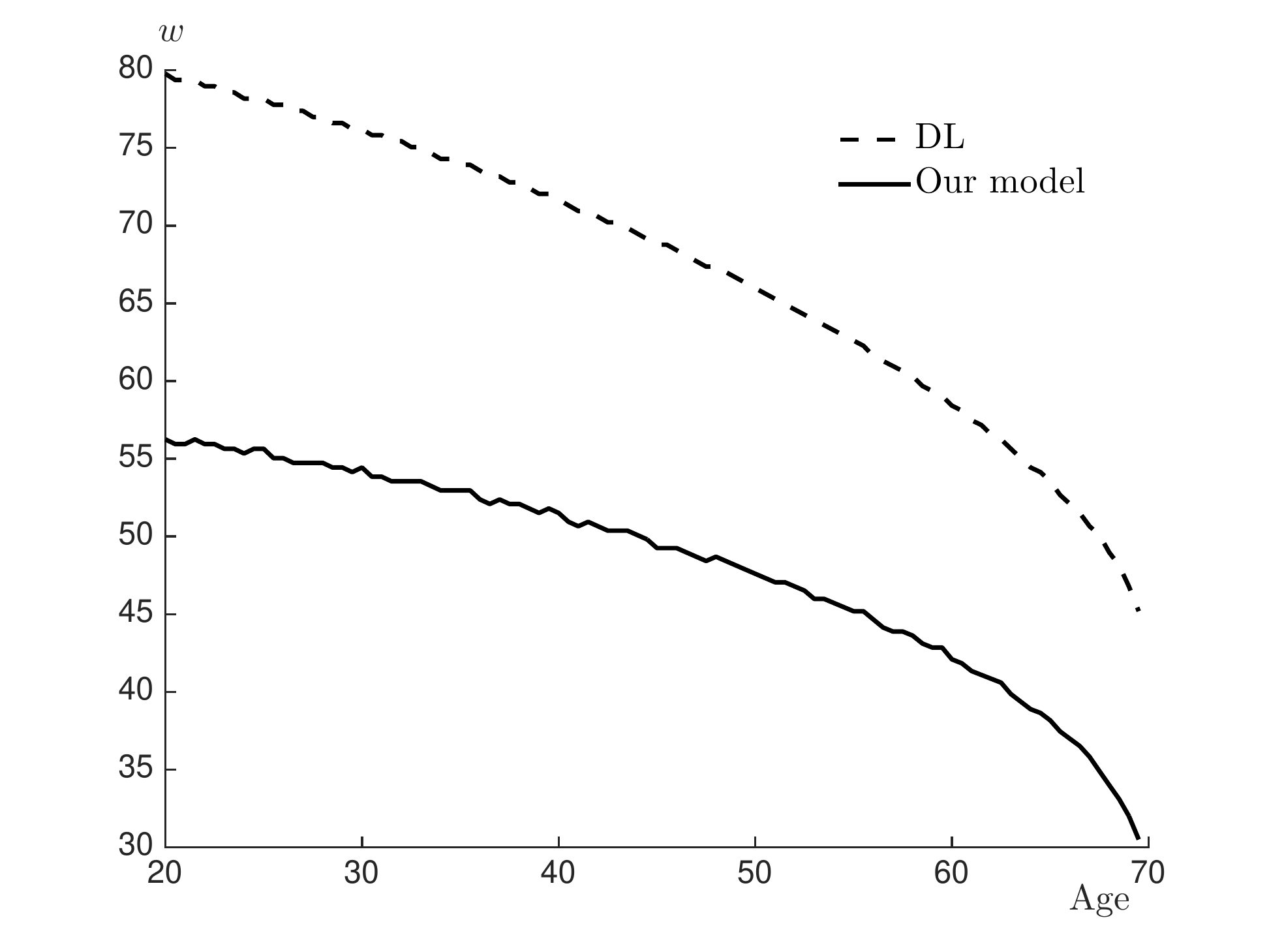} & \hspace{-.5cm}
\includegraphics[width=0.46\textwidth]{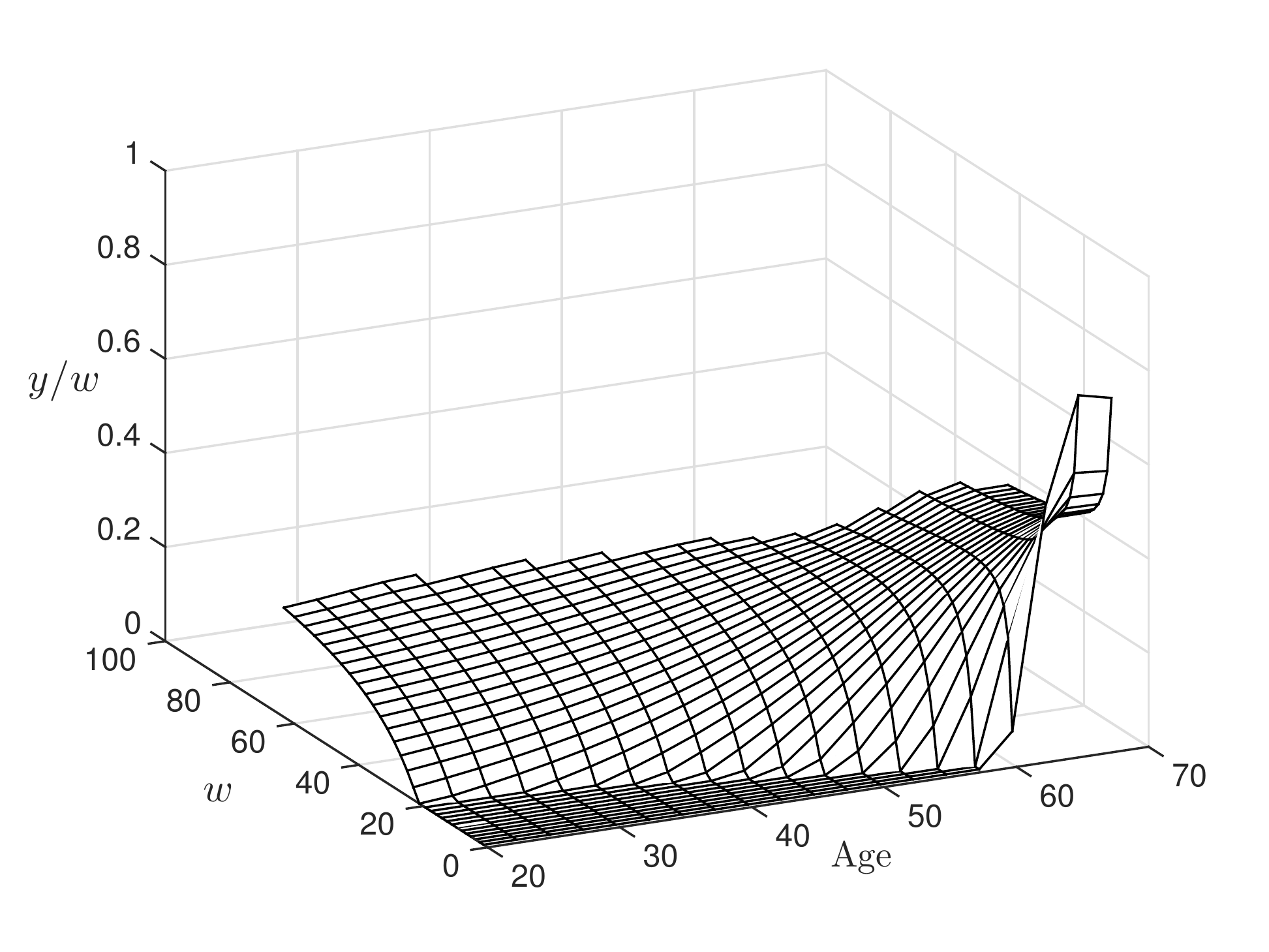} \\
{\small (i) Wealth threshold for voluntary retirement at $z=0$} & {\small (ii) Portfolio share at $z=0$} \\
\includegraphics[width=0.46\textwidth]{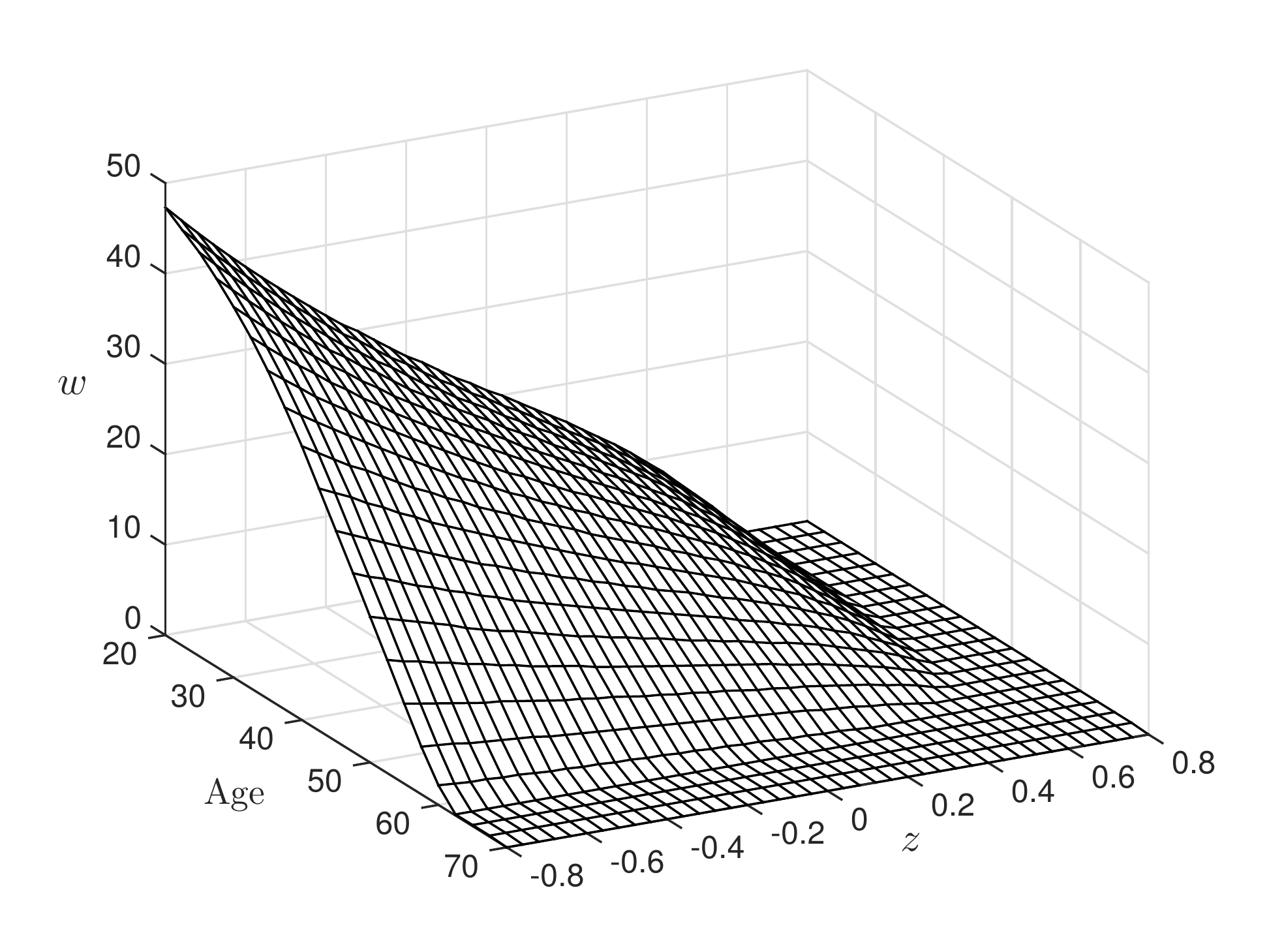} & \hspace{-.5cm}
\includegraphics[width=0.46\textwidth]{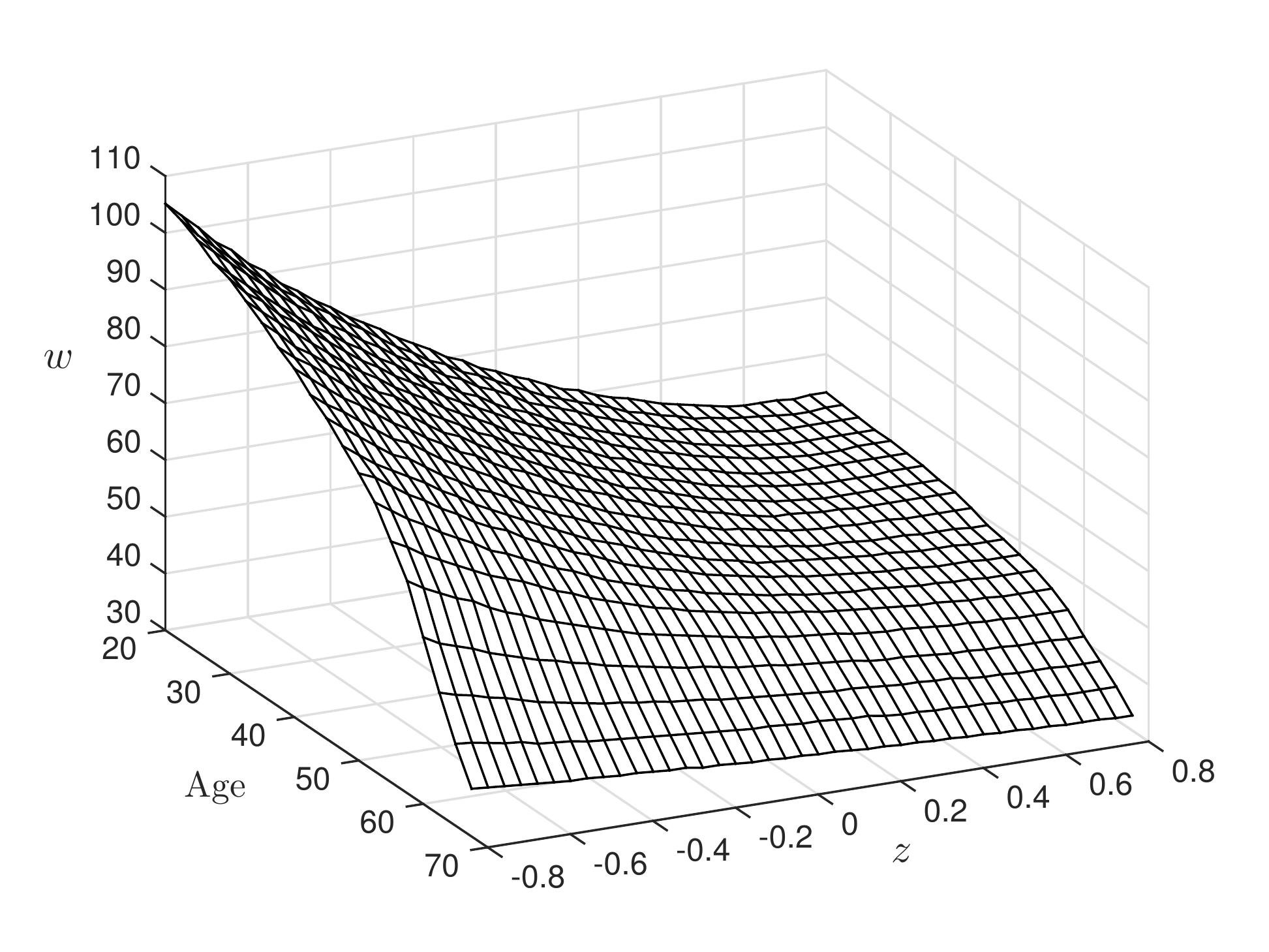} \\
{\small (iii) Target of non-participation} & {\small (iv) Wealth threshold for voluntary retirement}
\end{tabular}
\caption[Optimal investment and retirement with mandatory retirement age]{\textbf{Optimal investment and retirement with mandatory retirement age.}  Since we normalize  annual rate of labor income as one, i.e., $I=1$, wealth-to-income ratio $w/I$ reduces to financial wealth $w$. Basic parameters are chosen from Table \ref{coint_table_summary}. }
\label{coint_FiniteTime}
\end{figure}

\subsection{Robustness}
We have shown by a numerical analysis that the risk aversion speeds up retirement when the
risks associated with labor income are uninsurable. This result is based on the assumption
of CRRA utility preferences. We now relax this assumption. We derive the wealth threshold
for voluntary retirement using more elaborate non-expected recursive utility preferences 
(\cite{EZ89}, \cite{W90}, \cite{han2021disentangled, han2018deep}) than standard CRRA utility function. We use the 
continuous-time formulation of this non-expected utility developed by \cite{DE92}. It has been widely known that risk aversion should differ from elasticity of 
intertemporal substitution (EIS). Specifically, an investor's consumption, investment, and voluntary 
retirement problem with cointegration between the stock and labor markets, and short sale and 
borrowing constraints is to maximize his non-expected recursive utility preference by controlling per-period consumption $c$, risky investment $y$, 
and voluntary retirement time $\tau$:
\begin{equation*}
\begin{aligned}
V(w,I,z)\equiv\sup_{(c,y,\tau)\in\mathcal{A}(w,I,z)}\mathbb{E}\Big[&\int^{\tau}_{0}e^{-\delta_{D}t}\Big\{f\big(c_{t},V(W_{t},I_{t},Z_{t})\big)\\
&+\delta_{D}V(W_{t},\kappa I_{t},Z_{t})\Big\}dt+\int^{\infty}_{\tau}f\big(Bc_{t},V(W_{t},0)\big)dt\Big],
\end{aligned}
\end{equation*}
where $\mathbb{E}$ is the expectation taken at time $0$, $f(c,V)$ is the continuous-time formulation of the 
non-expected recursive utility and it is given by
$$
f(c,V)=\df{\beta}{1-\psi^{-1}}\Big\{c^{1-\psi^{-1}}[(1-\gamma)V]^{\frac{\psi^{-1}-\gamma}{1-\gamma}}
-(1-\gamma)V\Big\}.
$$
Here, $\psi>0$ is the coefficient of EIS. When we set $\psi=1/\gamma$, the recursive utility preference 
reduces to the widely used standard CRRA utility preference.

Under the CRRA specification, since the inverse relationship between the EIS and the relative risk 
aversion, higher EIS implies lower relative risk aversion, delaying retirement. By using nonexpected
recursive utility, we try to separately study the effects of the EIS and relative risk aversion.
We find that the EIS produces lower wealth threshold for voluntary retirement than that under CRRA utility
preferences. That is, higher EIS reduces the wealth threshold for voluntary retirement (Figure 
\ref{coint_recursive}). Intuitively, investors with a higher EIS are inclined to make more investments in risky
assets rather than cutting back their consumption, resulting in accumulating more wealth through which 
early retirement becomes more likely to be achieved. 
 
\begin{figure}[H]
\centering
\begin{tabular}{cc}
\includegraphics[width=0.46\textwidth]{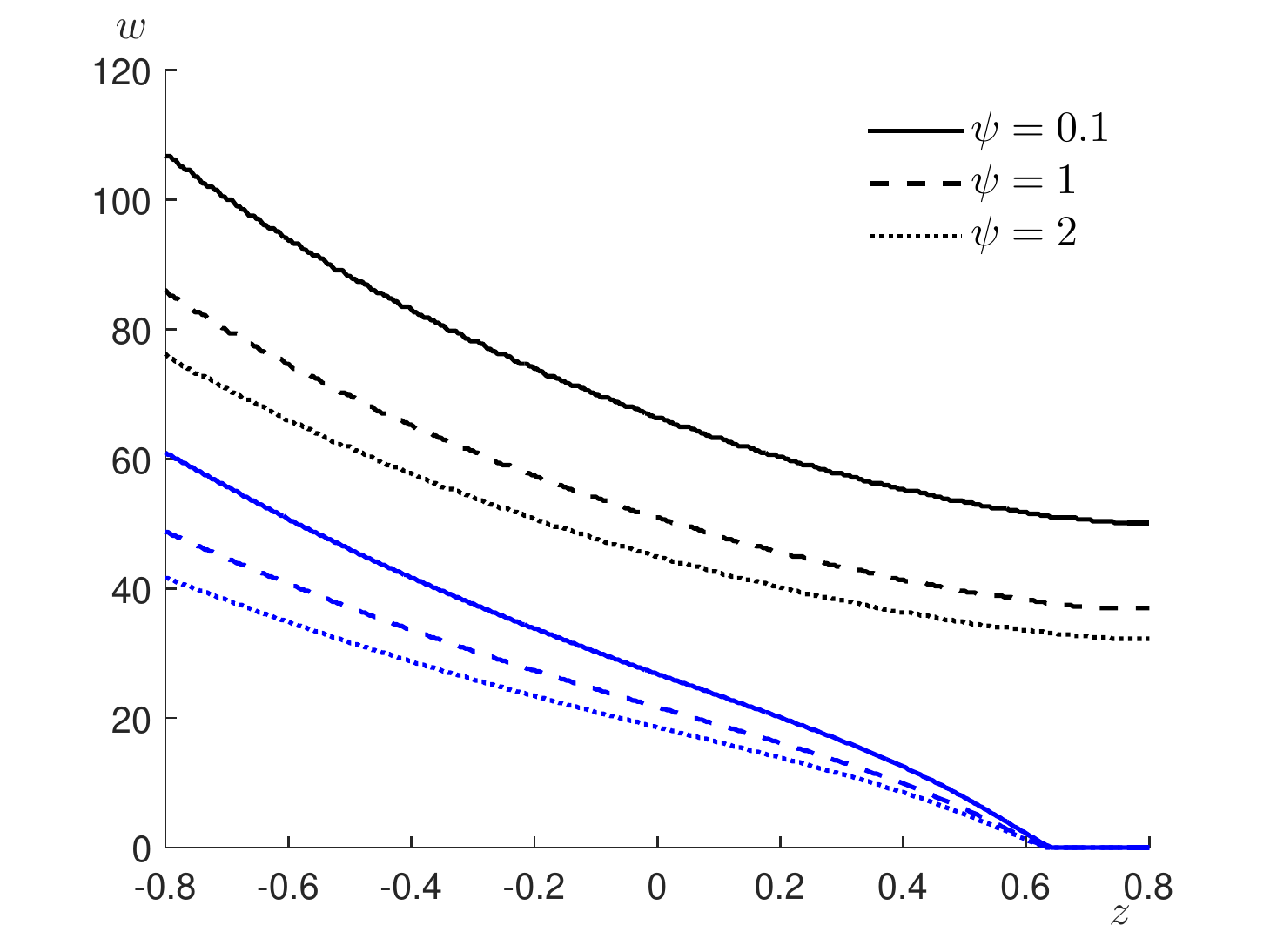} & \hspace{-.5cm}
\includegraphics[width=0.46\textwidth]{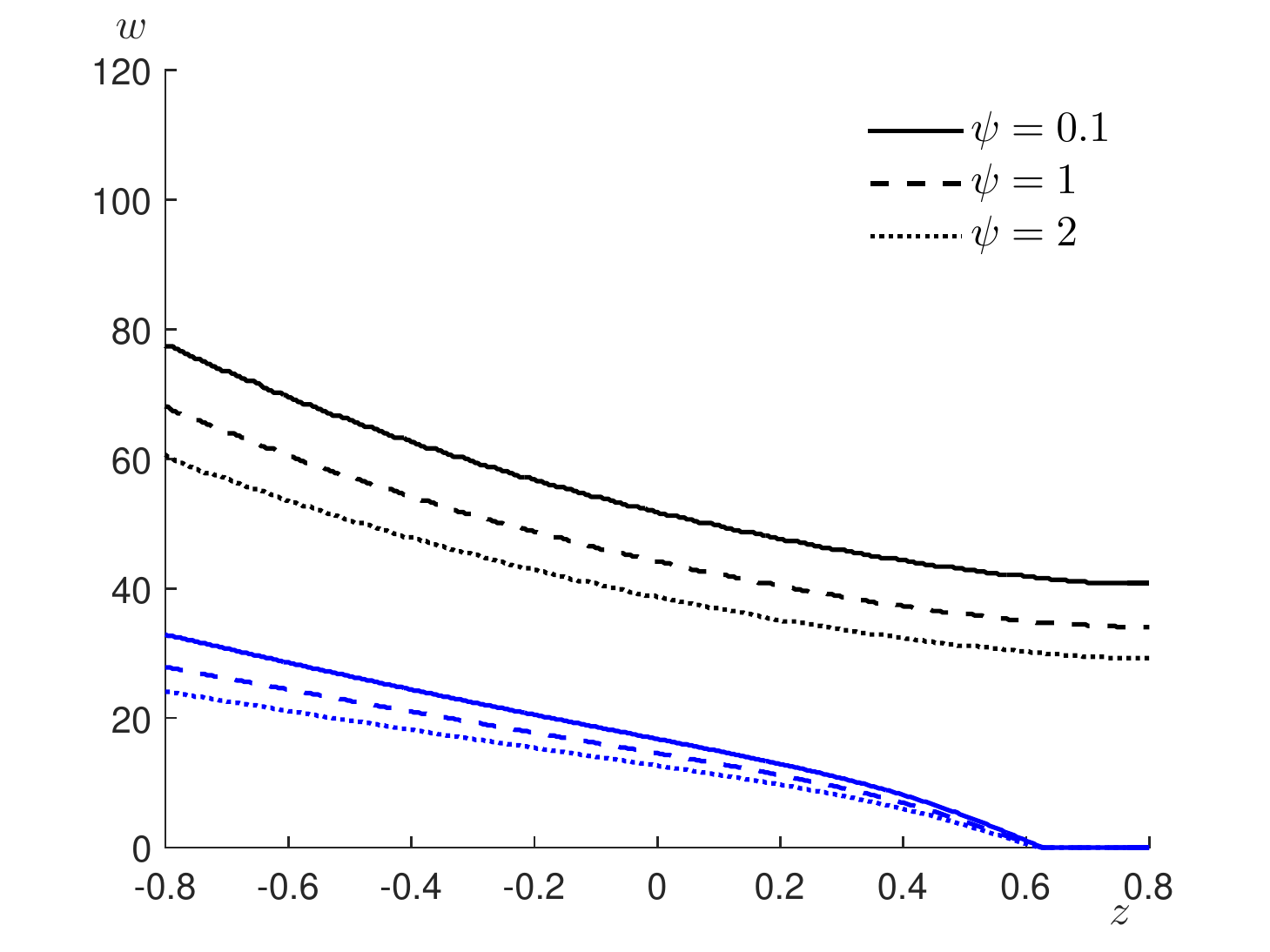} \\
{\small (i) $\delta_{D}=0$} & {\small (ii) $\delta_{D}=5\%$}
\end{tabular}
\caption[Sensitivity analysis of wealth threshold for voluntary retirement with respect to EIS $\psi$]{\textbf{Sensitivity analysis of wealth threshold for voluntary retirement with respect to EIS $\psi$.} 
The black lines (upper three lines) and the blue lines (lower three lines) represent the wealth threshold for voluntary retirement and the target wealth-to-income ratio, respectively. The left panel denotes the 
case without downward jumps in labor income, whereas the right panel stands for the case with downward jumps in labor income. Basic parameters are chosen from Table \ref{coint_table_summary}.
}
\label{coint_recursive}
\end{figure}

We find that risk aversion still speeds up retirement for a fixed EIS value (not reported).
Our other main results, also, are quantitatively similar under non-expected recursive utility 
specification for a multitude of parameter values.

\chapter{Conclusion}\label{section conclusion}

In this thesis, we consider two stochastic control problems in capital structure and individual's life-cycle portfolio choice. 
 
In Chapter \ref{part 1}, the reported bank accounting values do not necessarily correspond to the current business situation because banks' assets are difficult to assess and banks have an incentive to smooth their earnings. In this part,  we derived a new partially observed model for this situation and calibrated that to a sample of U.S. banks.
Given the noisy accounting reports, in our model the shareholders and the regulators obtain the conditional probability distribution of the true accounting values. Then the bank shareholders solve for the optimal dividend and recapitalization policy of the bank, and the bank regulators decide to close the bank, if the expected equity conditional on the accounting reports falls too low.
The threshold for the expected book equity when this happens is low because the regulators weight more the risk of liquidating a solvent bank than the risk of not liquidating an insolvent bank.
We focused on the bank's dividends and recapitalization option since banks mostly use those to adjust their equity capital (see e.g.  \cite{BH14} and  \cite{BLS16}).

On average, the noise in the reported accounting asset values hides about one-third of the true asset return volatility and raises the banks' market equity value by 7.8\%.
Particularly, those banks with a high level of loan loss provisions, nonperforming assets, and real estate loans, and with a low volatility of reported total assets have noisy accounting asset values.
Because of the substantial shock on the true asset values, the banks' assets were more opaque during the recent financial crisis than outside that.
The partially observed model explains the banks' actions significantly better than the corresponding fully observed model, indicating that the shareholders and bank regulators view accounting reports as noisy and act accordingly. 
Our model has been published on Journal of Economic Dynamics and Control (See \cite{DHK19}).

In Chapter \ref{part 2},   we present an optimal portfolio choice model for voluntary retirement in an economic situation, where the borrowing and short sale constrained investor is encountering uninsurable income risks and cointegration between the stock and labor markets.
The model can generate
empirically plausible investment strategy that proportions of financial wealth invested 
in stocks rise with financial wealth. The model can also provide rational explanations 
of early retirement that has been empirically observed in a strand of  literature
that explores retirement decisions. Our main results still hold when considering the mandatory retirement age. 

One of the important possible extensions  is to include general equilibrium
asset pricing consideration based on our partial equilibrium results. We strongly
believe that the model presented here could become a big advantage for better understandings 
toward sound financial advices on pension, insurance, and retirement, and the basis of 
policy design in order to resolve the issues associated with global retirement funding 
problems in the short and long runs.
\renewcommand{\thesection}{A.\arabic{section}}
\renewcommand{\theequation}{A.\arabic{section}-\arabic{equation}}
\renewcommand{\thesubsection}{A.\arabic{section}.\arabic{subsection}}

\renewcommand{\thelemma}{A.\arabic{lemma}}
\renewcommand{\theproposition}{A.\arabic{proposition}}

\setcounter{equation}{0}
\setcounter{section}{0}

\chapter*{Appendix}\label{bank_appendix}
\addcontentsline{toc}{chapter}{Appendix}
\section{Existance of $u_1$ and $u_2$ Under Conditions (\ref{appendix_fully_assumption})}\label{fully_proof_uniqueness_u1_u2}
Let us prove that under the conditions (\ref{appendix_fully_assumption}), there exists a pair solution $(u_1,u_2)$ to (\ref{fully_u1_u2_formula})
such that $\kappa<u_1<u_2<u_0$ and $H(X;u_2)\leq f_1(X;u_2)$ for all $X\in[\kappa, u_2]$.

We first introduce a lemma.
\begin{lemma} \label{appendix_compare_H_f2}
When $\omega \kappa \leq e^{-(\delta-\alpha)\Delta}\frac{\alpha-\mu}{\delta-\mu}(1+\kappa)$,
we have
$H(X;\theta)< f_2(X;\theta)$ for any $\theta\in(\kappa,u_0]$ and $X>\kappa$, where $f_2(X; \theta)$ is given in (\ref{appendixf2}),  $H(X; \theta)$ is given in (\ref{fully M}), and $u_0$ is given in Lemma \ref{lemma_u0}.
\end{lemma}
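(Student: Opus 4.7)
I would set $g(X;\theta) := f_2(X;\theta) - H(X;\theta)$ and aim to show $g(X;\theta) > 0$ for every $X > \kappa$ and every $\theta \in (\kappa, u_0]$. The strategy is a single boundary check at $X = \kappa$ using exactly the hypothesis, followed by a monotonicity / ODE-comparison argument to propagate positivity to all $X > \kappa$, and finally a uniformity-in-$\theta$ argument using Lemma \ref{lemma_u0}.

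\textbf{Step 1: unpack the explicit forms.} Write out the expressions (\ref{appendixf2}) for $f_2$ and (\ref{fully M}) for $H$. In models with delayed recapitalization, $f_2(\cdot;\theta)$ typically takes a simple affine form in $X$ that absorbs the per-unit issuance cost $\omega$, while $H(\cdot;\theta)$ is a linear combination of the fundamental solutions $e^{\lambda_\pm X}$ of the infinitesimal operator $\tfrac{1}{2}\sigma^{2}\partial_{XX}+\mu\partial_{X}-\delta$, with the factor $e^{-(\delta-\alpha)\Delta}$ and the ratio $(\alpha-\mu)/(\delta-\mu)$ appearing through the discounted post-delay continuation value. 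The difference $g(\cdot;\theta)$ therefore has a fully explicit representation in terms of the same exponentials plus an affine term.

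\textbf{Step 2: evaluate at the boundary.} Compute $g(\kappa;\theta)$ directly from the formulas. The goal is an algebraic rearrangement of the form
\[
g(\kappa;\theta) \;=\; e^{-(\delta-\alpha)\Delta}\frac{\alpha-\mu}{\delta-\mu}(1+\kappa) \;-\; \omega\kappa \;+\; \Phi(\theta),
\]
where $\Phi(\theta)\ge 0$ collects the remaining pieces (which are manifestly nonnegative on $\theta\in(\kappa,u_0]$, typically because $f_2-H$ at $\kappa$ picks up the positive wedge between the target payoff and the exponential continuation value). The hypothesis $\omega\kappa \le e^{-(\delta-\alpha)\Delta}\frac{\alpha-\mu}{\delta-\mu}(1+\kappa)$ is precisely what yields $g(\kappa;\theta)\ge 0$.

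\textbf{Step 3: propagate positivity in $X$ and in $\theta$.} On $(\kappa,\theta)$, both $f_2$ and $H$ solve the same second-order linear ODE associated with the HJB operator, so $g$ solves the homogeneous version of that ODE. Combined with the smooth-pasting-type conditions at $X=\theta$ from the construction of $f_2$ and $H$ (which typically give matching value and derivative at $\theta$, forcing the sign of $g$ to be controlled by $g(\kappa;\theta)$), a standard maximum/comparison principle extends $g>0$ from $X=\kappa$ to the whole interval $(\kappa,\theta)$. For $X\ge\theta$, $f_2$ is constructed as the ``upper envelope'' arising from the recapitalization option, so the strict inequality $H<f_2$ is immediate in that region. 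Uniformity in $\theta\in(\kappa,u_0]$ follows because Lemma \ref{lemma_u0} characterizes $u_0$ as the upper bound beyond which the recapitalization comparison would fail, so on $(\kappa,u_0]$ the sign of $\Phi(\theta)$ remains controlled.

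\textbf{Main obstacle.} The delicate point is the algebraic reduction in Step 2: one must recombine the exponential coefficients in $H$ and the affine coefficients in $f_2$ so that the quantity $e^{-(\delta-\alpha)\Delta}\frac{\alpha-\mu}{\delta-\mu}(1+\kappa)-\omega\kappa$ appears cleanly and isolated, with a verifiably nonnegative remainder $\Phi(\theta)$. Once this reduction is in hand the rest of the argument is essentially a comparison principle; without it, the hypothesis cannot be applied sharply. A secondary subtlety is ensuring that the comparison on $[\theta,u_0]$ does not require strengthening the hypothesis — handling this requires invoking the precise definition of $u_0$ from Lemma \ref{lemma_u0}.
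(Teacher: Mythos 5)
Your plan rests on a structural assumption about $H(X;\theta)$ that does not hold, and this breaks the core of Step~3. The function $H(X;\theta)$ in (\ref{fully M}) is not a combination of fundamental solutions of the stationary operator and does not solve the homogeneous ODE on $(\kappa,\theta)$: it is a finite-horizon expectation over the issuance-delay window $[0,\Delta]$, expressible through the first-passage distribution $p(X,t)$ of the asset process to the liquidation boundary $\kappa$ (this is exactly the representation (\ref{appendix_Mx}) the paper derives first). Likewise $f_2(X;\theta)=\frac{\alpha-\mu}{\delta-\mu}(1+\theta)+X-\theta$ is affine and satisfies $\mathcal{L}f_2\le 0$, not $\mathcal{L}f_2=0$. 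So the ``both solve the same second-order linear ODE, hence apply a maximum principle seeded at $X=\kappa$'' propagation has no basis, and there are no smooth-pasting conditions between $H$ and $f_2$ at $X=\theta$ to invoke. The claim that $H<f_2$ is ``immediate'' for $X\ge\theta$ is also not true: the paper needs a separate, nontrivial estimate there, using $\frac{e^{(\alpha-\mu)\Delta}-1}{\alpha-\mu}\le\frac{e^{(\delta-\mu)\Delta}-1}{\delta-\mu}$ and $\frac{\alpha-\mu}{\delta-\mu}+\frac{\delta-\alpha}{\delta-\mu}e^{-(\delta-\mu)\Delta}\le 1$ to control the slope of the upper bound by the slope $1$ of $f_2$.

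The second gap is in how the hypothesis enters. It is not consumed as a single boundary check $g(\kappa;\theta)\ge 0$ (at $X=\kappa$ one has $H(\kappa;\theta)=\omega\kappa$ and the comparison there is easy and does not isolate the hypothesis). In the paper's argument the condition $\omega\kappa\le e^{-(\delta-\alpha)\Delta}\frac{\alpha-\mu}{\delta-\mu}(1+\kappa)$ is used \emph{pointwise in $t$} inside the integral representation: it guarantees $\omega\kappa(\delta-\mu)e^{(\delta-\mu)(\Delta-t)}\le(\alpha-\mu)(1+\kappa)e^{(\alpha-\mu)(\Delta-t)}$ for all $t\in[0,\Delta]$ (together with $\omega\kappa<1+\kappa$), so that every term in (\ref{appendix_Mx}) involving $p(X,\Delta)$ or $\int_0^\Delta(\cdot)\,p(X,t)\,dt$ has a nonpositive contribution and can be discarded. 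What remains is the ``no early liquidation'' upper bound $e^{-(\delta-\mu)\Delta}\{(f_1(\theta;\theta)-\theta-K-1)+e^{(\alpha-\mu)\Delta}(X+1)\}$, which is then compared with the affine $f_2$ by elementary inequalities, split into the two cases $X\in(\kappa,\theta]$ and $X>\theta$. To repair your proof you would need to (i) derive the $p(X,t)$-representation of $H$, (ii) use the hypothesis to dominate the liquidation-payoff integrand by the asset-loss integrand on all of $[0,\Delta]$, and (iii) carry out the two-case affine comparison; the uniformity in $\theta\in(\kappa,u_0]$ then comes for free from $f_1(\theta;\theta)=\frac{\alpha-\mu}{\delta-\mu}(1+\theta)$ rather than from any delicate property of $u_0$.
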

\noindent\textit{Proof}. $\omega \kappa \leq e^{-(\delta-\alpha)\Delta}\frac{\alpha-\mu}{\delta-\mu}(1+\kappa)$ implies $\delta>\alpha>\mu$.
By (\ref{p(x,t)}), we have
\begin{eqnarray*}
& & \mathbb{E}\left[(X_{\Delta}+1)\mathbf{1}_{\{\tau>\Delta\}}\right]=\mathbb{E}[X_\Delta+1]-\mathbb{E}\left[(X_{\Delta}+1)\mathbf{1}_{\{\tau\leq \Delta\}}\right] \nonumber \\
&=&e^{(\alpha-\mu)\Delta}(X+1)   -\int_0^\Delta \mathbb{E}[X_\Delta+1| X_t=\kappa]\frac{\partial }{\partial t} p(X,t)dt \nonumber\\
&=& e^{(\alpha-\mu)\Delta}(X+1)-(1+\kappa)p(X,\Delta)-(\alpha-\mu)(\kappa+1)\int_0^\Delta e^{(\alpha-\mu)(\Delta-t)}p(X,t)dt.\nonumber
\end{eqnarray*}
Combining with \begin{eqnarray*}
\mathbb{E}^{X}[\omega \kappa e^{-(\delta-\mu)\tau}\mathbf{1}_{\{\tau\leq \Delta\}}] &=& \omega \kappa\int_0^\Delta e^{-(\delta-\mu)t} \frac{\partial }{\partial t}p(X,t)dt 
\end{eqnarray*}
 \begin{eqnarray*}
&=& \omega \kappa e^{-(\delta-\mu)\Delta}p(X,\Delta) +\omega \kappa(\delta-\mu)\int_0^\Delta e^{-(\delta-\mu)t} p(X,t)dt,
\end{eqnarray*}
we obtain
\begin{eqnarray}
 \begin{aligned}\label{appendix_Mx}
 H(X;\theta)=&
e^{-(\delta-\mu)\Delta}\Big\{(f_1(\theta;\theta)-\theta-K-1)(1-p(X,\Delta))+e^{(\alpha-\mu)\Delta}(X+1)\\
&-(1+\kappa)p(X,\Delta)-(\alpha-\mu)(\kappa+1)\int_0^\Delta e^{(\alpha-\mu)(\Delta-t)}p(X,t)dt  \\
&+\omega \kappa p(X,\Delta) +\omega \kappa(\delta-\mu)\int_0^\Delta e^{(\delta-\mu)(\Delta-t)} p(X,t)dt\Big\}.
 \end{aligned}\end{eqnarray}
Using (\ref{appendix_Mx}),  we have
 \begin{eqnarray*}
H(X; \theta) &<& e^{-(\delta-\mu)\Delta}\Big\{ \left( f_1(\theta; \theta)-\theta -K-1\right)+e^{(\alpha-\mu)\Delta} (X+1)\Big\}\\
&\leq &  e^{-(\delta-\mu)\Delta}\Big\{ \left(f_1(\theta; \theta)+X-\theta\right)+(X+1)(e^{(\alpha-\mu)\Delta} - 1) \Big\}\\
&\leq & X-\theta+e^{-(\delta-\mu)\Delta}\Big\{ \frac{\alpha-\mu}{\delta-\mu}(1+\theta)+(X+1)(e^{(\alpha-\mu)\Delta} - 1)\Big\}\\
&\leq& \frac{\alpha-\mu}{\delta-\mu}(1+\theta)+X-\theta= f_2( X; \theta), \quad \text{ for } X\in(\kappa,\theta],
\end{eqnarray*}
where the first inequality follows from $(1-p(X,\Delta))\leq 1$, $\omega \kappa \leq e^{-(\delta-\alpha)\Delta}\frac{\alpha-\mu}{\delta-\mu}(1+\kappa) <(1+\kappa)$, and $\omega\kappa(\delta-\mu)e^{(\delta-\mu)(\Delta-t)}\leq (\alpha-\mu)(1+\kappa)e^{(\alpha-\mu)(\Delta-t)}$ for all $t\in[0,\Delta]$, the second inequality comes from $K\geq 0$,
the third inequality comes from $e^{-(\delta-\mu)\Delta}(X-\theta)\leq X-\theta$ and $f_1(\theta; \theta)=\frac{\alpha-\mu}{\delta-\mu}(1+\theta)$, and the last inequality comes from $(1+X)\frac{e^{(\alpha-\mu)\Delta}-1}{\alpha-\mu}\leq (1+\theta)\frac{e^{(\delta-\mu)\Delta}-1}{\delta-\mu}$ for all $X\leq \theta$. 

When $X>\theta$, we have
\begin{eqnarray*}
H(X; \theta)
&< &  e^{-(\delta-\mu)\Delta}\Big\{ \left(f_1(\theta; \theta)+X-\theta\right)+(X+1)(e^{(\alpha-\mu)\Delta} - 1) \Big\}\\
&\leq & e^{-(\delta-\mu)\Delta}\Big\{ \frac{\alpha-\mu}{\delta-\mu}(1+\theta)+X-\theta+(X+1)(e^{(\delta-\mu)\Delta} - 1)\frac{\alpha-\mu}{\delta-\mu}\Big\}\\
&=& \frac{\alpha-\mu}{\delta-\mu}(1+\theta)+\left(\frac{\alpha-\mu}{\delta-\mu}+\frac{\delta-\alpha}{\delta-\mu} e^{-(\delta-\mu)\Delta}  \right)(X-\theta)\\
&\leq & \frac{\alpha-\mu}{\delta-\mu}(1+\theta)+ (X-\theta) = f_2( X; \theta), \quad \text{ for } X>\theta,
\end{eqnarray*}
where the  second inequality comes from $f_1(\theta; \theta)=\frac{\alpha-\mu}{\delta-\mu}(1+\theta)$ and $\frac{e^{(\alpha-\mu)\Delta}-1}{\alpha-\mu}\leq \frac{e^{(\delta-\mu)\Delta}-1}{\delta-\mu}$,
and the last inequality comes from $\frac{\alpha-\mu}{\delta-\mu}+\frac{\delta-\alpha}{\delta-\mu} e^{-(\delta-\mu)\Delta}\leq 1$.
This completes the proof of Lemma \ref{appendix_compare_H_f2}.\\

Now we give a lemma to show the concavity of $H(X;\theta)$.
\begin{lemma}\label{appendix_H_concave}
 Assume $\alpha-\mu-\frac{1}{2}\sigma^2\geq 0$,  $\omega \kappa \leq e^{-(\delta-\alpha)\Delta}\frac{\alpha-\mu}{\delta-\mu}(1+\kappa)$, and $u_0<\frac{\alpha-\mu}{\delta-\alpha}+\frac{\delta-\mu}{\delta-\alpha}(\kappa-\omega\kappa-K)$, where $u_0$ is given in Lemma \ref{lemma_u0}. Then $H(X;\theta)$ in (\ref{fully M}) is concave w.r.t $X$ for any $\theta \in(\kappa, u_0]$.
\end{lemma}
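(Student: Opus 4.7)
The plan is to start from the integral formula for $H(X;\theta)$ that was derived inside the proof of Lemma \ref{appendix_compare_H_f2} (the displayed equation labelled \ref{appendix_Mx}), and, with $\theta\in(\kappa,u_0]$ fixed, view it as the sum of a constant in $X$, a linear term $e^{-(\delta-\mu)\Delta}e^{(\alpha-\mu)\Delta}(X+1)$, a scalar multiple of $p(X,\Delta)$, and an integral $\int_0^\Delta (\cdot)\,p(X,t)\,dt$. Linear and constant pieces are trivially concave, so concavity in $X$ reduces to two subclaims: (a) the coefficients multiplying $p(X,\cdot)$ are non-positive, and (b) $X\mapsto p(X,t)$ is convex for each $t\in[0,\Delta]$.

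For (a), collecting all occurrences of $p(X,\Delta)$ in \ref{appendix_Mx} and using $f_1(\theta;\theta)=\frac{\alpha-\mu}{\delta-\mu}(1+\theta)$, the bracketed coefficient (inside the outer $e^{-(\delta-\mu)\Delta}$) simplifies to
\[
\frac{\delta-\alpha}{\delta-\mu}\theta - \frac{\alpha-\mu}{\delta-\mu} + K + \omega\kappa - \kappa,
\]
which under $\theta\le u_0<\frac{\alpha-\mu}{\delta-\alpha}+\frac{\delta-\mu}{\delta-\alpha}(\kappa-\omega\kappa-K)$ is indeed $\le 0$. For the integrand, non-positivity of $-(\alpha-\mu)(1+\kappa)e^{(\alpha-\mu)(\Delta-t)}+\omega\kappa(\delta-\mu)e^{(\delta-\mu)(\Delta-t)}$ is equivalent to $\omega\kappa(\delta-\mu)e^{(\delta-\alpha)(\Delta-t)}\le(\alpha-\mu)(1+\kappa)$, whose worst case is $t=0$; this is precisely the standing hypothesis $\omega\kappa\le e^{-(\delta-\alpha)\Delta}\frac{\alpha-\mu}{\delta-\mu}(1+\kappa)$ already exploited in Lemma \ref{appendix_compare_H_f2} (the same hypothesis automatically enforces $\delta>\alpha$).

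The main obstacle is (b), the convexity of the first-passage distribution $p(X,t)$ in the starting point $X$. I would pass to the logarithmic coordinate $y:=\ln X$, set $\tilde p(y,t):=p(e^y,t)$ and $y_0:=\ln\kappa$. The Kolmogorov backward equation for $p$ then transforms into the constant-coefficient parabolic equation
\[
\tilde p_t = \nu\,\tilde p_y + \tfrac{1}{2}\sigma^2\,\tilde p_{yy}, \qquad y>y_0,\ t>0,
\]
with $\tilde p(y_0,t)=1$, $\tilde p(y,0)=0$ for $y>y_0$, where $\nu:=\alpha-\mu-\tfrac{1}{2}\sigma^2\ge 0$ by hypothesis. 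A direct chain-rule computation yields $p_{XX}=X^{-2}(\tilde p_{yy}-\tilde p_y)$, so convexity of $p$ in $X$ is equivalent to $w:=\tilde p_{yy}-\tilde p_y\ge 0$. Because the PDE has constant coefficients, differentiating it in $y$ shows that $w$ satisfies exactly the same equation, with initial data $w(y,0)=0$. At the boundary, $\tilde p(y_0,t)\equiv 1$ forces $\tilde p_t(y_0,t)=0$, so evaluating the PDE at $y=y_0^+$ gives $\tilde p_{yy}(y_0^+,t)=-(2\nu/\sigma^2)\,\tilde p_y(y_0^+,t)$, and therefore $w(y_0^+,t)=-(1+2\nu/\sigma^2)\,\tilde p_y(y_0^+,t)\ge 0$, since $\tilde p$ is decreasing in $y$ and $\nu\ge 0$. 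The parabolic maximum principle then delivers $w\ge 0$ throughout, proving convexity of $p(\cdot,t)$.

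The technical subtlety lies in executing the maximum-principle step on the unbounded half-line $y>y_0$ and in justifying the boundary trace of $\tilde p_{yy}$. I would handle this either by truncating to $(y_0,y_0+R)$ and letting $R\to\infty$, using the explicit Gaussian decay of $\tilde p$ and its derivatives as $y\to\infty$ to rule out flux at infinity, or, alternatively, by falling back on the closed-form expression for $\tilde p$ involving $\Phi(\cdot)$ and $e^{2\nu\cdot/\sigma^2}\Phi(\cdot)$ and verifying $\tilde p_{yy}-\tilde p_y\ge 0$ term by term; in both approaches the sign condition $\nu\ge 0$ is exactly what prevents the exponential factor $e^{2\nu(y_0-y)/\sigma^2}$ from flipping the inequality.
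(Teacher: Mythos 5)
Your proof is correct and, at the top level, follows the same route as the paper: fix $\theta$, read off from (\ref{appendix_Mx}) that $H(\cdot;\theta)$ is a constant plus a linear term plus non-positive multiples of $p(X,\Delta)$ and of $p(X,t)$ integrated in $t$, and then invoke convexity of $X\mapsto p(X,t)$. Your sign computations in step (a) reproduce exactly the paper's coefficients $-(f_1(\theta;\theta)-\theta+\kappa-\omega\kappa-K)$ and $-[(\alpha-\mu)(1+\kappa)e^{(\alpha-\mu)(\Delta-t)}-\omega\kappa(\delta-\mu)e^{(\delta-\mu)(\Delta-t)}]$, and you identify the same two hypotheses that make them non-positive. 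Where you genuinely add value is step (b): the paper simply asserts $\partial^2_X p>0$ in one sentence (``because $\zeta_t=\ln(X_t+1)$ is a Brownian motion with drift $\alpha-\mu-\tfrac12\sigma^2\ge0$, ... $p$ is convex''), whereas you actually prove it, by showing that $w:=\tilde p_{yy}-\tilde p_y$ solves the same constant-coefficient backward equation, is zero at $t=0$, is non-negative on the barrier because $\tilde p_{yy}(y_0^+,t)=-(2\nu/\sigma^2)\tilde p_y(y_0^+,t)$ and $\tilde p_y\le0$, and hence is non-negative everywhere by the maximum principle (or by the explicit reflection formula). This is exactly where the drift condition $\nu=\alpha-\mu-\tfrac12\sigma^2\ge0$ enters, and your argument makes that transparent while the paper's does not. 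One small correction: since it is $X_t+1$ (not $X_t$) that is a geometric Brownian motion, the correct logarithmic coordinate is $y=\ln(1+X)$ with barrier $y_0=\ln(1+\kappa)$, giving $p_{XX}=(1+X)^{-2}(\tilde p_{yy}-\tilde p_y)$; with $y=\ln X$ the transformed equation would not have constant coefficients. The substitution does not change any subsequent step of your argument.
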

\noindent\textit{Proof}.  When $\theta\in(\kappa,u_0]$, we have $\theta\leq u_0<\frac{\alpha-\mu}{\delta-\alpha}+\frac{\delta-\mu}{\delta-\alpha}(\kappa-\omega\kappa-K)$, which is equivalent to $f_1(\theta;\theta)-\theta+\kappa-K>\omega \kappa$ as $f_1(\theta; \theta)=\frac{\alpha-\mu}{\delta-\mu}(1+\theta)$.
Because $\zeta_t:=\ln(X_t+1)$ is a Brownian motion with drift $\alpha-\mu-\frac{1}{2}\sigma^2\geq 0$ and volatility $\sigma$,  the corresponding cumulative distribution  for stopping time $\tau$, denoted by $\tilde p(\zeta,t)$, is a concave function in $X$.  $p(X,t)$ in (\ref{p(x,t)}) can be rewritten as $p(X,t)=\tilde p(\ln(X+1), t)$. Hence the cumulative distribution $p(X,t)$ satisfies
\begin{eqnarray*}
\frac{\partial }{\partial X}p(X,t)<0,\quad \frac{\partial ^2}{\partial X^2}p(X,t)>0, \quad \frac{\partial }{\partial t}p(X,t)>0,
\end{eqnarray*}
for all $(X,t)\in \mathbb{R}^+ \times \mathbb{R}^+.$ Therefore, from (\ref{appendix_Mx}) we get
 \begin{eqnarray*}\begin{aligned}
&\frac{\partial }{\partial X}H(X;\theta)= - e^{-(\delta-\mu)\Delta}(f_1(\theta;\theta)-\theta+\kappa-\omega\kappa-K)\frac{\partial }{\partial X}p(X,\Delta)+e^{-(\delta-\alpha)\Delta}\\
&-e^{-(\delta-\mu)\Delta}\int_0^\Delta \left[(\alpha-\mu)(1+\kappa)e^{(\alpha-\mu)(\Delta-t)}- \omega\kappa( \delta-\mu) e^{(\delta-\mu)(\Delta-t)} \right] \frac{\partial}{\partial X} p(X,t)dt> 0,\\
&\frac{\partial^2}{\partial X^2}H(X;\theta)=- e^{-(\delta-\mu)\Delta}(f_1(\theta;\theta)-\theta+\kappa-\omega\kappa-K)\frac{\partial^2 }{\partial X^2} p(X,\Delta)\\
&-e^{-(\delta-\mu)\Delta}\int_0^\Delta \left[(\alpha-\mu)(1+\kappa)e^{(\alpha-\mu)(\Delta-t)}- \omega\kappa( \delta-\mu) e^{(\delta-\mu)(\Delta-t)} \right]  \frac{\partial^2}{\partial X^2}p(X,t)dt < 0,
\end{aligned}\end{eqnarray*}
as $f_1(\theta;\theta)-\theta+\kappa-\omega \kappa-K\geq 0$ and $(\alpha-\mu)(1+\kappa)e^{(\alpha-\mu)(\Delta-t)}-\omega\kappa(\delta-\mu)e^{(\delta-\mu)(\Delta-t)} \geq 0$ for all $t\in[0,\Delta]$.
This completes the proof of Lemma \ref{appendix_H_concave}.\\

We then prove the existence of $(u_1,u_2)$ in two steps.\\
i). If $\frac{\partial H(X;u_0)}{\partial X}\big|_{X=\kappa}> \frac{\partial f_1(X;u_0)}{\partial X}\big|_{X=\kappa}$ holds,  the equation $ H(\kappa; u_0)= f_1(\kappa; u_0)=\omega \kappa$ implies that there exists an $X\in(\kappa,u_0)$ such that  $H(X; u_0)>f_1(X;u_0)$. On the other hand, we know $H(u_0;u_0)<f_1(u_0;u_0)=\frac{\alpha-\mu}{\delta-\mu}(1+u_0)$ by Lemma \ref{appendix_compare_H_f2} under the condition $\omega\kappa<e^{-(\delta-\alpha)\Delta}\frac{\alpha-\mu}{\delta-\mu}(1+\kappa)$. This implies that $H(X;u_0)$ must cross $f_1(X;u_0)$ from above within the interval $\kappa<X<u_0.$\\
 ii). 
We know $\frac{\partial }{\partial \theta}f_1(x;\theta)<0$ in the proof of Lemma \ref{lemma_u0}. Then we have $H(\kappa; \theta) = \omega\kappa=f_1(\kappa;u_0) < f_1(\kappa; \theta)$ for $\theta\in(\kappa,u_0)$.
By Lemma \ref{appendix_compare_H_f2}, we infer $H(\theta;\theta)<f_1(\theta;\theta)=\frac{\alpha-\mu}{\delta-\mu}(1+\theta)$ for $\kappa<\theta\leq u_0$
Further,
 \begin{eqnarray*}\lim_{\theta\rightarrow \kappa}H(\theta;\theta)=\omega\kappa<\frac{\alpha-\mu}{\delta-\mu}(1+\kappa)=\lim_{\theta\rightarrow \kappa} f_1(\kappa;\theta).\end{eqnarray*}
 Notice $H(X;\theta)$ is concave and  increasing by Lemma \ref{appendix_H_concave}, while $f_1(X;\theta)$ is also increasing in $X$.
 These implies that there exists a $\theta'\in(\kappa,u_0)$ such that $H(X;\theta')<f_1(X;\theta')$ for all $X\in[\kappa, \theta'].$

From i) and ii), by the continuity of $H(X;\theta)$ and $f_1(X;\theta)$ w.r.t $X$ and $\theta$, there exist a $u_2$ in the interval $(\kappa,u_0)$ such that $H(u_1;u_2)=f_1(u_1;u_2)$ for some $u_1\in(\kappa, u_2)$ while $H(X;u_2)\leq f_1(X;u_2)$ for all $X \in[\kappa,u_2]$. The continuous differentiability of $H(X;u_2)$ and $f_1(X;u_2)$ w.r.t $X $ implies that
$\frac{\partial H(X;u_2)}{\partial X}\Big|_{X=u_1}= \frac{\partial f_1(X;u_2)}{\partial X}\Big|_{X=u_1}$. This pair  of $(u_1, u_2)$ is what we are looking for.

\section{Verification of $V(X)$ in (\ref{value of recap}) in Theorem \ref{semi-explicit solution with recap}}\label{verification of conjectured v}
In this section, we prove that $V(X)$ in (\ref{value of recap}) in Theorem \ref{semi-explicit solution with recap} satisfies the HJB equation (\ref{fully_hjb3}), i.e., it is a solution to (\ref{fully_hjb3})  and, thus, is indeed the value function of the fully observed model by the uniqueness of viscosity solution.

We  first give a lemma to show that $H(X;\theta)$ is decreasing in delay time $\Delta$.
\begin{lemma}\label{appendix_H_monotone_delay}
When $\alpha-\mu-\frac{1}{2}\sigma^2\geq 0$, $\omega \kappa \leq e^{-(\delta-\alpha)\Delta}\frac{\alpha-\mu}{\delta-\mu}(1+\kappa)$,
$u_0<\frac{\alpha-\mu}{\delta-\alpha}+\frac{\delta-\mu}{\delta-\alpha}(\kappa-\omega\kappa-K)$, and $ \frac{\delta-\alpha}{\delta-\mu} e^{(\alpha-\mu)\Delta} \geq \big[\frac{\delta-\alpha}{\delta-\mu} e^{(\alpha-\mu)\Delta}(1+\kappa) -\frac{\delta-\alpha}{\delta-\mu}(1+u_0)-K\big] \frac{\partial }{\partial X} p(X,\Delta)\big |_{X=\kappa}$, where $u_0$ is given in Lemma \ref{lemma_u0} and $p(X,t)$ is defined in (\ref{p(x,t)}), the function $H(X;\theta)$ in (\ref{fully M}) satisfies $\frac{\partial}{\partial \Delta} H(X;\theta)\leq 0$ for all $X\geq \kappa$ and $\theta\in(\kappa,u_0]$.
\end{lemma}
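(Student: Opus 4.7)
The plan is to differentiate the closed-form expression for $H(X;\theta)$ in (A.1-1) directly with respect to $\Delta$ and then show the result is non-positive, using the probabilistic properties of $p(X,\Delta)$ established in the proof of Lemma A.2.

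First I would apply Leibniz's rule to the two integrals in (A.1-1), pick up the boundary terms at $t=\Delta$, and combine them with the derivatives of the outer $e^{-(\delta-\mu)\Delta}$ factor, the inner $e^{(\alpha-\mu)\Delta}(X+1)$ factor, and the three $p(X,\Delta)$ factors. After cancellations (notably the contribution $J':=\omega\kappa(\delta-\mu)\int_0^\Delta e^{(\delta-\mu)(\Delta-t)}p(X,t)\,dt$ drops out entirely), I expect the result to take the form $e^{(\delta-\mu)\Delta}\partial_\Delta H = (\delta-\alpha)\bigl[(1+\theta)-(\theta-\kappa)p-(X+1)e^{(\alpha-\mu)\Delta}\bigr] + (\delta-\mu)K(1-p) + B\,p_\Delta + (\delta-\alpha)J$, where $p=p(X,\Delta)$, $J=(\alpha-\mu)(\kappa+1)\int_0^\Delta e^{(\alpha-\mu)(\Delta-t)}p(X,t)\,dt$, and $B=-f_1(\theta;\theta)+\theta+K-\kappa+\omega\kappa$. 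The hypothesis $\theta\leq u_0<\tfrac{\alpha-\mu}{\delta-\alpha}+\tfrac{\delta-\mu}{\delta-\alpha}(\kappa-\omega\kappa-K)$ gives $B\leq 0$, exactly as in the proof of Lemma A.2.

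Next I would verify the boundary identity $\partial_\Delta H|_{X=\kappa}=0$, which follows at once from $p(\kappa,\Delta)=1$, $p_\Delta(\kappa,\Delta)=0$, and $J|_{X=\kappa}=(\kappa+1)(e^{(\alpha-\mu)\Delta}-1)$. Since the target inequality is an equality at the left endpoint, it suffices to show that $X\mapsto\partial_\Delta H(X;\theta)$ is non-increasing on $[\kappa,\infty)$, i.e.\ $\partial_X\partial_\Delta H\leq 0$. Differentiating the displayed expression in $X$, the coefficient $-(\delta-\alpha)(\theta-\kappa)-(\delta-\mu)K$ in front of $p$ is non-positive and $p_X<0$ (Lemma A.2), so that group contributes non-negatively; however, together with $-(\delta-\alpha)e^{(\alpha-\mu)\Delta}$ from the pure piece and $(\delta-\alpha)(\alpha-\mu)(\kappa+1)\int_0^\Delta e^{(\alpha-\mu)(\Delta-t)}p_X\,dt\leq 0$ from $\partial_X J$, the three contributions partially cancel and their net sign is delicate.

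The main obstacle will be bounding this combination uniformly in $X\geq\kappa$. I expect the sharp point to be $X=\kappa$, where $|p_X|$ is largest; after replacing $\theta$ by its worst case $u_0$ (monotonicity of the bracketed coefficient in $\theta$ is immediate from its explicit form), the required sign on $\partial_X\partial_\Delta H|_{X=\kappa}$ collapses precisely to the fourth hypothesis $\tfrac{\delta-\alpha}{\delta-\mu}e^{(\alpha-\mu)\Delta}\geq\bigl[\tfrac{\delta-\alpha}{\delta-\mu}e^{(\alpha-\mu)\Delta}(1+\kappa)-\tfrac{\delta-\alpha}{\delta-\mu}(1+u_0)-K\bigr]p_X(X,\Delta)\big|_{X=\kappa}$. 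For $X>\kappa$, the monotonicity of $|p_X|$ (via $p_{XX}>0$ from Lemma A.2) and the corresponding monotonicity of the integrand only improve the estimate, so once the boundary case is secured the rest of the interval follows. Combining these ingredients yields $\partial_\Delta H\leq 0$ throughout $\{X\geq\kappa\}$.
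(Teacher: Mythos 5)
Your computation of $e^{(\delta-\mu)\Delta}\partial_\Delta H$ is correct (I verified the cancellation of the $\omega\kappa(\delta-\mu)\int_0^\Delta e^{(\delta-\mu)(\Delta-t)}p\,dt$ contribution and the coefficients of $p$, $p_\Delta$ and $J$), the sign $B\le 0$ follows from the third hypothesis exactly as you say, and the skeleton ``value zero at $X=\kappa$ plus monotonicity in $X$'' is indeed the paper's strategy. The gap is in the last step. The paper does \emph{not} differentiate your exact remainder $R(X)=(\delta-\alpha)\bigl[(1+\theta)-(\theta-\kappa)p-(X+1)e^{(\alpha-\mu)\Delta}\bigr]+(\delta-\mu)K(1-p)+(\delta-\alpha)J$ in $X$; it first bounds the integral, via $\int_0^\Delta e^{-(\alpha-\mu)t}\partial_t p\,dt\le p(X,\Delta)$, equivalently $J\le(1+\kappa)\bigl(e^{(\alpha-\mu)\Delta}-1\bigr)p(X,\Delta)$ (with equality at $X=\kappa$, so the boundary value $0$ is preserved), obtaining $R\le-(\delta-\mu)G(X;\theta,\Delta)$ where $G$ depends on $X$ only through $p(X,\Delta)$. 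Only then does it run the argument $G(\kappa)=0$, $\partial_X G\ge 0$, and it is $\partial_X G\ge 0$ — not $\partial_X R\le 0$ — whose worst case ($\theta=u_0$, $p_X$ evaluated at $X=\kappa$ by convexity) collapses to the fourth hypothesis.

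If instead you differentiate $R$ itself, $\partial_X R$ contains the term $(\delta-\alpha)(\alpha-\mu)(1+\kappa)\int_0^\Delta e^{(\alpha-\mu)(\Delta-t)}p_X(X,t)\,dt$, which does not appear in, and is not controlled by, the fourth hypothesis: that hypothesis involves only $p_X(X,\Delta)|_{X=\kappa}$ and the coefficient $\frac{\delta-\alpha}{\delta-\mu}e^{(\alpha-\mu)\Delta}(1+\kappa)$, whereas the condition you would need (after discarding the helpful but unquantified integral) reads $\frac{\delta-\alpha}{\delta-\mu}e^{(\alpha-\mu)\Delta}\ge\bigl[\frac{\delta-\alpha}{\delta-\mu}(u_0-\kappa)+K\bigr]\,|p_X(\kappa,\Delta)|$, which is strictly \emph{stronger} than the stated hypothesis because $u_0-\kappa=(1+u_0)-(1+\kappa)\ge(1+u_0)-e^{(\alpha-\mu)\Delta}(1+\kappa)$. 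Exploiting the integral to close this gap would require a bound such as $p_X(X,t)\le p_X(X,\Delta)$ for $t\le\Delta$, i.e.\ a sign on $p_{Xt}$, which is established nowhere. (A smaller point: if you do not discard $Bp_\Delta\le0$ before differentiating, you also pick up $Bp_{X\Delta}$, again of unknown sign.) The fix is to insert the paper's bound on $J$ before the monotonicity step; with that one addition your argument goes through and coincides with the paper's.
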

\begin{proof}
Due to $\alpha-\mu-\frac{1}{2}\sigma^2\geq 0$, we have $\alpha>\mu$.
Let us rewrite $H(X;\theta)$ in (\ref{appendix_Mx}) as
\begin{align*}
&H(X;\theta) = e^{-(\delta-\mu)\Delta}(f_1(\theta;\theta)-\theta-1-K)(1-p(X,\Delta)) + e^{-(\delta-\alpha)\Delta}(1+X)\\
&-e^{-(\delta-\alpha)\Delta}(1+\kappa)\int_0^\Delta e^{-(\alpha-\mu)t}\frac{\partial }{\partial t} p(X,t)dt + \omega\kappa \int_0^\Delta e^{-(\delta-\mu)t} \frac{\partial }{\partial t} p(X,t)dt.
\end{align*}
Taking derivative with respect to $\Delta$ in the above equation, we obtain
\begin{eqnarray*}
\begin{aligned}
&\frac{\partial}{\partial \Delta}H(X;\theta) = e^{-(\delta-\mu)\Delta}(f_1(\theta;\theta)-\theta-1-K)[-(\delta-\mu)(1-p(X,\Delta)) -\frac{\partial }{\partial t} p(X,t)\big|_{t=\Delta}] \\
&- (\delta-\alpha) e^{-(\delta-\alpha)\Delta}(1+X)+e^{-(\delta-\alpha)\Delta}(\delta-\alpha)(1+\kappa)\int_0^\Delta e^{-(\alpha-\mu)t}\frac{\partial }{\partial t} p(X,t)dt \\
& -(1+\kappa)e^{-(\delta-\mu)\Delta} \frac{\partial }{\partial t} p(X,t)\big|_{t=\Delta} + \omega\kappa e^{-(\delta-\mu)\Delta} \frac{\partial }{\partial t} p(X,t)\big|_{t=\Delta}\\
\leq &
 (\delta-\mu) e^{-(\delta-\mu)\Delta} \left(\frac{\delta-\alpha}{\delta-\mu}(1+\theta)+K\right)  (1-p(X,\Delta))\\
 & - (\delta-\alpha) e^{-(\delta-\alpha)\Delta}\big[1+X -(1+\kappa)p(X,\Delta) \big]\\
 &-e^{-(\delta-\mu)\Delta}\left(\frac{\alpha-\mu}{\delta-\mu}(1+\theta)-\theta-K+\kappa -\omega\kappa \right)\frac{\partial }{\partial t} p(X,t)\big|_{t=\Delta},
\end{aligned}\end{eqnarray*}
where the above inequality comes from $\int_0^\Delta e^{-(\alpha-\mu)t}\frac{\partial }{\partial t}p(X,t)dt
\leq \int_0^\Delta \frac{\partial }{\partial t} p(X,t)dt = p(X,\Delta)$.  Because $u_0<\frac{\alpha-\mu}{\delta-\alpha}+\frac{\delta-\mu}{\delta-\alpha}(\kappa-\omega\kappa-K)$, we know $\frac{\alpha-\mu}{\delta-\mu}(1+\theta)-\theta-K+\kappa-\omega\kappa \geq 0$ for all $\theta\in(\kappa,u_0]$. In order to prove $\frac{\partial}{\partial \Delta}H(X;\theta) \leq 0$, we only need to verify that $G(X; \theta, \Delta) \geq 0$ for all $X\geq \kappa$ and $\theta\in(\kappa,u_0]$, where
{\fontsize{10.5pt}{10.5pt}
\begin{eqnarray*}\begin{aligned}
G(X; \theta,\Delta): = &  \frac{\delta-\alpha}{\delta-\mu} e^{(\alpha-\mu) \Delta}\big[1+X -(1+\kappa)p(X,\Delta) \big] -  \left(\frac{\delta-\alpha}{\delta-\mu}(1+\theta)+K\right)  (1-p(X,\Delta)).
\end{aligned}\end{eqnarray*}}
It is easy to check $G(\kappa; \theta, \Delta) = 0$. Moreover, by calculation,
{\fontsize{10.5pt}{10.5pt}
\begin{eqnarray*}\begin{aligned}
\frac{\partial}{\partial X}G(X;\theta,\Delta) = &  \frac{\delta-\alpha}{\delta-\mu} e^{(\alpha-\mu) \Delta} + \left(\frac{\delta-\alpha}{\delta-\mu}(1+\theta)+K - \frac{\delta-\alpha}{\delta-\mu} e^{(\alpha-\mu) \Delta}(1+\kappa) \right) \frac{\partial }{\partial X} p(X,\Delta).
\end{aligned}\end{eqnarray*}}
By lemma \ref{appendix_H_concave}, $p(X,\Delta)$ is a concave function with respect to $X$.
If $\frac{\delta-\alpha}{\delta-\mu}(1+\theta)+K - \frac{\delta-\alpha}{\delta-\mu} e^{(\alpha-\mu) \Delta}(1+\kappa)\leq 0$, we have $\frac{\partial}{\partial X}G(X;\theta,\Delta) \geq 0$ because $\frac{\partial }{\partial X} p(X,\Delta)\leq 0$.  If $\frac{\delta-\alpha}{\delta-\mu}(1+\theta)+K - \frac{\delta-\alpha}{\delta-\mu} e^{(\alpha-\mu) \Delta}(1+\kappa)>0$, since $\frac{\partial }{\partial X^2} p(X,\Delta)\geq 0$, we have $\frac{\partial }{\partial X} p(X,\Delta)|_{X=\kappa} \leq \frac{\partial }{\partial X} p(X,\Delta)<0$ for all $X\geq \kappa$ and 
\begin{eqnarray*}\begin{aligned}
&\frac{\partial}{\partial X}G(X;\theta,\Delta) \\
\geq &  \frac{\delta-\alpha}{\delta-\mu} e^{(\alpha-\mu) \Delta}  + \left(\frac{\delta-\alpha}{\delta-\mu}(1+\theta)+K - \frac{\delta-\alpha}{\delta-\mu} e^{(\alpha-\mu) \Delta}(1+\kappa) \right)  \frac{\partial }{\partial X} p(X,\Delta)\big |_{X=\kappa}  \\
\geq &  \frac{\delta-\alpha}{\delta-\mu} e^{(\alpha-\mu) \Delta}  + \left(\frac{\delta-\alpha}{\delta-\mu}(1+u_0)+K - \frac{\delta-\alpha}{\delta-\mu} e^{(\alpha-\mu) \Delta}(1+\kappa) \right)\frac{\partial }{\partial X} p(X,\Delta)\big |_{X=\kappa}    \geq 0.
\end{aligned}\end{eqnarray*}
Hence $G(X;\theta,\Delta)\geq  G(\kappa;\theta,\Delta) = 0$ for all $X\geq\kappa$ and $\theta\in(\kappa,u_0]$. This completes the proof of Lemma \ref{appendix_H_monotone_delay}.
\end{proof}

We now verify $V(X)$ in (\ref{value of recap}) satisfies the HJB equation (\ref{fully_hjb3}). When $X=\kappa$, we have $V(\kappa) = H(\kappa; u_2) = \omega \kappa$, so that the boundary condition is satisfied. \\
$(i).$ We prove that $V\geq \mathcal{P}^0V$ for all $X\in[\kappa, +\infty)$. When $X\in[\kappa, u_1]$, we know $V(X) = H(X; u_2)=\mathcal{P}^0V(X)$ by construction. When $X\in[u_1, u_2]$, we know $V(X)=f_1(X;u_2) \geq H(X; u_2) = \mathcal{P}^0V(X)$ by construction and the proof in online Appendix \ref{fully_proof_uniqueness_u1_u2}.  When $X\in[u_2, +\infty]$, we have $V(X) = f_2(X; u_2) > H(X; u_2)$ by construction and Lemma \ref{appendix_compare_H_f2}.  \\
$(ii).$ We prove that $\mathcal{L}V\leq 0$ for all $X\in[\kappa, +\infty)$.  When $X\in (\kappa, u_1)$, we know $V(X)=H(X; u_2)$ by construction.
From Ito's formula,
$$
e^{-(\delta-\mu)\tau_\varepsilon} H(X_{\tau_\varepsilon}; u_2) = M(X; u_2) + \int_0^{\tau_\varepsilon}\mathcal{L}H(X_t; u_2)dt + \int_0^{\tau_\varepsilon} H_{X}(X_t;u_2) \sigma (1+X_t) d W_t,
$$
where $\tau_\varepsilon:= \varepsilon\wedge \inf\{ t\geq 0: X_t\notin (X-\varepsilon,X+\varepsilon)\} $ for small $\varepsilon$ such that $\kappa<X-\varepsilon$ and $X+\varepsilon<u_1$. As $H_{X}(X_t;u_2)$ is bounded during $[0,\tau_\varepsilon]$, the expected stochastic integrand above is zero and we have
\begin{equation}\label{partial_H_disH}
\mathbb{E}[e^{-(\delta-\mu)\tau_\varepsilon} H(X_{\tau_\varepsilon}; u_2)] = H(X; u_2) + \mathbb{E}\Big[\int_0^{\tau_\varepsilon}\mathcal{L}H(X_t; u_2)dt \Big].
\end{equation}
Notice that $\mathbb{E}[e^{-(\delta-\mu)\tau_\varepsilon} H(X_{\tau_\varepsilon}; u_2)]$ means the value of waiting  until  $\tau_\varepsilon$ prior to ordering a new equity. As there is no liquidation during time period $[0,\tau_\varepsilon]$, the expectation $\mathbb{E}[e^{-(\delta-\mu)\tau_\varepsilon} H(X_{\tau_\varepsilon}; u_2)]$ is equivalent to the case where new equity is ordered at time zero and will be issued at time $(\Delta+\varepsilon)$. That is $\mathbb{E}[e^{-(\delta-\mu)\tau_\varepsilon} H(X_{\tau_\varepsilon}; u_2)] = H(X; u_2, \Delta+\varepsilon)$ where $H(X; u_2, \Delta+\varepsilon)$ refers to $H(X;u_2)$ in (\ref{fully M}) with $\Delta$ replaced by $\Delta+\varepsilon$. By Lemma \ref{appendix_H_monotone_delay}, we have $H(X;u_2) \geq H(X; u_2, \Delta+\varepsilon)$ for small $\varepsilon>0$ and therefore, $H(X;u_2)\geq \mathbb{E}[e^{-(\delta-\mu)\tau_\varepsilon} H(X_{\tau_\varepsilon}; u_2)]$.
Taking $\varepsilon\rightarrow 0$ in (\ref{partial_H_disH}) gives
$\mathcal{L}H(X;u_2) = \mathcal{L}V(X) \leq 0$ for $X\in(\kappa,u_1)$. When $X\in[u_1, u_2]$, obviously $\mathcal{L}V=0$ by construction.  When $X>u_2$, we have $\mathcal{L}V= \mathcal{L}f_2(X;u_2)=-(\delta-\alpha)(X-u_2)\leq 0$. \\
$(iii).$ We prove that $V_{X} \geq 1$ for all $X>\kappa$.  It is easy to check that the corresponding $V(X)$ in (\ref{value of recap}) is globally concave and $C^{2}$ except at $X=u_1$ by construction and Lemma \ref{appendix_H_concave}.
When $X\in[u_2, +\infty)$, $V(X) = f_2(X; u_2)=f_1(u_2; u_2) + X-u_2$  so that $V_X = 1$.  When $X\in(\kappa, u_2)$, we obtain $V_X\geq 1$  by the concavity of $V(X)$.

Therefore, $V(X)$ in (\ref{value of recap}) is a solution to HJB equation (\ref{fully_hjb3}) and satisfies the linear growth condition and the boundary condition, which leads to the desired result by the uniqueness of viscosity solution.

\section{Proof of Proposition \ref{theorem of s(t)}.}  \label{appendix_theorem of s(t)}
To prove Proposition  \ref{theorem of s(t)}, we define $d\tilde Z_t:=\frac{M_t}{m}dt+d \mathcal{B}_t,$ where $M_t = \log Y_t$ and $m$ is the noise level in $(\ref{signalprocess})$.
Next, we create a new probability measure \begin{eqnarray*}\frac{d\tilde{\mathbb{P}}}{d \mathbb{P}}= \nu_t,\quad  d\nu_t= -\nu_t\frac{M_t}{m}d\mathcal{B}_t.\end{eqnarray*}
Thus, $\nu_t=\exp\left(-\frac{1}{2}\int_0^t\frac{M_s^2}{m^2}ds-\int_0^t\frac{M_s}{m}d\mathcal{B}_s\right).$ We first give two lemmas as follows.
\begin{lemma}
\label{wiener process}Process $\widehat W_t, \widetilde W_t, \tilde Z_t$ are all standard Wiener processes under $\tilde{\mathbb{P}}$, where \begin{eqnarray*}
\widehat W_t:=W_t+\int_0^t\rho\frac{M_s}{m}ds,\quad \widetilde W_t:= \frac{\widehat W_t-\rho \tilde Z_t}{\sqrt{1-\rho^2}}. \end{eqnarray*} Moreover, process $\widehat W_t$ and $\tilde Z_t$ have correlation $\rho$ while process $\widetilde W_t$ and $\tilde Z_t$ are independent.
\end{lemma}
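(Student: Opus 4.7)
\noindent\emph{Proof proposal.} The plan is to reduce everything to a single multidimensional application of Girsanov's theorem. First I would decompose the correlated pair $(W, \mathcal{B})$: since under $\mathbb{P}$ the Brownian motions $W$ and $\mathcal{B}$ have correlation $\rho$, write
\[
B^1_t := \mathcal{B}_t, \qquad B^2_t := \frac{W_t - \rho\,\mathcal{B}_t}{\sqrt{1-\rho^2}},
\]
so that $(B^1, B^2)$ is a standard two-dimensional Brownian motion under $\mathbb{P}$ and $W_t = \rho B^1_t + \sqrt{1-\rho^2}\,B^2_t$. In this parametrization the Radon--Nikodym density becomes the Doléans exponential
\[
\nu_t = \exp\!\left(-\int_0^t \Theta_s \cdot dB_s - \tfrac{1}{2}\int_0^t |\Theta_s|^2\,ds\right), \qquad \Theta_s := \bigl(M_s/m,\ 0\bigr),
\]
since $d\mathcal{B} = dB^1$ and the exponent only involves integrals against $B^1$.

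Next I would invoke the multidimensional Girsanov theorem (after noting that $\nu$ is a true martingale, which one checks either via Novikov or directly via the boundedness/moment estimates already imposed on $M_t = \log Y_t$ in the model setup of Section \ref{section with recap}). Under $\tilde{\mathbb{P}}$ the processes
\[
\tilde B^1_t := B^1_t + \int_0^t \frac{M_s}{m}\,ds, \qquad \tilde B^2_t := B^2_t
\]
form an independent pair of standard Brownian motions. The key point is that the drift is entered only in the first coordinate, so $\tilde B^2 = B^2$ stays a Brownian motion and remains independent of $\tilde B^1$ under $\tilde{\mathbb{P}}$.

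Then I would identify the three processes of interest with linear combinations of $(\tilde B^1, \tilde B^2)$: directly $\tilde Z_t = \mathcal{B}_t + \int_0^t M_s/m\,ds = \tilde B^1_t$, while
\[
\widehat W_t = W_t + \int_0^t \rho\,\frac{M_s}{m}\,ds = \rho\,\tilde B^1_t + \sqrt{1-\rho^2}\,\tilde B^2_t, \qquad \widetilde W_t = \frac{\widehat W_t - \rho\,\tilde Z_t}{\sqrt{1-\rho^2}} = \tilde B^2_t.
\]
Each of $\tilde Z$, $\widehat W$, $\widetilde W$ is therefore a standard $\tilde{\mathbb{P}}$-Brownian motion (the combination defining $\widehat W$ has coefficients whose squares sum to one). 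A direct quadratic-covariation computation gives $d\langle \widehat W, \tilde Z\rangle_t = \rho\,dt$, proving correlation $\rho$, while $d\langle \widetilde W, \tilde Z\rangle_t = d\langle \tilde B^2, \tilde B^1\rangle_t = 0$, so by Lévy's characterization and the fact that $(\tilde B^1, \tilde B^2)$ is a 2D Brownian motion, $\widetilde W$ and $\tilde Z$ are independent.

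The main technical wrinkle I anticipate is the Girsanov admissibility of $\nu$: showing $\mathbb{E}[\nu_t] = 1$ requires a growth condition on $M_t$. Apart from this, the argument is essentially a change of coordinates followed by Girsanov, and all the correlation/independence claims fall out of the orthogonal decomposition of $W$ against $\mathcal{B}$.
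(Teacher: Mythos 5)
Your proof is correct and follows essentially the same route as the paper's: a Girsanov change of measure followed by a quadratic-covariation computation, with the independence of $\widetilde W_t$ and $\tilde Z_t$ read off from the orthogonal decomposition. Your write-up is in fact more explicit than the paper's one-line argument --- the reduction to the independent pair $(B^1,B^2)$ and the remark that one must verify $\nu$ is a true martingale (via Novikov or moment bounds on $M_t$) are details the paper leaves implicit.
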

\begin{proof}
By the change of measure, it is easy to obtain  that $\tilde Z_t$ and $\widehat W_t$ are standard Wiener Processes under $\tilde{\mathbb{P}}$. Moreover, $d\widehat W_t d \tilde Z_t=\rho dt$, thus, $\widehat W_t$ and $\tilde Z_t$ have correlation $\rho$ and $\widetilde W_t$ is independent of $\tilde Z_t$ by the definition of $\widetilde W_t$.
\end{proof}

\begin{lemma}\label{dynamic of q}
Define $q(M, t)$ as the conditional density of $M_t$ under $\mathcal{G}_t$ that solves $\mathbb{E}[\psi(M_t,t)|\mathcal{G}_t]=\int q(M,t)\psi(M,t)dM/ \int \psi(M,t)dM$ for any test function $\psi\in C^{2,1}_0,$ where  $M_t=\log Y_t$ is the log total assets. Then $q(M,t)$ satisfies 
\begin{equation*}dq=[-q_M(\alpha-\frac{1}{2}\sigma^2)+\frac{1}{2}\sigma^2q_{MM}]dt+(q\frac{M}{m}-\sigma\rho q_M)d\tilde Z_t.\end{equation*}
\end{lemma}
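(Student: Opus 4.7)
The plan is to derive a Zakai-type SPDE for the unnormalized filter density by combining the Girsanov change of measure already introduced, an Ito calculation on the reciprocal likelihood weight, and a duality argument against test functions. First I would rewrite the dynamics of $M_t=\log Y_t$ under $\tilde{\mathbb{P}}$. By Lemma \ref{wiener process}, $d\widehat W_t = \rho\, d\tilde Z_t + \sqrt{1-\rho^2}\, d\widetilde W_t$ with $\widetilde W$ independent of $\tilde Z$, and $dW_t = d\widehat W_t - \rho (M_t/m)\,dt$, so
$$dM_t = \bigl(\alpha-\tfrac12\sigma^2 - \sigma\rho\tfrac{M_t}{m}\bigr)dt + \sigma\rho\, d\tilde Z_t + \sigma\sqrt{1-\rho^2}\, d\widetilde W_t.$$

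Next I would introduce the reciprocal weight $\Lambda_t:=\nu_t^{-1}$. A direct Ito calculation on $\exp\!\bigl(\tfrac12\int_0^t(M_s/m)^2 ds + \int_0^t (M_s/m)\,d\mathcal{B}_s\bigr)$, together with $d\mathcal{B}_t = d\tilde Z_t - (M_t/m)dt$ under $\tilde{\mathbb{P}}$, yields the clean martingale dynamics $d\Lambda_t = \Lambda_t (M_t/m)\, d\tilde Z_t$. For a smooth, compactly supported test function $\psi(M,t)\in C^{2,1}_c$, I would apply the Ito product rule to $\Lambda_t\psi(M_t,t)$; the cross variation between the $d\widehat W$-term of $d\psi$ and the $d\tilde Z$-term of $d\Lambda$ produces a $\sigma\rho\Lambda_t\psi_M(M_t/m)\,dt$ correction which exactly cancels the $-\sigma\rho(M_t/m)\psi_M$ piece of the drift, leaving
\begin{align*}
d(\Lambda_t\psi) &= \Lambda_t\bigl[\psi_t + (\alpha-\tfrac12\sigma^2)\psi_M + \tfrac12\sigma^2\psi_{MM}\bigr]dt \\
&\quad + \Lambda_t\bigl[\psi\tfrac{M_t}{m} + \sigma\rho\,\psi_M\bigr]d\tilde Z_t + \sigma\sqrt{1-\rho^2}\,\Lambda_t\,\psi_M\, d\widetilde W_t.
\end{align*}

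I would then take $\mathbb{E}^{\tilde{\mathbb{P}}}[\cdot\mid\mathcal{G}_t]$. The $d\widetilde W$-stochastic integral contributes zero in conditional mean since $\widetilde W$ is independent of $\mathcal{G}_t$. By the Kallianpur--Striebel identity, $\mathbb{E}^{\tilde{\mathbb{P}}}[\Lambda_t\,\phi(M_t,t)\mid\mathcal{G}_t] = \int q(M,t)\phi(M,t)\,dM$ for bounded measurable $\phi$, which I would apply term-by-term after commuting the conditional expectation with the $dt$ and $d\tilde Z$ integrators (Fubini / optional-projection for the martingale part). Subtracting $\int q\psi_t\,dM\,dt$ on both sides via the product rule, and integrating by parts in $M$ (boundary terms vanish because $\psi$ has compact support) to transfer one and two derivatives from $\psi$ onto $q$, produces the weak identity
$$\int\psi\,dq\,dM = \int\psi\bigl[-(\alpha-\tfrac12\sigma^2)q_M + \tfrac12\sigma^2 q_{MM}\bigr]dM\,dt + \int\psi\bigl[q\tfrac{M}{m} - \sigma\rho\, q_M\bigr]dM\,d\tilde Z_t,$$
and arbitrariness of $\psi\in C^{2,1}_c$ delivers the announced SPDE.

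The main obstacle lies in the passage from the integrated identity to the SPDE: one must justify the interchange of $\mathbb{E}^{\tilde{\mathbb{P}}}[\cdot\mid\mathcal{G}_t]$ with both the Lebesgue and the $d\tilde Z$ integrators, and assign meaning to $q$ as a distributional (rather than classical) solution. This requires uniform-on-compacts integrability of $\Lambda_t$, which follows from a Novikov estimate applied along a localizing sequence in $M$, together with enough spatial decay of $q$ as $|M|\to\infty$ to legitimize the integration by parts. These are the standard technicalities of the Zakai framework and are handled in the usual way by working first with $\psi\in C^{2,1}_c$ and then extending by density; once that is done, the displayed coefficients of $dt$ and $d\tilde Z_t$ match the statement exactly.
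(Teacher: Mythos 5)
Your derivation is correct and follows essentially the same route as the paper: both pass to the reference measure $\tilde{\mathbb{P}}$, apply Ito's formula to the likelihood-weighted test function $\eta_t\varphi(M_t,t)$ (your $\Lambda_t\psi$, with the same cancellation of the $-\sigma\rho\frac{M_t}{m}\psi_M$ drift by the cross-variation term), project onto $\mathcal{G}_t$ to obtain the Zakai equation, and integrate by parts to read off the SPDE for the unnormalized density. The only difference is in the projection step, where the paper justifies the interchange of conditional expectation with the stochastic integrals by testing against the exponential $\mathcal{G}_t$-adapted processes $\gamma(t)$ of Bensoussan's method rather than invoking optional projection directly; this is a technical variant, not a different argument.
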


\begin{proof} In order to prove lemma \ref{dynamic of q}, we divide the whole process into several steps. More details can be referred to in Chapter 4 in  \cite{B04}.\\
$\textit{Step 1}$: Un-normalized conditional probability.
Let us introduce a new information filtration $\tilde{\mathcal{G}}_t:=\sigma\{\tilde Z(s),s\leq t\}$.  Obviously $\tilde{\mathcal{G}}_t=\mathcal{G}_t$ as $\tilde Z_t = Z_t/m$, so that
\begin{eqnarray*}\Pi(t)(\varphi_t)=\mathbb{E}[\varphi(M_t,t)|\mathcal{G}_t]=\mathbb{E}[\varphi(M_t,t)|\tilde{\mathcal{G}}_t],\ \varphi\in C_0^{2,1}.\end{eqnarray*}
It is convenient to use $\tilde{\mathbb{P}}$ defined in Lemma \ref{wiener process} instead of $\mathbb{P}$, because the noise signal $\tilde Z_t$ is standard Wiener process under $\tilde{\mathbb{P}}$. Therefore, we need also the Radon-Nikodym derivative
\begin{eqnarray*}\frac{d \mathbb{P}}{d\tilde{\mathbb{P}}}=\frac{1}{\nu_t}=\eta_t,\quad 
\eta_t=\exp\left(\frac{1}{2}\int_0^t\frac{M_s^2}{m^2}ds+\int_0^t\frac{M_s}{m}d\mathcal{B}_s\right).\end{eqnarray*}
The un-normalized conditional probability is defined as $p(t)(\varphi_t)=\tilde{ \mathbb{E}}[\varphi(M_t,t)\eta_t| \tilde{\mathcal{G}}_t].$ Then we have \begin{eqnarray*}\Pi(t)(\varphi_t)=\mathbb{E}[\varphi(M_t,t)| \tilde{\mathcal{G}}_t]=\frac{\tilde{\mathbb{E}}[\varphi(M_t,t)\eta_t| \tilde{\mathcal{G}}_t]}{\tilde{\mathbb{E}}[\eta_t| \tilde{\mathcal{G}}_t]}=\frac{p(t)(\varphi_t)}{p(t)(1)}.\end{eqnarray*}
$\textit{Step 2}$: Zakai equation. To proceed, we note that \begin{eqnarray*}d\eta_t&=&\eta_t\left[\frac{M_t^2}{m^2}dt+\frac{M_t}{m}d\mathcal{B}_t \right]=\eta_t\frac{M_t}{m}d\tilde Z_t,\\  d M_t&=&\left(\alpha-\frac{1}{2}\sigma^2-\sigma\rho\frac{M_t}{m}\right)dt+\sigma d\widehat W_t\\
&=&\left(\alpha-\frac{1}{2}\sigma^2-\sigma\rho\frac{M_t}{m}\right)dt+\sigma \rho d\tilde Z_t+\sigma\sqrt{1-\rho^2}d\widetilde W_t.\end{eqnarray*}
Therefore, by Ito's lemma,
\begin{eqnarray*}d[\eta_t\varphi(M_t,t)]&=&\eta_t\left[\frac{\partial\varphi}{\partial t}+\left(\alpha-\frac{1}{2}\sigma^2-\sigma\rho\frac{M_t}{m}\right)\frac{\partial \varphi}{\partial M}+\frac{1}{2}\sigma^2\frac{\partial^2\varphi}{\partial M^2}\right]dt\\
&+&\eta_t\frac{\partial \varphi}{\partial M}\left[\sigma \rho d\tilde Z_t+\sigma\sqrt{1-\rho^2}d\widetilde W_t\right]+\eta_t\varphi\frac{M_t}{m}d\tilde Z_t+\eta_t\sigma\rho\frac{M_t}{m}\frac{\partial \varphi}{\partial M}dt\\
&=& \eta_t\left\{\left[\frac{\partial\varphi}{\partial t}-\mathcal{A}_1\varphi\right]dt+\sigma\sqrt{1-\rho^2}\frac{\partial \varphi}{\partial M}d\widetilde W_t+\left(\sigma\rho\frac{\partial \varphi}{\partial M}+\varphi\frac{M_t}{m}\right)d\tilde Z_t\right\},
\end{eqnarray*}
where $\mathcal{A}_1=-(\alpha-\frac{1}{2}\sigma^2)\frac{\partial }{\partial M}-\frac{1}{2}\sigma^2\frac{\partial^2}{\partial M^2}.$
To compute the conditional expectation, we use test functions which are $\tilde{\mathcal{G}}_t$-measurable. Because the generating processes are Wiener processes, it is sufficient to test with stochastic processes of the form\begin{eqnarray*}d\gamma(t)=i\gamma(t)\left(\beta_1(t)d\widetilde W_t+\beta_2(t)d\tilde Z_t\right), \gamma(0)=1
\end{eqnarray*}
where $i=\sqrt{-1},$ and $\beta_1, \beta_2 \in \mathbb{R}$ are arbitrary deterministic bounded functions. By the process of $\eta_t\varphi(M_t, t)$ the definition of $p(t)(\varphi_t)$, and $\tilde{\mathbb{E}}[\widetilde W_t|\tilde{\mathcal{G}}_t]=0,$ \begin{eqnarray*}\tilde{\mathbb{E}}[\gamma(t)p(t)(\varphi_t)]&=&\tilde{\mathbb{E}}[\gamma(t)\Pi(0)(\varphi_0)]+\tilde {\mathbb{E}}\left[\gamma(t)\int_0^t p(s)\left(\frac{\partial \varphi}{\partial s}-\mathcal{A}\varphi\right)ds\right]\\
& & +\tilde{\mathbb{E}}\left[\gamma(t)\int_0^t p(s)\left(\sigma\rho\frac{\partial \varphi}{\partial M}+\varphi\frac{M_s}{m}\right)d\tilde Z_s\right].\end{eqnarray*}
Because this relation holds for all $\gamma(t)$ (defined above), we get the Zakai equation\begin{eqnarray*}p(t)(\varphi_t)&=& \Pi(0)(\varphi_0)+ \int_0^t p(s)\left(\frac{\partial \varphi}{\partial s}-\mathcal{A}\varphi\right)ds +\int_0^t p(s)\left(\sigma\rho\frac{\partial \varphi}{\partial M}+\varphi\frac{M_s}{m}\right)d\tilde Z_s.\end{eqnarray*}
$\textit{Step 3}$: Un-normalized density. We look for a density that solves the equation above, i.e $q(M,t)$ such that\begin{eqnarray*}p(t)(\varphi_t)=\int q(M,t)\varphi(M,t)dM.\end{eqnarray*}
Then we get\begin{eqnarray*}\int q(M,t)\varphi(M,t)dM&=&\int q(M,0)\varphi(M,0)dM+\int_0^t\int q(M,s)\left(\frac{\partial \varphi}{\partial s}-\mathcal{A}\varphi\right)dMds\\
& & +\int_0^t\int q(M,s)\left(\sigma\rho\frac{\partial \varphi}{\partial M}+\varphi\frac{M_s}{m}\right)dMd\tilde Z_s.\end{eqnarray*}
Using integration by parts in $t$ and $M$, we have\begin{equation*}\int\left[dq+\mathcal{A}_1^*qdt+(\sigma\rho q_M-q\frac{M}{m})d\tilde Z_t\right]\varphi dM=0.\end{equation*}
where $\mathcal{A}_1^*$, the adjoint of $A_1$, is given as $\mathcal{A}_1^*=(\alpha-\frac{1}{2}\sigma^2)\frac{\partial }{\partial M}-\frac{1}{2}\sigma^2\frac{\partial^2}{\partial M^2}.$
This completes the proof of Lemma \ref{dynamic of q}.
\end{proof}
Now after giving two lemmas above, we divide the proof of Proposition  \ref{theorem of s(t)} into several steps as well.\\
$\textit{Step 1}$: Un-normalized density processes. Let the initial density be normal with mean $M_0$ and variance $S_0$, i.e $p_0(M)=\frac{1}{\sqrt{2\pi S_0}}e^{-\frac{1}{2}(M-M_0)^2/S_0}$.  We postulate \begin{eqnarray*}q(M,t)=\exp\left(-\frac{1}{2}[\Gamma_t M^2-2v_t M+b_t ]\right),\end{eqnarray*}
where $\Gamma$ is deterministic and $v,b$ are Ito processes. Thus we write \begin{eqnarray*}dv=v_0dt+v_1d\widetilde W_t+v_2d\tilde Z_t, \quad db=v_0dt+b_1d \widetilde W_t+b_2d\tilde Z_t. \end{eqnarray*}
By Ito's lemma,  \begin{eqnarray*}dq=q\left[-\frac{1}{2}\Gamma'M^2dt+Mdv-\frac{1}{2}db+\frac{1}{2}(Mv_1-\frac{1}{2}b_1)^2dt+\frac{1}{2}(Mv_2-\frac{1}{2}b_2)^2dt\right].\end{eqnarray*}
$\textit{Step 2}$: Note that $q_M=q(-M\Gamma+v)$ and $q_{MM}=q(-M\Gamma+v)^2-q\Gamma.$ Comparing the coefficient of diffusion and drift term, we obtain \begin{eqnarray*}& & v_2=\frac{1}{m}+\sigma\rho\Gamma, \  -\frac{1}{2}b_2=-\sigma\rho v, \  v_1=b_1=0,\\
& & \Gamma'=(\frac{1}{m}+\sigma\rho\Gamma)^2-\sigma^2\Gamma^2, \\
& & v_0=(\alpha-\frac{1}{2}\sigma^2)\Gamma+\frac{1}{m}\sigma\rho v+\sigma^2\Gamma v(\rho^2-1),\\
& & b_0=\sigma^2v^2(\rho^2-1)+v(2\alpha-\sigma^2)+\sigma^2\Gamma.\end{eqnarray*}
The equation $q(M,0)=p_0(M)=\frac{1}{\sqrt{2\pi S_0}}e^{-(M-M_0)^2/(2S_0)}$ gives $\Gamma_0=1/S_0$. Thus, by setting $S_t=1/\Gamma_t$, we obtain the Riccati equation (\ref{dSt}),  which has solution as shown in equation (\ref{S(t)}).\\
$\textit{Step 3}$: Kalman filter. By sep 2, we obtain \begin{eqnarray*}dv&=&\left[(\alpha-\frac{1}{2}\sigma^2)\Gamma+\frac{1}{m}\sigma\rho v+\sigma^2\Gamma v(\rho^2-1)\right]dt+\left(\frac{1}{m}+\sigma\rho\Gamma\right) d\tilde Z_t.
\end{eqnarray*}
Let $\hat M_t=v_tS_t$. Because $q(M,0)=p_0(M)$ implies $v_0=L_0/S_0,$ we obtain the initial condition $\hat M_0=M_0$. By Ito's lemma, \begin{equation*}d\hat M_t=\left(\alpha-\frac{1}{2}\sigma^2\right)dt+\frac{S_t}{m}\left(d\tilde Z_t-\frac{\hat M_t}{m}dt\right)+\sigma\rho \left (\tilde Z_t-\frac{\hat M_t}{m}dt\right).\end{equation*}
If we define an innovation process $d\tilde{\mathcal{B}}_t=d\tilde Z_t-\frac{\hat M_t}{m}dt, \tilde{\mathcal{B}}_0=0$, then \begin{eqnarray*}d\hat M_t=\left(\alpha-\frac{1}{2}\sigma^2\right)dt+\left(\frac{S_t}{m}+\sigma\rho\right) d\tilde{\mathcal{B}}_t.\end{eqnarray*}
$\textit{Step 4}$: Conditional Probability density. By step 2, \begin{eqnarray*}db_t=[\sigma^2v^2(\rho^2-1)+v(2\alpha-\sigma^2)+\sigma^2\Gamma]dt+2\sigma\rho v d\tilde Z_t.\end{eqnarray*}
Initial condition $q(M,0)=p_0(M)$ gives $e^{-b(0)/2}=\frac{1}{\sqrt{2\pi S_0}}e^{-L_0^2/(2S_0)}.$
Using $\Gamma_t=1/S_t$ and $v_t=\hat M_t/ S_t$, we rewrite $q(M,t)$ as  \begin{eqnarray}\label{partial_q(M,t)}
q(M,t)=\frac{K_t}{\sqrt{2\pi S_t}}\exp\left(-\frac{1}{2S_t}(M-\hat M_t)^2\right),\end{eqnarray}
where $K_t =\sqrt{2\pi S_t }e^{\frac{1}{2}(-b_t+\hat M^2_t/ S_t)}$. Notice that the initial condition  $e^{-b_0/2}=\frac{1}{\sqrt{2\pi S_0}} e^{-M_0^2/(2S_0)}$ and $\hat M_0=M_0$ yield $K_0=1$. By Ito's lemma, we get $d [\frac{1}{2}(-b_t+\hat M^2_t/ S_t)]=-\frac{\hat M_t^2}{m^2}dt+\frac{\hat M_t}{m} d\tilde Z_t.$
Thus $K_t$ satisfies \begin{eqnarray*}K_t=\exp\left(-\frac{1}{2}\int_0^t\frac{\hat M_s^2}{m^2}ds+\int_0^t\frac{\hat M_s}{m}d\tilde Z_s\right).\end{eqnarray*}
Then we complete the proof of Proposition  \ref{theorem of s(t)}. 

\section{Derivation  of  $I(S)$ and $\psi(x,y)$}\label{Appendix A1}
From Appendix \ref{appendix_theorem of s(t)} and (\ref{partial_q(M,t)}), 
$M_t$ follows a normal distribution with mean $\hat M_t$ and variance $S_t$. We know that $Y_t = e^{M_t}$ follows a lognormal distribution with mean $(\log \hat Y_t - S_t/2)$ and variance $S_t$. As the fraction of $E_t+D_t$ and $Y_t$ is constant when there are no controls, $E_t+D_t$ follows a lognormal distribution with mean $(\log (\hat E_t+D_t) - S_t/2)$ and variance $S_t$. Thus, 
{\fontsize{10pt}{10pt}
\begin{equation*}\begin{aligned}
\mathbb{P}(\hat E/D \leq \kappa ) = \mathbb{P} \left( \frac{\log(E_t+D_t) -(\log(\hat E_t+D_t) -S_t/2)}{\sqrt{S_t}} \leq \frac{\log((1+\kappa)D_t) - (\log(\hat E_t+D_t) -S_t/2)}{\sqrt{S_t}}\right).
\end{aligned}
\end{equation*}}
We can obtain 
$\frac{\log((1+\kappa)D_t) - (\log(\hat E_t+D_t) -S_t/2)}{\sqrt{S_t}}= \Phi^{-1}(a)$, 
which yields formula (\ref{hat tao}) with $I(S_t)=-1+(1+\kappa)e^{\frac{1}{2}S_t-\Phi^{-1}(a)\sqrt{S_t}}$.
Similarly, \begin{equation*}\begin{aligned}
\mathbb{E}[X_{\hat\tau^\pi}^+ | \mathcal{G}_{\hat\tau^\pi}]  = \mathbb{E} \left[ \left(\frac{\exp\{\log( \hat E_{\hat\tau^\pi} +D_{\hat\tau^\pi}) -\frac{1}{2}S_{\hat\tau^\pi}+ \sqrt{S_{\hat\tau^\pi}}Z\}}{D_{\hat\tau^\pi}} -1\right)^+ \Big | \mathcal{G}_{\hat\tau^\pi}\right],
\end{aligned}
\end{equation*}
where $Z$ is a standard normal random variable. This gives the formula for $\psi(x,y)$.

\section{Proof of Proposition \ref{partial_growth condition}}\label{appendix_linear growth condition}

Using the transformation $(\ref{v(x,S)})$, $\hat V(\hat X,S)$ as defined in (\ref{v(x,S)}) satisfies
\begin{eqnarray*}\label{partial_valuefunctionxs}
 \hat V (\hat X,S)=\sup_{\pi\in\Pi}\mathbb{E}\Big[ \int_0^{\hat \tau^{\pi}} e^{-\delta_1 u}dL^{\pi}_u
 -\sum_i e^{-\delta_1(t^{\pi}_i+\Delta)}
\left(s^{\pi}_i+K\right)  \mathbf{1}_{\{t^{\pi}_i+\Delta< \hat \tau^{\pi}\}} +e^{-\delta_1 \hat \tau^{\pi} }\omega X_{\hat \tau^\pi}^+ \Big],
\end{eqnarray*}
where $\delta_1=\delta-\mu$.
It suffices to prove that $\hat V(\hat X,S)$ grows linearly. Let us first introduce two lemmas.

\begin{lemma}\label{parital_v_delta}
Define
\begin{equation} \label{partial_V^0}
\hat V^0(\hat X,S): =  \sup_{\pi\in\Pi^0}\mathbb{E}\Big[ \int_0^{\hat \tau^{\pi}} e^{-\delta_1 u}dL^{\pi}_u  -\sum_i e^{-\delta_1 t^{\pi}_i}
\left(s^{\pi}_i+K\right)  \mathbf{1}_{\{t^{\pi}_i< \hat \tau^{\pi}\}} +e^{-\delta_1 \hat \tau^{\pi} }\omega X_{\hat \tau^\pi}^+ \Big]
\end{equation}
in $(\hat X,S)\in \Omega$, where the set of admissible strategy $\Pi^0$ is defined as
\begin{equation*}
\Pi^0: =\left\{ \pi=(L_t^\pi, (s_i^\pi,t_i^\pi)_i)\Bigg| 
\begin{aligned}& s_i^\pi \text{ is issued at } t_i^\pi \text{ with no time delay},\\
& \hat X_t^{\pi} \geq I(S_t)\text{ for } t\geq0, \text{ and } \sum_{i: t^{\pi}_i< \hat \tau^{\pi}}e^{-\delta_1 t_i^\pi} (s_i^\pi+K)<+\infty \end{aligned} \right\}.
 \end{equation*}
Then
$\hat V(\hat X,S)\leq \hat V^0(\hat X,S),\quad  \forall (\hat X,S)\in \Omega.$
\end{lemma}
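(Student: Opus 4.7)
The plan is to build an explicit measure-preserving correspondence between admissible strategies for the delayed problem and admissible strategies for the no-delay problem that realises exactly the same pathwise payoff, from which the inequality follows by taking suprema. Intuitively, the no-delay problem is richer because it allows the controller to react to information received during any hypothetical order-delivery lag; I only need to show that every delayed strategy can be replicated in $\Pi^0$.

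Given any $\pi=(L^\pi,(s^\pi_i,t^\pi_i)_i)\in\Pi$, I would define a candidate $\tilde\pi=(L^\pi,(\tilde s_i,\tilde t_i)_i)$ by keeping the dividend process $L^\pi$ unchanged and setting $\tilde t_i:=t^\pi_i+\Delta$, $\tilde s_i:=s^\pi_i$. Each $\tilde s_i$ is $\mathcal{G}_{\tilde t_i}$-measurable because $s^\pi_i$ was already $\mathcal{G}_{t^\pi_i}$-measurable and $\tilde t_i\geq t^\pi_i$. I would then verify that the controlled processes coincide pathwise, i.e. $\hat X^{\tilde\pi}_t=\hat X^{\pi}_t$ for all $t\geq 0$: in $\pi$, an order placed at $t^\pi_i$ with delay $\Delta$ injects $s^\pi_i$ into the state at time $t^\pi_i+\Delta$; in $\tilde\pi\in\Pi^0$ (no delay), the order placed at $\tilde t_i=t^\pi_i+\Delta$ injects $\tilde s_i=s^\pi_i$ at the same time $t^\pi_i+\Delta$. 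Between orders and deliveries the dynamics are identical since neither the diffusion coefficients nor the filtering equations depend on pending issuances. Consequently the liquidation times $\hat\tau^{\tilde\pi}$ and $\hat\tau^\pi$ agree, and the state constraint $\hat X^{\tilde\pi}_t\geq I(S_t)$ holds because the same constraint held under $\pi$. This gives $\tilde\pi\in\Pi^0$.

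Finally I would compare the three pieces of the objective. The dividend term $\int_0^{\hat\tau}e^{-\delta_1 u}\,dL^\pi_u$ and the terminal term $e^{-\delta_1\hat\tau}\omega X^+_{\hat\tau}$ are manifestly identical because $L^\pi$ and $\hat\tau$ are preserved. For the issuance costs,
\begin{equation*}
e^{-\delta_1\tilde t_i}(\tilde s_i+K)=e^{-\delta_1(t^\pi_i+\Delta)}(s^\pi_i+K),
\end{equation*}
which matches the cost term of $\pi$ in \eqref{partial_V^0}'s delayed analogue exactly, and the indicator sets $\{\tilde t_i<\hat\tau^{\tilde\pi}\}=\{t^\pi_i+\Delta<\hat\tau^\pi\}$ also coincide. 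Hence the expected payoffs are equal, and taking the supremum over $\pi\in\Pi$ on the left and using $\tilde\pi\in\Pi^0$ on the right yields $\hat V(\hat X,S)\leq \hat V^0(\hat X,S)$.

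I do not expect any serious obstacle: the only point requiring care is the measurability of $\tilde s_i$ and the identification of the controlled processes under the two filtrations, both of which follow from the fact that the delayed dynamics do not depend on the pending issuance before its delivery time. If the reader objects that $\Pi^0$ could be defined with issuance time restricted to stopping times of the innovation filtration, the same argument applies because $t^\pi_i+\Delta$ is a stopping time whenever $t^\pi_i$ is.
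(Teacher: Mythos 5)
Your construction is exactly the one in the paper: shift each issuance time by $\Delta$ (i.e.\ $\tilde t_i=t_i^\pi+\Delta$, $\tilde s_i=s_i^\pi$, $L^{\tilde\pi}=L^\pi$), observe that the controlled state, liquidation time, and all three payoff terms coincide pathwise, and conclude by taking suprema. The proposal is correct and follows the paper's argument; the only detail worth adding is that the summability requirement in $\Pi^0$ is verified via the bound $s_i^\pi\leq\bar s$, as the paper does.
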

\begin{proof}
For any admissible strategy $\pi=\{L_t^\pi, (s_i^\pi, t_i^\pi)_i\} \in \Pi$, we construct a new strategy $\tilde \pi$ as follows:
\begin{eqnarray*}\begin{aligned}
 L_t^{\tilde \pi} = L_t^\pi, \quad  t_i^{\tilde \pi} = t_i^{\pi} +\Delta, \quad s_i^{\tilde \pi} = s_i^{\pi}.
\end{aligned}\end{eqnarray*}
Apparently $\hat X_t^{(\hat X,S), \tilde \pi} = \hat X_t^{(\hat X,S),\pi}$,   $\hat \tau^{\tilde \pi} = \hat \tau^\pi$, and $\sum_{i}e^{-\delta_1t_i^{\tilde\pi}} (s_i^{\tilde\pi}+K)\mathbf{1}_{\{  t^{\tilde \pi}_i< \hat \tau^{\tilde \pi}\}}\leq\sum_{i}e^{-\delta_1t_i^{\tilde\pi}} (\bar{s}+K)<  +\infty$. Hence, $\tilde  \pi \in \Pi^0$, which leads to the desired result.
\end{proof}

\begin{lemma}\label{partial_v_v0}
Let $\varphi$ be a nonnegative $C^{2,1}$ supersolution to the HJB equation
\begin{eqnarray*}\min\left\{\left[\left(\left(S/m+\sigma\rho\right)^2-\sigma^2\right)\frac{\partial}{\partial S}-\mathcal{L}\right] \hat U^0, \quad  \frac{\partial }{\partial \hat X}\hat U^0-1, \quad  \hat U^0 - \mathcal{H}\hat U^0\right\} =0 \end{eqnarray*}
in $(\hat X,S)\in \Omega$, with boundary condition $\hat U^0(I(S),S) = \omega \psi(I(S),S)$, where $\mathcal{H}\hat U^0(\hat X,S) = \underset{s\in(0,\bar{s})}{\sup} \big\{ \hat U^0(\hat X+s,S) -s -K\big\}$.
Let $\hat V^0$ be as given in (\ref{partial_V^0}).
Then,
\begin{eqnarray*}
\hat V^0(\hat X,S) \leq \varphi(\hat X,S), \  \  \forall (\hat X,S) \in \Omega.
\end{eqnarray*}
\end{lemma}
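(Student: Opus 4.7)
\noindent\textit{Proof Proposal.} The statement is a classical verification/supersolution inequality, so the plan is to fix an arbitrary admissible $\pi=(L^\pi,(s_i^\pi,t_i^\pi)_i)\in\Pi^0$ starting from $(\hat X,S)\in\Omega$, apply It\^o's formula to $e^{-\delta_1 t}\varphi(\hat X_t^\pi,S_t)$ on a localized interval, and convert each of the three supersolution inequalities into the three pieces of the reward functional.

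First I would choose a localizing sequence $\tau_n:=\hat\tau^\pi\wedge n\wedge\inf\{t\geq 0:\hat X_t^\pi\geq n\}$ and apply It\^o's formula between $0$ and $\tau_n$. Because $L^\pi$ is singular and the equity issuances produce pure jumps at $t_i^\pi$ (no time delay under $\Pi^0$), the expansion naturally splits into four terms: (i) the continuous drift $\int_0^{\tau_n}e^{-\delta_1 u}\bigl[\mathcal{L}\varphi-((S_u/m+\sigma\rho)^2-\sigma^2)\partial_S\varphi\bigr]du$, which by the first supersolution inequality is $\leq 0$; (ii) the dividend term $-\int_0^{\tau_n}e^{-\delta_1 u}\partial_{\hat X}\varphi\,dL_u^\pi$, which by $\partial_{\hat X}\varphi\geq 1$ is dominated by $-\int_0^{\tau_n}e^{-\delta_1 u}dL_u^\pi$; (iii) the jumps $\sum_{t_i^\pi\leq\tau_n}e^{-\delta_1 t_i^\pi}\bigl[\varphi(\hat X_{t_i^{\pi}-}+s_i^\pi,S_{t_i^\pi})-\varphi(\hat X_{t_i^{\pi}-},S_{t_i^\pi})\bigr]$, which by the obstacle $\varphi\geq\mathcal{H}\varphi$ is bounded above by $\sum e^{-\delta_1 t_i^\pi}(s_i^\pi+K)$; and (iv) a local martingale term with zero expectation at the localizing time $\tau_n$.

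Combining the three inequalities and rearranging yields
\begin{equation*}
\varphi(\hat X,S)\geq \mathbb{E}\!\left[e^{-\delta_1\tau_n}\varphi(\hat X_{\tau_n}^\pi,S_{\tau_n})+\int_0^{\tau_n}e^{-\delta_1 u}dL_u^\pi-\sum_{t_i^\pi\leq\tau_n}e^{-\delta_1 t_i^\pi}(s_i^\pi+K)\mathbf{1}_{\{t_i^\pi<\hat\tau^\pi\}}\right].
\end{equation*}
I then pass to the limit $n\to\infty$. On $\{\hat\tau^\pi<\infty\}$ one has $\hat X_{\tau_n}^\pi\to I(S_{\hat\tau^\pi})$, so the boundary condition gives $\varphi(\hat X_{\tau_n}^\pi,S_{\tau_n})\to\omega\psi(I(S_{\hat\tau^\pi}),S_{\hat\tau^\pi})$; together with the tower property and the identification $\mathbb{E}[X_{\hat\tau^\pi}^+\mid\mathcal{G}_{\hat\tau^\pi}]=\psi(I(S_{\hat\tau^\pi}),S_{\hat\tau^\pi})$ coming from Appendix~\ref{Appendix A1}, the first term converges to $\mathbb{E}[e^{-\delta_1\hat\tau^\pi}\omega X_{\hat\tau^\pi}^+]$. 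On $\{\hat\tau^\pi=\infty\}$, nonnegativity of $\varphi$ together with $e^{-\delta_1\tau_n}\to 0$ makes the boundary contribution vanish (in the limit or after applying Fatou's lemma). Monotone convergence handles the positive integral $\int e^{-\delta_1 u}dL_u^\pi$, while the admissibility bound $\sum e^{-\delta_1 t_i^\pi}(s_i^\pi+K)<\infty$ together with dominated convergence handles the jump sum. Thus $\varphi(\hat X,S)\geq J(\hat X,S;\pi)$ for every $\pi\in\Pi^0$, and taking the supremum over $\pi$ delivers $\varphi\geq\hat V^0$.

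The main obstacle I expect is the limiting argument on $\{\hat\tau^\pi=\infty\}$: without any a priori growth bound on $\varphi$ I cannot directly say $\mathbb{E}[e^{-\delta_1\tau_n}\varphi(\hat X_{\tau_n}^\pi,S_{\tau_n})]\to 0$, so I would apply Fatou's lemma to the nonnegative left-hand remainder to obtain the one-sided inequality that is needed. A secondary technical point is the treatment of a possible jump part of the singular control $L^\pi$; this is dealt with by using the pathwise bound $\varphi(\hat X-\Delta L)-\varphi(\hat X)\geq -\Delta L$ coming from $\partial_{\hat X}\varphi\geq 1$ and a standard decomposition of $L^\pi$ into continuous and jump components.
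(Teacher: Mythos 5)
Your proposal is correct and follows essentially the same route as the paper's proof: localize, apply It\^o's formula to the supersolution, use the three supersolution inequalities to control the drift, the singular dividend part, and the issuance jumps respectively, and then pass to the limit via Fatou's lemma, treating $\{\hat\tau^\pi<\infty\}$ and $\{\hat\tau^\pi=\infty\}$ separately exactly as the paper does. The only cosmetic difference is the choice of localizing times (the paper also stops the process at distance $1/n$ from the boundary $I(S_t)$), which does not change the argument.
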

\begin{proof}
For any admissible strategy $\pi \in \Pi^0$, set
$\hat \tau_n^\pi = \inf\{t\geq 0: \hat X_t^{\pi} < I(S_t) +1/n, \text{ or }  \hat X_t^{\pi} > n\}\wedge n$, $n\in \mathbb{N}$, and apply Ito's formula for the supersolution $\varphi(\hat X_t,S_t)$ between $0$ and $\hat \tau_n^\pi$. Then, taking expectation and noting that the integrand in the stochastic integral is bounded on $[0,\hat\tau_n^\pi]$, we get
\begin{eqnarray*}\begin{aligned}
&\mathbb{E} [e^{-\delta_1 \hat \tau_n^\pi}\varphi(\hat X_{\hat \tau_n^\pi},S_{\hat \tau_n^\pi}) ] \\
=&\varphi(\hat X,S)  + \mathbb{E} \bigg[ \int_0^{\hat \tau_n^\pi} e^{-\delta_1 t}    \left[\mathcal{L} - \left(\left(S/m+\sigma\rho\right)^2-\sigma^2\right)\frac{\partial}{\partial S}\right]  \varphi (\hat X_t, S_t)dt    \\
&-\int_0^{\hat \tau_n^\pi} e^{-\delta_1 t}\varphi_{\hat X}(\hat X_t,S_t) d(L_t^\pi)^c +\sum_{0\leq u\leq \hat \tau^\pi_n} e^{-\delta_1 u}[\varphi(\hat X_u,S_u) -\varphi(\hat X_{u-},S_{u-})]  \bigg],
\end{aligned}\end{eqnarray*}
where $(L_t^\pi)^c = L_t^\pi - \underset{0\leq u\leq  t }{\sum}(L_{u}^\pi - L_{u-}^\pi)$. Notice $\hat X_{u}-\hat X_{u-} = -(L_{u}^\pi - L_{u-}^\pi)$ if $u\neq t_i^\pi$ and $\hat X_{u}-\hat X_{u-} = s_i^\pi$ if $u=t_i^\pi$.
Since $\varphi_{\hat X} \geq 1$ and $\varphi(\hat X_{t_i^\pi-},S_{t_i^\pi-}) \geq \varphi(\hat X_{t_i^\pi},S_{t_i^\pi}) -s_i^\pi-K$ from the property of supersolution, by the mean value theorem, we have
\begin{eqnarray*}\begin{aligned}
&\varphi(\hat X_u,S_u) -\varphi(\hat X_{u-},S_{u-}) \leq -(L_{u}^\pi - L_{u-}^\pi), \text{ if } u\neq t_i^\pi; \\
&\varphi(\hat X_{u},S_{u}) - \varphi(\hat X_{u-},S_{u-}) \leq s_i^\pi +K, \text{ if } u =t_i^\pi.
\end{aligned}\end{eqnarray*}
Recall that $\left[\mathcal{L} - \left(\left(S/m+\sigma\rho\right)^2-\sigma^2\right)\frac{\partial}{\partial S}\right]  \varphi  \leq 0$. We then obtain
\begin{eqnarray*}\begin{aligned}
&\mathbb{E} [e^{-\delta_1 \hat \tau_n^\pi}\varphi(\hat X_{\hat \tau_n^\pi},S_{\hat \tau_n^\pi}) ] +\mathbb{E} \bigg[ \int_0^{\hat \tau_n^\pi} e^{-\delta_1 t}  dL_t^\pi \bigg]
 \leq \varphi(\hat X,S)  + \mathbb{E} \bigg[  \sum_{t_i^\pi \leq \hat \tau_n^\pi } e^{-\delta_1 t_i^\pi} (s_i^\pi+K) \bigg]\\
\leq & \varphi(\hat X,S)  + \mathbb{E} \bigg[  \sum_{t_i^\pi < \hat \tau^\pi } e^{-\delta_1 t_i^\pi} (s_i^\pi+K) \bigg] <+\infty.
\end{aligned}\end{eqnarray*}

Notice that $\hat\tau_n^\pi \rightarrow \hat\tau^\pi$. we have $\underset{n\rightarrow \infty}{\underline{\lim}}  e^{-\delta_1 \hat \tau_n^\pi}\varphi(\hat X_{\hat\tau_n^\pi},S_{\hat\tau_n^\pi}) \geq \omega e^{-\delta_1 \hat \tau^\pi}\psi(I(S_{\hat\tau^\pi}),S_{\hat\tau^\pi})$ if  $\hat \tau^\pi<+\infty$, and $\underset{n\rightarrow \infty}{\underline{\lim}}  e^{-\delta_1 \hat \tau_n^\pi}\varphi(\hat X_{\hat\tau_n^\pi},S_{\hat\tau_n^\pi}) \geq 0 =\omega e^{-\delta_1 \hat \tau^\pi}\psi(I(S_{\hat\tau^\pi}),S_{\hat\tau^\pi})$ If $\hat\tau^\pi=+\infty$  as $\psi(I(S_{\hat\tau^\pi}),S_{\hat\tau^\pi})<+\infty$.

Applying Fatou's Lemma and sending $n\rightarrow \infty$, we get
 \begin{eqnarray*}\begin{aligned}
 \varphi(\hat X,S)  \geq \mathbb{E} \bigg[ \int_0^{\hat \tau^\pi} e^{-\delta_1 t}  dL_t^\pi - \sum_{t_i^\pi < \hat \tau^\pi } e^{-\delta_1 t_i^\pi} (s_i^\pi+K) +  e^{-\delta_1 \hat \tau^\pi}\omega \psi(I(S_{\hat\tau^\pi}),S_{\hat\tau^\pi})\bigg ],
 \end{aligned}\end{eqnarray*}
which yields the desired result due to the arbitrariness of admissible strategy $\pi\in\Pi^0$.\\
\end{proof}

We are ready to prove the linear growth property of $\hat V$ using Lemma \ref{parital_v_delta} and \ref{partial_v_v0}. One the one hand,  shareholders can choose to pay $\hat X-I(S)$ amount of dividend at initial time and then the regulators liquidate the bank. Thus, we have $V(\hat X,S)\geq \hat X-I(S)+\omega\psi(I(S),S)$. Due to $\omega \psi(I(S),S)\geq 0$ and $I(S)\leq I(\bar S)\vee \kappa$, we can derive  $V(\hat X,S)\geq \hat X- C_0$ with any positive constant $C_0\geq I(\bar S)\vee \kappa$.
On the other hand, we consider a smooth function $\varphi(\hat X,S) = \hat X +C_1 +C_2 S$, where $C_1$ and $C_2$ are nonnegative constants to be determined later.
 In order to make $\varphi$ a supersolution to the HJB equation in Lemma \ref{partial_v_v0}, we need
\begin{eqnarray*}\begin{aligned}
&\left[\left(\left(S/m+\sigma\rho\right)^2-\sigma^2\right)\frac{\partial}{\partial S}-\mathcal{L}\right]  \varphi \\
&= \left(\left(S/m+\sigma\rho\right)^2-\sigma^2\right)C_2+(\delta-\mu)(\hat X+C_1+C_2 S) -(\alpha-\mu)(1+\hat X) \geq 0,\\
&  \frac{\partial }{\partial \hat X} \varphi-1 =0 \geq 0,\\
 & \mathcal{H} \varphi(\hat X,S) = \underset{s>0}{\sup} \big\{ \hat X+s+C_1+C_2 S -s -K\big\} \leq \hat X+C_1+C_2 S = \varphi(\hat X,S),\\
 & \varphi(I(S),S) = I(S) + C_1+C_2S \geq \omega \psi(I(S),S).
\end{aligned}\end{eqnarray*}

Noticing $\delta>\max\{\alpha,\mu\}$, we infer that $\varphi $ is indeed  a supersolution to the HJB equation provided that $C_1 \geq \frac{C_2\sigma^2 +\alpha-\mu -(\delta-\alpha)I(S)}{\delta-\mu}$ and $I(S) + C_1+C_2S \geq \omega \psi(I(S),S)$. Because $\psi(I(S),S)\leq I(S)+1$, $\omega \in [0,1]$,  and $S$ is bounded in $[0,\bar{S}]$,  we can find many such pairs of $(C_1,C_2)$.
By Lemma \ref{parital_v_delta} and Lemma  \ref{partial_v_v0},  the desired result follows.

\section{Proof of Weak Dynamic Programming}\label{appendix_proof_wdpp}
Based on Theorem 3.3 in Chapter 3 and Theorem 4.3 in Chapter 4 in  \cite{T12}, 
we give the proof of the weak dynamic programming principle in Proposition \ref{prop_wdpp}.

The weak DPP is trivial for the stopping time $\theta$ with value in $[\hat\tau^\pi, +\infty)$ by the definition of value function $\hat V$. Hence we only consider the stopping time $\theta$ with value in $\theta\in [0, \hat\tau^\pi)$.  
Define the objective function under strategy $\pi$ up to time $\theta$ as
\begin{eqnarray*} \begin{aligned}
&J_{\pi}(\hat X_\theta,S_\theta) \\
=& \mathbb{E}\Big[\int_\theta^{\hat \tau^{\pi}} e^{-\delta_1 (u-\theta)}dL^{\pi}_u
 -\sum_i e^{-\delta_1(t^{\pi}_i+\Delta-\theta)}
\left(s^{\pi}_i+K  \right)  \mathbf{1}_{\{\theta\leq t^{\pi}_i+\Delta< \hat \tau^{\pi}\}} +e^{-\delta_1 (\hat \tau^{\pi}-\theta) }\omega X_{\hat \tau^{\pi}}^+  \Big| \mathcal{G}_\theta\Big].
\end{aligned}\end{eqnarray*}
It is known that $J_{\pi}(\hat X_\theta,S_\theta)  \leq \hat V^{*}(\hat X_\theta,S_\theta)$.  By the conditional expectation,
{\fontsize{10.5pt}{10.5pt}
\begin{eqnarray*} \begin{aligned}
&\mathbb{E}\Big[ \int_0^{\hat \tau^{\pi}} e^{-\delta_1 u}dL^{\pi}_u
 -\sum_i e^{-\delta_1(t^{\pi}_i+\Delta)}
\left(s^{\pi}_i+K  \right)  \mathbf{1}_{\{ t^{ \pi}_i+\Delta< \hat \tau^{ \pi}\}} +e^{-\delta_1 \hat \tau^{\pi} }\omega X_{\hat \tau^{\pi}}^+ \Big] \\
=& \mathbb{E}\Bigg[\mathbb{E}\Big[ \int_0^{\hat \tau^{\pi}} e^{-\delta_1 u}dL^{\pi}_u
 -\sum_i e^{-\delta_1(t^{\pi}_i+\Delta)}
\left(s^{\pi}_i+K  \right)  \mathbf{1}_{\{ t^{ \pi}_i+\Delta< \hat \tau^{ \pi}\}} +e^{-\delta_1 \hat \tau^{\pi} }\omega X_{\hat \tau^{\pi}}^+ \Big| \mathcal{G}_\theta\Big] \Bigg]\\
=& \mathbb{E}\Bigg[ \int_0^{\theta} e^{-\delta_1 u}dL^{\pi}_u
 -\sum_i e^{-\delta_1(t^{\pi}_i+\Delta)}
\left(s^{\pi}_i+K  \right)  \mathbf{1}_{\{  t^{\pi}_i+\Delta< \theta\}}  \\
&+e^{-\delta_1\theta}\mathbb{E}\Big[\int_\theta^{\hat \tau^{\pi}} e^{-\delta_1 (u-\theta)}dL^{\pi}_u
 -\sum_i e^{-\delta_1(t^{\pi}_i+\Delta-\theta)}
\left(s^{\pi}_i+K  \right)  \mathbf{1}_{\{\theta\leq t^{\pi}_i+\Delta< \hat \tau^{\pi}\}} +e^{-\delta_1 (\hat \tau^{\pi}-\theta) }\omega X_{\hat \tau^{\pi}}^+   \Big| \mathcal{G}_\theta\Big]\Bigg] \\
=&\mathbb{E}\Bigg[ \int_0^{\theta} e^{-\delta_1 u}dL^{\pi}_u
 -\sum_i e^{-\delta_1(t^{\pi}_i+\Delta)}
\left(s^{\pi}_i+K  \right)  \mathbf{1}_{\{t^{\pi}_i+\Delta< \theta\}}  \\
&+e^{-\delta_1 (t^{\pi}_{k_{\theta}^{\pi}}+\Delta) }\mathbb{E}\Big[ J_\pi(\hat X_{t^{\pi}_{k_{\theta}^{\pi}}+\Delta} +s^{\pi}_{k_{\theta}^{\pi}},S_{t^{\pi}_{k_{\theta}^{\pi}}+\Delta}) -s^{\pi}_{k_{\theta}^{\pi}}-K\Big| \mathcal{G}_{\theta} \Big] \mathbf{1}_{\{\theta\geq t_{k_\theta^{\pi}}^{\pi},  t_{k_\theta^{\pi}}^{\pi} +\Delta < \hat\tau^\pi\}}\\
& + e^{-\delta_1\hat\tau^\pi}\mathbb{E}\Big[ \omega X_{\hat\tau^\pi}^+  \Big| \mathcal{G}_{\theta} \Big] \mathbf{1}_{\{\theta\geq t_{k_\theta^{\pi}}^{\pi},  t_{k_\theta^{\pi}}^{\pi} +\Delta \geq \hat\tau^\pi\}} + e^{-\delta_1\theta}J_{\pi}(\hat X_\theta,S_\theta) \mathbf{1}_{\{\theta<t_{k_\theta^{\pi}}^{\pi} \}}  \Bigg]\\
\leq & \mathbb{E}\Bigg[ \int_0^{\theta} e^{-\delta_1 u}dL^{\pi}_u
 -\sum_i e^{-\delta_1(t^{\pi}_i+\Delta)}
\left(s^{\pi}_i+K  \right)  \mathbf{1}_{\{ t^{\pi}_i+\Delta< \theta\}}  \\
&+e^{-\delta_1 (t^{\pi}_{k_{\theta}^{\pi}}+\Delta) }\mathbb{E}\Big[ \hat V^{*}(\hat X_{t^{\pi}_{k_{\theta}^{\pi}}+\Delta} +s^{\pi}_{k_{\theta}^{\pi}},S_{t^{\pi}_{k_{\theta}^{\pi}}+\Delta}) -s^{\pi}_{k_{\theta}^{\pi}}-K\Big| \mathcal{G}_{\theta} \Big] \mathbf{1}_{\{\theta\geq t_{k_\theta^{\pi}}^{\pi},  t_{k_\theta^{\pi}}^{\pi} +\Delta < \hat\tau^\pi \}} \\
&  + e^{-\delta_1\hat\tau^\pi}\mathbb{E}\Big[ \omega X_{\hat\tau^\pi}^+  \Big| \mathcal{G}_{\theta} \Big] \mathbf{1}_{\{\theta\geq t_{k_\theta^{\pi}}^{\pi},  t_{k_\theta^{\pi}}^{\pi} +\Delta \geq \hat\tau^\pi\}} + e^{-\delta_1\theta}\hat V^{*}(\hat X_\theta,S_\theta) \mathbf{1}_{\{\theta<t_{k_\theta^{\pi}}^{\pi} \}}  \Bigg].
\end{aligned}\end{eqnarray*}}
We obtain the first inequality in Proposition \ref{prop_wdpp} by the arbitrariness of admissible strategy $\pi\in \Pi$.

Next we prove the second inequality in Proposition \ref{prop_wdpp}. We can substitute $\hat V_{*}$ by an arbitrary function
$$\varphi: \Omega \rightarrow \mathbb{R}, \text{ such that  } \varphi \text{  is upper-semicontinuous and } \hat V_{*}\geq \varphi.$$
Similar to the method used in Theorem 3.3 in Chapter 3 in  \cite{T12}, we can find a countable sequence $(\hat X_i, S_i, r_i)_{i\geq1}$ such that $\Omega \subseteq \underset{i\geq 1}{\cup}B(\hat X_i, S_i; r_i)$, where $B(\hat X_i, S_i; r_i): =\Big\{(\hat X, S) \in \Omega:  |\hat X-\hat X_i|<\varepsilon, |S-S_i|<r_i \Big\}$.
Define
$$A_{i+1} = B(\hat X_{i+1}, S_{i+1}; r_{i+1})\backslash C_i, \quad C_i = C_{i-1}\cup A_i, \quad i=0,1,2,\dots,$$
where $A_0: =\emptyset$ and $C_{-1}: =\emptyset$.
Then $A_i\cap A_j = \phi$ for $i\neq j$ and $\Omega= \underset{i\geq 0}{\cup}A_i$.  By the lower semicontinuity of $(J_{\pi^{i,\varepsilon}})_{*}$ and the upper semicontinuity of $\varphi$, we can also find a sequence of strategy $\pi^{i,\varepsilon} \in \Pi((\hat X_i,S_i))$ for $i\geq 1$
 such that $(J_{\pi^{i,\varepsilon}})_{*} \geq \varphi - 3\varepsilon$ on $A_i$.
 Now set $A^n = \underset{i\leq n}{\cup} A_i$ for $n\geq 1$. For any admissible strategy $\pi\in\Pi$ and stopping time  $\theta$ with value in $[0, \hat\tau^\pi)$, we define a sequence of admissible strategy for $s\in[0,\hat\tau^\pi]$:
\begin{eqnarray*} \begin{aligned}
\pi_s^{\varepsilon,n}: =& \pi_s \mathbf{1}_{\{ s\leq (t^\pi_{k_\theta^{\pi}}+\Delta)\wedge \hat\tau^\pi,  t^\pi_{k_\theta^{\pi}} \leq \theta\}} +  \pi_s \mathbf{1}_{\{ s\leq \theta,  t^\pi_{k_\theta^{\pi}} >\theta\}}  \\
&+ \mathbf{1}_{\{(t^\pi_{k_\theta^{\pi}}+\Delta)\wedge \hat\tau^\pi<s\leq\hat\tau^\pi,   t^\pi_{k_\theta^{\pi}}  \leq \theta\}} \Big(\pi_s \mathbf{1}_{\{(\hat X_\theta^{\pi}, S_\theta)\in (A^n)^c\}} + \sum_{i=1}^n \pi_s^{i,\varepsilon} \mathbf{1}_{\{(\hat X_\theta^{\pi}, S_\theta)\in A_i\}}  \Big) \\
&+ \mathbf{1}_{\{ \theta<s\leq\hat\tau^\pi,   t^\pi_{k_\theta^{\pi}} >\theta \}} \Big(\pi_s \mathbf{1}_{\{(\hat X_\theta^{\pi}, S_\theta)\in (A^n)^c\}} + \sum_{i=1}^n \pi_s^{i,\varepsilon} \mathbf{1}_{\{(\hat X_\theta^{\pi}, S_\theta)\in A_i\}}  \Big), \  \ n=1,2,\dots.
\end{aligned}\end{eqnarray*}
Then we have $t^{\pi^{\varepsilon,n}}_{k_\theta^{\pi^{\varepsilon,n}}}= t^\pi_{k_\theta^{\pi}}$ and $\pi^{\varepsilon,n}_s=\pi_s$ for all $s\leq  (t^\pi_{k_\theta^{\pi}}+\Delta)\wedge \hat\tau^\pi$ when $ t^\pi_{k_\theta^{\pi}}  \leq \theta $. By Fatou's Lemma, it follows
{\fontsize{10.5pt}{10.5pt}
\begin{eqnarray*} \begin{aligned}
&\hat V (\hat X,S) \geq J_{\pi^{\varepsilon,n}} (\hat X, S) \geq  (J_{\pi^{\varepsilon,n}} (\hat X, S))_{*} \\
=& \underset{(\hat X',S')\rightarrow (\hat X,S)}{\underline{\lim}} \mathbb{E}^{\hat X',S'}\Bigg[ \int_0^{\theta} e^{-\delta_1 u}dL^{\pi^{\varepsilon,n}}_u
 -\sum_i e^{-\delta_1(t^{\pi^{\varepsilon,n}}_i+\Delta)}
\left(s^{\pi^{\varepsilon,n}}_i+K  \right)  \mathbf{1}_{\{  t^{\pi^{\varepsilon,n}}_i+\Delta< \theta\}}  \\
& +e^{-\delta_1\theta}\mathbb{E}\Big[ \int_\theta^{\hat \tau^{\pi^{\varepsilon,n}}} e^{-\delta_1(u-\theta)}dL^{\pi^{\varepsilon,n}}_u
 -\sum_i e^{-\delta_1(t^{\pi^{\varepsilon,n}}_i+\Delta-\theta)}
\left(s^{\pi^{\varepsilon,n}}_i+K \right)  \mathbf{1}_{\{\theta \leq t^{ \pi^{\varepsilon,n}}_i+\Delta< \hat \tau^{ \pi^{\varepsilon,n}}\}} \\
&+e^{-\delta_1 (\hat \tau^{\pi^{\varepsilon,n}}-\theta) }\omega X_{\hat \tau^{\pi^{\varepsilon,n}}}^+  \Big| \mathcal{G}_\theta\Big]   \Bigg] \\
\geq &  \mathbb{E}\Bigg[ \int_0^{\theta} e^{-\delta_1 u }dL^{\pi}_u
 -\sum_i e^{-\delta_1(t^{\pi}_i+\Delta)}
\left(s^{\pi}_i+K\right)  \mathbf{1}_{\{t^{\pi}_i+\Delta< \theta\}} \\
&+e^{-\delta_1 (t^{\pi^{\varepsilon,n}}_{k_{\theta}^{\pi^{\varepsilon,n}}}+\Delta) }\mathbb{E}\Big[ (J_{\pi^{\varepsilon,n}})_{*}(\hat X_{t^{\pi^{\varepsilon,n}}_{k_{\theta}^{\pi^{\varepsilon,n}}}+\Delta} +s^{\pi^{\varepsilon,n}}_{k_{\theta}^{\pi^{\varepsilon,n}}},S_{t^{\pi^{\varepsilon,n}}_{k_{\theta}^{\pi^{\varepsilon,n}}}+\Delta}) -s^{\pi^{\varepsilon,n}}_{k_{\theta}^{\pi^{\varepsilon,n}}}-K\Big| \mathcal{G}_{\theta} \Big] \mathbf{1}_{\{\theta\geq t_{k_\theta^{\pi^{\varepsilon,n}}}^{\pi^{\varepsilon,n}}, t_{k_\theta^{\pi^{\varepsilon,n}}}^{\pi^{\varepsilon,n}}+\Delta < \hat \tau^{\pi^{\varepsilon,n}} \}}  \\
& +e^{-\delta_1 \hat \tau^{\pi^{\varepsilon,n}} }\mathbb{E}\Big[ \omega X_{\hat \tau^{\pi^{\varepsilon,n}}}^+ \Big| \mathcal{G}_{\theta} \Big]   \mathbf{1}_{\{\theta\geq t_{k_\theta^{\pi^{\varepsilon,n}}}^{\pi^{\varepsilon,n}}, t_{k_\theta^{\pi^{\varepsilon,n}}}^{\pi^{\varepsilon,n}}+\Delta \geq \hat \tau^{\pi^{\varepsilon,n}} \}}  + e^{-\delta_1 \theta }(J_{\pi^{\varepsilon,n}})_{*}(\hat X_\theta,S_\theta) \mathbf{1}_{\{\theta<t_{k_\theta^{\pi}}^{\pi} \}}  \Bigg]\\
\geq & \mathbb{E}\Bigg[ \int_0^{\theta} e^{-\delta_1 u}dL^{\pi}_u
 -\sum_i e^{-\delta_1(t^{\pi}_i+\Delta)}
\left(s^{\pi}_i+K \right)  \mathbf{1}_{\{t^{\pi}_i+\Delta< \theta\}} \\
& +e^{-\delta_1 (t^{\pi}_{k_{\theta}^{\pi}}+\Delta ) }\mathbb{E}\Big[ (J_{\pi^{\varepsilon,n}})_{*} (\hat X_{t^{\pi}_{k_{\theta}^{\pi}}+\Delta} +s^{\pi}_{k_{\theta}^{\pi}},S_{t^{\pi}_{k_{\theta}^{\pi}}+\Delta}) -s^{\pi}_{k_{\theta}^{\pi}}-K\Big| \mathcal{G}_{\theta} \Big] \mathbf{1}_{\{\theta\geq t_{k_\theta^{\pi}}^{\pi}, t_{k_\theta^{\pi}}^{\pi}+\Delta < \hat\tau^\pi , (\hat X_\theta^{\pi}, S_\theta)\in A^n \}} \\
&+e^{-\delta_1\hat \tau^{\pi} }\mathbb{E}\Big[ \omega X_{\hat \tau^{\pi}}^+ \Big| \mathcal{G}_{\theta} \Big] \mathbf{1}_{\{\theta\geq t_{k_\theta^{\pi}}^{\pi}, t_{k_\theta^{\pi}}^{\pi}+\Delta \geq \hat\tau^\pi\}} + e^{-\delta_1\theta }(J_{\pi^{\varepsilon,n}})_{*}(\hat X_\theta,S_\theta) \mathbf{1}_{\{\theta<t_{k_\theta^{\pi}}^{\pi}, (\hat X_\theta^{\pi}, S_\theta)\in A^n\}} \Bigg].
\end{aligned}\end{eqnarray*}}
Remember that $(J_{\pi^{i,\varepsilon}})_{*} \geq \varphi - 3\varepsilon$ on each $A_i$ for $i\geq 1$.  Then we have
\begin{eqnarray*} \begin{aligned}
&\hat V (\hat X,S)
\geq  \mathbb{E}\Bigg[ \int_0^{\theta} e^{-\delta_1 u}dL^{\pi}_u
 -\sum_i e^{-\delta_1(t^{\pi}_i+\Delta)}
\left(s^{\pi}_i+K \right)  \mathbf{1}_{\{ t^{\pi}_i+\Delta< \theta\}} \\
& +e^{-\delta_1 (t^{\pi}_{k_{\theta}^{\pi}}+\Delta) }\mathbb{E}\Big[ \varphi (\hat X_{t^{\pi}_{k_{\theta}^{\pi}}+\Delta} +s^{\pi}_{k_{\theta}^{\pi}},S_{t^{\pi}_{k_{\theta}^{\pi}}+\Delta}) -s^{\pi}_{k_{\theta}^{\pi}}-K\Big| \mathcal{G}_{\theta} \Big]  \mathbf{1}_{\{\theta\geq t_{k_\theta^{\pi}}^{\pi}, t_{k_\theta^{\pi}}^{\pi}+\Delta < \hat\tau^\pi, (\hat X_\theta^{\pi},   S_\theta)\in A^n \}}\\
& +e^{-\delta_1 \hat \tau^{\pi} }\mathbb{E}\Big[ \omega X_{\hat \tau^{\pi}}^+ \Big| \mathcal{G}_{\theta} \Big] \mathbf{1}_{\{\theta\geq t_{k_\theta^{\pi}}^{\pi}, t_{k_\theta^{\pi}}^{\pi}+\Delta \geq \hat\tau^\pi \}}+ e^{-\delta_1\theta }\varphi(\hat X_\theta, S_\theta) \mathbf{1}_{\{\theta<t_{k_\theta^{\pi}}^{\pi}, (\hat X_\theta^{\pi}, S_\theta)\in A^n\}}   \Bigg] - 3\varepsilon.
\end{aligned}\end{eqnarray*}
Sending $n\rightarrow \infty$ and by the arbitrariness of $\varepsilon>0$, we obtain
\begin{eqnarray*} \begin{aligned}
&\hat V (\hat X,S)
\geq    \mathbb{E}\Bigg[  \int_0^{\theta} e^{-\delta_1 u }dL^{\pi}_u
 -\sum_i e^{-\delta_1(t^{\pi}_i+\Delta)}
\left(s^{\pi}_i+K \right)  \mathbf{1}_{\{  t^{\pi}_i+\Delta< \theta\}} \\
& +e^{-\delta_1 (t^{\pi}_{k_{\theta}^{\pi}}+\Delta)} \mathbb{E}\Big[ \varphi(\hat X_{t^{\pi}_{k_{\theta}^{\pi}}+\Delta} +s^{\pi}_{k_{\theta}^{\pi}},S_{t^{\pi}_{k_{\theta}^{\pi}}+\Delta}) -s^{\pi}_{k_{\theta}^{\pi}}-K\Big| \mathcal{G}_{\theta} \Big] \mathbf{1}_{\{\theta\geq t_{k_\theta^{\pi}}^{\pi},t_{k_\theta^{\pi}}^{\pi}+\Delta < \hat\tau^\pi \}} \\
& +e^{-\delta_1\hat \tau^{\pi} }\mathbb{E}\Big[ \omega X_{\hat \tau^{\pi}}^+ \Big| \mathcal{G}_{\theta} \Big] \mathbf{1}_{\{\theta\geq t_{k_\theta^{\pi}}^{\pi}, t_{k_\theta^{\pi}}^{\pi}+\Delta \geq \hat\tau^\pi \}}+ e^{-\delta_1 \theta } \varphi (\hat X_\theta,S_\theta)\mathbf{1}_{\{\theta<t_{k_\theta^{\pi}}^{\pi} \}}
 \Bigg].
\end{aligned}\end{eqnarray*}

We can find a sequence $\{\varphi_n\}_n$ such that $\varphi_n \leq \hat V_{*} \leq \hat V$ and $\varphi_n\rightarrow \hat V_{*}$ pointwise. Define $\phi_N: = \underset{n\geq N}{\min}\varphi_n$. Then $\phi_{N}$ is non-decreasing and converges to $\hat V_{*}$ pointwise on $\Omega$. By the monotone convergence theorem,
\begin{eqnarray*} \begin{aligned}
&\hat V (\hat X,S,t) \geq \lim_{N\rightarrow \infty}  \mathbb{E}\Bigg[ \int_0^{\theta} e^{-\delta_1 u }dL^{\pi}_u
 -\sum_i e^{-\delta_1(t^{\pi}_i+\Delta)}
\left(s^{\pi}_i+K \right)  \mathbf{1}_{\{  t^{\pi}_i+\Delta< \theta\}} \\
& +e^{-\delta_1 (t^{\pi}_{k_{\theta}^{\pi}}+\Delta)} \mathbb{E}\Big[\phi_N(\hat X_{t^{\pi}_{k_{\theta}^{\pi}}+\Delta} +s^{\pi}_{k_{\theta}^{\pi}},S_{t^{\pi}_{k_{\theta}^{\pi}}+\Delta}) -s^{\pi}_{k_{\theta}^{\pi}}-K\Big| \mathcal{G}_{\theta} \Big] \mathbf{1}_{\{\theta\geq t_{k_\theta^{\pi}}^{\pi},t_{k_\theta^{\pi}}^{\pi}+\Delta < \hat\tau^\pi \}} \\
& +e^{-\delta_1\hat \tau^{\pi} }\mathbb{E}\Big[ \omega X_{\hat \tau^{\pi}}^+ \Big| \mathcal{G}_{\theta} \Big] \mathbf{1}_{\{\theta\geq t_{k_\theta^{\pi}}^{\pi}, t_{k_\theta^{\pi}}^{\pi}+\Delta \geq \hat\tau^\pi \}}+ e^{-\delta_1 \theta } \phi_N (\hat X_\theta,S_\theta)\mathbf{1}_{\{\theta<t_{k_\theta^{\pi}}^{\pi} \}}
 \Bigg]\\
 =& \mathbb{E}\Bigg[ \int_0^{\theta} e^{-\delta_1 u }dL^{\pi}_u
 -\sum_i e^{-\delta_1(t^{\pi}_i+\Delta)}
\left(s^{\pi}_i+K \right)  \mathbf{1}_{\{  t^{\pi}_i+\Delta< \theta\}} \\
& +e^{-\delta_1 (t^{\pi}_{k_{\theta}^{\pi}}+\Delta)} \mathbb{E}\Big[ \hat V_*(\hat X_{t^{\pi}_{k_{\theta}^{\pi}}+\Delta} +s^{\pi}_{k_{\theta}^{\pi}},S_{t^{\pi}_{k_{\theta}^{\pi}}+\Delta}) -s^{\pi}_{k_{\theta}^{\pi}}-K\Big| \mathcal{G}_{\theta} \Big] \mathbf{1}_{\{\theta\geq t_{k_\theta^{\pi}}^{\pi},t_{k_\theta^{\pi}}^{\pi}+\Delta < \hat\tau^\pi \}} \\
& +e^{-\delta_1\hat \tau^{\pi} }\mathbb{E}\Big[ \omega X_{\hat \tau^{\pi}}^+ \Big| \mathcal{G}_{\theta} \Big] \mathbf{1}_{\{\theta\geq t_{k_\theta^{\pi}}^{\pi}, t_{k_\theta^{\pi}}^{\pi}+\Delta \geq \hat\tau^\pi \}}+ e^{-\delta_1 \theta } \hat V_* (\hat X_\theta,S_\theta)\mathbf{1}_{\{\theta<t_{k_\theta^{\pi}}^{\pi} \}}
 \Bigg].
 \end{aligned}\end{eqnarray*}
By the arbitrariness of admissible strategy $\pi\in \Pi$, we obtain  the second inequality in Proposition \ref{prop_wdpp}.
Therefore, the weak DPP holds for value function  $\hat V(\hat X, S)$.

\backmatter
\bibliographystyle{note}
\bibliography{note}

\end{document}